\definecolor{darkred}{RGB}{100,0,0}
\definecolor{darkgreen}{RGB}{0,100,0}
\definecolor{darkblue}{RGB}{0,0,150}
\definecolor{softgreen}{RGB}{230, 245, 230}
\newcommand{\E}{E}
\newcommand{\PP}{\mathbb{P}}
\newcommand{\var}{\text{var}}
\newcommand{\R}{\mathbb{R}}
\newcommand{\Z}{\mathcal{Z}}
\newcommand{\I}{\mathcal{I}}
\newcommand{\Hc}{\mathcal{H}}
\newcommand{\A}{\mathcal{A}}
\newcommand{\B}{\mathcal{B}}
\newcommand{\D}{\mathcal{D}}
\newcommand{\Ec}{\mathcal{E}}
\newcommand{\ind}{\mathbbm{1}}
\newcommand{\V}{\mathcal{V}}
\newcommand{\Gc}{\mathcal{G}}
\newcommand{\T}{\mathcal{T}}
\newcommand{\Oc}{\mathcal{O}}
\newcommand{\Rc}{\mathcal{R}}
\newcommand{\cc}{c}
\newcommand{\crds}{\rho}
\newcommand{\BB}{B}
\newcommand{\BBB}{\phi}
\newcommand{\UU}{\phi}
\newcommand{\ksup}{d}
\newcommand{\kstar}{k^*}
\newcommand{\Dlstar}{\tilde M^*}
\newcommand{\Dlstarz}{M^*}
\newcommand{\betatwo}{c}
\newcommand{\consta}{C_1}
\newcommand{\constb}{C_2}
\newcommand{\constc}{C_3}
\newcommand{\constd}{C_4}
\newcommand{\constg}{C_6}
\newcommand{\consth}{C_7}
\newcommand{\consti}{C_8}
\newcommand{\gammaone}{\varphi_1}
\newcommand{\gammatwo}{\varphi_2}
\newcommand{\sstar}{\text{star}}
\newcommand{\RR}{\text{R}}
\newcommand{\DR}{\text{DR}}
\newcommand{\cconst}{\text{const}}
\newcommand{\TV}{\text{TV}}
\newcommand{\convp}{\stackrel{p}{\rightarrow}}  
\newcommand{\convd}{\stackrel{d}{\rightarrow}}  
\newcommand{\bigCI}{\mathrel{\text{\scalebox{1.07}{$\perp\mkern-10mu\perp$}}}}
\DeclareMathOperator*{\argmin}{arg\,min}
\newtheorem{theorem}{Theorem}
\newtheorem{proposition}{Proposition}
\newtheorem{corollary}{Corollary}
\newtheorem{lemma}{Lemma}
\newtheorem{assumption}{Assumption}
\newtheorem{remark}{Remark}
\newcommand{\red}[1]{\textcolor{red}{#1}}
\newcommand{\org}[1]{\textcolor{orange}{#1}}
\newcommand{\mis}[1]{\textcolor{red}{#1}}
\newcommand{\smis}[1]{\textcolor{orange}{#1}}
\renewcommand{\baselinestretch}{1.66}
\newtheorem{appxlemma}{Lemma}
\newtheorem{appxassumption}{Assumption}
\newcommand*{\addFileDependency}[1]{
	\typeout{(#1)}
	\@addtofilelist{#1}
	\IfFileExists{#1}{}{\typeout{No file #1.}}
}
\title{\bf  \vspace{-2em} A Liberating Framework from Truncation and Censoring, with Application to Learning Treatment Effects}
  \author{Yuyao Wang, Andrew Ying,  and   Ronghui Xu 
  \thanks{The authors gratefully acknowledge Drs. Steve Edland, Jue (Marquis) Hou and Kendrick Li for helpful discussions. This research was done using services provided by the Open Science Grid (OSG) Consortium, 
   which is supported by the National Science Foundation awards \#2030508 and \#1836650. }\hspace{.2cm}\\
    Department of Mathematics, Herbert Wertheim School of Public Health \\
     and Halicioglu Data Science Institute, University of California San Diego
    }
\date{}
\begin{document}
\maketitle

\renewcommand{\baselinestretch}{1}
\vspace{-3em}
\begin{abstract}
Time-to-event outcomes are often subject to left truncation and right censoring. While many survival analysis methods have been developed to handle truncation and censoring, majority of the past works require strong independence assumptions. We relax these stringent assumptions through leveraging covariate information together with orthogonal learning, and develop a liberating framework from left truncation and right censoring so that desirable properties like double robustness can be immediately transferred from settings without truncation or censoring. To illustrate its generality and ease to use, the framework is applied to estimation of the average treatment effect (ATE) and the conditional average treatment effect (CATE). For the ATE, we establish both model  and rate double robustness under confounding, truncation and censoring; for the CATE, we show that the orthogonal and the doubly robust learners under these three sources of bias can achieve oracle rate of convergence. We study the estimators both theoretically and through extensive simulation, and apply them to analyzing the effect of mid-life heavy drinking on late life cognitive impairment free survival, using data from the Honolulu Asia Aging Study.
\end{abstract}

\noindent
{\it Keywords:}  average treatment effect, causal inference, double robustness, heterogeneous treatment effect, Neyman orthogonality.
\vfill

\newpage
\renewcommand{\baselinestretch}{1.66}

\section{Introduction}\label{sec:intro}

In observational studies with delayed entry, time-to-event outcomes are often subject to left truncation, if only subjects who have not experienced the event of interest at the study entry are included. For example, in aging studies where age is the time scale of interest, participants typically enroll at various ages.  Individuals who have experienced early events such as death will not to be included, leading to selection bias. 
Similarly in pregnancy studies, women typically enroll after their clinical recognition of pregnancies, and women with early pregnancy losses tend not to be captured \citep{xu:cham, ying2024causal}. 
In addition to left truncation, time-to-event outcomes are also commonly subject to right censoring. 

Until recently the literature on left truncation has largely  
relied on the random left truncation or quasi-independence assumption that the left truncation time and the event time are independent on the observed data region \citep{woodroofe1985estimating, wang1986asymptotic, wang1989semiparametric, wang1991nonparametric, gross1996weighted, gross1996nonparametric}. 
This may be extended to the conditional (quasi-)independence assumption under the regression setting when the dependence-inducing covariates are 
included as regressors \citep{wang1993statistical, shen2009analyzing, 
qin2011maximum}.
The resulting estimand, however, now has a conditional interpretation instead of a marginal one. 
For marginal estimands of interest, inverse probability of truncation weighting has been proposed \citep{vakulenko2022nonparametric}, and 
\citet{wang2024doubly} further derived the efficient influence function 
and doubly robust (DR) approaches in that setting. 

Right censoring in our context has two layers of complexity. First, similar to the above,
 in a regression setting covariate dependent right censoring is non-informative when the covariates are included in the regression model. 
But for a marginal estimand, covariate dependent right censoring is informative. 
Informative right censoring for estimating a marginal parameter has recently gained attention in the literature  \citep{campigotto2014impact, templeton2020informative, olivier2021informative, van2021principled, luo2023doubly}. 
Secondly, left truncation has been known to
induce dependence between the right censoring time and the event time \citep{asgharian2005asymptotic, shen2009analyzing}.  
This is because 
when loss to follow-up occurs after study entry, the censoring time and event time ``share'' a common time to enrollment into the study.

Handling covariate dependent left truncation  and right censoring (LTRC)  often presents substantial statistical challenges. 
In this paper, we introduce a liberating framework designed to specifically address these challenges. By leveraging recent advancements in orthogonal learning, our framework allows existing estimating and loss functions from LTRC-free settings to be effortlessly applied to the observed LTRC data, while
maintaining their desirable statistical properties. As a key illustration of this generality and ease of use, we apply the framework to problems in causal inference,  for the estimation of the average treatment effect (ATE) and the conditional average treatment effect (CATE). 
Here orthogonality is important when nonparametric or machine learning (ML) approaches are used to estimate the nuisance functions which might be complex, so that their slow convergence does not propagate into that of the estimation of the estimand itself. For the ATE, we develop both model DR and rate DR estimators. While the concept of model DR might be more familiar to the reader, rate DR is useful when ML approaches are used to estimate the nuisance functions. For the CATE, starting with two concrete learners for LTRC-free data, we develop orthogonal and doubly robust learners that can achieve the oracle rate of convergence under covariate dependent LTRC. Here oracle rate refers to the estimation error rate if the nuisance functions are known.

\subsection{Related work} 

There has been a very broad semiparametric literature on doubly robust approaches for missing data under the coarsening at random framework, which includes estimation of treatment effects in observational studies, as well as  handling of informative right censoring for time-to-event outcomes \citep{tsiatis2006semiparametric, rotnitzky2005inverse}. 
On the other hand, separate development in \citet{wang2024doubly} provides a DR approach based on efficient influence curve under left truncation which does not fall under the established framework of coarsened data. 

Recently
CATE has received much attention in the causal inference literature. 
When there is no left truncation or right censoring, 
\citet{nie2021quasi}, \citet{foster2023orthogonal}, and \citet{kennedy2023towards}
 provided formal guarantee that various ML and nonparametric approaches can actually succeed in identifying CATE with certain error bounds. 
These works are closely related to the orthogonal and doubly robust approaches developed under the semiparametric theory referenced above, 
but now for an infinite dimensional estimand. Just like for the finite dimensional estimand under semiparametric theory, orthogonality and double robustness guarantee that the rate of convergence to the estimand is not  affected by the estimation error of the nuisance parameters, i.e.~$n^{-1/2}$ for the finite dimensional estimand, and  oracle rate for the infinite dimensional estimand.

As will be seen, 
our  development  for CATE estimation under LTRC  is  most closely related to those of  \citet{foster2023orthogonal} and 
\citet{kennedy2023towards}. \citet{foster2023orthogonal} focused on Neyman orthogonality under sample splitting, 
while \citet{kennedy2023towards} considered doubly robust estimation. 
Here we consider both.  
Our learners do not fall under the pseudo-outcome regression framework studied in \citet{kennedy2023towards}. 
In addition, \citet{morzywolek2023general} studied a general class of Neyman orthogonal learners and their error bounds, which included the T-, R- and DR-learners as special cases, and  provided insights into the relationship among them.

Finally in the absence of truncation, for CATE estimation with right censored time-to-event outcomes, various ML approaches have been considered, 
and we refer to \citet{xu2023treatment} for a comprehensive review and tutorial on how the metalearners, including the S-, T-, X-, M-, and R-learners, can be extended to handle right censoring. 
None of the above approaches, however,  leveraged Neyman orthogonality or doubly robust estimation. 
More recently, \citet{cui2023estimating} developed causal survival forests that adjust for right censoring using  orthogonal estimating equations.

\subsection{Organization of the Paper}

In this paper we provide a unified approach to handle covariate dependent LTRC, which can be applied to the estimation of both ATE and CATE. 
Section \ref{sec:preliminary} introduces the problem setup, assumptions, and relevant martingales.
In Section \ref{sec:orthogonal_DR_appraoches} we describe an orthogonal and doubly robust operator for handling LTRC, so that it preserves double robustness for estimation of the  ATE and  orthogonality and double robustness for estimation of the CATE. 
Section \ref{sec:nuisance_estimation} discusses the estimation of the nuisance parameters 
and studies the theory for both the ATE and the CATE estimators. In 
Section \ref{sec:simu} we study their finite sample performance using simulation. In Section \ref{sec:application}  the proposed methods are applied to analyzing the effect of mid-life heavy drinking on late life cognitive impairment free survival using data from the Honolulu Asia Aging Study. 
Section \ref{sec:discussion} concludes with a discussion.

\section{Preliminaries} \label{sec:preliminary}

\subsection{Estimands and assumptions} \label{sec:notation}

In the following  we use variables with superscript asterisk ``*'' to denote those in the population before truncation, and without ``*'' to denote variables of the observed truncated data. Let $A^*\in\{1,0\}$ denote the binary treatment assignment and $Z^*$ the baseline covariates. 
Following the potential outcome framework 
\citep{neyman1923application, rubin1974estimating}, for $a\in\{1,0\}$ let $T^*(a)$, $Q^*(a)$, and $C^*(a)$ denote the potential event time of interest, the potential left truncation time, and the potential censoring time had the subject received treatment $a$. 
For exposition we will assume $C^*(a)>Q^*(a)$ almost surely, and discuss otherwise in the last section. 
Let $D^*(a) = C^*(a)-Q^*(a)$ denote the residual censoring time. 
Let $Q^* $, $T^*$, $C^*$  
and $D^*$  be the corresponding ``factual'' outcomes once treatment $A$ is received. 
A subject is observed only if $Q^*<T^*$. 

We consider $(T^*(1), T^*(0), A^*, Z^*)$ as the full data,   $(T^*, A^*, Z^*)$  as the LTRC-free data, $(Q, T, A, Z)$ as the truncated but censoring-free data, and finally $O = (Q, X, \Delta, A, Z)$  the observed data, where $X = \min(T,C)$ and $\Delta = \ind(T<C)$.

We are interested in the following estimand which is the conditional treatment effect (CATE): 
\begin{align}
    \tau(v) = \E\left[\nu\{T^*(1)\} - \nu\{T^*(0)\} | V^* = v\right], \label{eq:CATE_definition}
\end{align}
where $\nu$ denotes a known bounded transformation, 
and $V^*$ is generally a subset of $Z^*$. For example, 
when $\nu(t) = \ind(t>t_0)$ for some $t_0$, 
$E[ \nu\{T^*(a)\} | V^*=v ] = \PP\{T^*(a) > t_0 | V^*=v\}$; 
when $\nu(t) = \min(t,t_0)$, $\tau(v)$ 
 is the difference of the potential conditional restricted mean survival times (RMST). 
For the special case $V = \varnothing$ (the empty set), $\theta = \tau(\varnothing) = \E\left[\nu\{T^*(1)\} - \nu\{T^*(0)\}\right]$ is the ATE.  
While a special case, the estimation of the scalar ATE is very different from that  of the funcational CATE, which will be treated separately via estimating and loss functions, respectively; nonetheless, our unifying framework for handling LTRC also shows that there is much parallel between the two as will be seen below.

Let $F(t|a,z)$, $G(t|a,z)$ and $H(a,z)$ denote the conditional cumulative distribution function (CDF) of $T^*$ given $(A^*,Z^*)$, $Q^*$ given $(A^*,Z^*)$, and the CDF of $(A^*,Z^*)$, respectively.
 Let $\pi(z) = \PP(A^*=1|Z^*=z)$ be the propensity score, 
and $S_D(t|Q,A,Z) = \PP(D>t|Q,A,Z)$. 
Finally let  
$\mu(a,z;F) = \E\{ \nu(T^*)|A^*=a,Z^*=z\} = \int_0^\infty \nu(t) dF(t|a,z)$, 
and $\beta = \PP(Q^*<T^*)$.

Throughout this paper, 
we assume the Stable Unit Treatment Value Assumption (SUTVA) that each subject's potential times are not affected by other subjects' treatments and that there is only one version for each treatment level.
We also make the following assumptions. 
\begin{assumption}[Consistency]\label{ass:consistency}
	$T^* = A^*T^*(1) +(1-A^*)T^*(0)$, 
 $Q^* = A^*Q^*(1) +(1-A^*)Q^*(0)$, $C^* = A^*C^*(1) +(1-A^*)C^*(0)$, which implies  
 $D^* = A^*D^*(1) +(1-A^*)D^*(0)$.
\end{assumption}
\begin{assumption}[No unmeasured confounding]\label{ass:noUnmConf}
	$A^*\bigCI \left\{T^*(a), Q^*(a), D^*(a)\right\} \mid Z^*$.
\end{assumption}
\begin{assumption}[Conditional independent truncation]\label{ass:trunc}
	$Q^*(a) \bigCI T^*(a) \mid Z^*$.
\end{assumption}
\begin{assumption}[Conditional independent residual censoring]\label{ass:cen}
$D^*(a) \bigCI \{T^*(a), Q^*(a)\}\mid Z^*$. 
\end{assumption}
\begin{assumption}[Strict positivity] \label{ass:strict_positivity}
There exist $\delta>0$ and $0<\tau_1<\tau_2<\infty$ such that (i) $\delta\leq \pi(Z^*)\leq 1-\delta$ a.s.; (ii) $T^*\geq \tau_1$ a.s.~and $Q^*<\tau_2$ a.s., $1-F(\tau_2|A^*,Z^*) \geq \delta$ a.s.,  $G(\tau_1|A^*,Z^*)\geq \delta$ a.s., and $S_D(\tau_2-Q|Q,A,Z) \geq \delta$ a.s..
\end{assumption}
Assumptions \ref{ass:consistency}, \ref{ass:noUnmConf}, and \ref{ass:strict_positivity}(i) are commonly used assumptions in the causal inference literature. 
Assumption \ref{ass:trunc} can be weakened to a similar conditional quasi-independence assumption 
 as in \citet{wang2024doubly}. 
 Assumption \ref{ass:cen} can be weakened to the conditional independent censoring assumption in the truncated data:
 $D \bigCI (T-Q) \mid Q,A,Z$, which was used in \citet{cheng2012estimating} and \citet{morenz2024debiased}  (although \citet{cheng2012estimating} appeared to have  a typo and only stated $D \bigCI (T-Q) \mid A,Z$). 
Assumption \ref{ass:strict_positivity}(ii) 
on the upper and lower tails of the distributions  are commonly imposed for left truncated \citep{wang1989semiparametric, wang1991nonparametric} and right censored data, and ensures that the left and right tails of  $Q^*$, $T^*$ and $C^*$ can be identified and estimated. 
Similar  assumptions were also made in  \citet{wang2024doubly} 
and   \cite{luo2023doubly}.

\subsection{Relevant martingales}\label{sec:martingales}

Here we introduce the martingales associated with $Q$ and $D$, which will be useful later.
For $t\geq 0$, let
$\bar N_Q(t) = \mathbbm{1}(t\leq Q<T)$,
and let $\bar\lambda_Q$ denote the conditional {\it reverse time hazard function} of $Q$ given $Z$ in the LTRC-free data: 
\begin{align}
\bar\lambda_Q(q|a,z) &= \lim_{h\to 0+} \frac{\PP(q-h< Q^*\leq q|Q^*\leq q, A^*=a, Z^* = z)}{h}. \label{eq:alpha_def} 
\end{align}
Then following a similar proof as in \citet{wang2024doubly},  we have that 
\begin{align}
    \bar M_Q(t; G) = \bar N_Q(t) - \int_t^\infty \mathbbm{1}(Q\leq s <T) \bar\lambda_Q(s|A,Z) ds \label{eq:def_MQ}
\end{align}
is a {\it backwards martingale} 
 in the truncated but censoring-free data 
with respect to $\{\mathcal{\bar F}_t\}_{t\geq 0}$, where $\mathcal{\bar F}_t = \sigma \left\{A,Z, \mathbbm{1}(Q<T), \mathbbm{1}(s\leq T),\mathbbm{1}(s\leq Q<T): s\geq t \right\}$.

We now introduce the martingale associated with $D$ in the observed data.
Let 
$N_D(t) = \mathbbm{1}(X-Q\leq t, \Delta = 0)$,
and let $\lambda_D(t|Q,A,Z)$ denote the conditional hazard function of $D$ given $(Q,A,Z)$. Then 
\begin{align}
M_D(t;S_D) = N_D(t) - \int_0^t \mathbbm{1}(X-Q\geq u) \lambda_D(u|Q,A,Z) du
\end{align}
is a martingale with respect to the filtration $\{\mathcal{F}_t\}_{t\geq 0}$, where $\mathcal{F}_t = \sigma\{Q,A,Z, \ind(T\leq u),\ind(C\leq v): u\leq t,v\leq t \}$.

\section{Orthogonal and DR approaches under LTRC}\label{sec:orthogonal_DR_appraoches}

\subsection{Doubly robust LTRC operators}\label{sec:AIPW_operators}

We propose a doubly robust operator for handling covariate-dependent left truncation and right censoring, such that if we have an estimating or loss function for the LTRC-free data, we can obtain its counterpart under LTRC, to be elaborated below. 
In this paper we apply the operator to causal inference; however, it can be more broadly applied to estimation problems under LTRC based on existing estimating functions or loss functions in the LTRC-free data. 

Motivated by {the augmented inverse probability of truncation weighting approach developed in \citet{wang2024doubly}},
we consider the following operator $\V_Q$ for handling left truncation. The operator $\V_Q$ maps an arbitrary {$d$-dimensional} function {$\zeta= \zeta(T^*,A^*,Z^*)$} of the LTRC-free data to a {$d$-dimensional} function of the truncated but censoring-free data $ (Q,T,A,Z) $: 
\begin{align}
    \V_Q(\zeta; F,G) 
    & = \frac{\zeta(T,A,Z)}{G(T|A,Z)} -\int_0^\infty  m_\zeta(v,A, Z; F) \cdot  \frac{F(v|A,Z)}{ 1-F(v|A,Z) } \cdot \frac{d\bar M_Q(v;G)}{G(v|A,Z)}, \label{eq:AIPW_Q}
\end{align}
where $m_\zeta(v,a,z;F) = \E\{\zeta(T^*,A^*,Z^*) \mid T^*<v, A^*=a, Z^*=z\} = \int_0^v \zeta(t,a,z) d F(t|a,z)/F(v|a,z)$.
The operator $\V_Q$ takes the form of  augmented inverse probability weighting (AIPW), where the first term is the inverse probability of truncation weighted $\zeta$. 
The doubly robust estimating function developed in \citet{wang2024doubly} is a special case of \eqref{eq:AIPW_Q} with $\zeta$ equal to $\nu(T^*) $ minus the estimand which is $E\{ \nu(T^*) \}$.

Next we adapt the augmented inverse probability of censoring weighting \citep[AIPCW]{rotnitzky2005inverse} approach to the residual censoring time, and  consider the operator $\V_C$ that maps an arbitrary {$d$-dimensional} function {$\xi = \xi (Q,T,A,Z) $} of the truncated but censoring-free data  to a {$d$-dimensional} function of the observed data $(Q,X,\Delta,A,Z) $:
\begin{align}
    \V_C(\xi; F,S_D) 
    &= \frac{\Delta \ \xi(Q,X,A,Z)}{S_D(X-Q|Q,A,Z)} 
    +  \int_0^\infty \bar m_\xi(u,Q,A,Z;F) \cdot \frac{dM_D(u;S_D)}{S_D(u|Q,A,Z)},  \label{eq:AIPCW}
\end{align}
where 
\begin{eqnarray*}
\bar m_\xi(u,q,a,z;F) &=&  \E\{\xi(Q,T,A,Z)|T-Q \geq u,Q=q,A=a,Z=z\} \\
&=&  \E\{\xi(Q^*,T^*,A^*,Z^*)|T^*-Q^* \geq u,Q^*=q,A^*=a,Z^*=z\} \\
&=& \frac{ \int_{q+u}^\infty \xi(q,t,a,z) dF(t|a,z) }{1-F(q+u|a,z)}.
\end{eqnarray*}
The last equality above holds because  $Q^*\bigCI T^*\mid A^*,Z^*$, which is implied by Assumptions \ref{ass:consistency} - \ref{ass:trunc}.
We note that the operator $\V_C$ naturally 
handles the truncation-induced dependence between the right censoring time and the event time by considering the residual censoring time.

Finally we combine the above two operators and obtain $\V = \V_C \circ\V_Q$, which maps an arbitrary {$d$-dimensional} function $\zeta$ of the LTRC-free data $(T^*,A^*,Z^*)$ to a {$d$-dimensional} function of the observed data $(Q,X,\Delta,A,Z)$:
\begin{align}
    \V(\zeta; F,G,S_D) = \V_C\{ \V_Q(\zeta; F,G); F, S_D \}. 
\end{align}
In this paper we have $d = 1$ for both the ATE and the CATE. 

We say that the operator $\V$ is doubly robust, in the sense stated in Lemma \ref{lem:DR_V} below. 
We will use subscript ``0'' to denote the true parameters for the rest of the paper.

\begin{lemma}[Double robustness of $\V$] \label{lem:DR_V}
    Under Assumptions \ref{ass:consistency} -  \ref{ass:cen} and  \ref{ass:strict_positivity}(ii), for any function $\zeta(T^*,A^*,Z^*)$ that is bounded almost surely,
\begin{align}
    \E\{\V(\zeta; F,G,S_D) 
    \} = \beta^{-1} \E\{\zeta(T^*,A^*,Z^*)\} \nonumber
\end{align}
if either $F = F_0$ or $(G,S_D) = (G_0,S_{D0})$; recall that $\beta = \PP(Q^*<T^*)$. 
\end{lemma}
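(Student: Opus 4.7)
The proof splits naturally along the composition $\V = \V_C \circ \V_Q$, and the plan is to use the tower property with respect to the latent truncated-but-censoring-free data $(Q,T,A,Z)$:
\begin{align*}
E\{\V(\zeta; F, G, S_D)\} = E\bigl[ E\{ \V_C(\V_Q(\zeta; F, G); F, S_D) \mid Q, T, A, Z \} \bigr].
\end{align*}
So it suffices to establish two auxiliary double-robustness statements---one for the inner conditional expectation involving $\V_C$, and one for the outer expectation involving $\V_Q$---and then assemble them.

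\textbf{Double robustness of $\V_C$.} The first auxiliary claim is that for any bounded $\xi(Q,T,A,Z)$,
\[
E\{ \V_C(\xi; F, S_D) \mid Q, T, A, Z\} = \xi(Q, T, A, Z)
\]
whenever $F = F_0$ or $S_D = S_{D0}$. When $S_D = S_{D0}$, the integral term in \eqref{eq:AIPCW} is a stochastic integral against the martingale $M_D(\cdot; S_{D0})$ from Section \ref{sec:martingales}, hence has conditional mean zero given $(Q, T, A, Z)$; the IPCW term has conditional expectation $\xi(Q,T,A,Z)$ via Assumption \ref{ass:cen}, which reduces $E[\Delta \mid Q, T, A, Z]$ to $S_{D0}(T - Q \mid Q, A, Z)$. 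When $F = F_0$, $\bar m_\xi(\cdot, Q, A, Z; F_0)$ is the true conditional mean of $\xi$ on $\{T - Q \geq \cdot\}$, and the standard AIPCW algebraic identity rewrites $\Delta \xi / S_D(X-Q \mid Q,A,Z)$ so that its $S_D$-dependence telescopes with the augmentation term, leaving $\xi$ plus a $d M_D$-integral that has conditional mean zero under the true data law.

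\textbf{Double robustness of $\V_Q$.} The second auxiliary claim, on the truncated-data distribution, is
\[
E\{ \V_Q(\zeta; F, G)\} = \beta^{-1} E\{ \zeta(T^*, A^*, Z^*)\}
\]
whenever $F = F_0$ or $G = G_0$; this is essentially the identification in \citet{wang2024doubly}. When $G = G_0$, the integral in \eqref{eq:AIPW_Q} is a backward stochastic integral against $\bar M_Q(\cdot; G_0)$ from \eqref{eq:def_MQ}, hence has mean zero, while $E[\zeta(T,A,Z)/G_0(T \mid A, Z)]$ collapses to $\beta^{-1} E[\zeta(T^*,A^*,Z^*)]$ upon conditioning on $(T^*, A^*, Z^*)$ and invoking $\PP(Q^* < T^* \mid T^*, A^*, Z^*) = G_0(T^* \mid A^*, Z^*)$ under Assumption \ref{ass:trunc}. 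When $F = F_0$, the augmentation weight $m_\zeta \cdot F_0/(1-F_0)$ is precisely the one that makes the IPTW bias cancel under any $G$, which can be checked by a Fubini exchange followed by use of the reverse-time hazard identity tying $\bar\lambda_Q$ to $G_0$.

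\textbf{Assembly and main obstacle.} Combining the two branches, if $F = F_0$ then both $\V_C$ and $\V_Q$ are unbiased via their $F$-branches; if instead $(G, S_D) = (G_0, S_{D0})$, then $\V_C$ is unbiased via $S_D = S_{D0}$ and $\V_Q$ via $G = G_0$. The tower identity then yields the claimed $\beta^{-1} E\{\zeta(T^*,A^*,Z^*)\}$. The main obstacle is the $F = F_0$ branch of $\V_Q$'s double robustness: unlike the three martingale-based branches, it requires an explicit algebraic cancellation between the IPTW bias and the augmentation, relying on the structure of the reverse-time hazard $\bar\lambda_Q$. Boundedness of $\zeta$ together with Assumption \ref{ass:strict_positivity}(ii) ensures all integrands are absolutely integrable, so Fubini and the martingale mean-zero properties are legitimate throughout.
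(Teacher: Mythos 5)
Your proposal is correct and mirrors the paper's own strategy: the paper likewise decomposes $\V = \V_C \circ \V_Q$, proves a double-robustness lemma for each factor (citing \citet{wang2024doubly} for the $\V_Q$ piece and the AIPCW argument of \citet{rotnitzky2005inverse}, applied to the residual time $D$ and $T-Q$, for the $\V_C$ piece), and then composes. Your sketches of the two branches fill in the same algebraic and martingale steps the paper delegates to those references; the only cosmetic difference is that you state $\V_C$'s double robustness conditionally on $(Q,T,A,Z)$ whereas the paper's Lemma for $\V_C$ is stated unconditionally, and either version assembles to the claim.
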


\begin{remark}
    In fact, each of the operators $\V_Q$ and $\V_C$ is doubly robust. In particular, we show in Appendix \ref{app:DR_lemmas_proofs} that the operator $\V_Q$ maintains the expectation of the input function (up to a constant factor) in the LTRC-free data if either $F = F_0$ or $G = G_0$; and the operator $\V_C$ maintains the expectation of the input function in the truncated but censoring-free data if either $F = F_0$ or $S_D = S_{D0}$. 
\end{remark}

Lemma \ref{lem:DR_V} separates the nuisance parameters into two sets, where $F$ is the conditional event time distribution, and $(G,S_D)$ represent the missing data mechanism. As will be seen below, with a proper nuisance estimation scheme, 
Lemma \ref{lem:DR_V} enables a general doubly robust approach to using existing estimating functions or loss functions in the LTRC-free data, {while preserving the original function's orthogonality or double robustness}.
This allows us to construct estimators where the estimation errors of the nuisance parameters only have higher order impacts on the estimation error rate of the estimand, and therefore enables the estimation to achieve fast rate of convergence when incorporating nonparametric or ML approaches that might have slower rates of convergence.  
As mentioned above, this framework is not restricted to the estimation of treatment effects but can also be applied to other  problems with LTRC data.

\subsection{DR estimating function for ATE}\label{sec:DR_EF_ATE}

We first consider the average treatment effect  $\theta = \tau(\varnothing) = \E\left[\nu\{T^*(1)\} - \nu\{T^*(0)\} \right]$. 
In the absence of LTRC, an augmented inverse probability of treatment weighting (AIPTW) approach \citep{van2003unified, tsiatis2006semiparametric} has been proposed for $\theta$ and the estimating function for a single copy of the data is 
\begin{align}
    U_0(\theta; \pi,F) 
    & =  \frac{A^* - \pi(Z^*)}{\pi(Z^*)\{1-\pi(Z^*)\}} \left\{ \nu(T^*) - \mu(A^*,Z^*;F) \right\}  + \mu(1,Z^*;F) - \mu(0,Z^*;F) - \theta, \label{eq:U0}
\end{align}
where $\mu(a,z;F) = \E\{ \nu(T^*)|A^*=a,Z^*=z\} = \int_0^\infty \nu(t) dF(t|a,z)$.
It is known that $ U_0(\theta; \pi,F)$  is doubly robust in the sense that its expectation is zero at $\theta_0$ if either $F= F_0$ or $\pi = \pi_0$. 

Applying $\V$ to $U_0$ we obtain an estimating function for $\theta$ under LTRC: 
\begin{align}
    U(\theta; \pi,F,G,S_D)  
    & = \frac{A - \pi(Z)}{\pi(Z)\{1-\pi(Z)\}} \left\{ \V(\nu; F,G,S_D) - \V(1;F,G,S_D)\mu(A,Z;F) \right\} \nonumber \\
    & \quad + \V(1;F,G,S_D) \{\mu(1,Z;F) - \mu(0,Z;F) - \theta\}, \label{eq:AIPW_3bias}
\end{align}
where the expressions for $\V(\nu; F,G,S_D)$ and $\V(1; F,G,S_D)$, 
as well as the derivation of $U$ is in Appendix \ref{sec:crossprod_norm_properties}.

Lemma \ref{lem:DR_3bias} below shows that $U$ is doubly robust. 
\begin{lemma}[Double robustness of $U$] \label{lem:DR_3bias}
    Under Assumptions \ref{ass:consistency} -  \ref{ass:strict_positivity}, 
    $\E\left[U(\theta_{0}; \pi,F,G,S_D) \right] =0$
    if either $F = F_0$ or $(\pi,G,S_D) = (\pi_0,G_0,S_{D0})$.
\end{lemma}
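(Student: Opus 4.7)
The plan is to recognize that the estimating function $U$ defined in \eqref{eq:AIPW_3bias} is nothing but the operator $\V$ applied coordinate-wise to the LTRC-free doubly robust estimating function $U_0$ of \eqref{eq:U0}, and then invoke Lemma \ref{lem:DR_V} together with the well-known double robustness of the AIPTW estimating function $U_0$.

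First, I would decompose $U_0(\theta; \pi, F)$, viewed as a function of $(T^*, A^*, Z^*)$, into pieces that are of the form ``function of $(A^*, Z^*)$ times function of $T^*$.'' Concretely, write
\begin{align*}
U_0(\theta; \pi, F) \;=\; \eta_1(A^*, Z^*)\, \nu(T^*) \;+\; \eta_2(A^*, Z^*)\cdot 1,
\end{align*}
where $\eta_1(A^*, Z^*) = \{A^* - \pi(Z^*)\}/[\pi(Z^*)\{1-\pi(Z^*)\}]$ and $\eta_2$ collects the remaining terms (all functions of $A^*, Z^*$ only). Inspecting the definitions \eqref{eq:AIPW_Q} and \eqref{eq:AIPCW}, the operators $\V_Q$ and $\V_C$ treat $A, Z$ purely as conditioning variables: every conditional CDF, conditional hazard, backward-martingale increment, and inverse weight that appears is conditional on $(A, Z)$. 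Consequently $\V$ is linear in its input $\zeta$ and any factor depending only on $(A, Z)$ passes through as a multiplicative constant. Since $A = A^*$ and $Z = Z^*$ on the observed subpopulation, this yields
\begin{align*}
\V\bigl(U_0(\theta; \pi, F);\, F, G, S_D\bigr) \;=\; \eta_1(A, Z)\, \V(\nu; F, G, S_D) \;+\; \eta_2(A, Z)\, \V(1; F, G, S_D),
\end{align*}
which, upon substituting the definition of $\eta_2$, is exactly the expression \eqref{eq:AIPW_3bias} for $U(\theta; \pi, F, G, S_D)$.

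Next, I would apply Lemma \ref{lem:DR_V} to $\zeta = U_0(\theta_0; \pi, F)$, which is bounded almost surely under Assumption \ref{ass:strict_positivity}. The lemma gives
\begin{align*}
\E\bigl[U(\theta_0; \pi, F, G, S_D)\bigr] \;=\; \beta^{-1}\, \E\bigl[U_0(\theta_0; \pi, F)\bigr]
\end{align*}
whenever $F = F_0$ or $(G, S_D) = (G_0, S_{D0})$. Finally, the classical double robustness of the AIPTW score (see \citet{tsiatis2006semiparametric}) implies that $\E[U_0(\theta_0; \pi, F)] = 0$ whenever $F = F_0$ or $\pi = \pi_0$. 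Combining the two double robustness statements case by case settles the claim: if $F = F_0$, then Lemma \ref{lem:DR_V} and the AIPTW property both activate directly; if $(\pi, G, S_D) = (\pi_0, G_0, S_{D0})$, Lemma \ref{lem:DR_V} with $(G, S_D) = (G_0, S_{D0})$ reduces matters to $\beta^{-1}\E[U_0(\theta_0; \pi_0, F)]$, which vanishes by the AIPTW property.

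The only real step requiring care is the factoring identity $U = \V(U_0; F, G, S_D)$, since it rests on $\V$ being linear in $\zeta$ and commuting with multiplication by $(A, Z)$-measurable factors. I expect no substantive obstacle here: both properties are immediate from the formulas \eqref{eq:AIPW_Q} and \eqref{eq:AIPCW}, noting in particular that for $\zeta(T^*, A^*, Z^*) = \eta(A^*, Z^*)\, \tilde\zeta(T^*)$ the conditional mean $m_\zeta(v, A, Z; F) = \eta(A, Z)\, m_{\tilde\zeta}(v, A, Z; F)$, and analogously for $\bar m_\xi$ inside $\V_C$. Once the decomposition is in place, the conclusion is a one-line consequence of Lemma \ref{lem:DR_V} and the known double robustness of $U_0$.
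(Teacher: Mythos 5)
Your proof is correct and takes essentially the same route as the paper's: the paper also observes that $U = \V(U_0;F,G,S_D)$, invokes Lemma \ref{lem:DR_V}, and then uses the classical double robustness of the AIPTW score $U_0$, with the factoring identity justified by exactly the two properties of $\V$ (linearity, and commuting with $(A,Z)$-measurable factors) that you identify. The only caveat worth flagging is that the boundedness of $U_0$ needed to apply Lemma \ref{lem:DR_V} requires the candidate $\pi$ to satisfy the positivity bounds, not just $\pi_0$; the paper makes this explicit in Theorem \ref{thm:mdr} but leaves it implicit here, as you do.
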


\subsection{Orthogonal and DR loss for CATE}\label{sec:orthogonal_loss}

We first review the more general orthogonal loss and  doubly robust loss functions for the LTRC-free data \citep{foster2023orthogonal, kennedy2023towards}. 
Let $\ell(T^*,A^*,Z^*; \tau,\eta)$ be a loss function for $\tau$ such that 
$\tau_0 = \min_{ \tau\in\T} \E \{\ell(T^*,A^*,Z^*; \tau,\eta_0)\}$, with possible nuisance parameter $\eta\in \mathcal{N}$. 
For a vector space $\mathcal{F}$  of functions and 
a functional $\Phi:\mathcal{F} \to \mathbb{R}$,   define
 the directional derivative $D_f \Phi(f)[h] = d \Phi(f+th)/ dt |_{t=0}$ for any $f,h\in\mathcal{F}$.
We say that $\ell$ is {\it Neyman orthogonal} \citep{neyman1959optimal,neyman1979c} if $D_\eta D_\tau \E\{\ell(T^*,A^*,Z^*; \tau,\eta)\}[\tau-\tau_0, \eta - \eta_0] = 0$ for any $\tau\in\T$ and $\eta\in\mathcal{N}$.
Note that when $\tau$ is finite dimensional, the derivative $D_\tau \E\{\ell(T^*,A^*,Z^*; \tau,\eta)\}$ can be viewed as an estimating function for $\tau$. 
In addition, if $\eta = (\eta_1,\eta_2)$, we say that $\ell$ is  {\it doubly robust} if 
$\E\{\ell(T^*,A^*,Z^*; \tau,\eta_{10},\eta_2)\} = \E\{\ell(T^*,A^*,Z^*; \tau,\eta_{1},\eta_{20})\} = \E\{\ell(T^*,A^*,Z^*; \tau,\eta_{10},\eta_{20})\}$.

Applying the operator $\V$ to $\ell$, 
we obtain a loss function in the observed data  under LTRC:
  $\tilde\ell(\tau;\eta, F,G,S_D) = \V(\ell;F,G,S_D).$ 
Lemma \ref{lem:orthogonal_DR_loss}  
below shows that $\V$ preserves the orthogonality and double robustness of $l$.

\begin{lemma}[$\V$ preserves orthogonality and DR of loss]\label{lem:orthogonal_DR_loss}
    Under Assumptions \ref{ass:consistency} -  \ref{ass:cen} and \ref{ass:strict_positivity}(ii),
   \\
    (i) if $\ell(T^*,A^*,Z^*;\tau,\eta)$ is a Neyman orthogonal loss in the LTRC-free data, then $\tilde\ell(\tau;\eta,F,G,S_D)$
     is a Neyman orthogonal loss in the observed data, with $\eta,F,G,S_D $ as the nuisance parameters; \\
    (ii) if $\eta = (\eta_1,\eta_2)$ and $\ell(T^*,A^*,Z^*; \tau,\eta_1,\eta_2)$ is a doubly robust loss with respect to $\eta_1$ and $\eta_2$ in the LTRC-free data, 
    then 
    $\tau_0 = \argmin_\tau \E\{\tilde\ell(\tau;\eta_1,\eta_2,F,G,S_D)\}$
    if (a) either $\eta_1 = \eta_{10}$ or $\eta_2 = \eta_{20}$, and (b) either $F = F_0$ or $(G,S_D) = (G_0,S_{D0})$. 
\end{lemma}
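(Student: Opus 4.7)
The key mechanism is that Lemma \ref{lem:DR_V} collapses $\E\{\tilde\ell(\tau;\eta,F,G,S_D)\}$ to $\beta^{-1}\,\E\{\ell(T^*,A^*,Z^*;\tau,\eta)\}$ whenever $F=F_0$ or $(G,S_D)=(G_0,S_{D0})$; that is, the expected LTRC-adjusted risk reduces to a positive constant multiple of the LTRC-free risk that carries no $F$, $G$, or $S_D$ dependence. Both parts of the lemma drop out of this reduction combined with the stipulated property of the underlying $\ell$.

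For part (ii), I first apply Lemma \ref{lem:DR_V} using condition (b) to obtain
\[
\E\{\tilde\ell(\tau;\eta_1,\eta_2,F,G,S_D)\} \;=\; \beta^{-1}\,\E\{\ell(T^*,A^*,Z^*;\tau,\eta_1,\eta_2)\}.
\]
Condition (a) together with the assumed double robustness of $\ell$ then replaces the right-hand side by $\beta^{-1}\,\E\{\ell(T^*,A^*,Z^*;\tau,\eta_{10},\eta_{20})\}$, which by the defining property of $\ell$ has its minimum in $\tau$ attained at $\tau_0$. Since $\beta^{-1}>0$ under Assumption \ref{ass:strict_positivity}, the argmin is preserved and part (ii) follows.

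For part (i), I would write $\tilde\eta=(\eta,F,G,S_D)$ and, by bilinearity of the directional derivative in the nuisance argument, decompose
\[
D_{\tilde\eta}D_\tau\,\E\{\tilde\ell\}[\tau-\tau_0,\tilde\eta-\tilde\eta_0] \;=\; D_\eta D_\tau[\cdot] + D_F D_\tau[\cdot] + D_G D_\tau[\cdot] + D_{S_D} D_\tau[\cdot],
\]
all evaluated at $\tilde\eta_0=(\eta_0,F_0,G_0,S_{D0})$. For the $\eta$-term, set $(F,G,S_D)=(F_0,G_0,S_{D0})$: Lemma \ref{lem:DR_V} gives $\E\{\tilde\ell\}=\beta^{-1}\E\{\ell\}$, and Neyman orthogonality of $\ell$ kills this cross-derivative. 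For each of the $F$-, $G$-, and $S_D$-terms, I fix the other three nuisances at their true values so that the hypothesis of Lemma \ref{lem:DR_V} holds automatically for every value of the perturbed nuisance; for instance, $\E\{\tilde\ell(\tau;\eta_0,F,G_0,S_{D0})\}=\beta^{-1}\E\{\ell(T^*,A^*,Z^*;\tau,\eta_0)\}$ is identically constant in $F$, so $D_F D_\tau$ vanishes, and analogous arguments dispatch $G$ and $S_D$.

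The substantive content is thus exhausted by Lemma \ref{lem:DR_V}; no further probabilistic input is needed. I expect the only real obstacle to be regularity bookkeeping: making the bilinear decomposition of $D_{\tilde\eta}$ rigorous in a suitable function-space sense, and verifying that the perturbation directions $F-F_0$, $G-G_0$, and $S_D-S_{D0}$ can be chosen within an admissible class on which the denominators in \eqref{eq:AIPW_Q} and \eqref{eq:AIPCW} stay uniformly bounded away from zero. Assumption \ref{ass:strict_positivity}(ii) should make this routine and justify the usual interchange of differentiation and expectation.
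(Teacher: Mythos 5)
Your proof is correct and follows essentially the same route as the paper's: Lemma \ref{lem:DR_V} collapses $\E\{\tilde\ell\}$ to $\beta^{-1}\E\{\ell\}$ whenever one of $F=F_0$ or $(G,S_D)=(G_0,S_{D0})$ holds, part (ii) drops out immediately, and part (i) is handled by splitting the cross-derivative into its $\eta$, $F$, $G$, $S_D$ components and killing each one by fixing the others at truth so that the Lemma \ref{lem:DR_V} hypothesis is automatic. The only cosmetic difference is that the paper organizes the four vanishing terms as separately labeled computations rather than invoking linearity of the directional derivative up front, but the content is identical.
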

We note that whether part (ii) above translates to double robustness in estimation depends on the relationship between the parameters $(\eta_1,\eta_2) $ and $(F,G,S_D) $, as elaborated in the next section. 

In the following  we consider two specific loss functions  for CATE without LTRC: the R-loss and the DR-loss. 
For the special case of $V^* = Z^*$, the R-loss considered in \cite{nie2021quasi} 
is:  
\begin{align}
    \ell_\RR(\tau;\pi,F) 
    & = \left[\{\nu(T^*) - \tilde\mu(Z^*;\pi,F)\} - \{A^*-\pi(Z^*)\} \tau(Z^*) \right]^2, \label{eq:R_loss}
\end{align}
where $\tilde\mu(z;\pi,F) = \mu(1,z;F)\pi(z) + \mu(0,z;F)\{1-\pi(z)\}$. It is a Neyman orthogonal loss for the CATE defined in \eqref{eq:CATE_definition}. 
We note that when $V^*$ is a proper subset of  $Z^*$,  \citet{morzywolek2023general} showed that  \eqref{eq:R_loss} with $\tau(Z^*)$   replaced by $\tau(V^*)$, leads to a weighted CATE estimand.  
In the following we will only consider $V^* = Z^*$ for the R-loss.

\cite{kennedy2023towards} considered 
the DR-loss:  
\begin{align}
    \ell_{\DR}(\tau;\pi,F) 
    &= \left[ \frac{A^* - \pi(Z^*)}{\pi(Z^*)\{1-\pi(Z^*)\}} \{ \nu(T^*) - \mu(A^*,Z^*; F) \} + \mu(1,Z^*; F) - \mu(0,Z^*; F) - \tau(V^*)\right]^2.  \label{eq:DR_loss}
\end{align}
It is Neyman orthogonal and doubly robust in terms of $F$ and $\pi$. 
We note that for the special case with $V^* = \varnothing$ and $\theta = \tau(\varnothing)$  the ATE,  $\partial\ell_{\DR}(\theta;\mu, \pi)/\partial \theta$ gives the AIPTW estimating function \eqref{eq:U0} for $\theta $ in the LTRC-free data.

Applying $\V$ to the R- and  DR-loss functions, we obtain (after dropping a function of the observed data $O$ that does not involve $\tau$; see Appendix \ref{app:derivations}): 
\begin{align}
    \tilde\ell_\RR(\tau;\pi,F,G,S_D) 
     & = \V(1;F,G,S_D) 
      \left[ \frac{\V(\nu;F,G,S_D)}{\V(1;F,G,S_D)} - \tilde\mu(Z;\pi,F) - \{A-\pi(Z)\}\tau(Z) \right]^2,  
\label{eq:ltrcR-loss} \\
    \tilde\ell_\DR(\tau;\pi,F,G,S_D) 
     & = \V(1;F,G,S_D) 
      \left[ \frac{A - \pi(Z)}{\pi(Z)\{1-\pi(Z)\}} \left\{ \frac{\V(\nu;F,G,S_D)}{\V(1;F,G,S_D)} - \mu(A,Z;F) \right\} \right. \nonumber \\
     & \qquad\qquad\qquad\qquad\quad  + \mu(1,Z;F) - \mu(0,Z;F) - \tau(V) \Bigg]^2. 
    \label{eq:ltrcDR-loss}
\end{align}
We refer to the above as the  ltrcR-loss and the ltrcDR-loss, respectively.
By the Neyman orthogonality of the R- and DR-loss,  and the double robustness of the DR-loss in the LTRC-free data, we have immediately the following corollary  from Lemma \ref{lem:orthogonal_DR_loss}.

\begin{corollary}\label{cor:ltrcR-ltrcDR-loss}
    Under Assumptions \ref{ass:consistency} - \ref{ass:strict_positivity}, we have \\
    (i) The ltrcR-loss \eqref{eq:ltrcR-loss} and the ltrcDR-loss \eqref{eq:ltrcDR-loss} are Neyman orthogonal  in the observed data; \\
    (ii) The ltrcDR-loss \eqref{eq:ltrcDR-loss} is doubly robust in that for any $\tau\in\T$,
    $$\E\{\tilde\ell_{\DR}(\tau;\pi,F_0,G,S_D)\} = \E\{\tilde\ell_{\DR}(\tau;\pi_0,F,G_0,S_{D0})\} = \E\{\tilde\ell_{\DR}(\tau;\pi_0,F_0,G_0,S_{D0})\}.
    $$
\end{corollary}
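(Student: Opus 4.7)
The plan is to derive Corollary~\ref{cor:ltrcR-ltrcDR-loss} as a direct consequence of Lemma~\ref{lem:orthogonal_DR_loss}, combined with two standard facts about the LTRC-free losses: both $\ell_\RR$ and $\ell_\DR$ are Neyman orthogonal with respect to the nuisance $(\pi, F)$, and $\ell_\DR$ is additionally doubly robust in $(\pi, F)$, as in \citet{nie2021quasi} and \citet{kennedy2023towards}. I will also use the observation from Appendix~\ref{app:derivations} that $\tilde\ell_\RR$ and $\tilde\ell_\DR$ agree with $\V(\ell_\RR)$ and $\V(\ell_\DR)$, respectively, up to additive terms that are measurable functions of $O$ but do not depend on $\tau$.

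For part~(i), I would apply Lemma~\ref{lem:orthogonal_DR_loss}(i) to $\ell_\RR$ and $\ell_\DR$ to conclude that $\V(\ell_\RR)$ and $\V(\ell_\DR)$ are Neyman orthogonal in the observed data with the enlarged nuisance $(\pi, F, G, S_D)$. Since the orthogonality condition $D_\eta D_\tau \E[\cdot]\big|_{\text{truth}} = 0$ begins with a directional derivative in $\tau$, the $\tau$-independent differences between $\tilde\ell_\RR, \tilde\ell_\DR$ and $\V(\ell_\RR), \V(\ell_\DR)$ are annihilated by $D_\tau$ and contribute nothing; orthogonality therefore transfers unchanged. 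For part~(ii), I would first apply Lemma~\ref{lem:DR_V} with $\zeta = \ell_\DR(\tau;\pi,F)$ to obtain $\E[\V(\ell_\DR;F,G,S_D)] = \beta^{-1}\E[\ell_\DR(\tau;\pi,F)]$ whenever $F = F_0$ or $(G,S_D) = (G_0,S_{D0})$. Chained with the DR of $\ell_\DR$ in $(\pi, F)$, each of the two configurations in the corollary reduces to $\beta^{-1}\E[\ell_\DR(\tau;\pi_0,F_0)]$. To pass from $\V(\ell_\DR)$ to the stated $\tilde\ell_\DR$, I would then verify that the expectation of the dropped $\tau$-independent difference---which can be written as $\V(1)\hat\phi^2 - \V(\phi^2)$, where $\hat\phi = \V(\phi)/\V(1)$ is the observed-data analog of the AIPW pseudo-outcome $\phi$---also satisfies the same equality under the two configurations. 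This reduces to an algebraic identity obtained from Lemma~\ref{lem:DR_V}, the linearity of $\V$, and the factorization $\V(h(A^*,Z^*)\zeta) = h(A,Z)\V(\zeta)$ for $(A^*,Z^*)$-measurable $h$.

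The principal obstacle I anticipate is the bookkeeping in part~(ii). Lemma~\ref{lem:orthogonal_DR_loss}(ii) directly delivers DR in the $\argmin$ sense for $\V(\ell_\DR)$, whereas the corollary states a stronger pointwise-in-$\tau$ expectation equality for $\tilde\ell_\DR$. Because $\tilde\ell_\DR$ and $\V(\ell_\DR)$ differ by a nuisance-dependent but $\tau$-free quantity, establishing the stated equality requires directly computing the expectation of that quantity under each DR configuration. The calculation uses no new tools beyond Lemma~\ref{lem:DR_V} and the linearity of $\V$, but does involve careful algebra with terms like $\V(\nu^2)$ and $\V(\nu)^2/\V(1)$ arising because $\phi^2$ is quadratic in $\nu(T^*)$.
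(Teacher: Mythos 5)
Your treatment of part (i) is correct and matches the paper's intent: Lemma~\ref{lem:orthogonal_DR_loss}(i) applied to $\ell_\RR$ and $\ell_\DR$, together with the observation that the $\tau$\nobreakdash-independent term discarded in passing from $\V(\ell_\RR;F,G,S_D)$ and $\V(\ell_\DR;F,G,S_D)$ to~\eqref{eq:ltrcR-loss} and~\eqref{eq:ltrcDR-loss}, is killed by the inner $D_\tau$ in the definition of Neyman orthogonality.

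For part (ii), the mismatch you flag between Lemma~\ref{lem:orthogonal_DR_loss}(ii) (an $\argmin$ statement) and the displayed pointwise-in-$\tau$ equality is a real one, but the remedy you propose cannot work, because the exact equality you are trying to establish fails already at an earlier stage. The premise ``$\ell_\DR$ is doubly robust in $(\pi,F)$'' that you treat as a standard fact holds only up to an additive $\tau$\nobreakdash-free constant. Writing $\ell_\DR(\tau;\pi,F)=\{\phi(\pi,F)-\tau(V^*)\}^2$ with $\phi$ the AIPW pseudo-outcome, the cross term $\E\{\phi(\pi,F)\,\tau(V^*)\}$ is invariant under the single-misspecification configurations because $\E\{\phi(\pi,F_0)\mid V^*\}=\E\{\phi(\pi_0,F)\mid V^*\}=\tau_0(V^*)$, but the $\tau$\nobreakdash-free term $\E\{\phi(\pi,F_0)^2\}$ and $\E\{\phi(\pi_0,F)^2\}$ differ in general (for instance, $\E\{\phi(\pi,F_0)^2\}$ diverges as $\pi$ approaches the boundary while $\E\{\phi(\pi_0,F)^2\}$ stays bounded). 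Consequently, both $\E\{\V(\ell_\DR)\}$ and the expectation of the dropped term $e(\pi)^2\{\V(\nu^2)-\V(\nu)^2/\V(1)\}$ depend on the chosen nuisance configuration, and no amount of careful bookkeeping will make the three displayed expectations agree exactly. They agree modulo a $\tau$\nobreakdash-free additive function of the nuisances, which is precisely the content of Lemma~\ref{lem:orthogonal_DR_loss}(ii) and all that double robustness of the learned $\hat\tau$ requires. The paper's proof is simply an invocation of Lemma~\ref{lem:orthogonal_DR_loss}(ii) in this $\argmin$-invariance sense, and the displayed equalities in Corollary~\ref{cor:ltrcR-ltrcDR-loss}(ii) should be read the same way; you should cite the lemma directly rather than attempt the exact-equality computation, which you would find does not close.
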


\section{Estimation and theory} \label{sec:nuisance_estimation}

The doubly robust estimation of the ATE using $U$ in \eqref{eq:AIPW_3bias}, or the orthogonal and  doubly robust estimation of the CATE using the ltrcR-loss \eqref{eq:ltrcR-loss} or the ltrcDR-loss \eqref{eq:ltrcDR-loss}, involves two-stage algorithms where in the first stage, the four  nuisance parameters $\pi$, $F$, $G$ and $S_D$ are estimated.  
Figure \ref{fig:DGM_nuisance_summary} in Appendix \ref{supp:nuis} 
shows four different ways to estimate the nuisance parameters, with two each for estimating $G$ and $\pi$, respectively. This is because the estimation of $G$ needs to account for censoring \citep{wang1991nonparametric}, and the estimation of $\pi$ needs to account for 
truncation \citep{cheng2012estimating}; 
more details are also explained in the Appendix.  

The double robustness property of $U$ and $\tilde\ell_\DR$ seperates the nuisance parameters into $F$ and $(\pi,G,S_D)$. In Figure \ref{fig:DGM_nuisance_summary} only scheme (a)  preserves the double robustness because the estimation of $(\pi,G,S_D)$ is separate from the estimation of $F$.  Scheme (b) breaks doubly robustness because the estimation error of $F$ 
will propagate to the estimation of $\pi$ and $G$; schemes (c) and  (d) also breaks doubly robustness for similar reasons. 
For the rest of this paper we only consider scheme (a).


\subsection{ATE}\label{sec:ATE}

With a random sample of size $n$ and  
after estimating the nuisance parameters, we have the estimating equation 
  $   \sum_{i=1}^n U_{i}(\theta; \hat\pi,\hat F,\hat G,\hat S_D) = 0$. The solution  $\hat\theta$ has a closed form which is provided in Appendix \ref{supp:expression}.

When all the nuisance estimators are asymptotically linear, which is often the case when parametric or semiparametric models are used, the estimator $\hat\theta$ enjoys model double robustness, as formally stated in Theorem \ref{thm:mdr} below with the proof in the Appendix.
We use $\convp$ and $\convd$ to denote convergence in probability and in distribution, respectively.
Assumption \ref{ass:uniformcons1} in the Appendix assumes that  $\hat\pi, \hat F, \hat G, \hat S_D$ converge uniformly to $\pi^\divideontimes,F^\divideontimes,G^\divideontimes,S_D^\divideontimes$, respectively. 

\begin{theorem}[Model double robustness]\label{thm:mdr}
	(i) Under Assumptions \ref{ass:consistency} 
    - \ref{ass:strict_positivity} and \ref{ass:uniformcons1} in the Appendix, assume 
  additionally that $\pi^\divideontimes,F^\divideontimes, G^\divideontimes,S_D^\divideontimes$ also satisfy Assumption \ref{ass:strict_positivity}. 
 Then   $\hat \theta \convp \theta_{0}$ if either $F^\divideontimes = F_0$ or $(\pi^\divideontimes,G^\divideontimes,S_D^\divideontimes) = (\pi_0,G_0,S_{D0})$.
(ii) In addition, if $\hat\pi, \hat F, \hat G, \hat S_D$ are asymptotically linear satisfying Assumption \ref{ass:AL} in the Appendix, then
	$\sqrt{n}(\hat \theta - \theta_{0}) \convd N(0, \sigma^2)$.
	Furthermore, when both $F^\divideontimes = F_0$ and $(\pi^\divideontimes,G^\divideontimes,S_D^\divideontimes) = (\pi_0,G_0,S_{D0})$, we have $\sigma^2 = \beta^2 \E\{U(\theta_{0}; \pi_0,F_0,G_0,S_{D0})^2\}$, which can be consistently estimated by 
$  \hat\sigma^2 = n  \sum_{i=1}^n U_{i}^2( \hat\theta; \hat\pi, \hat F, \hat G, \hat S_D) /
  \{ \sum_{i=1}^n \V_i(1;\hat F,\hat G,\hat S_D)\}^2  $. 
\end{theorem}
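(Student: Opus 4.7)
The plan is to exploit the linearity of $U$ in $\theta$. Writing $U(\theta;\pi,F,G,S_D) = \tilde U(\pi,F,G,S_D) - \V(1;F,G,S_D)\,\theta$ for a piece $\tilde U$ that does not depend on $\theta$, the sample estimating equation solves to the ratio
\begin{align*}
\hat\theta - \theta_0 \;=\; \frac{n^{-1}\sum_{i=1}^n U_i(\theta_0;\,\hat\pi,\hat F,\hat G,\hat S_D)}{n^{-1}\sum_{i=1}^n \V_i(1;\,\hat F,\hat G,\hat S_D)}.
\end{align*}
Both parts of the theorem will be proved by analyzing the numerator and denominator of this ratio separately. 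The denominator will always tend in probability to $\E\{\V(1;F^\divideontimes,G^\divideontimes,S_D^\divideontimes)\}$, which Lemma \ref{lem:DR_V} identifies as $\beta^{-1}>0$ whenever either branch of the DR hypothesis holds at the limits.

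For part (i), a uniform law of large numbers applied to the numerator, together with boundedness of $\nu$, the positivity bounds in Assumption \ref{ass:strict_positivity} imposed on both the true and limiting nuisances, and the uniform consistency in Assumption \ref{ass:uniformcons1}, gives $n^{-1}\sum_i U_i(\theta_0;\hat\pi,\hat F,\hat G,\hat S_D) \convp \E\{U(\theta_0;\pi^\divideontimes,F^\divideontimes,G^\divideontimes,S_D^\divideontimes)\}$. This right-hand side vanishes by Lemma \ref{lem:DR_3bias} under the stated DR hypotheses, and Slutsky's lemma applied to the ratio completes the argument.

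For part (ii), the main work is a first-order expansion of $n^{-1/2}\sum_i U_i(\theta_0;\hat\pi,\hat F,\hat G,\hat S_D)$ around the limiting nuisance values. Using the asymptotic-linearity expansions from Assumption \ref{ass:AL}, this decomposes into the empirical process $n^{-1/2}\sum_i U_i(\theta_0;\pi^\divideontimes,F^\divideontimes,G^\divideontimes,S_D^\divideontimes)$, four drift terms of the form $D_\eta \Gamma \cdot n^{-1/2}\sum_i \phi_{\hat\eta}(O_i)$ for $\eta\in\{\pi,F,G,S_D\}$ where $\Gamma(\pi,F,G,S_D)=\E\{U(\theta_0;\pi,F,G,S_D)\}$ and $\phi_{\hat\eta}$ is the corresponding nuisance influence function, and a second-order $o_p(1)$ remainder handled by standard empirical-process arguments under Assumption \ref{ass:uniformcons1}. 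Dividing by the denominator's probability limit and invoking the CLT gives $\sqrt{n}(\hat\theta-\theta_0)\convd N(0,\sigma^2)$ for a $\sigma^2$ combining the variance of the leading empirical process with the covariance contributions of the $\phi_{\hat\eta}$'s. The main obstacle, and the source of the clean variance formula, is the simplification of $\sigma^2$ when both models are correct: one must show that at $(\pi_0,F_0,G_0,S_{D0})$ each of the four pathwise derivatives $D_\pi\Gamma,D_F\Gamma,D_G\Gamma,D_{S_D}\Gamma$ vanishes. This follows from Lemma \ref{lem:DR_3bias}, because perturbing any single argument away from its true value while the other three are held at the truth still satisfies either branch of the DR hypothesis, hence $\Gamma$ stays identically zero along that one-parameter path and its derivative at the truth is $0$. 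Consequently only the leading empirical-process term survives and $\sigma^2 = \beta^2\,\E\{U(\theta_0;\pi_0,F_0,G_0,S_{D0})^2\}$. Consistency of $\hat\sigma^2$ finally follows by the continuous-mapping theorem applied to the explicit ratio $\hat\sigma^2 = (n^{-1}\sum_i U_i^2(\hat\theta;\hat\pi,\hat F,\hat G,\hat S_D))/(n^{-1}\sum_i \V_i(1;\hat F,\hat G,\hat S_D))^2$, using a uniform LLN for the numerator (with $\hat\theta\convp\theta_0$ plugged in by part (i)) together with the denominator limit $\beta^{-1}$ already established above.
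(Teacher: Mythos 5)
Your proposal is correct and follows what is essentially the canonical argument for model double robustness with asymptotically linear nuisance estimators; the paper itself does not reproduce a proof of this theorem but defers to Wang et al.\ (2024), Theorem~S2, whose technique is the same decomposition you outline. Two remarks are worth making.

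First, your core observation that $U$ is affine in $\theta$, so that
\begin{align*}
\hat\theta - \theta_0 \;=\; \frac{n^{-1}\sum_{i}U_i(\theta_0;\hat\pi,\hat F,\hat G,\hat S_D)}{n^{-1}\sum_i \V_i(1;\hat F,\hat G,\hat S_D)},
\end{align*}
is exactly right (it can be read off from the closed form in Appendix~\ref{supp:expression}), and it correctly reduces the problem to (a) the probability limit $\beta^{-1}$ of the denominator, controlled by Lemma~\ref{lem:DR_V}, and (b) the numerator, whose mean-zero property under either DR branch is Lemma~\ref{lem:DR_3bias}. Your argument for the vanishing of the four Gâteaux derivatives of $\Gamma(\pi,F,G,S_D)=\E\{U(\theta_0;\pi,F,G,S_D)\}$ at the truth is also sound: fixing three arguments at their truth and perturbing one always keeps us on the zero set of $\Gamma$ by one branch or the other of Lemma~\ref{lem:DR_3bias}, so the one-sided path derivative is identically zero, and the drift terms disappear when both model sets are correct.

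Second, the one place your sketch is a bit lighter than the argument ultimately needs to be is the control of the remainder after the first-order expansion in part~(ii). Because $F,G,S_D$ are function-valued and enter $U$ through stochastic integrals against $d\bar M_Q$ and $dM_D$, the quadratic remainder contains integral-product terms (of the type catalogued in Appendix~\ref{app:integral_prod}) that one cannot bound by $\|\hat F-F^\divideontimes\|\cdot\|\hat G-G^\divideontimes\|$ alone; this is precisely why Assumption~\ref{ass:AL} also requires the total-variation conditions $\|R_2\|_{\TV}=o(1)$ or $\|R_3\|_{\TV}=o(1),\|R_4\|_{\TV}=o(1)$. Your phrase ``standard empirical-process arguments under Assumption~\ref{ass:uniformcons1}'' glosses over this; in a complete proof you would invoke the TV assumptions at exactly this step, as the cited Wang et al.\ proof does. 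With that caveat, the structure and conclusions of your argument are the intended ones.
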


In practice, flexible nonparametric or ML methods are increasingly being used. 
These methods typically have slower than  $n^{-1/2}$ convergence rate and, when used  to estimate the nuisance parameters, cross-fitting  \citep{hasminskii1978nonparametric, bickel1982adaptive, robins2008higher, chernozhukov2018double} is needed in order to obtain $n^{-1/2}$ consistent estimators of the ATE.  
The resulting estimator, denoted by $\hat\theta_{cf}$,  enjoys a rate double robustness property \citep{rotnitzky2021characterization, hou2021treatment, wang2024doubly} under the assumptions below.

With norms and integral products defined in Appendix \ref{app:integral_prod}, we have the following. 
\begin{assumption}[Uniform Convergence]\label{ass:uniformcons2}
	$ \|\hat\pi- \pi_0\|_{2} = o_p(1) $, 
    $ \|\hat F- F_0\|_{\sup, 2} = o_p(1) $, 
    $ \|\hat G- G_0\|_{\sup, 2} = o_p(1)$, 
  and  $ \|\hat S_D - S_{D0}\|_{\sup, 2} = o_p(1) $. 
\end{assumption}
\begin{assumption}[Product rate condition] \label{ass:prodrate}
\begin{align}
    & \|\hat F-F_0\|_{\sup,2} \cdot \left\{ \|\hat\pi - \pi_0\|_{2} + \|\hat G-G_0\|_{\sup,2} + \|\hat S_D - S_{D0}\|_{\sup,2} \right\} \nonumber  \\
    &\qquad\qquad\quad + \|K_1(\hat g,g_0)\|_1 + \|K_2(\hat g,g_0)\|_1 + \|K_3(\hat g,g_0)\|_1 = o_p(n^{-1/2}). \nonumber 
\end{align}
\end{assumption}
In the above $g = (F,G,S_D)$ and $g_0 = (F_0,G_0,S_{D0})$, and $K_1(g, g_0)$, $K_2(g, g_0)$, and $K_3(g, g_0)$ are integral products between the estimation errors of $F$ and $(G, S_D)$, with detailed expressions  in Appendix \ref{app:integral_prod}. 
These integral products are encountered when considering  rate DR for time-to-event data where more than one set of nuisance parameters are functions of time  \citep{ying2023cautionary,  wang2024doubly, westling2024inference, luo2023doubly}.

The theorem below assumes $K$-fold cross-fitting, with  the data partitioned into 
$\I_1,...,\I_K$; 
$\hat\pi^{(-k)}, \hat F^{(-k)}, \hat G^{(-k)}, \hat S_D^{(-k)}$ are estimated using the out-of-$\I_k$ data ($k=1,...,K$). The estimator $\hat\theta_{cf}$ is then obtained by solving 
$     \sum_{k=1}^K \sum_{i\in \I_k} U_i\left(\theta; \hat\pi^{(-k)}, \hat F^{(-k)}, \hat G^{(-k)}, \hat S_D^{(-k)}\right) = 0$.
\begin{theorem}[Rate double robustness] \label{thm:rdr}
    Under Assumptions \ref{ass:consistency} - 
    \ref{ass:uniformcons2},  $\hat\theta_{cf} \convp \theta_{0}$. In addition, if Assumption \ref{ass:prodrate} holds, then 
    $n^{1/2}(\hat\theta_{cf} - \theta_{0}) \convd N(0, \sigma^2)$, where $\sigma^2 = \beta^2 \E\{U(\theta_{0}; \pi_0,F_0,G_0,S_{D0})^2\}$, which can be consistently estimated by 
     $\hat\sigma_{cf}^2 = \\ 
 n  \left.  \sum_{i,k} U_{i}^2 \left( \hat\theta_{cf}; \hat\pi^{(-k)}, \hat F^{(-k)}, \hat G^{(-k)}, \hat S_D^{(-k)}  \right) \right/ \left\{  \sum_{i,k} \V_i(1;\hat F^{(-k)},\hat G^{(-k)},\hat S_D^{(-k)}) /n \right\}^2$.  
\end{theorem}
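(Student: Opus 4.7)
The plan is to exploit the doubly robust form of $U$ together with cross-fitting, via the standard three-step recipe of (a) linearize the estimating equation around $\theta_0$, (b) use sample splitting to handle stochastic dependence between the nuisance estimates and the summands, and (c) quantify the von-Mises-style remainder with the product rate assumption. Since $U(\theta;\pi,F,G,S_D)$ is affine in $\theta$ with slope $-\V(1;F,G,S_D)$, solving the cross-fit estimating equation gives the explicit form
\begin{equation*}
\hat\theta_{cf} - \theta_0
= \frac{ \sum_{k}\sum_{i\in\I_k} U_i\!\left(\theta_0;\hat\pi^{(-k)},\hat F^{(-k)},\hat G^{(-k)},\hat S_D^{(-k)}\right) }{ \sum_{k}\sum_{i\in\I_k} \V_i\!\left(1;\hat F^{(-k)},\hat G^{(-k)},\hat S_D^{(-k)}\right) }.
\end{equation*}
Denote the denominator by $\hat D_n$. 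Since $\E\{\V(1;F_0,G_0,S_{D0})\}=\beta^{-1}\cdot 1=\beta^{-1}$ by Lemma 1 with $\zeta\equiv 1$, Assumption 6 together with a uniform bounded-ness argument gives $\hat D_n/n \convp \beta^{-1}$, so only the numerator needs attention.

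For the numerator, condition on the out-of-fold data $\I_k^c$ and write
\begin{equation*}
\frac{1}{\sqrt n}\sum_{k}\sum_{i\in\I_k} U_i(\theta_0;\hat g^{(-k)})
= \frac{1}{\sqrt n}\sum_{i=1}^n U_i(\theta_0;g_0) + R_{n,1} + R_{n,2},
\end{equation*}
where $R_{n,1}=\frac{1}{\sqrt n}\sum_k\sum_{i\in\I_k}\{U_i(\theta_0;\hat g^{(-k)})-U_i(\theta_0;g_0)-\E[U(\theta_0;\hat g^{(-k)})-U(\theta_0;g_0)\mid\I_k^c]\}$ is the conditionally-centered stochastic remainder, and $R_{n,2}=\frac{1}{\sqrt n}\sum_k |\I_k|\,\E[U(\theta_0;\hat g^{(-k)})-U(\theta_0;g_0)\mid\I_k^c]$ is the bias. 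Conditional variance for $R_{n,1}$ is $O(\|\hat g^{(-k)}-g_0\|_2^2)$, which is $o_p(1)$ under Assumption 6, so $R_{n,1}=o_p(1)$ by Chebyshev. A CLT on $\frac{1}{\sqrt n}\sum U_i(\theta_0;g_0)$, combined with Slutsky and $\hat D_n/n\convp\beta^{-1}$, then yields the claimed $N(0,\sigma^2)$ limit with $\sigma^2=\beta^2\E\{U(\theta_0;g_0)^2\}$. Consistency (part (i)) follows by the same decomposition at the $o_p(1)$ level: uniform convergence of all four nuisances and boundedness (Assumption 5) give $n^{-1}\sum U_i(\theta_0;\hat g^{(-k)})\convp 0$ and $\hat D_n/n\convp \beta^{-1}>0$.

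The main obstacle is showing $R_{n,2}=o_p(1)$, i.e., that the conditional bias of $U$ at $\theta_0$ is $o_p(n^{-1/2})$ under the product rate condition. Since $\E\{U(\theta_0;g_0)\}=0$ and, by Lemma 2, $U$ is doubly robust with nuisance split into $\{F\}$ versus $\{\pi,G,S_D\}$, a functional Taylor expansion of $g\mapsto \E\{U(\theta_0;g)\}$ around $g_0$ has vanishing first-order Gateaux derivatives along both blocks; only cross-products between $\hat F-F_0$ and each of $\hat\pi-\pi_0$, $\hat G-G_0$, $\hat S_D-S_{D0}$ survive. The $\hat F-F_0$ versus $\hat\pi-\pi_0$ term is bounded, via Cauchy-Schwarz, by $\|\hat F-F_0\|_{\sup,2}\cdot\|\hat\pi-\pi_0\|_2$. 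The two time-dependent cross-terms, involving the martingale integrals $d\bar M_Q(\cdot;G)$ and $dM_D(\cdot;S_D)$ appearing in $\V$, do not reduce to such clean bilinear forms; they are precisely the integral products $K_1(\hat g,g_0),K_2(\hat g,g_0),K_3(\hat g,g_0)$ defined in Appendix C. Assumption 7 bundles all of these, so combining them gives $R_{n,2}=o_p(1)$. Finally, for the variance estimator, consistency of $\hat\sigma_{cf}^2$ follows by the same uniform-convergence plus dominated-convergence argument applied to $U_i^2$, using that $\hat D_n/n\convp\beta^{-1}$ and $n^{-1}\sum_{i,k} U_i^2(\hat\theta_{cf};\hat g^{(-k)}) \convp \E\{U(\theta_0;g_0)^2\}$.
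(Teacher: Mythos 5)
Your proposal is correct and follows essentially the same route as the paper's proof: express $\hat\theta_{cf}-\theta_0$ as the ratio of a cross-fitted sample average of $U_i(\theta_0;\hat g^{(-k)})$ over the cross-fitted sample average of $\V_i(1;\hat g^{(-k)})$, show the denominator converges to $\beta^{-1}$, split the numerator into the oracle term plus a conditionally centered stochastic remainder (killed by conditional Chebyshev under Assumption \ref{ass:uniformcons2}) plus a conditional bias (killed by the mixed-bias/product-rate structure of Assumption \ref{ass:prodrate}), then apply the CLT and Slutsky. The only cosmetic difference is that you frame the bias bound via a functional Taylor/Gateaux argument with vanishing first-order derivatives, whereas the paper avoids Taylor remainders by using exact algebraic identities (Propositions \ref{prop:W_nuf(A,Z)} and \ref{prop:D_W_bound}) to show the zero-mean structure and to bound $\D_\V$ directly by the product and integral-product terms $K_1,K_2,K_3$; both routes lead to the same $o_p(n^{-1/2})$ conclusion.
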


The proof of Theorem \ref{thm:rdr} is in the Appendix.
Being an influence function (i.e.~orthogonal score),  with cross-fitting $U$ allows $n^{-1/2}$ inference for $\theta$ if the nuisance parameters are estimated at faster than  $n^{-1/4}$ rate \citep{rotnitzky2021characterization}.  
Rate DR, on the other hand,  
only requires that the integral product  rates to be faster than $n^{-1/2}$, without requiring that each nuisance parameters be estimated at faster than $n^{-1/4}$ rate.

\subsection{CATE}\label{sec:CATE}

In the following we show that both the ltrcR-loss \eqref{eq:ltrcR-loss} and the ltrcDR-loss \eqref{eq:ltrcDR-loss} allow oracle rate estimation of the CATE under suitable assumptions; that is, the rate is unaffected by the first-stage nuisance estimation rate. 
For the theory we consider empirical risk minimization  with sample splitting; that is,  
$
\hat\tau = \argmin_{\tau\in\T} \left[ \sum_{i=1}^k \tilde\ell_i(\tau; \hat\pi,\hat F,\hat G,\hat S_D) / k \right]
$, where $\hat\pi,\hat F,\hat G$ and $\hat S_D$ are estimated from data $\{O_i:i=k+1,...,n\}$, and $k$ is roughly $n/2$.

Our main result here utilizes an extension (Lemma \ref{lem:CATE_rate_general} in the Appendix) of the  error bound in \citet{foster2023orthogonal}. 
Since both the ltrcR-loss  and the ltrcDR-loss can be seen as a  weighted squared loss (Appendix \ref{sec:CATE_theory}), and may have negative weights in finite samples, our extension replaces the global strong convexity in \citet{foster2023orthogonal} by a relaxed version. 
In addition, the extension avoids verifying the second order directional derivative of the population risk with respect to the nuisance parameters, which can be complex for cases such as ours. 
The extension in fact allows us to work with a general loss function that may not be Neyman orthogonal; when this is the case, it follows that the nuisance estimation errors have first order impact on the CATE estimation. When the loss function is Neyman orthogonal, the nuisance estimation error is shown to have only second order impact on CATE estimation, which recovers the results in \citet{foster2023orthogonal}. Additionally, if the loss function is also doubly robust, the nuisance estimation errors can be shown to only impact the CATE estimation error through the product and integral product errors. 

In addition to the extension above, we also use Rademacher complexity and critical radius 
\citep{bartlett2005local, wainwright2019high, foster2023orthogonal}
to instantiate oracle rates for the ltrcR-learner and ltrcDR-learner. 
Again with norms and integral products defined in Appendix \ref{app:integral_prod}, we have the following. 
\begin{assumption}\label{k0_converge_to_0}
\begin{align*}
    & \|\hat F-F_0\|_{\infty} \cdot\left\{ \|\hat G-G_0\|_{\infty} + \|\hat S_D - S_{D0}\|_{\infty} \right\} \nonumber \\
    &\qquad + \|\tilde K_1(\hat g,g_0;1)\|_{\infty} + \|\tilde K_2(\hat g,g_0;1)\|_{\infty} + \|\tilde K_3(\hat g,g_0;1)\|_{\infty} = o_p(1). 
\end{align*}
\end{assumption}
The above is a mild assumption, and is used to relax the  strong convexity assumption in \citet{foster2023orthogonal}. 

\begin{theorem}\label{thm:CATE_ltrcR}
    Under Assumptions \ref{ass:consistency}
    - \ref{ass:strict_positivity} and  \ref{k0_converge_to_0},
     suppose that  
      $\sup_{\tau\in\T, v} |\tau(v)| < \infty$, and that all the nuisance parameters in $\Gc$ satisfy Assumption \ref{ass:strict_positivity}.
 Using the ltrcR-loss \eqref{eq:ltrcR-loss}    $\hat\tau_{\RR}$ achieves the oracle error rate if 
\begin{align}
        &  \|\pi - \pi_0\|_{4}^2 + \|F-F_0\|_{\sup,4}\cdot\left\{ \|\pi - \pi_0\|_{4}+ \|G-G_0\|_{\sup,4} + \|S_D - S_{D0}\|_{\sup,4} \right\}  \nonumber \\
    &\quad + \|K_1(g,g_0)\|_2 + \|K_2(g,g_0)\|_2 + \|K_3(g,g_0)\|_2 = o_p(n^{-1/2}).  
    \label{eq:k_R}
\end{align}
\end{theorem}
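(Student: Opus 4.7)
The plan is to apply Lemma \ref{lem:CATE_rate_general}, which extends the Foster--Syrgkanis (2023) excess-risk bound to (possibly) non-strongly-convex weighted squared losses, combined with a critical-radius argument for the function class $\T$. I would first rewrite the ltrcR-loss \eqref{eq:ltrcR-loss} as a squared loss with (possibly signed) weight $\V(1;F,G,S_D)$ and regression target $\V(\nu;F,G,S_D)/\V(1;F,G,S_D) - \tilde\mu(Z;\pi,F)$, against the feature $\{A-\pi(Z)\}\tau(Z)$. Using Lemma \ref{lem:DR_V}, the population risk $\mathcal{R}(\tau)=\E[\tilde\ell_\RR(\tau;g_0)]$ equals, up to a $\tau$-independent constant, $\beta^{-1}\E[\pi_0(Z)\{1-\pi_0(Z)\}\{\tau(Z)-\tau_0(Z)\}^2]$, so by Assumption \ref{ass:strict_positivity}(i) the excess risk is equivalent to $\|\tau-\tau_0\|_{L_2}^2$ and is minimized at $\tau_0$.

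Next I would verify the relaxed convexity condition of Lemma \ref{lem:CATE_rate_general}. The empirical weight $\V(1;\hat F,\hat G,\hat S_D)$ may be signed at finite $n$, but Assumption \ref{k0_converge_to_0} is precisely what forces it to concentrate on $\beta^{-1}>0$ uniformly, giving eventual strong convexity of the empirical risk on the ball containing $\hat\tau_\RR$. Neyman orthogonality of $\tilde\ell_\RR$ in all of $(\pi,F,G,S_D)$ from Corollary \ref{cor:ltrcR-ltrcDR-loss}(i) ensures that the first-order directional derivative of $\mathcal{R}(\tau)$ in the nuisance parameters vanishes at the truth, so the nuisance estimation impact is only second order.

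The heart of the argument is bounding the second-order nuisance error explicitly. Expanding the square in \eqref{eq:ltrcR-loss} around $(\tau_0,g_0)$ and exploiting the AIPW/AIPCW structure of $\V=\V_C\circ\V_Q$, the remainder decomposes into two pieces: (a) a purely quadratic term $\|\hat\pi-\pi_0\|^2$, present because the R-loss \eqref{eq:R_loss} is only orthogonal but not doubly robust in $\pi$; and (b) products between $\hat F-F_0$ and each of $\hat\pi-\pi_0$, $\hat G-G_0$, $\hat S_D-S_{D0}$, inherited from the double robustness of $\V$ between $F$ and $(G,S_D)$ established in Lemma \ref{lem:DR_V}. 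The time-integrated nature of $F,G,S_D$ converts the $F$-versus-$(G,S_D)$ cross terms into the integral-product remainders $K_1,K_2,K_3$ of Appendix \ref{app:integral_prod}, paralleling the expansion behind Theorem \ref{thm:rdr} but at a variable $\tau$ rather than at the scalar $\theta_0$. The $L_4$ (rather than $L_2$) norms in \eqref{eq:k_R} arise because Lemma \ref{lem:CATE_rate_general} requires uniform control on a critical-radius ball.

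Finally I would instantiate the oracle rate $\delta_n$ as the critical radius of $\T$ via the local Rademacher machinery of \citet{bartlett2005local} and \citet{wainwright2019high}, using $\sup_{\tau,v}|\tau(v)|<\infty$ to justify boundedness of the loss class. Lemma \ref{lem:CATE_rate_general} then yields $\|\hat\tau_\RR-\tau_0\|_{L_2}^2 = O_p(\delta_n^2) + o_p(n^{-1/2})$, which is the oracle rate. The main technical obstacle is step three: carefully tracking the Gateaux derivatives of the ratio $\V(\nu;g)/\V(1;g)$ and of the AIPW correction through both layers $\V_Q$ and $\V_C$, and verifying that every mixed $F$-with-$(G,S_D)$ term indeed collapses into one of $K_1,K_2,K_3$ rather than into a term requiring a separate $n^{-1/4}$ rate for each nuisance.
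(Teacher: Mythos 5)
Your overall plan matches the paper's: reduce the ltrcR-loss to a weighted squared loss, invoke the extended Foster--Syrgkanis error bound (Lemma~\ref{lem:CATE_rate_general}) with a relaxed strong-convexity condition keyed to Assumption~\ref{k0_converge_to_0}, and instantiate the oracle rate via localized Rademacher complexity and the critical radius (Propositions~\ref{thm:CATE_error_bdd}, \ref{thm:CATE_error_rate_RC} and Corollary~\ref{cor:CATE_Op_rate} in the appendix). However, you mislocate the technical bottleneck and the mechanism behind the second-order nuisance impact. The paper's Lemma~\ref{lem:CATE_rate_general} is constructed precisely to \emph{avoid} computing second-order Gateaux derivatives of the population risk in the nuisance direction: instead of invoking Neyman orthogonality abstractly and then Taylor-expanding the remainder, it requires only the direct bound in Assumption~\ref{ass:obs_1-4}(ii), $|D_\tau L(\tau_0,h)[\tau-\tau_0] - D_\tau L(\tau_0,h_0)[\tau-\tau_0]| \lesssim k(h,h_0)\,\|\tau-\tau_0\|_2$, where $k$ already encodes the products and integral-products. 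Concretely, after multiplying \eqref{eq:ltrcR-loss} through by $\V(1;g)$, the derivative $D_\tau L$ reduces to $-2\E[\BB(\eta,g)\{\tau(V)-\tau_0(V)\}]$ with $\BB$ \emph{linear} in $\V(\nu;g)$ and $\V(1;g)$; the ratio $\V(\nu;g)/\V(1;g)$ cancels out and is never differentiated, so the obstacle you flag in your final paragraph does not arise. The key quantities are instead the second-order differences $\D_\V(g,g_0;\nu)$ and $\D_\V(g,g_0;1)$ of \eqref{eq:D_V}, which are bounded by the double-robustness structure of $\V$ (Proposition~\ref{prop:D_W_bound}), together with the pseudo-distance $\kstar((\pi,F),(\pi_0,F_0))$ in \eqref{eq:bdd_by_k} that accounts for the R-loss being orthogonal but not doubly robust in $\pi$, producing the $\|\pi-\pi_0\|_4^2$ term. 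Finally, the $L_4$ norms in \eqref{eq:k_R} come from applying Cauchy--Schwarz to pass the pointwise bound of Proposition~\ref{prop:D_W_bound} to an $L_2$ bound on $\D_\V$ (each factor measured in $L_4$), not from uniform control over a critical-radius ball.
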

From the above,  $\hat\tau_{\RR}$ achieves the oracle error rate if $\|\hat\pi - \pi_0\|_{4} = o_p(n^{-1/4})$ and the product and integral product rates of the nuisance estimation errors between $\hat F$ and $(\hat G,\hat\pi,\hat S_D)$ in \eqref{eq:k_R} are faster than $n^{-1/2}$.

\begin{theorem}\label{thm:CATE_ltrcDR}
    Under Assumptions \ref{ass:consistency}
    - \ref{ass:strict_positivity} and  \ref{k0_converge_to_0},
     suppose that 
     $\sup_{\tau\in\T, v} |\tau(v)| < \infty$, and that all the nuisance parameters in $\Gc$ satisfy Assumption \ref{ass:strict_positivity}.
Using the ltrcDR-loss \eqref{eq:ltrcDR-loss}     $\hat\tau_{\DR}$ achieves the oracle error rate if 
\begin{align}
        &   \|F-F_0\|_{\sup,4}\cdot\left\{ \|\pi - \pi_0\|_{4}+ \|G-G_0\|_{\sup,4} + \|S_D - S_{D0}\|_{\sup,4} \right\}  \nonumber \\
    &\quad + \|K_1(g,g_0)\|_2 + \|K_2(g,g_0)\|_2 + \|K_3(g,g_0)\|_2 = o_p(n^{-1/2}).  
    \label{eq:k_DR}
\end{align}
\end{theorem}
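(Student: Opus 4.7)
The plan is to follow the same template as Theorem \ref{thm:CATE_ltrcR}, but to exploit the extra double robustness of the ltrcDR-loss (Corollary \ref{cor:ltrcR-ltrcDR-loss}(ii)) to eliminate the standalone $\|\pi-\pi_0\|_{4}^{2}$ term that appeared in \eqref{eq:k_R}. First I would rewrite \eqref{eq:ltrcDR-loss} as a weighted squared loss $\tilde\ell_{\DR}(\tau;g)=w(O;g)\{Y(O;g)-\tau(V)\}^{2}$ with weight $w=\V(1;F,G,S_D)$ and DR-style pseudo-outcome $Y$, and verify using Assumption \ref{k0_converge_to_0} plus the strict positivity in Assumption \ref{ass:strict_positivity} that the weight $w(O;\hat g)$ is bounded away from $0$ in the required sense. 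This puts us in the setting of the general error bound (Lemma \ref{lem:CATE_rate_general} in the Appendix, the relaxed-convexity extension of the Foster--Syrgkanis result), so that
\begin{equation*}
\|\hat\tau_{\DR}-\tau_{0}\|_{2}^{2}\;\lesssim\;\text{critical-radius term}\;+\;\text{nuisance-bias term},
\end{equation*}
where the critical-radius term yields the oracle rate under the assumed boundedness of $\T$ and uniform-entropy conditions baked into Lemma \ref{lem:CATE_rate_general}.

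The substantive work is controlling the nuisance-bias term, i.e.\ bounding $\E\{\tilde\ell_{\DR}(\tau;\hat g)-\tilde\ell_{\DR}(\tau;g_{0})\}$ uniformly over $\tau\in\T$ by the right-hand side of \eqref{eq:k_DR}. My plan is a two-step Taylor expansion. Write $g=(\pi,F,G,S_D)$, expand around $g_{0}$, and note that Neyman orthogonality (Corollary \ref{cor:ltrcR-ltrcDR-loss}(i)) kills the first-order cross term in $\hat g-g_{0}$. The remaining second-order terms split into (i) pure-nuisance squares and (ii) nuisance cross products. Here the double robustness in $(\pi,F)$ for the underlying DR-loss on LTRC-free data annihilates any $(\pi-\pi_0)^{2}$ contribution, leaving only cross products $(F-F_0)(\pi-\pi_0)$; and the double robustness of the $\V$ operator (Lemma \ref{lem:DR_V}) annihilates the $(G-G_0)^{2}$ and $(S_D-S_{D0})^{2}$ pieces, leaving cross products $(F-F_0)(G-G_0)$, $(F-F_0)(S_D-S_{D0})$, together with the integral-product remainders $K_{1}$, $K_{2}$, $K_{3}$ that arise whenever two time-dependent nuisances appear inside the same integral against $d\bar M_Q$ or $dM_D$ (exactly as in the ATE rate-DR analysis behind Theorem \ref{thm:rdr} and in \citet{wang2024doubly}). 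Applying Cauchy--Schwarz to each cross product then converts them into the $\|\cdot\|_{\sup,4}$ and $\|\cdot\|_{4}$ products appearing in \eqref{eq:k_DR}, while the $K_{j}$ terms contribute the $\|K_{j}(g,g_{0})\|_{2}$ pieces.

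For the statistical error I would invoke the critical-radius argument of \citet{foster2023orthogonal}, using the boundedness of $\tau\in\T$, of $\nu$, and of $w(O;\hat g)$ and the assumed local Rademacher complexity of $\T$, to show that the empirical minimizer $\hat\tau_{\DR}$ differs from the population risk minimizer under $\hat g$ by the oracle rate. Combining with the nuisance-bias bound obtained above and using sample splitting (so that $\hat g$ can be treated as deterministic when bounding Rademacher processes over $\T$) yields the claimed oracle rate.

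The main obstacle will be Step 3, producing the bias decomposition cleanly. The ltrcDR-loss has four nuisance arguments and its second-order directional derivatives in $(\pi,F,G,S_D)$ are messy, so explicitly enumerating which second-order terms vanish from orthogonality alone, which vanish from the $(\pi,F)$-double-robustness of the core DR-loss, and which vanish from the $\{F,(G,S_D)\}$-double-robustness of $\V$, then matching the survivors exactly against the product and integral-product norms in \eqref{eq:k_DR}, is where the accounting gets delicate. Using the abstract relaxed-convexity lemma from the Appendix instead of a direct second-derivative calculation should help sidestep verifying a full second-order Taylor remainder in the nuisance direction, at the cost of needing Assumption \ref{k0_converge_to_0} to ensure the weighted loss is sufficiently convex with high probability.
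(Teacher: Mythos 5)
Your overall architecture matches the paper's: rewrite the ltrcDR-loss as a weighted squared loss, invoke the relaxed strong-convexity error bound (Lemma~\ref{lem:CATE_rate_general}) together with a critical-radius argument, and isolate the improvement over the ltrcR-learner in a product-form bound on the nuisance bias. You also correctly identify the mechanism that drops the $\|\pi-\pi_0\|_4^2$ term: the $(\pi,F)$-double-robustness of the core DR-loss. Where you diverge from the paper is in the main step you plan to carry out. You propose a second-order Taylor expansion of the population risk in the four-component nuisance, with Neyman orthogonality killing the first-order cross term and the second-order remainder classified into pure squares versus cross products; this is precisely the Foster--Syrgkanis Assumption~3b route, and the paper's Assumption~\ref{ass:obs_1-4}(ii) and Proposition~\ref{thm:CATE_error_bdd} were designed to \emph{avoid} it, because computing $D_h^2 D_\tau L$ for a loss with four time-indexed nuisances (and with the remainder evaluated at an intermediate $\bar h \neq h_0$, where double robustness does not apply verbatim) is genuinely messy. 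Instead, the paper directly bounds the first-order difference $\lvert D_\tau L(\tau_0,h)[\tau-\tau_0]-D_\tau L(\tau_0,h_0)[\tau-\tau_0]\rvert$ by $k((\eta,g),(\eta_0,g_0))\,\|\tau-\tau_0\|_2$ via the decomposition $\A_1+\A_2$: the $\A_2$ piece (varying $g$) is handled by Propositions~\ref{prop:D_W_bound}--\ref{prop:D_W}, and the $\A_1$ piece (varying $\eta$) reduces to bounding $\kstar(\eta,\eta_0)=\|\Dlstarz(\eta)-\Dlstarz(\eta_0)\|_2$. The DR-specific content is an explicit algebraic computation of $\Dlstarz(\eta)-\Dlstarz(\eta_0)$ for the DR choice of $(\omega,\gammaone,\gammatwo)$, yielding $\kstar((\pi,F),(\pi_0,F_0))\lesssim\|\pi-\pi_0\|_4\cdot\|F-F_0\|_{\sup,4}$ with no residual $\|\pi-\pi_0\|_4^2$ term---this calculation is the concrete form of the cancellation you describe only heuristically and is what you would need to actually produce. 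You note at the end that the abstract lemma could sidestep the second-derivative calculation; that is indeed what the paper does, so the cleanest version of your proposal would lead with that route rather than the Taylor expansion.
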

Compared to the ltrcR-learner, the ltrcDR-learner $\hat\tau_{\DR}$ does not require $\|\hat\pi - \pi_0\|_{4} = o_p(n^{-1/4})$ to achieve the oracle error rate. 
Despite the advantageous error rate of the ltrcDR-learner compared to the ltrcR-learner, however, our simulations in Section \ref{sec:simu_CATE} show that the ltrcR-learner tends to have smaller mean squared error in finite samples, especially with small sample sizes. This observation is consistent with \citet{morzywolek2023general}. 

\begin{remark}
Recall that the ATE is a special case of CATE with $V^* = \varnothing$, and $\partial \tilde\ell_{\DR}/\partial \theta$ gives $U$ in \eqref{eq:AIPW_3bias}.
The product rate condition required in Theorem \ref{thm:rdr}  is slightly weaker than that in Theorem \ref{thm:CATE_ltrcDR}, as the norms in Assumption \ref{ass:prodrate} are upper bounded by their counterparts in \eqref{eq:k_DR}. Nonetheless  both theorems imply that the nuisance estimation has only second order impact on the estimation of $\theta=\tau(\varnothing)$. 
\end{remark}

\section{Simulation} \label{sec:simu}

\subsection{ATE} \label{sec:simu_ATE}

The details of data generation are given in the Appendix. 
Our estimand 
$\theta=\PP\{T^*(1)>3\} - \PP\{T^*(0)>3\} = -0.1163$, computed from a simulated full data sample of size  $n=10^7$.
We then consider three models for estimating $F,G$ and $S_D$: 
1) true model `Cox1': $T, Q \sim A+ Z_1+Z_2$, $D\sim A+ Z_1+Z_2+Q$; 
2) false model `Cox2': $T, Q \sim AZ_1+Z^2_2$, $D\sim AZ_1+Z^2_2+Q$; 
3) ML model `pCox': regularized Cox regression 
with $L_2$ penalty and 
 natural spline basis functions, see Appendix for more details.
We also consider three models for estimating $\pi$:
1) true logistic model `lgs1':  $A\sim Z_1+Z_2$;
2) false logistic model `lgs2': $A\sim Z_1^2 + Z_2^2$;
3) ML model `gbm': generalized boosted regression models, see also the Appendix for more details.

We consider the estimator $\hat\theta$ with the above (semi)parametric models to estimate the nuisance parameters, 
as well as the estimator $\hat\theta_{cf}$ with 5-fold cross-fitting in Table \ref{tab:simu_n1000_dr_cf_simu2}. 
We use scheme (a) in Figure \ref{fig:DGM_nuisance_summary} for estimation of the nuisance parameters. It follows that when the wrong ``Cox2'' or ``lgs2'' (marked in red in Table \ref{tab:simu_n1000_dr_cf_simu2}) is used to estimate the weights for the downstream nuisance parameters, the estimates can be inconsistent even though their models are correctly specified; these models are marked in {orange}.
We also consider the inverse probability weighted (IPW) estimator that accounts for all three sources of bias: confounding, truncation and censoring.
The estimated probabilities involved in the denominators of $\hat\theta$, $\hat\theta_{cf}$, and the IPW estimator are {truncated at} 0.1 to improve the stability of the estimators.
Finally as a benchmark, we provide the  full data estimator which is the average of the transformed potential event times in the full data sample. 

In Table \ref{tab:simu_n1000_dr_cf_simu2} 
we report the bias, empirical standard deviation (SD),  mean of the model-based standard errors (SE) and the bootstrapped standard errors (boot SE), and coverage probability (CP) of the 95\% confidence intervals (CI)
from 500 simulation runs each with sample size 1000. 
Bootstraps are performed by resampling with replacement 200 times \citep
{efron1994introduction}.
The model-based SE for the IPW estimator is computed from the robust sandwich variance estimator assuming that the weights are known. 
Additional simulation results with sample size 500 are in the Appendix.

From Table \ref{tab:simu_n1000_dr_cf_simu2}, we see that the full data estimator has  small bias, SD and close to nominal coverage as expected.
The estimator $\hat\theta$ has small bias and close to nominal coverage when at least one set of models for $F$ or $(\pi,G,S_D)$ is correctly specified; that is, the first five rows of the table. It has larger bias otherwise, especially row 7. 
The estimator $\hat\theta_{cf}$ also has good coverage, though larger variance in comparison. 
The IPW estimator, on the other hand, can have large bias and poor coverage when there is misspecification of the models, and misspecification of the $G$ or $S_D$ distribution appears to have a more severe impact than misspecification of the $\pi$ distribution here. 
 \begin{table}[h]
	\centering
 \renewcommand{\arraystretch}{0.6}
	\caption{Simulation results for estimates of $\theta=\PP\{T^*(1)>3\} - \PP\{T^*(0)>3\} = -0.1163$  with sample size 1000. The models in {red} are misspecified, while the models in {orange} are correctly specified  with weights estimated from the misspecified models.
    }
	\label{tab:simu_n1000_dr_cf_simu2}
\begin{tabular}{llrrrr}
  \toprule
 Methods & $F$/$\pi$-$G$-$S_D$ & bias & SD & SE/bootSE & CP/bootCP \\ 
\midrule
   $\hat\theta$ & Cox1/lgs1-Cox1-Cox1 & 0.0008 & 0.0481 & 0.0494/0.0500 & 0.962/0.962 \\ 
   & \red{Cox2}/lgs1-Cox1-Cox1 & 0.0003 & 0.0476 & 0.0500/0.0493 & 0.966/0.956 \\
   & Cox1/\org{lgs1}-\red{Cox2}-Cox1 & 0.0020 & 0.0450 & 0.0450/0.0467 & 0.948/0.950 \\ 
   & Cox1/\red{lgs2}-Cox1-Cox1 & 0.0006 & 0.0456 & 0.0456/0.0472 & 0.962/0.968 \\
   & Cox1/\org{lgs1}-\org{Cox1}-\red{Cox2} & 0.0032 & 0.0493 & 0.0516/0.0514 & 0.968/0.960 \\ 
   & \red{Cox2}/\org{lgs1}-\red{Cox2}-Cox1 & 0.0043 & 0.0450 & 0.0474/0.0469 & 0.964/0.950 \\ 
   & \red{Cox2}/\red{lgs2}-Cox1-Cox1 & 0.0190 & 0.0452 & 0.0460/0.0463 & 0.940/0.936 \\ 
   & \red{Cox2}/\org{lgs1}-\org{Cox1}-\red{Cox2}  & -0.0021 & 0.0490 & 0.0522/0.0508 & 0.968/0.956 \\ 
   \midrule 
   $\hat\theta_{cf}$ & pCox/gbm-pCox-pCox & -0.0285 & 0.0764 & 0.0777/0.0886 & 0.938/0.967 \\
   \midrule
   IPW & - /lgs1-Cox1-Cox1 & -0.0015 & 0.0520 & 0.0511/0.0509 & 0.946/0.936 \\ 
   & - /\org{lgs1}-\red{Cox2}-Cox1 & 0.0751 & 0.0473 & 0.0492/0.0473 & 0.662/0.650 \\ 
   & - /\red{lgs2}-Cox1-Cox1  & 0.0068 & 0.0502 & 0.0471/0.0487 & 0.932/0.936 \\ 
   & - /\org{lgs1}-\org{Cox1}-\red{Cox2} & -0.0376 & 0.0514 & 0.0514/0.0513 & 0.876/0.868 \\ 
   & - /gbm-pCox-pCox & 0.0040 & 0.0631 & 0.0603/0.0635 & 0.950/0.959 \\ 
   \midrule
   full data &  & -0.0009 & 0.0185 & 0.0184/0.0184 & 0.938/0.944 \\ 
\bottomrule
 \end{tabular}
 \end{table}

\subsection{CATE} \label{sec:simu_CATE}

The details of data generation are again in the Appendix. 
We consider $\nu(t) = \log(t)$ and three functional forms of $\tau(z) $:  
(i)  $  0.2 -0.2z_1$; 
(ii)  $ 0.2 -0.2\{(z_1+z_2)/2\}^2$;  
(iii)  $ 0.2 - 0.2\sin(\pi z_1) + 0.2 \sqrt{|z_2|}$.
We consider the ltrcR- and ltrcDR-learners. 
Since sample splitting does not make efficient use of the data, 
we consider the   5-fold cross-fitted empirical risk minimization \citep{nie2021quasi}. 

The nuisance parameters $(\pi,F,G,S_D)$ are estimated using 
the same ML models `gbm' and `pCox' as in Section \ref{sec:simu_ATE}. The final second-stage estimator  is obtained using extreme gradient boosting implemented in the R package 
``\texttt{xgboost}'' with customized loss function,  with details given in Appendix \ref{app:tuning_selection}.
As a comparison we also consider the IPW.S-learner, which is the S-learner considered in \citet[Supplementary material]{foster2023orthogonal}  with inverse probability weights to handle left truncation and right censoring; 
its loss function is also given in the Appendix. 
Finally we compare with the oracle versions of the above learners, with the true nuisance parameters plugged in. 

We compute the mean squared error (MSE) $n^{-1}\sum_{i=1}^n \{\hat\tau(Z_i) - \tau_0(Z_i)\}^2$ of the CATE estimates  with sample sizes 500, 1000, 2000 and 3000 using 500 simulation runs each. 
The results are shown as boxplots in Figure   
\ref{fig:CATE_MSE_HTE4}.
As illustration we visualize using 3D-plots  the true CATE surface and the estimated CATE surfaces in Appendix \ref{app:CATE_simu_3Dplots} from one simulated data set with sample size 2000.
\begin{figure}[h]
        \centering
        \begin{subfigure}{0.49\textwidth}
        \includegraphics[width=1\textwidth]{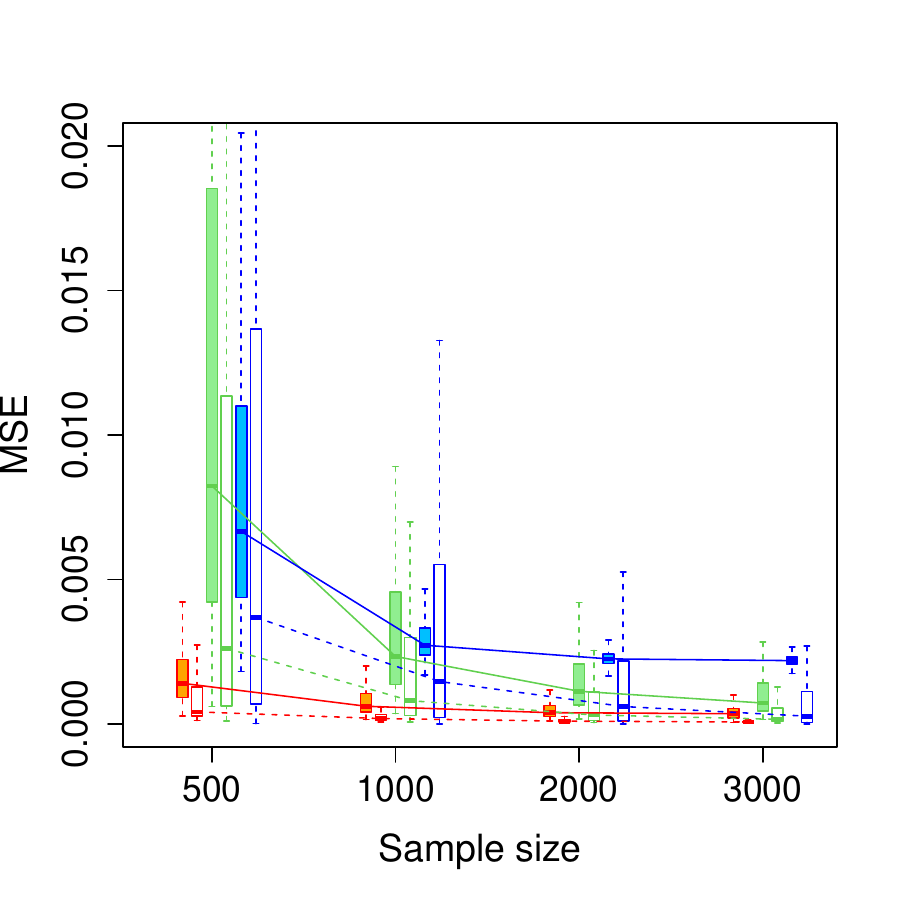}
        \caption*{Scenario (i)}
        \end{subfigure}
\hfill
\begin{subfigure}{0.49\textwidth}
         \includegraphics[width=1\textwidth]{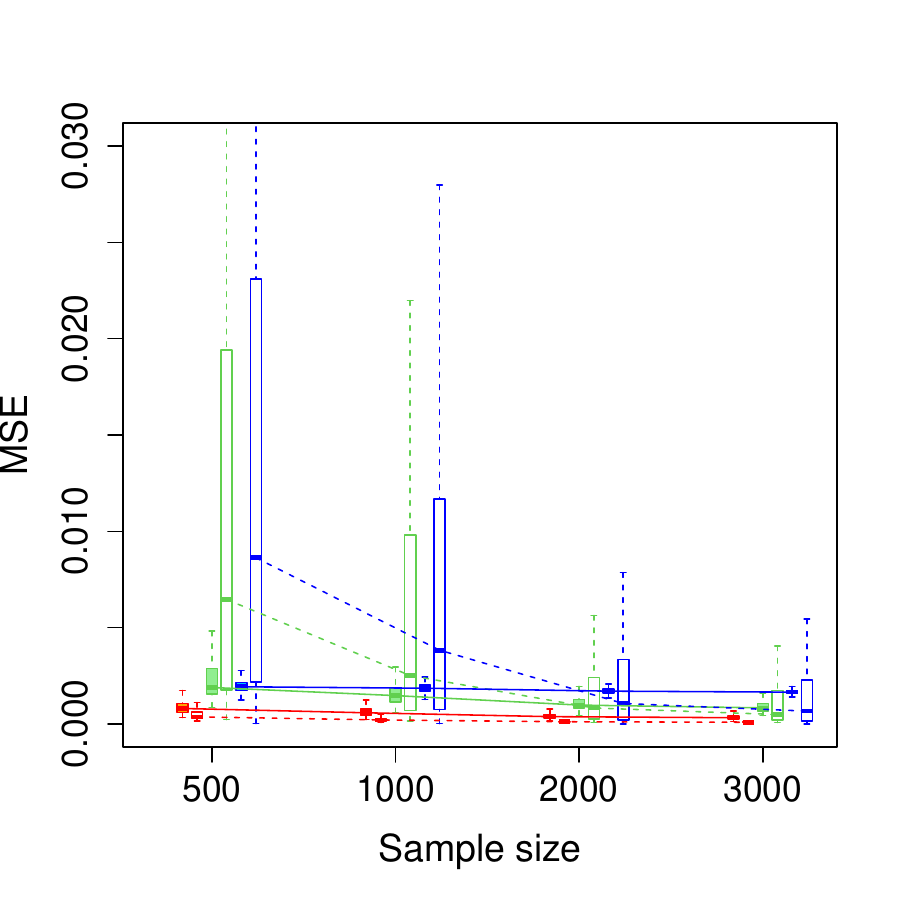}
         \caption*{Scenario (ii)}
         \end{subfigure}
         
         \vspace{0.7em}

         \begin{subfigure}{0.49\textwidth}
         \includegraphics[width=1\textwidth]{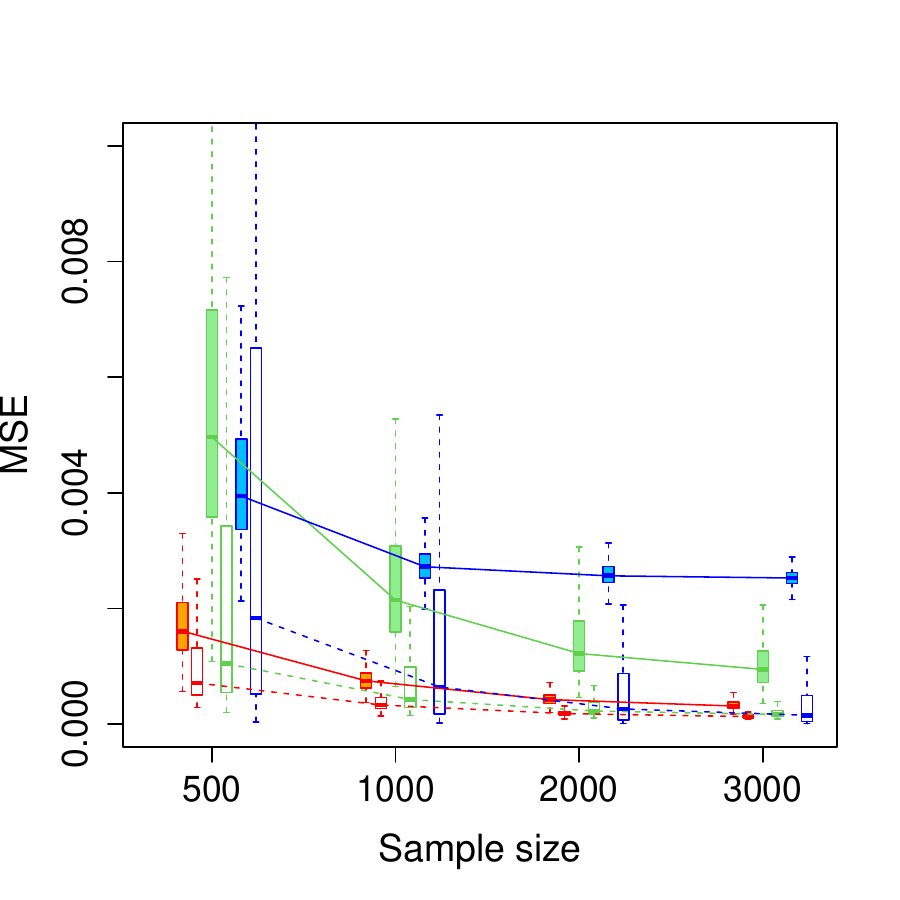}
         \caption*{Scenario (iii)}
         \end{subfigure}
         \hspace{0.7cm}
	\begin{subfigure}{0.32\textwidth}
         \includegraphics[width=0.5\textwidth]{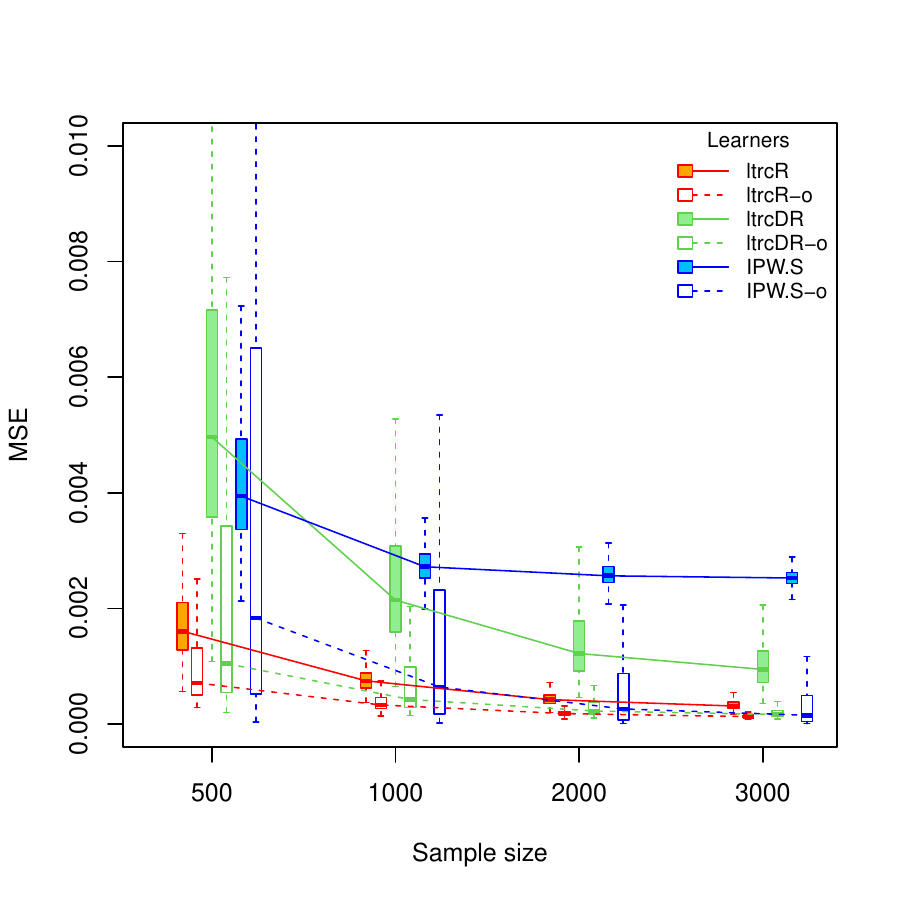}
         \end{subfigure}
        \caption{MSE for different learners under scenarios (i) - (iii) of simulation;  ``-o'' indicates the oracle learner with the true nuisance parameters.
        }\label{fig:CATE_MSE_HTE4}
\end{figure}

We see that in all three scenarios, the ltrcR-learner achieves the smallest median MSE for all sample sizes among the learners that use estimated nuisance parameters. 
The MSE's of the ltrcR- and ltrcDR-learners get closer to the oracle MSE's as the sample size increases, while the MSE's from the IPW.S-learner converges to its oracle counterpart much slower. 
Also both the ltrcR- and ltrcDR-learners have smaller median MSE than the IPW.S-learner when the sample size is at least 1000. 
We notice that scenario (ii) may be easy for the S-learner since $\mu(1,Z;F_0)$ and $\mu(0,Z;F_0)$ only involve linear and quadractic terms and are easier to estimate compared with the other two scenarios. We observe that in Scenario (ii) with sample sizes 500 and 1000, the oracle ltrcDR- and IPW.S-learners show larger median MSE's than their counterparts with estimated nuisance parameters; this phenomenon may be related to the knowledge in the literature that for  ATE, IPW estimators with the estimated propensity score can be more efficient than the one with the true propensity score \citep{robins1992estimating}.

Our simulation results aligned with the findings in \citet{morzywolek2023general} for the case without LTRC, that the R-learner usually has smaller MSE than the DR-learner in finite sample. 
Appendix \ref{app:simu_CATE_additional} 
shows the MSE's of the same learners under the same three scenarios as in Figure \ref{fig:CATE_MSE_HTE4}, except with the probabilities involved in the denominators of the loss function expressions truncated at 0.05 instead of  0.1; we see that the MSE of the ltrcDR-learner is much larger than that in Figure \ref{fig:CATE_MSE_HTE4}.

\section{Application}\label{sec:application}

We analyze the data collected between 1965 and 2012 from the Honolulu Heart Program (HHP, 1965-1990) and the subsequent Honolulu Asia Aging Study (HAAS, 1991-2012), which followed a cohort of men of Japanese ancestry born between 1900-1919 and living on the island of Oahu, Hawaii \citep{p2012honolulu, zhang2024marginal, rava2023doubly}.
In the {HHP}-HAAS study, mid-life alcohol exposure was assessed by self-reported consumption and was dichotomized into heavy versus non-heavy drinking during the earlier HHP period, while late life  cognitive impairment was assessed by the Cognitive Assessment and Screening Instrument (CASI) measured during the  HAAS period starting in 1992.

In this application, we focus on the effect of mid-life heavy drinking on late life cognitive impairment free survival on the age scale, 
 which may also be called disease free survival (DFS). 
It is clear that age at DFS is left truncated by age at HAAS study entry.
Following the prior analyses  \citep{rava2023doubly,  zhang2024marginal, luo2023doubly} we consider the following four baseline covariates: education ($\leq$ 12 years or otherwise), {\it APOE} genotype (presence or absence of an {\it APOE E4} risk allele),  systolic blood pressure {(mmHg)} and heart rate {(per minute)} assessed at the start of HHP. The covariates distributions are shown in Appendix \ref{app:HAAS_table1}. 
After removing subjects with missing covariates, the data consist of 1953 subjects who were alive and did not have cognitive impairment when they entered HAAS.
Among the 1953 subjects, 470 (24.1\%)  were heavy drinkers at some point in mid-life during the HHP period. 
Overall about 18\% of the subjects 
were right censored.  
Finally, subjects entered HAAS between 71.3 and 93.2 years old, 
with the minimum observed event time at  72.9 years old. 
We therefore consider the survival probabilities conditional on surviving to {71.3} years \citep{tsai1987note, wang1989semiparametric, wang1991nonparametric}. 

We assessed the potential violation of quasi-independence between age at HAAS study entry and age at DFS, as well as dependence between age at HAAS study entry 
and the covariates, and between the residual censoring time and the covariates. The results are given in the Appendix showing that it is appropriate to consider covariate dependent truncation and censoring for this data set.

\subsection{ATE}\label{sec:application_ATE}

We estimate the difference between the potential DFS probabilities for heavy drinking and non-heavy drinking at ages 80, 85, 90, and 95 years, respectively, i.e.~with $A=1$ representing heavy drinking and $\nu(t) = \ind(t>t_0)$ for $t_0 = 80, 85, 90, 95$, respectively.

We consider the same estimators as in the simulation section, with $\hat\theta_{cf}$ 
denoted by ``cf'' in Figure \ref{fig:HAAS_ATE_est_CIcfmodelSE_alpha0_lambdamin_nuisa}, and $\hat\theta$ denoted by `Param' with the following {(semi)parametric} nuisance models.   
For $F$ and $G$ (on the reversed time scale): a Cox model with alcohol exposure and all the covariates included as  regressors; 
for $S_D$ on the residual time scale, i.e.~since HAAS study entry: a Cox model with the left truncation time (age at HAAS study entry), alcohol exposure, and the covariates included as  regressors; for $\pi$: a logistic model with all the covariates included as regressors.  

\begin{figure}[h]
\centering
\includegraphics[width=0.7\textwidth]{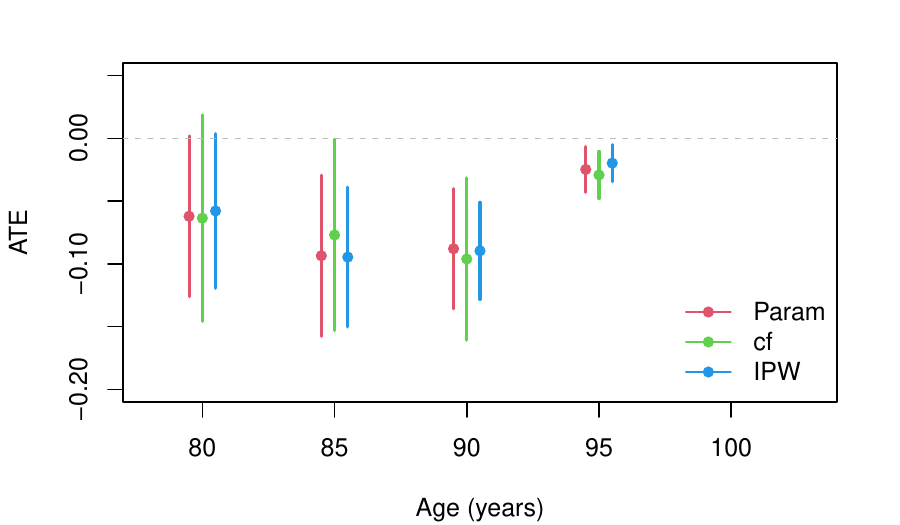}
	\caption{Estimates of ATE as difference of the potential DFS probabilities at various ages for heavy versus non-heavy drinkers in the HAAS data, with 95\% confidence intervals.
    } 
  \label{fig:HAAS_ATE_est_CIcfmodelSE_alpha0_lambdamin_nuisa}
\end{figure}

Figure \ref{fig:HAAS_ATE_est_CIcfmodelSE_alpha0_lambdamin_nuisa} shows the different ATE estimates and their 95\% confidence intervals. The confidence intervals for the ``cf'' estimator is based on model-based standard errors, which has been shown to be more accurate in simulation. 
The confidence intervals for other methods are based on bootstrapped standard errors from 200 replications. 
We see that all the point estimates are negative, indicating that mid-life heavy drinking can be a  cause for lower DFS probabilities at ages 80, 85, 90 and 95 years.

\subsection{CATE}

As an illustration we estimate the difference in the conditional DFS probabilities of heavy  versus non-heavy drinking at age 90 years, where  the estimated ATE appeared the most significant. 
We consider  all four baseline covariates as effect modifiers, 
and the ltrcR-, ltrcDR- and IPW.S-learners with the same way as  in Section \ref{sec:simu_CATE} to estimate the nuisance parameters and to solve the second stage  optimization problem. 

Figure \ref{fig:HAAS_CATE_hist} shows the histograms of the CATE estimates from the three learners. 
We see that the CATE estimates from the IPW.S-learner are much more concentrated 
than the other two, showing little variability of the treatment effects. 
We note that the ATE is the expectation of CATE under the full data covariate distribution, and is in general different from the expectation of CATE under the observed covariate distribution
under left truncation.  
As a result,  the estimated ATE may not be at the center of the histogram of the CATE estimates. 

\begin{figure}[h]
\centering
\includegraphics[width=1\textwidth]{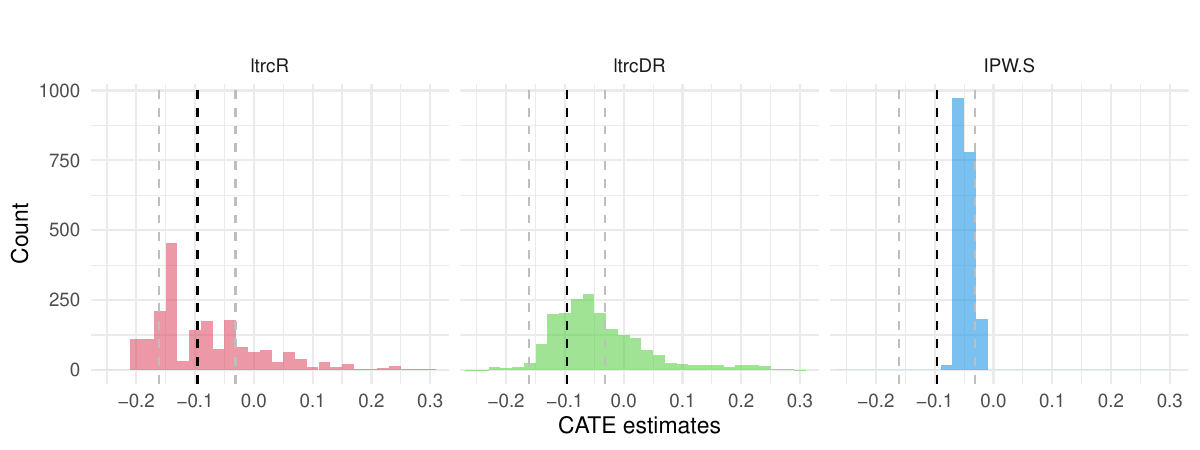}
	\caption{Histogram of the CATE estimate for DFS at age 90 using the HAAS data; the black and grey vertical dashed lines denote the ``cf'' estimate of the ATE and its 95\% confidence interval.} 
  \label{fig:HAAS_CATE_hist}
\end{figure}

Figure \ref{fig:HAAS_CATE_plot3D} shows the estimated CATE surfaces from the ltrcR-learner for each of the four combined categories of education and {\it APOE} genotype, with the x-axis and y-axis indicating different values of systolic blood pressure (SBP) and heart rate (HR). Table \ref{tab:A1} in 
Appendix \ref{app:HAAS_table1} shows the distribution of heavy versus non-heavy drinkers within each of the four combined education and {\it APOE}  genotype groups. 
From Figure \ref{fig:HAAS_CATE_plot3D} we see that for almost all values of systolic blood pressure and heart rate, the estimated CATE tends to be negative {and} the lowest for the green surface, 
which corresponds to education $>12$ years and absence of the {\it APOE E4} risk allele. We understand that education is a proxy for  socioeconomic status, so the green group can be seen as the lowest risk group for cognitive impairment or death, while the blue group, 
corresponding to 
education $\leq 12$ years and {\it APOE}  risk allele present, can be seen as at the highest risk. Apparently the harmful effect of heavy drinking during mid-life in this data set, is estimated to be the most substantial for the lowest risk green group. 
We  note  that the estimand for this data is conditional upon surviving beyond 71.3 years of age, 
so we speculate that the subjects in the highest risk blue group may be somewhat resilient against certain perceived harms during  their mid-lives (such as heavy drinking). 
A limitation of our current CATE estimate is the lack of confidence bounds, so that we cannot tell whether the effects are statistically significant. 

\begin{figure}[h]
\centering
\includegraphics[width=0.49\textwidth]{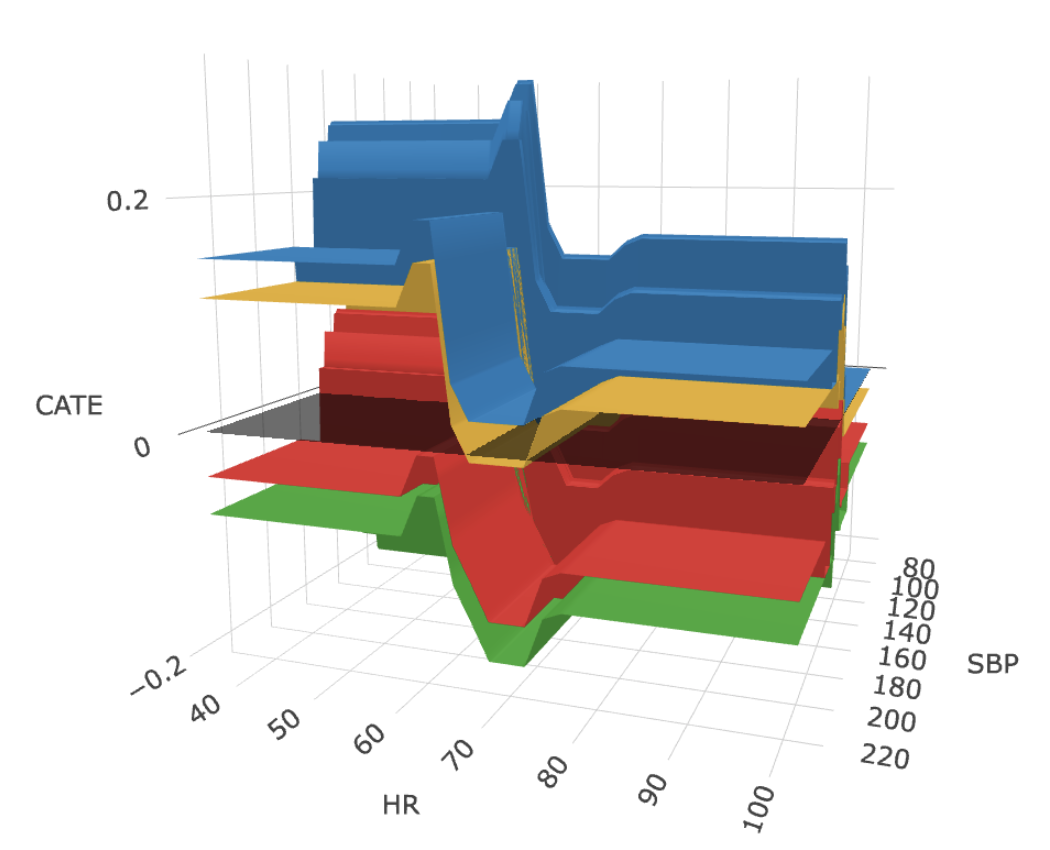}
\includegraphics[width=0.49\textwidth]{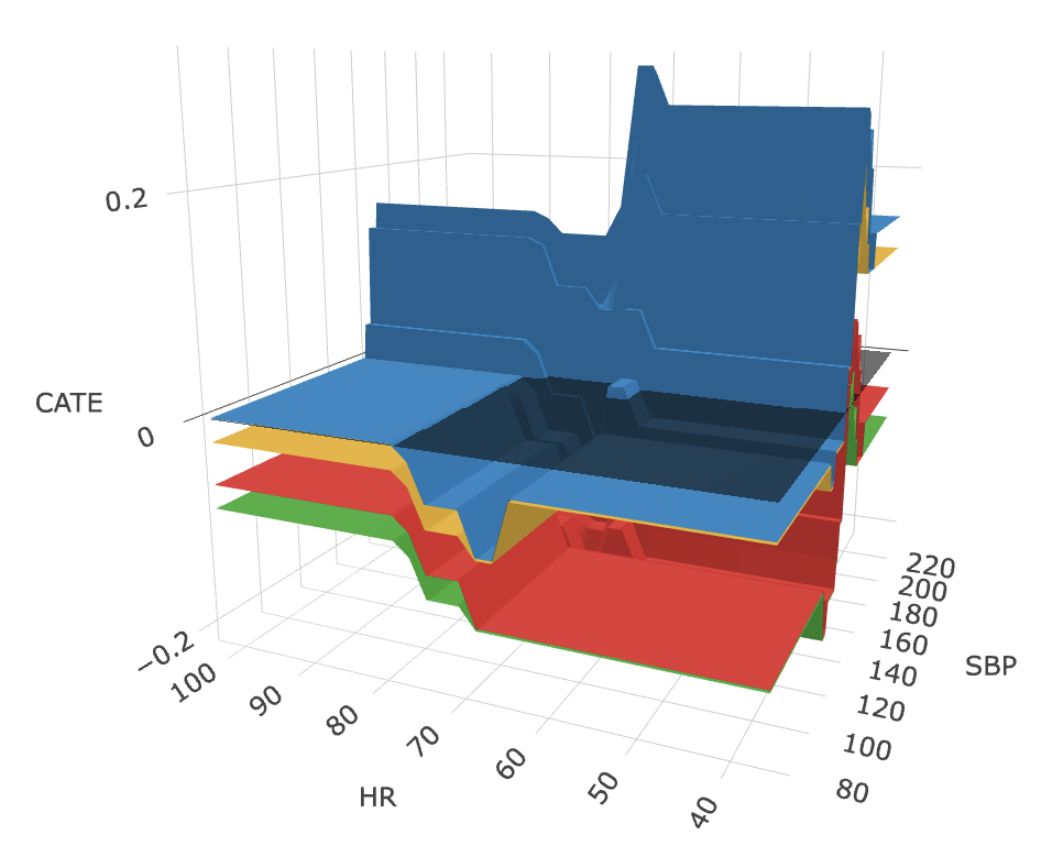}
\includegraphics[width=0.49\textwidth]{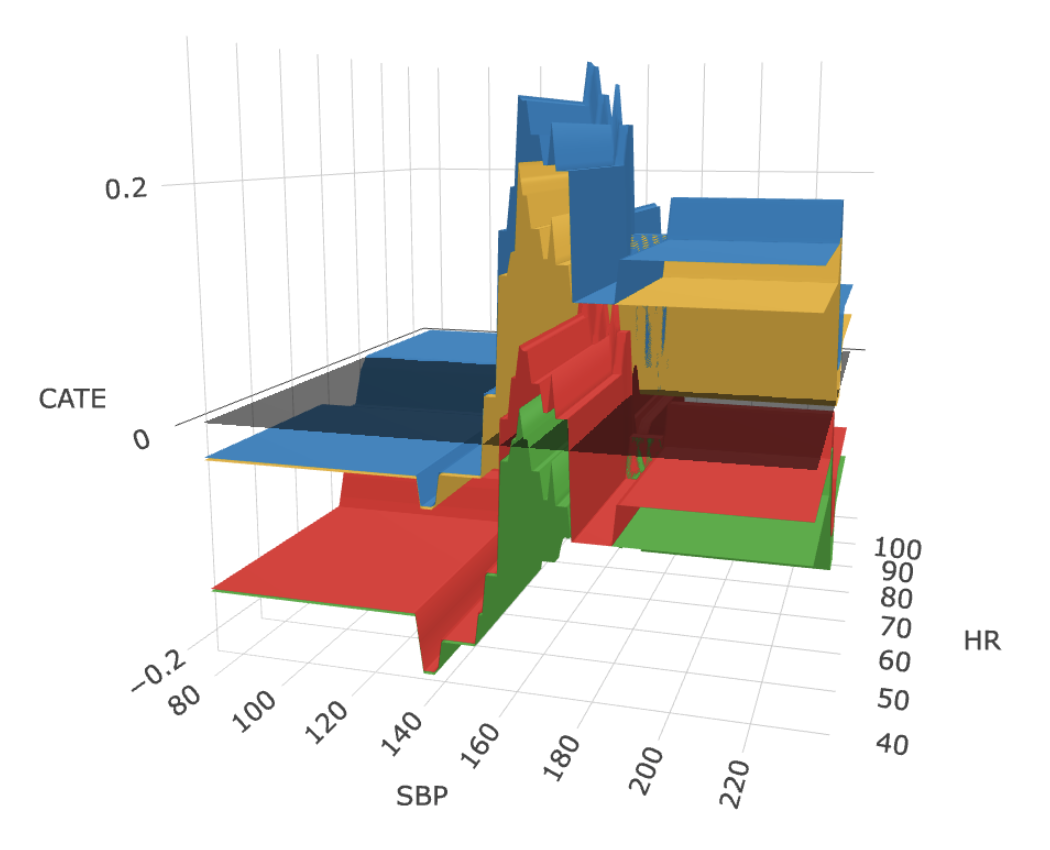}
\includegraphics[width=0.49\textwidth]{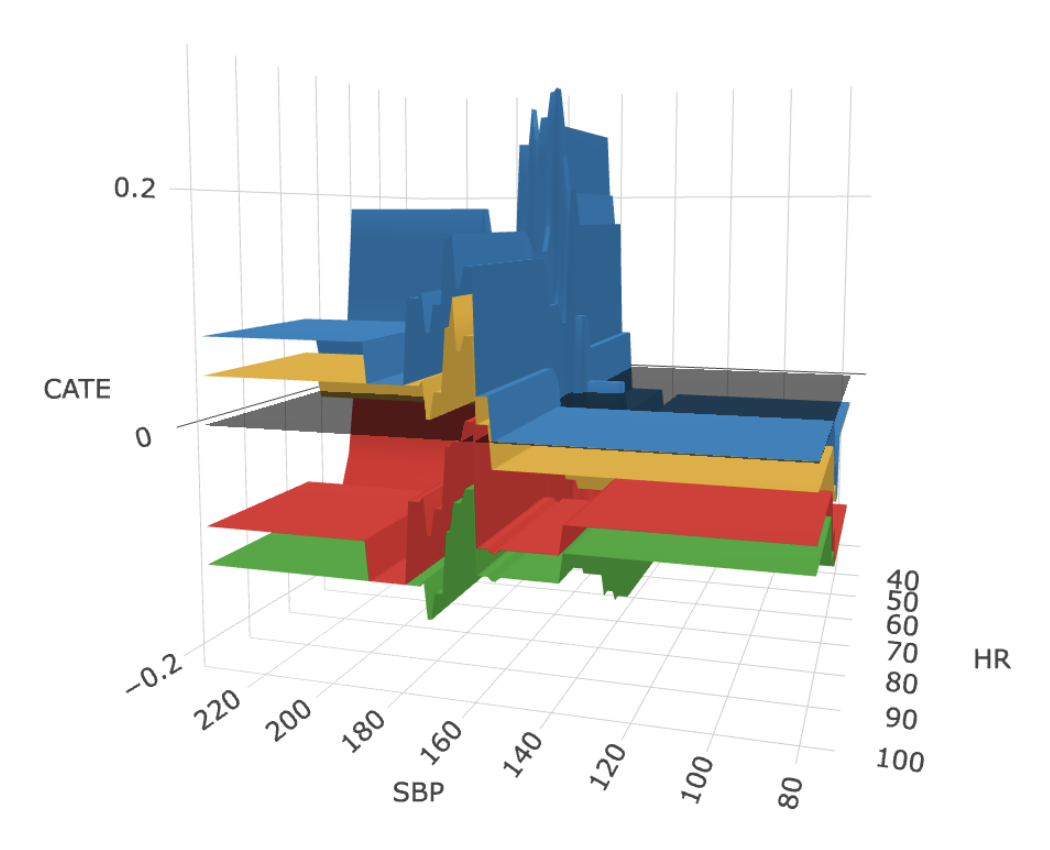}
\par 
\vspace{0.2em}
\hfill
\includegraphics[width=0.15\textwidth]{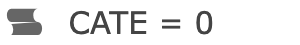}
\includegraphics[width=0.35\textwidth]{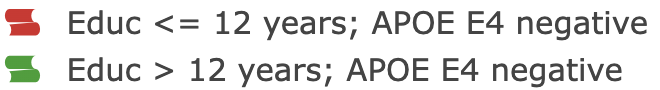}
\includegraphics[width=0.35\textwidth]{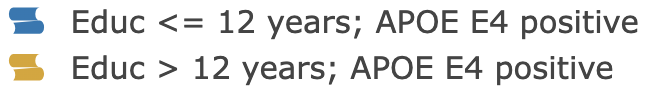}
\caption{Estimated CATE surfaces of DFS at age 90 from the ltrcR-learner for different education and {\it APOE}  genotype subgroups in the HAAS data (views from four different angles).
}
\label{fig:HAAS_CATE_plot3D}
\end{figure}

Other interesting features of Figure \ref{fig:HAAS_CATE_plot3D} are discussed in the Appendix, such as the dip for heart rate  between 64 - 76, or the spike for SBP between 158 - 180. These turn out to be roughly consistent with the DSF differences  in the corresponding subgroup product-limit plots.
Appendix \ref{app:HAAS_CATE_surv90} contains also a version of less smooth CATE surfaces from the ltrcR-learner, as well as a version of estimated CATE surfaces from the ltrcDR-learner.


\section{Discussion} \label{sec:discussion}

We have proposed a general orthogonal and  doubly robust framework for handling covariate dependent LTRC, which preserves the orthogonality and  double robustness of the original estimating or loss function without LTRC.  
We have applied it to the estimation of average treatment effect and conditional average treatment effect. The HAAS data example showed how such ATE and CATE estimates can be useful in practice; 
it is particularly encouraging that features discovered by the estimated CATE are confirmed by the direct subgroup PL plots which condition on the covariates. 
The  framework can be applied to a broad range of problems based on specific estimating functions or loss functions, including 
the survival integral considered in \citet{morenz2024debiased} 
and the weighted orthogonal learners developed in \citet{morzywolek2023general}. 

As discussed in \citet{qian2014assumptions}, in the presence of left truncation, there are two censoring scenarios: (1) censoring may occur before left truncation, and (2) censoring always after left truncation. 
In this manuscript, we focused on the second scenario which is the case for our application.  The proposed approaches can also be easily adapted to the first censoring scenario by addressing censoring first on the original time scale and then addressing left truncation.

For the operator $\V_Q$ in Section \ref{sec:AIPW_operators}, it can be shown that for any estimand that is identified by $\zeta$ nonparametrically in the LTRC-free data, \eqref{eq:AIPW_Q} is the efficient influence function (EIF) {in the truncated but censoring free data} up to a constant factor. 
For the ATE, after applying $\V_Q$ to $U_0$ in \eqref{eq:U0} we have an 
EIF of $\theta$ in the truncated but censoring-free data, up to the constant factor $\beta$. 
Following  \citet{tsiatis2006semiparametric}, 
$U$ in \eqref{eq:AIPW_3bias} is an influence function of $\theta$ (up to the factor $\beta$), and is efficient within the class 
given by equation (10.1) of \citet{tsiatis2006semiparametric}. 
However, this does not imply that $U$ (even when  scaled by  $\beta$) is the EIF in the observed data. This is because the influence function for $\theta$ in the truncated but censoring free data is not unique, 
 due to the conditional independent truncation restriction imposed by Assumption \ref{ass:trunc}. 
\citet[Lemma S2 in the Supplmenetary Material]{wang2024doubly} showed that
the tangent space for the semiparametric model implied by the conditional (quasi-)independent left truncation assumption is a proper subspace of the entire $L^2$ space of the truncated but censoring-free data; a similar result on  unsaturated tangent space was also  stated in \citet{bickel1993efficient} under the random left truncation assumption.

There are currently limited 
nonparametric or ML software that can handle data under LTRC; these include
R packages 
\texttt{glmnet}, 
\texttt{hal9001}, 
\texttt{LTRCforests},
and 
\texttt{survPen}. In addition, only the first two of these allow weights, which are needed in the first-stage  estimation of the nuisance functions. 
For the second stage both ltrcR- and ltrcDR-learners  may have negative weights resulting from augmentation, and   
many existing regression software 
including \texttt{glmnet}, 
\texttt{hal9001}, 
\texttt{KRLS}, \texttt{randomForest}, and 
\texttt{xgboost} with default loss functions,
do not support negative weights  (although the latest \texttt{glmnet} version 4.1-8 no longer return an error message). 
We used extreme gradient boosting implemented in the R package ``\texttt{xgboost}'' with customized loss function.
We expect that over time more software will be available to handle LTRC and to incorporate weights;
 another  useful future development would be efficient plug-in learners \citep{van2024combining} to mitigate challenges posed by negative weights. 

For the CATE estimation, following \cite{foster2023orthogonal} we provided in the Appendix the error bound and convergence rate results that apply to general second stage learning algorithms based on empirical risk minimization, which includes many widely used machine learning algorithms such as sparse linear regression, kernel regression, and neural networks. It does not include tree or forests based methods. 
\citet{cui2023estimating} extended causal forests for CATE to right censored data, 
by employing
orthogonal estimating functions,
and showed that the nuisance estimation errors only have second order impacts on the CATE estimator. 
Their approach assumes smoothness and estimates the hazard function of the censoring distribution, which can be of slower rate 
than the cumulative quantities that we use here. On the other hand, the 
 LTRC  operator we have developed here can be applied to both their splitting criteria as well as the estimating functions for each leaf, so that left truncation can be properly handled.

One limitation of this work is the lack of uncertainty quantification for the CATE estimates. 
With additional restrictions on the learning algorithms such as smoothness, asymptotic distributions of the CATE estimators may be characterized. Without LTRC, inference for CATE was {mentioned} in \citet{kennedy2023towards} for learners with oracle property, assuming that the inference results of the second-stage learning algorithm are known.   
In addition, asymptotic normality has been shown for CATE estimators from recursive partitioning and random forests \citep{athey2016recursive, wager2018estimation, cui2023estimating}.
Investigating the asymptotic distribution and inference results for CATE estimators under LTRC would be an important direction for future work.

The code for implementation of the proposed estimators and simulations in this paper is available at \href{https://github.com/wangyuyao98/truncAC}{https://github.com/wangyuyao98/truncAC}. 













\bibliographystyle{plainnat}
\bibliography{bib/left_trunc, bib/cox, bib/survivalHTE, bib/osg}

\clearpage

\begin{center}
    {\LARGE Appendix for ``A Liberating Framework from Truncation and Censoring, with Application to Learning Treatment Effect"}
\end{center}

\vspace{5em}

{\bf \Large Contents}
{\large
\begin{itemize}
    \item[] \hyperref[sec:crossprod_norm_properties]{\ref{sec:crossprod_norm_properties}~ Preliminaries} \hfill \pageref{sec:crossprod_norm_properties}

    \item[] \hyperref[app:DR_proof]{\ref{app:DR_proof}~ Proofs of  double robustness and Neyman orthogonality} \hfill \pageref{app:DR_proof}

    \item[] \hyperref[supp:nuis]{\ref{supp:nuis}~ Nuisance estimation}\hfill \pageref{supp:nuis}

    \item[] \hyperref[app:ATE_theory]{\ref{app:ATE_theory}~ ATE theory}\hfill \pageref{app:ATE_theory}

    \item[] \hyperref[app:CATE_theory]{\ref{app:CATE_theory}~ CATE theory}\hfill \pageref{app:CATE_theory}

    \item[] \hyperref[app:simulation]{\ref{app:simulation}~ Additional simulation details and results}\hfill \pageref{app:simulation}

    \item[]  \hyperref[app:HAAS_additional]{\ref{app:HAAS_additional}~ Additional HAAS data analysis results}\hfill \pageref{app:HAAS_additional}
\end{itemize}
}

\appendix

\begingroup
\setcounter{tocdepth}{1}
\endgroup

\renewcommand{\theequation}{A\arabic{equation}}  
\setcounter{equation}{0}                         


\setcounter{theorem}{0}
\setcounter{proposition}{0}
\renewcommand{\thetheorem}{A\arabic{theorem}}
\renewcommand{\theproposition}{A\arabic{proposition}}
\renewcommand{\thecorollary}{A\arabic{corollary}}
\setcounter{corollary}{0}

\renewcommand{\thefigure}{A\arabic{figure}}
\renewcommand{\thetable}{A\arabic{table}}
\setcounter{figure}{0}
\setcounter{table}{0}

\clearpage
\section{Preliminaries} \label{sec:crossprod_norm_properties}

We will use $\lesssim$ to denote less than or equal to after multiplying a {absolute} constant {(i.e.~not dependent on $n$)}.

\subsection{Expressions of $\V(\nu;F,G,S_D)$ and $\V(1;F,G,S_D)$
} \label{app:expression_V_nu} 

The expressions of the above two are in \eqref{eq:V(nu)} and \eqref{eq:V(1)}; {they are expressions when the LTRC operator $\V$ is applied to the function $\nu$ and constant function 1, respectively}. We derive them in the following.
For any real numbers $a$ and $b$, let $a\wedge b$ denote the minimum of $a$ and $b$. 
\begin{align}
    &\quad \V(\nu;F,G,S_D)  \nonumber \\
    & =  \frac{\Delta}{S_D(X-Q|Q,A,Z)} \left\{\frac{\nu(X)}{G(X|A, Z)} - \int_0^\infty \frac{\int_0^v \nu(t) dF(t|A,Z) }{1-F(v|A, Z)} \cdot \frac{d\bar M_Q(v;G)}{G(v|A, Z)} \right\} \nonumber \\
    &\quad + \int_0^\infty \left[\E\left\{\left. \frac{\nu(T)}{G(T|A,Z)} \right| T-Q\geq u,Q,A,Z\right\} \right.   \nonumber \\
    &\quad\quad\quad\quad\   \left.  -  \int_0^\infty \frac{\int_0^v \nu(t) dF(t|A,Z) }{1-F(v|A, Z)}  \cdot \frac{\E\{d\bar M_Q(v;G)|T-Q\geq u, Q,A,Z \}}{G(v|A,Z)} \right] \frac{dM_D(u;S_D)}{S_D(u|Q,A,Z)}. \label{eq:V(nu)_compute1}
\end{align}
In the following, we first compute the two conditional expectations involved in \eqref{eq:V(nu)_compute1}: 

\begin{align}
    &\quad \E\left\{\left. \frac{\nu(T)}{G(T|A,Z)} \right| T-Q\geq u,Q = q,A=a,Z=z\right\} \nonumber \\
    & = \E\left\{\left. \frac{\nu(T^*)}{G(T^*|A^*,Z^*)} \right| T^*\geq Q^*+u,Q^* = q,A^*=a,Z^*=z\right\} \nonumber \\
    & = \frac{\int_{Q+u}^\infty \{\nu(t)/G(t|a,z)\} dF(t|a,z)}{1-F(q+u|a,z)},  \nonumber
\end{align}
and
\begin{align}
    & \quad \E\{d\bar M_Q(v;G)|T-Q\geq u, Q,A,Z \} \nonumber \\
    & = d\bar N_Q(v) + \E\{\ind(Q\leq v<T)|T-Q\geq u, Q,A,Z \} \frac{dG(v|A,Z)}{G(v|A,Z)} \nonumber \\
    & = d\bar N_Q(v) + \E\{\ind(Q\leq v<T)|T\geq Q+u, Q,A,Z \} \frac{dG(v|A,Z)}{G(v|A,Z)} \nonumber \\
    & = d \bar N_Q(v) +  \left\{\frac{1-F(v|A,Z)}{1-F(Q+u|A,Z)}\ind(Q+u\leq v)  + \ind(Q+u>v) \right\} \ind(Q\leq v) \frac{dG(v|A,Z)}{G(v|A,Z)}.  \nonumber
\end{align}
Therefore, 
\begin{align}
    \V(\nu;F,G,S_D)  = \A_1 + \A_2 + \A_3 - \A_4 - \A_5, \nonumber
\end{align}
where 
\begin{align}
    \A_1 & = \frac{\Delta}{S_D(X-Q|Q,A,Z)} \left\{\frac{\nu(X)}{G(X|A, Z)} - \int_0^\infty \frac{\int_0^v \nu(t) dF(t|A,Z) }{1-F(v|A, Z)} \cdot \frac{d\bar M_Q(v;G)}{G(v|A, Z)} \right\},  \nonumber \\
    \A_2 & =  \int_0^\infty \frac{\int_{Q+u}^\infty \{\nu(t)/G(t|A,Z)\} dF(t|A,Z) }{1-F(Q+u|A,Z)} \cdot \frac{dM_D(u;S_D)}{S_D(u|Q,A,Z)}, \nonumber  \\
    \A_3 &  = \int_0^\infty  \frac{\int_0^Q \nu(t) dF(t|A,Z) }{\{1-F(Q|A, Z)\}G(Q|A,Z) }  \cdot \frac{dM_D(u;S_D)}{S_D(u|Q,A,Z)} \nonumber \\
    & =  \frac{\int_0^Q \nu(t) dF(t|A,Z) }{\{1-F(Q|A, Z)\}G(Q|A,Z)}  \int_0^\infty \frac{dM_D(u;S_D)}{S_D(u|Q,A,Z)}, \nonumber \\
    \A_4 & = \int_0^\infty \int_{Q}^\infty \frac{\ind(Q+u\leq v)  \int_0^v \nu(t) dF(t|A,Z) }{\{1-F(v|A, Z)\}G(v|A,Z)} \cdot \frac{1-F(v|A,Z)}{1-F(Q+u|A,Z)}\cdot \frac{dG(v|A,Z)}{G(v|A,Z)} \cdot \frac{dM_D(u;S_D)}{S_D(u|Q,A,Z)}  \nonumber \\
    & = \int_0^\infty \int_{Q}^\infty \frac{\ind(Q+u\leq v)  \int_0^v \nu(t) dF(t|A,Z) }{1-F(Q+u|A,Z)} \cdot \frac{dG(v|A,Z)}{G(v|A,Z)^2} \cdot \frac{dM_D(u;S_D)}{S_D(u|Q,A,Z)},  \nonumber \\
    \A_5 & = \int_0^\infty \int_Q^\infty \frac{\ind(Q+u>v) \int_0^v \nu(t) dF(t|A,Z) }{1-F(v|A, Z)}  \cdot \frac{dG(v|A,Z)}{G(v|A,Z)^2} \cdot  \frac{dM_D(u;S_D)}{S_D(u|Q,A,Z)}. \nonumber
\end{align}
Since 
\begin{align}
    \frac{\ind(Q+u\leq v)}{1-F(Q+u|A,Z)}  + \frac{\ind(Q+u>v)}{1-F(v|A,Z)}
    & = \frac{1}{1-F\{(Q+u)\wedge v|A,Z)\}}, \nonumber
\end{align}
we have 
\begin{align}
    \A_4 + \A_5 & = \int_0^\infty \left[ \int_Q^\infty \frac{\int_0^v \nu(t) dF(t|A,Z)}{1-F\{(Q+u)\wedge v|A,Z)\}}  \cdot \frac{dG(v|A,Z)}{G(v|A,Z)^2} \right] \frac{dM_D(u;S_D)}{S_D(u|Q,A,Z)}.  \nonumber
\end{align}
Moreover, 
\begin{align}
    \int_0^\infty \frac{dM_D(u;S_D)}{S_D(u|Q,A,Z)} 
    & = \int_0^\infty \left\{ \frac{dN_D(u)}{S_D(u|Q,A,Z)} + \frac{\ind(X-Q\geq u)dS_D(u|Q,A,Z)}{S_D(u|Q,A,Z)^2} \right\} \nonumber \\
    & = \frac{1-\Delta}{S_D(X-Q|Q,A,Z)} - \int_{0}^{X-Q} d\left\{\frac{1}{S_D(u|Q,A,Z)}\right\} \nonumber \\
    & = \frac{1-\Delta}{S_D(X-Q|Q,A,Z)} - \frac{1}{S_D(X-Q|Q,A,Z)} + \frac{1}{S_D(0|Q,A,Z)} \nonumber \\
    & = - \frac{\Delta}{S_D(X-Q|Q,A,Z)} +1. \nonumber
\end{align}
Therefore, 
\begin{align}
    \V(\nu;F,G,S_D)  
    & = \frac{\Delta}{S_D(X-Q|Q,A,Z)} \left\{\frac{\nu(X)}{G(X|A, Z)} - \int_0^\infty \frac{\int_0^v \nu(t) dF(t|A,Z) }{1-F(v|A, Z)} \cdot \frac{d\bar M_Q(v;G)}{G(v|A, Z)} \right\}  \nonumber \\
    &\quad + \int_0^\infty \frac{\int_{Q+u}^\infty \{\nu(t)/G(t|A,Z)\} dF(t|A,Z) }{1-F(Q+u|A,Z)} \cdot \frac{dM_D(u;S_D)}{S_D(u|Q,A,Z)}  \nonumber \\
    &\quad +  \left\{1 - \frac{\Delta}{S_D(X-Q|Q,A,Z)}\right\} \frac{\int_0^Q \nu(t) dF(t|A,Z) }{\{1-F(Q|A, Z)\}G(Q|A,Z)}   \nonumber \\
    &\quad - \int_0^\infty \left[\int_Q^\infty \frac{\int_0^v \nu(t) dF(t|A,Z)}{1-F\{(Q+u)\wedge v|A,Z)\}} \cdot \frac{d G(v|A,Z)}{G(v|A,Z)^2} \right] \frac{dM_D(u;S_D)}{S_D(u|Q,A,Z)}. \label{eq:V(nu)}
\end{align}
As a special case with $\nu\equiv 1$, we have 
\begin{align}
    \V(1;F,G,S_D)  
    & = \frac{\Delta}{S_D(X-Q|Q,A,Z)} \left\{\frac{1}{G(X|A,Z)} - \int_0^\infty \frac{F(v|A,Z) }{1-F(v|A,Z)} \cdot \frac{d\bar M_Q(v;G)}{G(v|A,Z)} \right\}   \nonumber \\
    &\quad + \int_0^\infty \frac{\int_{Q+u}^\infty \{1/G(t|A,Z)\} dF(t|A,Z) }{1-F(Q+u|A,Z)} \cdot \frac{dM_D(u;S_D)}{S_D(u|Q,A,Z)}  \nonumber \\
    &\quad +  \left\{1 - \frac{\Delta}{S_D(X-Q|Q,A,Z)}\right\} \frac{F(Q|A,Z)}{\{1-F(Q|A, Z)\}G(Q|A,Z)}   \nonumber \\
    &\quad - \int_0^\infty \left[\int_Q^\infty \frac{F(v|A,Z)}{1-F\{(Q+u)\wedge v|A,Z)\}} \cdot \frac{d G(v|A,Z)}{G(v|A,Z)^2} \right] \frac{dM_D(u;S_D)}{S_D(u|Q,A,Z)}. \label{eq:V(1)}
\end{align}

\newpage
\bigskip 
\subsection{Norms, integral products and related quantities} \label{app:integral_prod}

Let $P_Z$, $P_{A,Z}$ and $P_{Q,A,Z}$ denote the cumulative distribution function of $Z$, $(A,Z)$ and $(Q,A,Z)$, respectively. 
For $p\in [1,\infty)$, denote 
$\|\hat\pi - \pi\|_{p}^p = \int |\hat\pi(z) - \pi(z)|^p dP_{Z}(z)$,
$\|\hat F - F\|_{\sup,p}^p = \int \sup_{t}|\hat F(t|a,z) - F(t|a,z)|^p dP_{A,Z}(a,z)$,
$\|\hat G - G\|_{\sup,p}^p = \int \sup_{t}|\hat G(t|a,z) - G(t|a,z)|^p dP_{A,Z}(a,z)$,
and
$\|\hat S_D - S_D\|_{\sup,p}^p = \int \sup_{t}|\hat S_D(t|q,a,z) - S_D(t|q,a,z)|^p dP_{Q,A,Z}(q,a,z)$.  
Note that the above norms are random quantities with randomness from $\hat\pi$, $\hat F$, $\hat G$, and $\hat S_D$.
In addition  for $\tilde K_1(g,g_0)$, $\tilde K_2(g,g_0)$ and $\tilde K_2(g,g_0)$ in \eqref{eq:K_1} \eqref{eq:K_2} \eqref{eq:K_3} below, 
$\|\cdot\|_1$ denotes the $L_1$ norm with respect to the distribution of $O$ as in the above while the randomness from $\hat g$ is still kept.

Also let $\|F - F_0\|_{\infty} = \sup_{t,a,z}| F(t|a,z) - F_0(t|a,z)|$, $\|G - G_0\|_{\infty} = \sup_{t,a,z}|G(t|a,z) - G_0(t|a,z)|$, and $\|S_D - S_{D0}\|_{\infty} = \sup_{t,q,a,z}|S_D(t|q,a,z) - S_{D0}(t|q,a,z)|$. 
For $\tilde K_1(g,g_0;1)$, $\tilde K_2(g,g_0;1)$ and $\tilde K_2(g,g_0;1)$ in \eqref{eq:K_1_tilde} \eqref{eq:K_2_tilde} \eqref{eq:K_3_tilde} below, 
similarly let $\|\tilde K_1(g,g_0;1)\|_{\infty}$, $\|\tilde K_2(g,g_0;1)\|_{\infty}$, $\|\tilde K_3(g,g_0;1)\|_{\infty}$ denote the $L_\infty$ norm with respect to the random variables involved.

Denote the following integral product terms with respect to errors $F-F_0$, $G-G_0$, etc.; these are needed when multiple nuisance parameters involve functions of time \citep{wang2024doubly, luo2023doubly}: 
\begin{align}
    H_{1}(g,g_0;\nu) & = \int_Q^X \left\{ \frac{\int_0^v \nu(t) dF(t|A,Z)}{1-F(v|A,Z)} - \frac{\int_0^v \nu(t) dF_0(t|A,Z)}{1-F_0(v|A,Z)} \right\} \cdot d\left\{\frac{1}{G(v|A,Z)} - \frac{1}{G_0(v|A,Z)}\right\}, \label{eq:H1}\\
    H_2(g,g_0;\nu) & = \int_{X}^\infty \left\{\frac{\nu(t)}{G(t|A,Z)} - \frac{\nu(t)}{G_0(t|A,Z)} \right\} \cdot d \left\{F(t|A,Z) - F_0(t|A,Z) \right\}, \nonumber
\end{align}

\begin{align}
    H_{31}(g,g_0;\nu) & =  \int_0^{X-Q}  \int_{Q+u}^\infty \left\{\frac{\nu(t)}{G(t|A,Z)} - \frac{\nu(t)}{G_0(t|A,Z)} \right\}  \nonumber  \\
    &\qquad\qquad\qquad\quad   \cdot \left\{\frac{dF(t|A,Z)}{1-F(Q+u|A,Z)} - \frac{dF_0(t|A,Z)}{1-F_0(Q+u|A,Z)} \right\} \cdot d\left\{\frac{1}{S_{D0}(u|Q,A,Z)} \right\},  \nonumber \\
    H_{32}(g,g_0;\nu) & =  \int_0^{X-Q}  \int_{Q+u}^\infty \frac{\nu(t)}{G_0(t|A,Z)} \cdot \left\{\frac{dF(t|A,Z)}{1-F(Q+u|A,Z)} - \frac{dF_0(t|A,Z)}{1-F_0(Q+u|A,Z)} \right\}  \nonumber  \\
    &\qquad\qquad\qquad\quad  \cdot  d\left\{\frac{1}{S_{D}(u|Q,A,Z)}  - \frac{1}{S_{D0}(u|Q,A,Z)} \right\}, \nonumber \\
    H_{33}(g,g_0;\nu) & =  \int_0^{X-Q}  \int_{Q+u}^\infty  \left\{\frac{\nu(t)}{G(t|A,Z)} - \frac{\nu(t)}{G_0(t|A,Z)} \right\}  \cdot \left\{\frac{dF(t|A,Z)}{1-F(Q+u|A,Z)} - \frac{dF_0(t|A,Z)}{1-F_0(Q+u|A,Z)} \right\} \nonumber  \\
    &\qquad\qquad\qquad\quad  \cdot  d\left\{\frac{1}{S_{D}(u|Q,A,Z)}  - \frac{1}{S_{D0}(u|Q,A,Z)} \right\}. \nonumber
\end{align}

\begin{align}
    H_4(g,g_0;\nu) & = \int_Q^\infty \left\{ \frac{\int_0^v \nu(t) dF(t|A,Z)}{1-F(X\wedge v|A,Z)} - \frac{\int_0^v \nu(t) dF_0(t|A,Z)}{1-F_0(X\wedge v|A,Z)} \right\}  
    \cdot  d\left\{ \frac{1}{G(v|A,Z)} -  \frac{1}{G_0(v|A,Z)} \right\}. \nonumber
\end{align}

\begin{align}
    H_{51}(g,g_0;\nu) & = \int_0^{X-Q} \int_Q^\infty \left[ \frac{\int_0^v \nu(t) dF(t|A,Z)}{1-F\{(Q+u)\wedge v|A,Z)\}} - \frac{\int_0^v \nu(t) dF_0(t|A,Z)}{1-F_0\{(Q+u)\wedge v|A,Z)\}} \right] \nonumber  \\
    &\qquad\qquad\qquad\quad  \cdot d \left\{\frac{1}{G(v|A,Z)} - \frac{1}{G_0(v|A,Z)} \right\} \cdot d\left\{\frac{1}{S_{D0}(u|Q,A,Z)} \right\}, \nonumber \\
    H_{52}(g,g_0;\nu) & = \int_0^{X-Q} \int_Q^\infty \left[ \frac{\int_0^v \nu(t) dF(t|A,Z)}{1-F\{(Q+u)\wedge v|A,Z)\}} - \frac{\int_0^v \nu(t) dF_0(t|A,Z)}{1-F_0\{(Q+u)\wedge v|A,Z)\}} \right] \nonumber  \\
    &\qquad\qquad\qquad\quad \cdot d \left\{ \frac{1}{G_0(v|A,Z)} \right\} \cdot d\left\{\frac{1}{S_D(u|Q,A,Z)} - \frac{1}{S_{D0}(u|Q,A,Z)} \right\}, \nonumber \\
    H_{53}(g,g_0;\nu) & = \int_0^{X-Q} \int_Q^\infty \left[ \frac{\int_0^v \nu(t) dF(t|A,Z)}{1-F\{(Q+u)\wedge v|A,Z)\}} - \frac{\int_0^v \nu(t) dF_0(t|A,Z)}{1-F_0\{(Q+u)\wedge v|A,Z)\}} \right]  \nonumber \\
    &\qquad\qquad\qquad\quad \cdot d \left\{ \frac{1}{G(v|A,Z)} -  \frac{1}{G_0(v|A,Z)} \right\} \cdot d\left\{\frac{1}{S_D(u|Q,A,Z)} - \frac{1}{S_{D0}(u|Q,A,Z)} \right\}. \nonumber
\end{align}
In addition, denote
\begin{align}
    \tilde K_1(g,g_0;\nu) & = |H_1(g,g_0;\nu)| + |H_2(g,g_0;\nu)| + |H_{31}(g,g_0;\nu)| + |H_4(g,g_0;\nu)| + |H_{51}(g,g_0;\nu)|, \label{eq:K_1_tilde} \\
    \tilde K_2(g,g_0;\nu) & = |H_{32}(g,g_0;\nu)| + |H_{52}(g,g_0;\nu)|, \label{eq:K_2_tilde} \\
    \tilde K_3(g,g_0;\nu) & = |H_{33}(g,g_0;\nu)| + |H_{53}(g,g_0;\nu)|, \label{eq:K_3_tilde}
\end{align}
Let
\begin{align}
    K_1(g,g_0) & = \tilde K_1(g,g_0;\nu) + \tilde K_1(g,g_0;1), \label{eq:K_1} \\
    K_2(g,g_0) & = \tilde K_2(g,g_0;\nu) + \tilde K_2(g,g_0;1), \label{eq:K_2} \\
    K_3(g,g_0) & = \tilde K_3(g,g_0;\nu) + \tilde K_3(g,g_0;1), \label{eq:K_3}
\end{align}
where $\tilde K_j(g,g_0;1) $ is $ \tilde K_j(g,g_0;\nu)$ with $\nu\equiv 1$, $j=1,2,3$. 
We note that $K_1$, $K_2$ are the integral products for $(F-F_0, G-G_0)$ and $(F-F_0,S_D-S_{D0})$, respectively, and $K_3$ contains the higher order integral product terms.

Finally denote 
\begin{align}
    \D_\V(g,g_0;\nu) 
    & = \{\V(\nu;F,G,S_D) - \V(\nu;F_0,G,S_D)\}  \nonumber \\
    &\quad - \{\V(\nu;F,G_0,S_{D0}) - \V(\nu;F_0,G_0,S_{D0})\}. \label{eq:D_V}
\end{align}

\newpage
\subsection{Useful propositions}

We first introduce the following properties of $\V$, which will be used in the later proofs.

\begin{proposition}\label{prop:W_nuf(A,Z)}
    Under Assumptions \ref{ass:consistency} - \ref{ass:strict_positivity}, for any function $\zeta(A^*,Z^*)$ that is bounded almost surely,  we have 
    \begin{align}
    \E[\V(\nu;F,G,S_{D})\zeta(A,Z)] & = \beta^{-1} \E\{ \nu(T^*)\zeta(A^*,Z^*)\}, \nonumber \\
    \E[\V(1;F,G,S_{D})\zeta(A,Z)] & = \beta^{-1} \E\{\zeta(A^*,Z^*)\}  \nonumber
\end{align}
if either $F = F_0$ or $(G,S_D) = (G_0,S_{D0})$. 
\end{proposition}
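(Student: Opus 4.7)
The plan is to reduce this proposition directly to Lemma~\ref{lem:DR_V} (double robustness of $\V$) by exploiting the fact that the operator $\V$ factors multiplicatively when the input function is separable as a product of a function of $T^*$ and a function of $(A^*, Z^*)$.

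First, I would define $\tilde\zeta(T^*, A^*, Z^*) = \nu(T^*)\,\zeta(A^*, Z^*)$, which is bounded almost surely since $\nu$ is bounded by assumption and $\zeta$ is bounded by hypothesis. Then applying Lemma~\ref{lem:DR_V} to $\tilde\zeta$ immediately gives
\begin{align*}
    \E[\V(\tilde\zeta; F, G, S_D)] = \beta^{-1}\E\{\nu(T^*)\,\zeta(A^*, Z^*)\}
\end{align*}
provided either $F = F_0$ or $(G, S_D) = (G_0, S_{D0})$.

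The heart of the argument is therefore the pointwise identity $\V(\tilde\zeta; F, G, S_D) = \zeta(A, Z)\,\V(\nu; F, G, S_D)$. I would verify this in two steps, corresponding to the two operators that make up $\V = \V_C \circ \V_Q$. For $\V_Q$, noting that $m_{\tilde\zeta}(v, a, z; F) = \zeta(a, z)\,m_\nu(v, a, z; F)$ because $\zeta$ does not depend on $t$, both the inverse-probability-of-truncation term $\tilde\zeta(T, A, Z)/G(T|A, Z)$ and the augmentation integral factor out a common $\zeta(A, Z)$, yielding $\V_Q(\tilde\zeta; F, G) = \zeta(A, Z)\,\V_Q(\nu; F, G)$. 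Then for $\V_C$ applied to the function $\xi(Q, T, A, Z) := \zeta(A, Z)\,\V_Q(\nu; F, G)(Q, T, A, Z)$, the same observation applies: $\bar m_\xi(u, q, a, z; F) = \zeta(a, z)\,\bar m_{\V_Q(\nu;F,G)}(u, q, a, z; F)$, so $\zeta(A, Z)$ again pulls out of both terms of $\V_C$. Composing gives the desired factorization.

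Combining, $\E[\V(\nu; F, G, S_D)\,\zeta(A, Z)] = \E[\V(\tilde\zeta; F, G, S_D)] = \beta^{-1}\E\{\nu(T^*)\,\zeta(A^*, Z^*)\}$. The second identity in the proposition follows by the identical argument with $\nu$ replaced by the constant function $1$. I do not anticipate a serious obstacle: the only care required is bookkeeping in the $\V_C$ step to confirm that $\zeta(A, Z)$—which depends on the observed $(A, Z)$ but not on $(Q, T)$ or the residual time—genuinely commutes with the conditional expectation defining $\bar m_\xi$, which it does because $(A, Z)$ is in the conditioning set.
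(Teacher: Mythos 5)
Your proposal is correct and follows essentially the same route as the paper's proof: the paper also defines $\nu\cdot\zeta$, establishes the pointwise factorization $\V(\nu\cdot\zeta;F,G,S_D) = \V(\nu;F,G,S_D)\,\zeta(A,Z)$ via the composition $\V = \V_C\circ\V_Q$, and then invokes Lemma~\ref{lem:DR_V}. You give a somewhat more detailed verification of why $\zeta(A,Z)$ pulls out of both $\V_Q$ and $\V_C$ (through $m_{\tilde\zeta}$ and $\bar m_\xi$), but the argument is the same.
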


\begin{proof}[Proof of Proposition \ref{prop:W_nuf(A,Z)}]
    For any function $\zeta(A,Z)$, by the expressions of $\V_Q$ in \eqref{eq:AIPW_Q} and $\V_C$ in \eqref{eq:AIPCW}, we can show that 
    \begin{align*}
        \V(\nu\cdot\zeta;F,G,S_D)
        & = \V_C\{ \V_Q(\nu\cdot\zeta;F,G); F,S_D\} \\
        & = \V_C\{ \V_Q(\nu;F,G)\cdot \zeta(A,Z); F,S_D\} \\
        & = \V_C\{\V_Q(\nu;F,G); F,S_D\} \cdot \zeta(A,Z)\\
        & = \V(\nu;F,G,S_D) \cdot \zeta(A,Z).
    \end{align*}
This together with Lemma \ref{lem:DR_V} implies that 
\begin{align}
    \E\{ \V(\nu;F,G,S_D) \zeta(A,Z)\} = \E\{ \V(\nu\cdot\zeta;F,G,S_D) \} = \beta^{-1} \E\{\nu(T^*)\zeta(A^*,Z^*)\} \nonumber
\end{align}
if either $F = F_0$ or $(G,S_D) = (G_0,S_{D0})$. 
As a special case of $\nu\equiv 1$, we have 
\begin{align}
    \E[\V(1;F,G,S_{D})\zeta(A,Z)] & = \beta^{-1} \E\{\zeta(A^*,Z^*)\} \nonumber
\end{align}
if either $F = F_0$ or $(G,S_D) = (G_0,S_{D0})$.
\end{proof}

\vspace{3em}

Recall from \eqref{eq:D_V} the definition of $\D_\V$.
We now introduce the following properties of $\D_\V$, which will be used in the later proofs.

\begin{proposition}\label{prop:D_W_bound}
    \begin{align*}
        |\D_\V(g,g_0;\nu)|
        & \lesssim \left\{ \sup_t |F(t|A,Z) - F_0(t|A,Z)| \right\} \left\{ \sup_t |G(t|A,Z) - G_0(t|A,Z)| \right\} . \\
        &\qquad + \left\{ \sup_t |F(t|A,Z) - F_0(t|A,Z)| \right\} \left\{ \sup_t |S_D(t|Q,A,Z) - S_{D0}(t|Q,A,Z)| \right\} \\
        &\qquad + \tilde K_1(g,g;\nu) + \tilde K_2(g,g;\nu) + \tilde K_3(g,g;\nu),
    \end{align*}
    \begin{align*}
        |\D_\V(g,g_0;1)|
        & \lesssim \left\{ \sup_t |F(t|A,Z) - F_0(t|A,Z)| \right\} \left\{ \sup_t |G(t|A,Z) - G_0(t|A,Z)| \right\} \\
        &\qquad + \left\{ \sup_t |F(t|A,Z) - F_0(t|A,Z)| \right\} \left\{ \sup_t |S_D(t|Q,A,Z) - S_{D0}(t|Q,A,Z)| \right\} \\
        &\qquad + \tilde K_1(g,g;1) + \tilde K_2(g,g;1) + \tilde K_3(g,g;1);
    \end{align*}
    and 
    \begin{align*}
        \|\D_\V(g,g_0;\nu)\|_2
        &\lesssim \|F-F_0\|_{\sup,4} \cdot \|G-G_0\|_{\sup,4} + \|F-F_0\|_{\sup,4} \cdot \|S_D - S_{D0}\|_{\sup,4}  \nonumber \\
        &\quad + \|\tilde K_1(g,g_0;\nu)\|_2 + \|\tilde K_2(g,g_0;\nu)\|_2 + \|\tilde K_3(g,g_0;\nu)\|_2, \\
        \|\D_\V(g,g_0;1)\|_2
        &\lesssim \|F-F_0\|_{\sup,4} \cdot \|G-G_0\|_{\sup,4} + \|F-F_0\|_{\sup,4} \cdot \|S_D - S_{D0}\|_{\sup,4}  \nonumber \\
        &\quad + \|\tilde K_1(g,g_0;1)\|_2 + \|\tilde K_2(g,g_0;1)\|_2 + \|\tilde K_3(g,g_0;1)\|_2.
\end{align*}
\end{proposition}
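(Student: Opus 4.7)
The plan is to exploit the explicit five-term expression for $\V(\nu;F,G,S_D) = \A_1 + \A_2 + \A_3 - \A_4 - \A_5$ derived in Appendix \ref{app:expression_V_nu}, and to compute the mixed second difference $\D_\V(g,g_0;\nu)$ term by term. Since $\D_\V$ is linear in $\V$, it suffices to bound the contribution from each $\A_i$. The key observation is that any piece of $\A_i$ that depends on $F$ alone, or on $(G,S_D)$ alone, cancels immediately in the mixed second difference; only the truly coupled interactions between $F$ and $(G,S_D)$ can survive.

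For each surviving term I would apply the identity $ab - a_0 b_0 = (a-a_0)(b-b_0) + (a-a_0)b_0 + a_0(b-b_0)$, under which the mixed second difference of any $f(F)h(G,S_D)$ collapses to $\{f(F)-f(F_0)\}\{h(G,S_D)-h(G_0,S_{D0})\}$. Under Assumption \ref{ass:strict_positivity}, the maps $F/(1-F)$, $1/G$ and $1/S_D$ are Lipschitz on the relevant ranges, so such product terms contribute the $\|F-F_0\|_\infty\cdot\{\|G-G_0\|_\infty+\|S_D-S_{D0}\|_\infty\}$ part of the bound. For the pieces of $\V$ that involve iterated integrals against $dG$ (in $\A_1,\A_4,\A_5$) or $dM_D$ (in $\A_2$--$\A_5$), the same expansion produces integrands matching exactly the definitions of $H_1$, $H_2$, $H_{31}$, $H_{32}$, $H_{33}$, $H_4$, $H_{51}$, $H_{52}$, $H_{53}$ in Appendix \ref{app:integral_prod}. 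Grouping them by whether they are free of, linear in, or bilinear in the $(G,S_D)$ errors recovers the three summary quantities $\tilde K_1, \tilde K_2, \tilde K_3$, completing the pointwise bound.

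The $L_2$ bounds follow by applying Cauchy--Schwarz to the pointwise bound. For the product-of-sup contributions,
$$\E\bigl[(\sup_t|F-F_0|)^2(\sup_t|G-G_0|)^2\bigr]^{1/2} \leq \|F-F_0\|_{\sup,4}\cdot\|G-G_0\|_{\sup,4},$$
and similarly for the $S_D$ contribution, by the definition of the $\|\cdot\|_{\sup,p}$ norms in Appendix \ref{app:integral_prod}. The integral product contributions retain their $L_2$ form directly by definition. The second claimed bound, for $\V(1;F,G,S_D)$, is obtained from an identical computation with $\nu\equiv 1$, which is exactly how the quantities $\tilde K_j(g,g_0;1)$ were defined.

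The main obstacle will be the bookkeeping in $\A_4$ and $\A_5$, which are iterated integrals
$$\int_0^\infty\!\!\int_Q^\infty[\cdots]\,\frac{dG(v|A,Z)}{G(v|A,Z)^2}\cdot\frac{dM_D(u;S_D)}{S_D(u|Q,A,Z)}$$
with nested fractions such as $1/\{1-F((Q+u)\wedge v|A,Z)\}$. One must carefully track how perturbations of $F$, $G$ and $S_D$ propagate through all three layers, verify that all ``pure $F$'' pieces truly cancel in the mixed second difference, and match each residual cross-term to the correct $H_i$. Assumption \ref{ass:strict_positivity} is used throughout to keep the denominators bounded away from zero.
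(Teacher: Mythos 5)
Your proposal is correct and takes essentially the same route as the paper: expand $\V$ into its explicit terms, exploit the fact that the mixed second difference kills any term depending on $F$ alone or on $(G,S_D)$ alone, bound the surviving coupled pieces by the products of sup errors plus the $H_i$ integral products, and then pass to $L_2$ via Cauchy--Schwarz and the definitions of the $\|\cdot\|_{\sup,p}$ norms. One small clarification: the algebraic step you attribute to the identity $ab-a_0b_0=(a-a_0)(b-b_0)+(a-a_0)b_0+a_0(b-b_0)$ is actually the simpler factoring $ab-a_0 b-ab_0+a_0 b_0=(a-a_0)(b-b_0)$ for the mixed second difference; the paper itself organizes this by first splitting off the $S_D$ error, then the $G$ error, yielding the three-way decompositions $\D_{111}+\D_{112}+\D_{12}$ etc., which is the concrete incarnation of your observation.
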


\begin{proof}[Proof of Proposition \ref{prop:D_W_bound}]
    Recall from \eqref{eq:V(nu)} the expression of $\V(\nu;F,G,S_D)$: 
    \begin{align}
    \V(\nu;F,G,S_D)  
    & = \frac{\Delta}{S_D(X-Q|Q,A,Z)} \left\{\frac{\nu(X)}{G(X|A, Z)} - \int_0^\infty \frac{\int_0^v \nu(t) dF(t|A,Z) }{1-F(v|A, Z)} \cdot \frac{d\bar M_Q(v;G)}{G(v|A, Z)} \right\}  \nonumber \\
    &\quad + \int_0^\infty \frac{\int_{Q+u}^\infty \{\nu(t)/G(t|A,Z)\} dF(t|A,Z) }{1-F(Q+u|A,Z)} \cdot \frac{dM_D(u;S_D)}{S_D(u|Q,A,Z)}  \nonumber \\
    &\quad +  \left\{1 - \frac{\Delta}{S_D(X-Q|Q,A,Z)}\right\} \frac{\int_0^Q \nu(t) dF(t|A,Z) }{\{1-F(Q|A, Z)\}G(Q|A,Z)}   \nonumber \\
    &\quad - \int_0^\infty \left[\int_Q^\infty \frac{\int_0^v \nu(t) dF(t|A,Z)}{1-F\{(Q+u)\wedge v|A,Z)\}} \cdot \frac{d G(v|A,Z)}{G(v|A,Z)^2} \right] \frac{dM_D(u;S_D)}{S_D(u|Q,A,Z)}. \nonumber
\end{align}
It can be verified that 
    \begin{align*}
        \D_\V(g,g_0;\nu) = - \D_1 + \D_2 + \D_3 - \D_4,
    \end{align*}
    where 
    \begin{align*}
        \D_1 & = \frac{\Delta}{S_D(X-Q|Q,A,Z)} \int_0^ \infty \left[ \frac{\int_0^v \nu(t) dF(t|A,Z) }{1-F(v|A, Z)} - \frac{\int_0^v \nu(t) dF_0(t|A,Z) }{1-F_0(v|A, Z)} \right] \cdot \frac{d\bar M_Q(v;G)}{G(v|A, Z)} \\
        & \quad - \frac{\Delta}{S_{D0}(X-Q|Q,A,Z)} \int_0^ \infty \left[ \frac{\int_0^v \nu(t) dF(t|A,Z) }{1-F(v|A, Z)} - \frac{\int_0^v \nu(t) dF_0(t|A,Z) }{1-F_0(v|A, Z)} \right] \cdot \frac{d\bar M_Q(v;G_0)}{G_0(v|A, Z)}, \\
        \D_2 & = \int_0^\infty \left\{\frac{\int_{Q+u}^\infty \{\nu(t)/G(t|A,Z)\} dF(t|A,Z) }{1-F(Q+u|A,Z)} - \frac{\int_{Q+u}^\infty \{\nu(t)/G(t|A,Z)\} dF_0(t|A,Z) }{1-F_0(Q+u|A,Z)}  \right\}\cdot \frac{dM_D(u;S_D)}{S_D(u|Q,A,Z)} \\
        & \quad - \int_0^\infty \left\{\frac{\int_{Q+u}^\infty \{\nu(t)/G_0(t|A,Z)\} dF(t|A,Z) }{1-F(Q+u|A,Z)} - \frac{\int_{Q+u}^\infty \{\nu(t)/G_0(t|A,Z)\} dF_0(t|A,Z)}{1-F_0(Q+u|A,Z)}  \right\}\cdot \frac{dM_D(u;S_{D0})}{S_{D0}(u|Q,A,Z)},\\
        \D_3 & = \left[\frac{\int_0^Q \nu(t) dF(t|A,Z) }{\{1-F(Q|A, Z)\}} - \frac{\int_0^Q \nu(t) dF_0(t|A,Z) }{\{1-F_0(Q|A, Z)\}} \right] \\
        &\qquad \cdot \left[  \left\{1 - \frac{\Delta}{S_D(X-Q|Q,A,Z)}\right\} \frac{1}{G(Q|A,Z)} -  \left\{1 - \frac{\Delta}{S_{D0}(X-Q|Q,A,Z)}\right\}\frac{1}{G_0(Q|A,Z)} \right]
    \end{align*}
    \begin{align*}
        \D_4 & = \int_0^\infty \left[\int_Q^\infty \frac{\int_0^v \nu(t) dF(t|A,Z)}{1-F\{(Q+u)\wedge v|A,Z)\}} - \frac{\int_0^v \nu(t) dF_0(t|A,Z)}{1-F_0\{(Q+u)\wedge v|A,Z)\}} \right] \cdot \frac{d G(v|A,Z)}{G(v|A,Z)^2} \cdot \frac{dM_D(u;S_D)}{S_D(u|Q,A,Z)} \\
        & \quad -  \int_0^\infty \left[\int_Q^\infty \frac{\int_0^v \nu(t) dF(t|A,Z)}{1-F\{(Q+u)\wedge v|A,Z)\}} - \frac{\int_0^v \nu(t) dF_0(t|A,Z)}{1-F_0\{(Q+u)\wedge v|A,Z)\}} \right] \cdot \frac{dG_0(v|A,Z)}{G_0(v|A,Z)^2} \cdot \frac{dM_D(u;S_{D0})}{S_{D0}(u|Q,A,Z)}. 
    \end{align*}

    We first bound $|\D_1|$. We have
    \begin{align*}
        \D_1 = \D_{11} + \D_{12},
    \end{align*}
    where
    \begin{align*}
        \D_{11} 
        & = \frac{\Delta}{S_D(X-Q|Q,A,Z)} \int_0^ \infty \left[ \frac{\int_0^v \nu(t) dF(t|A,Z) }{1-F(v|A, Z)} - \frac{\int_0^v \nu(t) dF_0(t|A,Z) }{1-F_0(v|A, Z)} \right] \cdot \left\{ \frac{d\bar M_Q(v;G)}{G(v|A, Z)} - \frac{d\bar M_Q(v;G_0)}{G_0(v|A, Z)}\right\}, \\
        \D_{12} & = \left\{\frac{\Delta}{S_D(X-Q|Q,A,Z)} - \frac{\Delta}{S_{D0}(X-Q|Q,A,Z)} \right\}  \\
        &\qquad \cdot\int_0^ \infty \left[ \frac{\int_0^v \nu(t) dF(t|A,Z) }{1-F(v|A, Z)} - \frac{\int_0^v \nu(t) dF_0(t|A,Z) }{1-F_0(v|A, Z)} \right] \cdot \frac{d\bar M_Q(v;G_0)}{G_0(v|A, Z)}. 
    \end{align*}
    By the definition of $\bar M_Q$ in \eqref{eq:def_MQ}, we have 
    \begin{align*}
        \D_{11} & = \D_{111} + \D_{112},
    \end{align*}
    where 
    \begin{align*}
        \D_{111} & = - \frac{\Delta}{S_D(X-Q|Q,A,Z)} \cdot\left[ \frac{\int_0^Q \nu(t) dF(t|A,Z) }{1-F(Q|A, Z)} - \frac{\int_0^Q \nu(t) dF_0(t|A,Z) }{1-F_0(Q|A, Z)} \right] \cdot \left\{ \frac{1}{G(Q|A, Z)} - \frac{1}{G_0(Q|A, Z)}\right\}, \\
        \D_{112} & = - \frac{\Delta}{S_D(X-Q|Q,A,Z)} \cdot H_1(g,g_0;\nu),
    \end{align*}
    and recall that $H_1(g,g_0;\nu)$ is defined in \eqref{eq:H1}.  
    By Assumption \ref{ass:strict_positivity}, we have 
    \begin{align*}
        |\D_{111}| 
        & \lesssim \left| \frac{\int_0^Q \nu(t) dF(t|A,Z) }{1-F(Q|A, Z)} - \frac{\int_0^Q \nu(t) dF_0(t|A,Z) }{1-F_0(Q|A, Z)} \right| \cdot \left| \frac{1}{G(Q|A, Z)} - \frac{1}{G_0(Q|A, Z)}\right| \\
        & \lesssim \left\{ \sup_t |F(t|A,Z) - F_0(t|A,Z)| \right\} \left\{ \sup_t |G(t|A,Z) - G_0(t|A,Z)| \right\}, \\
        |\D_{112}| 
        & \lesssim |H_1(g,g;\nu)|.
    \end{align*}
    Similarly, we can show that 
    \begin{align*}
        |\D_{12}| \lesssim \left\{ \sup_t |F(t|A,Z) - F_0(t|A,Z)| \right\} \left\{ \sup_t |S_D(t|Q,A,Z) - S_{D0}(t|Q,A,Z)| \right\}. 
    \end{align*}
    Combining the above, we have 
    \begin{align*}
        |\D_1| 
        & \leq |\D_{11}| + |\D_{12}| \\
        & \leq |\D_{111}| + |\D_{112}|  + |\D_{12}| \\
        & \lesssim \left\{ \sup_t |F(t|A,Z) - F_0(t|A,Z)| \right\} \left\{ \sup_t |G(t|A,Z) - G_0(t|A,Z)| \right\} \\
        &\quad + \left\{ \sup_t |F(t|A,Z) - F_0(t|A,Z)| \right\} \left\{ \sup_t |S_D(t|Q,A,Z) - S_{D0}(t|Q,A,Z)| \right\} \\
        &\quad + |H_1(g,g;\nu)|. 
    \end{align*}

    With similar proofs, we can bound $|\D_2|$, $|\D_3|$, $|\D_4|$, and we have 
    \begin{align*}
        |\D_\V(g,g_0;\nu)|
        & \leq |\D_1| + |\D_2| + |\D_3| + |\D_4| \\
        &\lesssim \left\{ \sup_t |F(t|A,Z) - F_0(t|A,Z)| \right\} \left\{ \sup_t |G(t|A,Z) - G_0(t|A,Z)| \right\} \\
        &\quad + \left\{ \sup_t |F(t|A,Z) - F_0(t|A,Z)| \right\} \left\{ \sup_t |S_D(t|Q,A,Z) - S_{D0}(t|Q,A,Z)| \right\} \\
        &\quad + \tilde K_1(g,g;\nu) + \tilde K_2(g,g;\nu) + \tilde K_3(g,g;\nu),
    \end{align*}
    where $\tilde K_1$, $\tilde K_2$, and $\tilde K_3$ are defined in \eqref{eq:K_1_tilde} \eqref{eq:K_2_tilde} and \eqref{eq:K_3_tilde}, respectively.
    
    Likewise, we can show the upper bound for $|\D_\V(g,g_0;1)|$.

    We now show the two inequalities for $\|\D_\V(g,g_0;\nu)\|_2$ and $\|\D_\V(g,g_0;1)\|_2$. 
    We have 
    \begin{align}
        &\quad \|\D_\V(g,g_0;\nu)\|_2^2 \\
        & = \E\{|\D_\V(g,g_0;\nu)|^2\} \nonumber \\
        & \lesssim \E\left[ \left\{ \sup_t |F(t|A,Z) - F_0(t|A,Z)| \right\}^2 \left\{ \sup_t |G(t|A,Z) - G_0(t|A,Z)| \right\}^2 \right. \nonumber \\
        &\qquad\qquad + \left\{ \sup_t |F(t|A,Z) - F_0(t|A,Z)| \right\}^2 \left\{ \sup_t |S_D(t|Q,A,Z) - S_{D0}(t|Q,A,Z)| \right\}^2 \nonumber \\
        &\qquad\qquad \left. + \tilde K_1(g,g;\nu)^2 + \tilde K_2(g,g;\nu)^2 + \tilde K_3(g,g;\nu)^2 \right] \label{eq:thm3_proof_DV_1} \\
        &\leq \|F-F_0\|_{\sup,4}^2 \cdot \|G-G_0\|_{\sup,4}^2 + \|F-F_0\|_{\sup,4}^2 \cdot \|S_D - S_{D0}\|_{\sup,4}  \nonumber \\
        &\quad + \|\tilde K_1(g,g_0;\nu)\|_2^2 + \|\tilde K_2(g,g_0;\nu)\|_2^2 + \|\tilde K_3(g,g_0;\nu)\|_2^2. \label{eq:thm3_proof_DV_2} 
\end{align}
where \eqref{eq:thm3_proof_DV_1} uses the fact that for any real numbers $a$ and $b$, $(a+b)^2 \lesssim a^2 + b^2$, and \eqref{eq:thm3_proof_DV_2} is by Cauchy-Schwartz inequality.  
So 
\begin{align*}
        \|\D_\V(g,g_0;\nu)\|_2
        &\lesssim \|F-F_0\|_{\sup,4} \cdot \|G-G_0\|_{\sup,4} + \|F-F_0\|_{\sup,4} \cdot \|S_D - S_{D0}\|_{\sup,4}  \nonumber \\
        &\quad + \|\tilde K_1(g,g_0;\nu)\|_2 + \|\tilde K_2(g,g_0;\nu)\|_2 + \|\tilde K_3(g,g_0;\nu)\|_2.
\end{align*}
Likewise, we can show that 
\begin{align*}
        \|\D_\V(g,g_0;1)\|_2
        &\lesssim \|F-F_0\|_{\sup,4} \cdot \|G-G_0\|_{\sup,4} + \|F-F_0\|_{\sup,4} \cdot \|S_D - S_{D0}\|_{\sup,4}  \nonumber \\
        &\quad + \|\tilde K_1(g,g_0;1)\|_2 + \|\tilde K_2(g,g_0;1)\|_2 + \|\tilde K_3(g,g_0;1)\|_2.
\end{align*}
   
\end{proof}

\begin{proposition}\label{prop:D_W}
    Under Assumptions \ref{ass:consistency} - \ref{ass:strict_positivity}, for any function $\zeta(A,Z)$ that is bounded almost surely,  we have 
    \begin{align}
     \E[\{\V(\nu;g) - \V(\nu;g_0)\}\zeta(A,Z)] = \E\{ \D_\V(g,g_0;\nu) \zeta(A,Z)\}. \nonumber
\end{align}
\end{proposition}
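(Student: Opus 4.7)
The plan is to expand the definition of $\D_\V(g,g_0;\nu)$ and then reduce the identity to three applications of Proposition \ref{prop:W_nuf(A,Z)}. Writing $g=(F,G,S_D)$ and $g_0=(F_0,G_0,S_{D0})$, and using the definition
\[
\D_\V(g,g_0;\nu) = \{\V(\nu;F,G,S_D)-\V(\nu;F_0,G,S_D)\} - \{\V(\nu;F,G_0,S_{D0})-\V(\nu;F_0,G_0,S_{D0})\},
\]
the desired identity is equivalent (after subtracting $\E[\{\V(\nu;g)-\V(\nu;g_0)\}\zeta(A,Z)]$ from both sides and regrouping) to
\[
\E[\V(\nu;F_0,G,S_D)\,\zeta(A,Z)] + \E[\V(\nu;F,G_0,S_{D0})\,\zeta(A,Z)] = 2\,\E[\V(\nu;F_0,G_0,S_{D0})\,\zeta(A,Z)].
\]

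Next I would apply Proposition \ref{prop:W_nuf(A,Z)} to each of the three expectations above. For the first term, $F$ is set to $F_0$, so the double robustness of $\V$ (encapsulated in Proposition \ref{prop:W_nuf(A,Z)}) yields $\E[\V(\nu;F_0,G,S_D)\zeta(A,Z)] = \beta^{-1}\E[\nu(T^*)\zeta(A^*,Z^*)]$. For the second term, $(G,S_D)$ is set to $(G_0,S_{D0})$, so the same proposition gives $\E[\V(\nu;F,G_0,S_{D0})\zeta(A,Z)] = \beta^{-1}\E[\nu(T^*)\zeta(A^*,Z^*)]$. For the third term, both sets of nuisance parameters are set to their truth, so trivially (either branch of the proposition applies) $\E[\V(\nu;F_0,G_0,S_{D0})\zeta(A,Z)] = \beta^{-1}\E[\nu(T^*)\zeta(A^*,Z^*)]$. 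Substituting yields $\beta^{-1}\E[\nu(T^*)\zeta(A^*,Z^*)] + \beta^{-1}\E[\nu(T^*)\zeta(A^*,Z^*)] = 2\beta^{-1}\E[\nu(T^*)\zeta(A^*,Z^*)]$, which is the required equality.

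There is really no hard step here; the proposition is essentially a bookkeeping consequence of the double robustness of $\V$ applied to the product function $\nu \cdot \zeta$ (as was used to derive Proposition \ref{prop:W_nuf(A,Z)} itself). The only delicate point is making sure the signs in the four-term expansion of $\D_\V$ align correctly so that the three terms involving at least one ``$0$''-subscripted nuisance all reduce to the same quantity $\beta^{-1}\E[\nu(T^*)\zeta(A^*,Z^*)]$ and cancel. Once that algebraic rearrangement is written out, the proof is immediate.
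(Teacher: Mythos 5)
Your proof is correct and uses essentially the same approach as the paper: both reduce the identity to the fact, from Proposition \ref{prop:W_nuf(A,Z)}, that the three expectations with at least one nuisance set to truth all equal $\beta^{-1}\E\{\nu(T^*)\zeta(A^*,Z^*)\}$. The paper phrases this as the chain of equalities $\E[\V(\nu;F_0,G,S_D)\zeta]=\E[\V(\nu;F,G_0,S_{D0})\zeta]=\E[\V(\nu;F_0,G_0,S_{D0})\zeta]$ and then simplifies $\E\{\D_\V\,\zeta\}$ directly; you instead move everything to one side and show the residual vanishes, but the algebra is the same.
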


\begin{proof}[Proof of Proposition \ref{prop:D_W}]
Recall that $g = (F,G,S_D)$.
By Proposition \ref{prop:W_nuf(A,Z)}, for any function $\zeta(A,Z)$, 
\begin{align*}
    \E[\V(\nu;F_0,G,S_{D})\zeta(A,Z)]
    = \E[\V(\nu;F,G_0,S_{D0})\zeta(A,Z)]
    = \E[\V(\nu;F_0,G_0,S_{D0})\zeta(A,Z)].
\end{align*}
Therefore, 
\begin{align*}
    \E\{ \D_\V(g,g_0;\nu) \zeta(A,Z)\}
    & = \E[\V(\nu;F,G,S_{D})\zeta(A,Z)] - \E[\V(\nu;F_0,G,S_{D})\zeta(A,Z)] \\
    &\quad -  \E[\V(\nu;F,G_0,S_{D0})\zeta(A,Z)] + \E[\V(\nu;F_0,G_0,S_{D0})\zeta(A,Z)] \\
    & = \E[\V(\nu;F,G,S_{D})\zeta(A,Z)] - \E[\V(\nu;F_0,G_0,S_{D0})\zeta(A,Z)] \\
    & = \E[\{\V(\nu;F,G,S_D) - \V(\nu;F_0,G_0,S_{D0})\}\zeta(A,Z)].
\end{align*}
\end{proof}

The following proposition relates expectations in the truncation-free data and expectations in the truncated data, which will be useful in the later proofs. 

\begin{proposition}\label{prop:equiv_norms}
    Under Assumptions \ref{ass:consistency} -  \ref{ass:trunc} and \ref{ass:strict_positivity}, for any nonnegative function $\zeta(A^*,Z^*)$ that is bounded almost surely, we have 
    \begin{align*}
        & \beta \E\{\zeta(A,Z)\} \leq \E\{\zeta(A^*,Z^*)\} \leq \beta\delta^{-1} \E\{\zeta(A,Z)\}. 
    \end{align*}
\end{proposition}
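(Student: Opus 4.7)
The plan is to reduce the inequality to a statement about $\PP(Q^* < T^* \mid A^*, Z^*)$ via the following key identity. Since the observed truncated variables $(A, Z)$ have the conditional distribution of $(A^*, Z^*)$ given $\{Q^* < T^*\}$, for any nonnegative bounded $\zeta$,
\begin{equation*}
\E\{\zeta(A,Z)\} \;=\; \E\{\zeta(A^*, Z^*) \mid Q^* < T^*\} \;=\; \frac{\E\{\zeta(A^*,Z^*)\,\ind(Q^* < T^*)\}}{\PP(Q^* < T^*)} \;=\; \beta^{-1}\,\E\{\zeta(A^*,Z^*)\,\ind(Q^*<T^*)\}.
\end{equation*}
So the proposition is equivalent to
$\delta\,\E\{\zeta(A^*,Z^*)\} \le \E\{\zeta(A^*,Z^*)\,\ind(Q^*<T^*)\} \le \E\{\zeta(A^*,Z^*)\}.$

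The upper inequality is immediate: since $\zeta \ge 0$ and $\ind(Q^*<T^*) \le 1$ almost surely,
$\E\{\zeta(A^*,Z^*)\,\ind(Q^*<T^*)\} \le \E\{\zeta(A^*,Z^*)\}$, which gives the lower bound $\beta\,\E\{\zeta(A,Z)\} \le \E\{\zeta(A^*,Z^*)\}$.

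For the lower inequality, I will show the pointwise bound $\PP(Q^* < T^* \mid A^*, Z^*) \ge \delta$ a.s. First, Assumption \ref{ass:consistency} gives $Q^* = Q^*(A^*)$ and $T^* = T^*(A^*)$; together with Assumption \ref{ass:noUnmConf} this means that, given $A^* = a$ and $Z^*$, the joint law of $(Q^*, T^*)$ coincides with that of $(Q^*(a), T^*(a))$ given $Z^*$. By Assumption \ref{ass:trunc}, $Q^*(a) \bigCI T^*(a) \mid Z^*$, so $Q^*$ and $T^*$ are conditionally independent given $(A^*, Z^*)$. Then, using $Q^* < \tau_2$ a.s. and $1 - F(\tau_2 \mid A^*, Z^*) \ge \delta$ a.s. from Assumption \ref{ass:strict_positivity}(ii),
\begin{equation*}
\PP(Q^* < T^* \mid A^*, Z^*) \;\ge\; \PP(Q^* < \tau_2,\; T^* > \tau_2 \mid A^*, Z^*) \;=\; \PP(T^* > \tau_2 \mid A^*, Z^*) \;\ge\; \delta
\end{equation*}
almost surely. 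Taking $\zeta(A^*,Z^*)$-weighted expectations on both sides (using $\zeta \ge 0$) and applying the tower property yields
$\E\{\zeta(A^*,Z^*)\,\ind(Q^*<T^*)\} = \E\{\zeta(A^*,Z^*)\,\PP(Q^*<T^* \mid A^*, Z^*)\} \ge \delta\,\E\{\zeta(A^*,Z^*)\}$,
which combined with the key identity gives $\E\{\zeta(A^*,Z^*)\} \le \beta \delta^{-1}\,\E\{\zeta(A,Z)\}$.

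The only substantive step is establishing $\PP(Q^* < T^* \mid A^*, Z^*) \ge \delta$ a.s.; once this is in hand, everything else is bookkeeping with the identity $\beta\E\{\zeta(A,Z)\} = \E\{\zeta(A^*,Z^*)\ind(Q^*<T^*)\}$. I do not anticipate any obstacles beyond carefully invoking Assumptions \ref{ass:consistency}--\ref{ass:trunc} to move from the potential-outcome conditional independence to the observational one.
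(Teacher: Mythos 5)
Your proof is correct, and it takes a genuinely different route from the paper's. The paper works at the level of densities: it uses the factorization $p_{T,A,Z}(t,a,z) = \beta^{-1} G(t\mid a,z)\,p_{T^*,A^*,Z^*}(t,a,z)$ (which does rely on conditional quasi-independence of $Q^*$ and $T^*$ given $(A^*,Z^*)$, derived from Assumptions~\ref{ass:consistency}--\ref{ass:trunc}) and then sandwiches via $\delta \le G(t\mid a,z) \le 1$ on the support of $T$, invoking the lower tail piece of Assumption~\ref{ass:strict_positivity}(ii), namely $T^*\ge\tau_1$ a.s.~and $G(\tau_1\mid A^*,Z^*)\ge\delta$ a.s. You instead work with the definitional identity $\E\{\zeta(A,Z)\}=\beta^{-1}\E\{\zeta(A^*,Z^*)\ind(Q^*<T^*)\}$, bound $\ind(Q^*<T^*)\le 1$ for the easy direction, and for the hard direction establish the pointwise inequality $\PP(Q^*<T^*\mid A^*,Z^*)\ge\delta$ via the upper tail piece of Assumption~\ref{ass:strict_positivity}(ii), namely $Q^*<\tau_2$ a.s.~and $1-F(\tau_2\mid A^*,Z^*)\ge\delta$ a.s. Two remarks. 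First, your approach is actually slightly more economical than the paper's: the chain $\PP(Q^*<T^*\mid A^*,Z^*)\ge\PP(Q^*<\tau_2,\,T^*>\tau_2\mid A^*,Z^*)=\PP(T^*>\tau_2\mid A^*,Z^*)\ge\delta$ uses only set inclusion and the a.s.~bound $Q^*<\tau_2$, so the conditional-independence paragraph you include is never actually invoked in the computation --- Assumption~\ref{ass:trunc} is not needed here, and you could drop that discussion. Second, the two proofs draw on complementary halves of the positivity assumption: the paper on the $G$ lower tail, you on the $F$ upper tail; both are valid under the hypothesis as stated, but it is worth being conscious that the proposition would survive on either half alone.
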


\begin{proof}[Proof of Proposition \ref{prop:equiv_norms}]
    Recall that $p$ denotes the density functions (or probability mass functions in the case of discrete variables),
    and $G$ denotes the conditional cumulative distribution function (CDF) of $Q^*$ given $(A^*,Z^*)$.
    Under Under Assumptions \ref{ass:consistency} -  \ref{ass:trunc} and \ref{ass:strict_positivity}, we have
\begin{align}
    p_{Q,T,A,Z}(q,t,a,z) = \frac{1}{\beta} \cdot p_{Q^*|A^*,Z^*}(q|a,z) p_{T^*|A^*,Z^*}(t|a,z) p_{A^*,Z^*}(a,z) \ind(q<t). \nonumber
\end{align}
Integrating out $q$ from the above, we have 
\begin{align}
    p_{T,A,Z}(t,a,z) = \frac{G(t|a,z)}{\beta} \cdot p_{T^*|A^*,Z^*}(t|a,z) p_{A^*,Z^*}(a,z)
    =  \frac{G(t|a,z)}{\beta} \cdot p_{T^*,A^*,Z^*}(t,a,z). \nonumber
\end{align}
We have 
\begin{align}
    \E\{\zeta(A^*,Z^*)\}
    & = \int \zeta(a,z)  \cdot p_{T^*,A^*,Z^*}(t,a,z) dt\ da\ dz \nonumber \\
    & = \int \zeta(a,z) \cdot \frac{\beta}{G(t|a,z)} \cdot p_{T,A,Z}(t,a,z)\ dt\ da\ dz \nonumber\\
    & \leq \frac{\beta}{\delta}\int \zeta(a,z) \cdot p_{T,A,Z}(t,a,z)\ dt\ da\ dz
    = \frac{\beta}{\delta}\E\{\zeta(A,Z)\}, \nonumber
\end{align}
since $\zeta$ is  nonnegative and $ G(t|a,z) \geq \delta$ on the support of $p_{T,A,Z}(t,a,z) $ by Assumption \ref{ass:strict_positivity}.
In addition since $ G(t|a,z) \leq 1$, we have 
\begin{align}
    \E\{\zeta(A^*,Z^*)\}
    & \geq \beta\int \zeta(a,z) \cdot p_{T,A,Z}(t,a,z)\ dt\ da\ dz 
    = \beta \E\{\zeta(A,Z)\}. \nonumber
\end{align}
\end{proof}

\subsection{Derivations of $U$, $\tilde\ell_{\RR}$ and $\tilde\ell_{\DR}$}\label{app:derivations}


The derivation of $U$ in \eqref{eq:AIPW_3bias} uses the following properties of $\V$ (which can be directly verified from the expression of $\V$): 
\begin{enumerate}
    \item For any function $\zeta(A^*,Z^*)$, we have $\V(\zeta;F,G,S_D) = \V(1;F,G,S_D)\zeta(A,Z)$; in addition, considering the function $\zeta\cdot \nu = \zeta(A^*,Z^*)\nu(T^*)$, we have $\V(\zeta\cdot \nu;F,G,S_D) = \V(\nu;F,G,S_D)\zeta(A,Z); $
    \item For any functions $\zeta_1(A^*,Z^*)$ and $\zeta_2(A^*,Z^*)$
Recall $\V({\zeta_1 + \zeta_2};F,G,S_D) = \V(\zeta_1;F,G,S_D) + \V(\zeta_2;F,G,S_D)$.
\end{enumerate}
The expression of $U$ then follows by applying these properties of $\V$.

\vspace{1em}

In the following we show the derivation of the ltrcR-loss and the ltrcDR-loss. In the derivation, we will use ``\cconst" to denote a generic constant factor that does not involve $\tau$; its value may vary in different occurrences. 

We have 
\begin{align*}
    \ell_\RR(\tau;\pi,F) 
    & = \left[\{\nu(T^*) - \tilde\mu(Z^*;\pi,F)\} - \{A^*-\pi(Z^*)\} \tau(Z^*) \right]^2 \\
    & = \{A^*-\pi(Z^*)\}^2 \tau(Z^*)^2 - 2\{A^*-\pi(Z^*)\}\{\nu(T^*) - \tilde\mu(Z^*;\pi,F)\}\tau(Z^*) + \cconst,
\end{align*}
By the two properties of $\V$ that are stated above,
\begin{align*}
    \V(\ell_{\RR};F,G,S_D) 
    & = \{A-\pi(Z)\}^2 \tau(Z)^2 \V(1;F,G,S_D) \\
    &\quad - 2\{A-\pi(Z)\}\{\V(\nu;F,G,S_D) - \tilde\mu(Z;\pi,F)\V(1;F,G,S_D)\}\tau(Z) + \cconst \\
    & = \V(1;F,G,S_D) 
      \left[ \frac{\V(\nu;F,G,S_D)}{\V(1;F,G,S_D)} - \tilde\mu(Z;\pi,F) - \{A-\pi(Z)\}\tau(Z) \right]^2  + \cconst.
\end{align*}

Likewise, we can show that 
\begin{align*}
    \V(\ell_{\DR};F,G,S_D)  
     & = \V(1;F,G,S_D) 
      \left[ \frac{A - \pi(Z)}{\pi(Z)\{1-\pi(Z)\}} \left\{ \frac{\V(\nu;F,G,S_D)}{\V(1;F,G,S_D)} - \mu(A,Z;F) \right\} \right. \\
     & \qquad\qquad\qquad\qquad\quad  + \mu(1,Z;F) - \mu(0,Z;F) - \tau(V) \Bigg]^2 + \cconst.
\end{align*}

\clearpage
\section{Proofs of  double robustness and Neyman orthogonality}\label{app:DR_proof}


\subsection{Proof of Lemma \ref{lem:DR_V}}\label{app:DR_lemmas_proofs}

We first show the double robustness of $\V_Q$ and $\V_C$, which are formally stated in the following two lemmas below. Then the double robustness of $\V = \V_C\circ\V_Q$ follows from the double robustness of $\V_Q$ and $\V_C$. 

The operator $\V_Q$ is doubly robust in the sense that is formally stated in Lemma \ref{lem:DR_VQ} below, the proof of which can be shown in a similar way as \citet[Proof of Theorem 1]{wang2024doubly}. 

\begin{appxlemma}[Double robustness of $\V_Q$] \label{lem:DR_VQ}
Under Assumptions \ref{ass:consistency} - \ref{ass:trunc}, and \ref{ass:strict_positivity}(ii), for any function $\zeta(T^*,A^*,Z^*)$ that is bounded almost surely,  
\begin{align}
    \E\{\V_Q(\zeta; F,G)(Q,T,A,Z)\} = \beta^{-1}\E\{\zeta(T^*,A^*,Z^*)\} \nonumber
\end{align}
if either $F = F_0$ or $G = G_0$.
\end{appxlemma}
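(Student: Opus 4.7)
The plan is to handle the two robustness cases separately, since they rely on different mechanisms: inverse probability of truncation weighting identity versus the backwards-martingale-based augmentation.

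First I would dispose of the case $G = G_0$. The density relationship derived in Proposition \ref{prop:equiv_norms} yields (after integrating out $Q$) that $p_{T,A,Z}(t,a,z) = \beta^{-1} G_0(t|a,z) p_{T^*,A^*,Z^*}(t,a,z)$, so a direct change-of-measure gives
\begin{align*}
E\!\left[\frac{\zeta(T,A,Z)}{G_0(T|A,Z)}\right] = \int \frac{\zeta(t,a,z)}{G_0(t|a,z)} \cdot \frac{G_0(t|a,z)}{\beta}\, p_{T^*,A^*,Z^*}(t,a,z)\, dt\,da\,dz = \beta^{-1} E[\zeta(T^*,A^*,Z^*)].
\end{align*}
It then remains to argue the augmentation term has mean zero. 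Because $\bar M_Q(\cdot; G_0)$ is a backwards martingale with respect to $\{\bar{\mathcal{F}}_t\}$, and because $m_\zeta(v,A,Z;F)\cdot F(v|A,Z)/\{1-F(v|A,Z)\} \cdot 1/G_0(v|A,Z)$ depends only on $(A,Z)$ and $v$ (hence is $\bar{\mathcal{F}}_v$-measurable), the stochastic integral against $d\bar M_Q(v;G_0)$ has zero expectation under the integrability provided by Assumption \ref{ass:strict_positivity}(ii) and boundedness of $\zeta$.

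Second I would treat the case $F = F_0$, where misspecification of $G$ is corrected by the augmentation. The idea is to condition on $(A,Z)$, write out the augmentation using the definition $\bar M_Q(v;G) = \bar N_Q(v) - \int_v^\infty \ind(Q \leq s < T)\bar\lambda_Q(s|A,Z)\,ds$, and collect like terms. The key algebraic inputs are: (i) the reverse-hazard identity $\bar\lambda_Q(s|a,z) = d\log G_0(s|a,z)/ds$, which is a consequence of Assumption \ref{ass:trunc} and the definition in \eqref{eq:alpha_def} applied in the truncated data; (ii) the conditional joint density of $(Q,T)$ given $(A,Z)=(a,z)$ in the truncated data factorizes as $\beta(a,z)^{-1} p_{Q^*|A^*,Z^*}(q|a,z) p_{T^*|A^*,Z^*}(t|a,z)\ind(q<t)$; and (iii) under $F = F_0$, $m_\zeta(v,a,z;F_0)\cdot F_0(v|a,z) = \int_0^v \zeta(t,a,z)\,dF_0(t|a,z)$ is a genuine truncated moment. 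After a Fubini swap (justified by boundedness of $\zeta$ and the positivity guarantees in Assumption \ref{ass:strict_positivity}(ii)), the $G$-misspecification contributions to the first term and to the compensator cancel, and what remains evaluates to $\beta^{-1} E[\zeta(T^*,A^*,Z^*)\mid A^*,Z^*]$.

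The main obstacle is the second case: one must track how the misspecified $G$ appears in the denominators of both the IPTW term and the augmentation, while $G_0$ enters only implicitly through $\bar\lambda_Q$ inside the compensator, and verify that they combine correctly under the reverse-time integration. Once this is done, taking the outer expectation over $(A,Z)$ delivers $\beta^{-1} E[\zeta(T^*,A^*,Z^*)]$, completing the lemma. As the authors note, the structure of this argument parallels the proof of Theorem 1 in \citet{wang2024doubly}, where the same backwards-martingale representation and reverse-hazard identity are the crucial tools.
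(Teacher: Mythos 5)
Your sketch is correct and takes the same route the paper itself points to: it defers to the proof of Theorem~1 in \citet{wang2024doubly}, which handles exactly the two cases you describe---a change of measure together with a backwards-martingale zero-mean argument when $G=G_0$, and an algebraic cancellation of the $G$-misspecification between the inverse-weighted term and the compensator when $F=F_0$. One small inconsistency to tidy up in the second case: after conditioning on $(A,Z)$, the expression you should arrive at is $\beta(A,Z)^{-1}\E\{\zeta(T^*,A^*,Z^*)\mid A^*,Z^*\}$ with the \emph{conditional} selection probability $\beta(a,z)=\PP(Q^*<T^*\mid A^*=a,Z^*=z)$, not the marginal $\beta^{-1}$; this is precisely the constant your factorization in (ii) produces, and it is what makes integrating against the observed-data density $p_{A,Z}(a,z)=\beta^{-1}\beta(a,z)\,p_{A^*,Z^*}(a,z)$ recover the stated $\beta^{-1}\E\{\zeta(T^*,A^*,Z^*)\}$ (also, the reverse-hazard identity $\bar\lambda_Q(s|a,z)=\partial\log G_0(s|a,z)/\partial s$ follows directly from the definition in \eqref{eq:alpha_def}; Assumption~\ref{ass:trunc} is what delivers the product form of the truncated joint density, not the identity itself).
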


Like the augmented inverse probability of censoring weighting (AIPCW) in \citet{rotnitzky2005inverse}, the operator $\V_C$ is doubly robust in the sense that is formally stated in Lemma \ref{lem:VC} below. This property is implied from the the double robustness of the AIPCW in \citet{rotnitzky2005inverse} by considering the residual censoring time $D$ and the residual event time $T-Q$ together with the independence condition $D\bigCI (T-Q)\mid Q,A,Z$, which can be implied from Assumptions \ref{ass:consistency} - \ref{ass:cen}.

\begin{appxlemma}[Double robustness of $\V_C$] \label{lem:VC}
Under Assumptions \ref{ass:consistency} - \ref{ass:strict_positivity},
for any function $\xi(Q,T,A,Z)$ that is bounded almost surely,  
\begin{align}
    \E\{\V_C(\xi; F,S_D)(Q,X,\Delta,A,Z)\} = \E\{\xi(Q,T,A,Z)\} \nonumber
\end{align}
if either $F = F_0$ or $S_D = S_{D0}$.
\end{appxlemma}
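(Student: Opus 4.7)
My plan is to split into the two scenarios $S_D = S_{D0}$ and $F = F_0$, and in each handle $\E\{\V_C(\xi;F,S_D)\}$ by a direct conditional-expectation computation. Two tools do all the work: (i)~the martingale property of $M_D(\cdot;S_{D0})$ with respect to $\{\mathcal{F}_t\}$, and (ii)~the working conditional independence $D \bigCI T \mid Q,A,Z$ in the observed data, which follows from Assumption~\ref{ass:cen} combined with Assumptions~\ref{ass:consistency}--\ref{ass:trunc} and $C^*>Q^*$ a.s.~via the truncation density calculation used in Proposition~\ref{prop:equiv_norms}. Boundedness of $\xi$ and the strict positivity in Assumption~\ref{ass:strict_positivity} keep every integrand bounded, so that Fubini and the interchange of expectation and integration apply throughout.

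When $S_D = S_{D0}$, for the first term I would condition on $(Q,T,A,Z)$ and use $\Delta = \ind(D > T-Q)$ together with $D \bigCI T \mid Q,A,Z$ to obtain $\E\{\Delta/S_{D0}(T-Q|Q,A,Z)\mid Q,T,A,Z\} = S_{D0}(T-Q|Q,A,Z)/S_{D0}(T-Q|Q,A,Z) = 1$, hence $\E\{\Delta\,\xi(Q,T,A,Z)/S_{D0}(T-Q|Q,A,Z)\} = \E\{\xi(Q,T,A,Z)\}$. The augmentation term has mean zero because $\bar m_\xi(u,Q,A,Z;F)/S_{D0}(u|Q,A,Z)$ depends only on $(u,Q,A,Z)$, hence is $\{\mathcal{F}_t\}$-predictable, and the integrator $M_D(\cdot;S_{D0})$ is a martingale. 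This holds for arbitrary $F$, so the case is closed.

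When $F = F_0$, the same conditioning step gives the first term as $\E\{\xi(Q,T,A,Z)\,S_{D0}(T-Q|Q,A,Z)/S_D(T-Q|Q,A,Z)\}$, and I need the augmentation term to cancel the $S_{D0}/S_D$ mismatch. Differentiating $S_{D0}/S_D = \exp(\Lambda_D - \Lambda_{D0})$ yields the identity
\begin{align*}
1 - \frac{S_{D0}(t|q,a,z)}{S_D(t|q,a,z)} = \int_0^t \frac{S_{D0}(u|q,a,z)}{S_D(u|q,a,z)}\bigl\{d\Lambda_{D0}(u|q,a,z) - d\Lambda_D(u|q,a,z)\bigr\}.
\end{align*}
Substituting this, applying Fubini, and using the identity $\E\{\xi(Q,T,A,Z)\ind(T-Q\geq u)\mid Q,A,Z\} = \bar m_\xi(u,Q,A,Z;F_0)\,\PP(T-Q\geq u\mid Q,A,Z)$ rewrites the first term as $\E\{\xi(Q,T,A,Z)\} - R$, where
\begin{align*}
R = \E\!\left[\int_0^\infty \frac{\bar m_\xi(u,Q,A,Z;F_0)\,\PP(T-Q\geq u|Q,A,Z)\,S_{D0}(u|Q,A,Z)}{S_D(u|Q,A,Z)}\bigl\{d\Lambda_{D0}(u|Q,A,Z) - d\Lambda_D(u|Q,A,Z)\bigr\}\right].
\end{align*}
For the augmentation term I would decompose $dM_D(u;S_D) = dM_D(u;S_{D0}) + \ind(X-Q\geq u)\{d\Lambda_{D0}(u|Q,A,Z) - d\Lambda_D(u|Q,A,Z)\}$: the $dM_D(u;S_{D0})$ piece vanishes by predictability as in Case~1, and taking conditional expectation given $(Q,A,Z)$ in the remainder, together with $\E\{\ind(X-Q\geq u)\mid Q,A,Z\} = \PP(T-Q\geq u\mid Q,A,Z)\,S_{D0}(u|Q,A,Z)$ (again from $D\bigCI T\mid Q,A,Z$), produces exactly $+R$, so $R$ cancels and the sum equals $\E\{\xi(Q,T,A,Z)\}$.

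The main obstacle is the bookkeeping in Case~2: verifying that the filtration $\{\mathcal{F}_t\}$ is rich enough to make $\bar m_\xi(u,Q,A,Z;\cdot)/S_D(u|Q,A,Z)$ predictable and $\ind(X-Q\geq u)$ the correct at-risk process associated with $N_D$, so that all martingale manipulations and Fubini interchanges are licit. Once these measurability points are pinned down, the argument runs in parallel to the classical AIPCW derivation, but adapted to the residual censoring time $D = C - Q$ rather than $C$ itself.
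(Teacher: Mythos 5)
Your proof is correct and takes essentially the same approach as the paper: the paper simply cites the classical AIPCW double robustness from \citet{rotnitzky2005inverse}, applied on the residual time scale with the induced conditional independence $D\bigCI(T-Q)\mid Q,A,Z$, and your martingale-plus-conditioning argument is precisely the standard derivation behind that citation, worked out in full. The case split (martingale kills the augmentation when $S_D=S_{D0}$; the $1-S_{D0}/S_D$ integral identity and the $dM_D(\cdot;S_D)-dM_D(\cdot;S_{D0})$ decomposition cancel when $F=F_0$), and the use of the truncation density factorization to justify $D\bigCI T\mid Q,A,Z$, are all sound.
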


\subsection{Proofs of Lemma \ref{lem:DR_3bias} and Lemma \ref{lem:orthogonal_DR_loss}} 

Lemma \ref{lem:DR_3bias} directly follows from Lemma \ref{lem:DR_V} and the double robustness of $U_0$, i.e., $\E\{U_0(\theta_0;\pi,F)\} = 0$ if either $\pi = \pi_0$ or $F = F_0$.

\begin{proof}[Proof of Lemma \ref{lem:orthogonal_DR_loss}]
    To simplify the notation, denote $L^*(\tau,\eta) = \E\{\ell(T^*,A^*,Z^*;\tau,\eta)\}$, and denote $L(\tau, \eta, F,G,S_D) = \E\{\tilde\ell(\tau;\eta,F,G,S_D)\}$. 

    (i) 
    In the following we will show that the relevant directional derivatives 
    are all zero. 

 1)   We first compute $D_\eta D_\tau L(\tau_0, \eta_0, F_0,G_0,S_{D0})[\tau-\tau_0, \eta - \eta_0]$.
    By Lemma \ref{lem:DR_V}, for any $(\tau, \eta, F, G, S_D)$,
    \begin{align}
        L(\tau, \eta, F_0,G_0,S_{D0})
        & = \beta^{-1} L^*(\tau,\eta), \label{eq:L1} \\
        L(\tau, \eta, F_0,G,S_D) 
        & = \beta^{-1} L^*(\tau,\eta), \label{eq:L2}\\
        L(\tau, \eta, F,G_0,S_{D0})
        & = \beta^{-1} L^*(\tau,\eta).  \label{eq:L3}
    \end{align}
    So for any $(\tau, \eta, F, G, S_D)$, 
    \begin{align}
        &\quad D_\eta D_\tau L(\tau_0, \eta_0, F_0,G_0,S_{D0})[\tau-\tau_0, \eta - \eta_0] 
        \nonumber \\
        & = \left.\frac{d}{ds} \left\{ \left. \frac{d}{dt} L(\tau_0+t(\tau-\tau_0), \eta_0 + s(\eta-\eta_0), F_0,G_0,S_{D0}) \right|_{t=0} \right\} \right|_{s=0} \nonumber\\
        & = \beta^{-1}\cdot \left.\frac{d}{ds} \left\{ \left. \frac{d}{dt} L^*(\tau_0+t(\tau-\tau_0), \eta_0 + s(\eta-\eta_0)) \right|_{t=0} \right\} \right|_{s=0} \label{eq:lem3_proof_1}\\
        &= \beta^{-1}\cdot  D_\eta D_\tau L^*(\tau_0,\eta_0)[\tau-\tau_0,\eta - \eta_0], \nonumber
    \end{align}
    where \eqref{eq:lem3_proof_1} holds by \eqref{eq:L1}.
    Since $\ell$ is a Neyman orthogonal loss in the LTRC-free data, we have 
    $D_\eta D_\tau L^*(\tau_0,\eta_0)[\tau-\tau_0,\eta - \eta_0] = 0$. 
    Therefore, 
    \begin{align*}
        D_\eta D_\tau L(\tau_0, \eta_0, F_0,G_0,S_{D0})[\tau-\tau_0, \eta - \eta_0] = 0.
    \end{align*}

 2)   We now compute $D_F D_\tau L(\tau_0, \eta_0, F_0,G_0,S_{D0})[\tau-\tau_0, F-F_0]$.
    \begin{align}
        &\quad D_F D_\tau L(\tau_0, \eta_0, F_0,G_0,S_{D0})[\tau-\tau_0, F-F_0] \nonumber \\
        & = \left.\frac{d}{ds} \left\{ \left. \frac{d}{dt} L(\tau_0+t(\tau-\tau_0), \eta_0, F_0 + s(F-F_0),G_0,S_{D0}) \right|_{t=0} \right\} \right|_{s=0} \nonumber \\
        & = \left.\frac{d}{dt} \left\{ \left. \frac{d}{ds} L(\tau_0+t(\tau-\tau_0), \eta_0, F_0 + s(F-F_0),G_0,S_{D0}) \right|_{s=0} \right\} \right|_{t=0}. \label{eq:lem3_proof_2}
    \end{align}
    By \eqref{eq:L3}, 
    \begin{align*}
        L(\tau_0+t(\tau-\tau_0), \eta_0, F_0 + s(F-F_0),G_0,S_{D0}) 
        = \beta^{-1} L^*(\tau_0+t(\tau-\tau_0), \eta_0)
    \end{align*}
    for all $s\in[0,1]$. So 
    \begin{align*}
        \left. \frac{d}{ds} L(\tau_0+t(\tau-\tau_0), \eta_0, F_0 + s(F-F_0),G_0,S_{D0}) \right|_{s=0} = 0. 
    \end{align*}
    This together with \eqref{eq:lem3_proof_2} imply 
    \begin{align*}
        D_F D_\tau L(\tau_0, \eta_0, F_0,G_0,S_{D0})[\tau-\tau_0, F-F_0] =0. 
    \end{align*}

3) and 4)    Similarly, by \eqref{eq:L2}, we can show that 
    \begin{align}
        & D_G D_\tau L(\tau_0, \eta_0, F_0,G_0,S_{D0})[\tau-\tau_0, G-G_0] 
        = 0, \nonumber \\
        & D_{S_D} D_\tau L(\tau_0, \eta_0, F_0,G_0,S_{D0})[\tau-\tau_0, S_D-S_{D0}] 
        = 0. \nonumber
    \end{align}
    
    Therefore, 
    $\tilde\ell(\tau;\eta,F,G,S_D)$
    is a Neyman orthogonal loss.

    \bigskip
    (ii)  The conclusion follows from Lemma \ref{lem:DR_V}.
\end{proof}

\clearpage
\section{Nuisance estimation}\label{supp:nuis}

\begin{figure}[h]
\centering
 \begin{subfigure}[H]{0.45\textwidth}
		\large{\begin{tikzpicture}[%
		->,
		>=stealth,
		node distance=1cm,
		pil/.style={
			->,
			thick,
			shorten =2pt,}
		]
        \node at (0,0) (0) { };
        \node at (4,0) (2) {$S_D$};
		\node [left=of 2] (1) {$F$};
        \node[below=of 2] (3) {$G$};
        \node[below=of 3] (4) {$\pi$};
        \draw [->] (2) to (3);
        \node at (3, -0.7) {\footnotesize\color{gray}(Eliminate censoring, };
        \node at (3, -1) {\footnotesize\color{gray} truncation still remains) };
        \draw [->] (3) to (4);
		\draw [->] (2) to [out=325, in=45] (4);
  \node at (4.2, -2.3) {\footnotesize\color{gray}(Eliminate truncation) };
	\end{tikzpicture}}
 \caption{ }\label{fig:nuisance_DR}
	\end{subfigure}
 \begin{subfigure}[H]{0.54\textwidth}
		\large{\begin{tikzpicture}[%
		->,
		>=stealth,
		node distance=1cm,
		pil/.style={
			->,
			thick,
			shorten =2pt,}
		]
        \node at (0,0) (0) { };
		\node at (4.5,-1) (F) {$F$};
		\node[right=of F] (Sd) {$S_D$};
        \node[below=of F] (G) {$G$};
        \node[left=of G] (pi) {$\pi$};
        \draw [->] (F) to (G);
        \node at (4.5, -1.8) {\footnotesize\color{gray}(Eliminate truncation) };
        \draw [->] (F) to (pi);
	\end{tikzpicture}}
\caption{ } \label{fig:nuisance_nonDR}
\end{subfigure}
\par\vspace{2em}
\begin{subfigure}[H]{0.45\textwidth}
		\large{\begin{tikzpicture}[%
		->,
		>=stealth,
		node distance=1cm,
		pil/.style={
			->,
			thick,
			shorten =2pt,}
		]
        \node at (0.7,0) (0) { };
        \node at (4,0) (2) {$S_D$};
		\node [left=of 2] (1) {$F$};
        \node[below=of 2] (3) {$G$};
        \node[below=of 3] (4) {$\pi$};
        \draw [->] (1) to (3);
        \node at (2.5, -0.8) {\footnotesize\color{gray}(Eliminate truncation)};
        \draw [->] (3) to (4);
		\draw [->] (2) to [out=325, in=45] (4);
  \node at (4.2, -2.3) {\footnotesize\color{gray}(Eliminate truncation) };
	\end{tikzpicture}}
 \caption{ }\label{fig:nuisance_c}
	\end{subfigure}
\begin{subfigure}[H]{0.45\textwidth}
		\large{\begin{tikzpicture}[%
		->,
		>=stealth,
		node distance=1cm,
		pil/.style={
			->,
			thick,
			shorten =2pt,}
		]
		\node at (4.5,-1) (F) {$F$};
		\node[right=of F] (Sd) {$S_D$};
        \node[below=of Sd] (G) {$G$};
        \node[below=of F] (pi) {$\pi$};
        \draw [->] (Sd) to (G);
        \draw [->] (F) to (pi);
        \node at (3.5, -1.8) {\footnotesize\color{gray}(Eliminate truncation) };
        \node at (7.2, -1.6) {\footnotesize\color{gray}(Eliminate censoring, };
        \node at (7.2, -1.9) {\footnotesize\color{gray} truncation still remains)};
	\end{tikzpicture}}
\caption{ } \label{fig:nuisance_d}
\end{subfigure}
 \caption{Schemes of nuisance parameter estimation: (a) preserves double robustness, while (b) - (d) break double robustness. 
 }
 \label{fig:DGM_nuisance_summary}
\end{figure}

For the estimation of $F$ and $S_D$, 
in the observed data $T$ is left truncated by $Q$ and right censored by $C$, and $D$ is right censored by $T-Q$, so $F$ can be estimated using existing  software for LTRC data, and $S_D$ can be estimated using  existing  software for right censored data.

Next for $G$ if there is no right censoring, $Q$ is right truncated by $T$ in the observed data, so estimating $G$ is a dual problem as estimating $1-F$ by considering the reversed time scale. However, this duality no longer exists in the presence of right censoring \citep{wang1991nonparametric}. 
One approach to estimating $G$ is to apply inverse probability of censoring weights (IPCW) $\Delta/\hat S_D(X-Q|Q,A,Z)$, 
and then use  existing  software for left truncated data on the reversed time scale. This  approach was used in \citet{vakulenko2022nonparametric} and \citet{wang2024doubly}, in the case with random censoring.  
Another approach is 
to use inverse probability of truncation weights $1/\{1-\hat F(Q|A,Z)\}$,  
and then use  existing regression software for any continuous outcomes to estimate $G$. This approach 
coincides with  \citet{wang1991nonparametric} for $G$ estimation  without treatment or covariates. 

Finally for $\pi$, directly fitting a regression model of $A$ on $Z$ is subject to bias due to left truncation \citep{cheng2012estimating}. 
\citet{cheng2012estimating} considered estimating $\pi$ under a logistic model with an offset to account for left truncation. 
More generally, if there were no right censoring, one may consider using inverse probability of truncation weights $1/\hat G(T|A,Z)$ or $1/\{1-\hat F(Q|A,Z)\}$ and  regressing $A$ on $Z$. For LTRC data, $T$ is not always observed, so  $1/\hat G(T|A,Z)$ needs to be further adjusted, for example by incorporating additional IPCW and resulting in weights $\Delta/\{\hat G(X|A,Z) \hat S_D(X-Q|Q,A,Z)\}$.

\clearpage
\section{ATE theory}\label{app:ATE_theory}

\subsection{Expressions of the estimators} \label{supp:expression}

The proposed estimators $\hat\theta$ and $\hat\theta_{cf}$ have closed-form expressions. 
In particular, we have 
\begin{align}
    \hat\theta 
    & =  \left\{\sum_{i=1}^n \V_i(1;\hat F, \hat G, \hat S_D) \right\}^{-1}  \nonumber \\
    &  \qquad \cdot \sum_{i=1}^n  \Bigg[  \frac{A_i - \pi(Z_i)}{\hat\pi(Z_i)\{1-\hat\pi(Z_i)\}} \left\{ \V_i(\nu;\hat F, \hat G, \hat S_D) - \V_i(1;\hat F, \hat G, \hat S_D)\mu(A_i,Z_i;\hat F) \right\} \ \nonumber \\
    & \qquad\qquad\qquad  +  \V_i(1;\hat F, \hat G, \hat S_D) \{\mu(1,Z_i;\hat F) - \mu(0,Z_i;\hat F)\} \Bigg]. \label{eq:dr_estimator_expression} 
\end{align}
The estimator $\hat\theta_{cf}$ has a similar expression as $\hat\theta$, with the nuisance estimators in \eqref{eq:dr_estimator_expression} are replaced by the corresponding out-of-fold estimates.


\subsection{Asymptotics for ATE estimation}\label{app:ATE_asymptotics}

The following two assumptions are  similar to those in \cite{wang2024doubly}; see \cite{wang2024doubly} for more discussion.
We first define some norms. 
For a random function $X(t,q,a,z)$ with $t\in[\tau_1,\tau_2]$ and $(q,a,z)$ in the support of $(Q,A,Z)$, define $\|X\| = \E\{\sup_{t\in[\tau_1,\tau_2]}|X(t,Q,A,Z)|^2\}^{1/2}$ and \\
$\|X\|_{\TV} = \E[\TV\{X(\cdot,Q,A,Z)\}^2]^{1/2}$,
where the expectations are taking with respect to both the randomness of $X$ and $(Q,A,Z)$,   
and $\TV\{X(\cdot,q,a,z)\}$ is the total variation of $X(\cdot, q,a,z)$ on the interval $[\tau_1,\tau_2]$. 
As a special case of a random function $X(z)$ that does not involve $(t,q,a)$, $\|X\| = \E\{|X(Z)|^2\}^{1/2}$.
Let $\Z$ denote the support of $Z^*$.

\begin{appxassumption}[Uniform Convergence]\label{ass:uniformcons1}
	There exist cumulative distribution functions $F^\divideontimes$,  $G^\divideontimes$, $S_D^\divideontimes$ and propensity function $\pi^\divideontimes(Z)$ such that
	\begin{align}
	\|\hat\pi - \pi^\divideontimes\| = o(1), \quad  
    \|\hat F - F^\divideontimes\|= o(1), 
	\quad \|\hat G  - G^\divideontimes \| = o(1), 
    \quad \|\hat S_D- S_D^\divideontimes\| = o(1). \nonumber
	\end{align}
\end{appxassumption}

\begin{appxassumption}[Asymptotic Linearity]\label{ass:AL}
	For fixed $(t,a,z)\in[0,\infty) \times \{0,1\}\times \Z$, $\hat\pi(z)$, $\hat F(t|a,z)$, $\hat G(t|a,z)$, and $\hat S_D(t|q,a,z)$are regular and asymptotically linear estimators for $\pi(z)$, $F(t|a,z)$, $G(t|a,z)$, and $S_D(t|q,a,z)$ with influence functions  
 $\xi_1(z,O)$, $\xi_2(t,a,z,O)$, $\xi_3(t,a,z,O)$, and $\xi_4(t,q,a,z,O)$, respectively. Denote the residual terms: 
    \begin{align}
        R_1(z) & = \hat\pi(z) - \pi^\divideontimes(z) - \frac{1}{n}\sum_{i = 1}^n \xi_{1}(z, O_i), \nonumber \\
        R_2(t,a,z) & = \hat F(t|a,z) - F^\divideontimes(t|a,z) - \frac{1}{n}\sum_{i = 1}^n \xi_{2}(t, a, z, O_i), \nonumber\\
        R_3(t,a,z) & = \hat G(t|a,z) - G^\divideontimes(t|a,z) - \frac{1}{n}\sum_{i = 1}^n \xi_{3}(t, a, z, O_i), \nonumber \\
        R_4(t,q,a,z) & = \hat S_D(t|q,a,z) - S_D^\divideontimes(t|q,a,z) - \frac{1}{n}\sum_{i = 1}^n \xi_{4}(t, q,a, z, O_i). \nonumber
    \end{align}
    Suppose that the residual terms satisfy 
    $\left\|R_1\right\| = o(n^{-1/2})$,
    $\left\|R_2\right\| = o(n^{-1/2})$, 
    $\left\|R_3\right\| = o(n^{-1/2})$, 
    $\left\|R_4\right\| = o(n^{-1/2})$;
    and either (i) $\left\|R_2\right\|_{\TV} = o(1)$ or (ii) $\left\|R_3\right\|_{\TV} = o(1)$ and $\left\|R_4\right\|_{\TV} = o(1)$.
    Furthermore, suppose that $\sup_t|\xi_{2}(t, A_1, Z_1, O_2)|$, $\sup_t|\xi_{3}(t, A_1, Z_1, O_2)|$, and $\sup_t|\xi_{4}(t, Q_1, A_1, Z_1, O_2)|$ are all bounded almost surely.
\end{appxassumption}

Theorem \ref{thm:mdr} can be shown using the same techniques as \citet[Proof of Theorem S2 in the Supplementary Material]{wang2024doubly}.

\bigskip
\begin{proof}[Proof of Theorem \ref{thm:rdr}]

    (1) The consistency of $\hat\theta_{cf}$ can be shown similarly as the consistency proof of Theorem \ref{thm:mdr}.
    
    (2) We now prove asymptotic normality. 
    Recall that $K$ is the total number of folds in the cross-fitting procedure, which is fixed. Without loss of generality, assume $n = Km$. Then the number of subjects in the $k$-th fold $|\I_k| = m$ for all $k = 1, ..., K$. 
    We have
	\begin{align}
	n^{1/2}(\hat\theta_{cf} - \theta_{0} )
	& = \left. n^{1/2}\left[\frac{1}{K}\sum_{k=1}^K\frac{1}{m}\sum_{i\in\I_k} U_{i}\{\theta_{0};\hat\pi^{(-k)}, \hat F^{(-k)},\hat G^{(-k)}, \hat S_{D}^{(-k)} \} \right] \right. \nonumber \\
    &\qquad\qquad 
    \left/\left[\frac{1}{K}\sum_{k=1}^K\frac{1}{m}\sum_{i\in\I_k} \V_i\{1;\hat F^{(-k)},\hat G^{(-k)},\hat S_D^{(-k)} \} \right] \right. . \label{eq:cf_est_err}
	\end{align}
 Using similar techniques as \citet[Proof of Theorem S2 in the Supplementary Material]{wang2024doubly}, we can show that the denominator of the right hand side (R.H.S.) of \eqref{eq:cf_est_err} converges to $\beta^{-1}$ in probability. 

 In the following,  we will show that for any $k\in\{1,...,K\}$,
	\begin{align}
	\frac{1}{m}\sum_{i\in\I_k} U_{i}\{\theta_{0};\hat\pi^{(-k)}, \hat F^{(-k)},\hat G^{(-k)}, \hat S_{D}^{(-k)} \} 
	&=  \frac{1}{m}\sum_{i\in\I_k} U_{i}(\theta_{0};\pi_0,F_0,G_0,S_{D0}) + o_p(n^{-1/2}). \label{eq:ANproof_4}  
	\end{align} 
	Equation \eqref{eq:ANproof_4} implies that the numerator of the R.H.S of \eqref{eq:cf_est_err} can be written as \\$n^{-1/2}\sum_{i=1}^n U_{i}(\theta_{0};\pi_0,F_0,G_0,S_{D0}) + o_p(1)$, which converges in distribution to \\ 
    $N\left(0, \E\left\{U(\theta_{0};\pi_0,F_0,G_0,S_{D0})^2\right\}\right)$ by the central limit theorem. 
	Therefore, by Slutsky's Theorem, 
	$n^{1/2}(\hat\theta_{cf} - \theta_0) \convd N \left(0, \beta^2 \E\{U(\theta_{0};\pi_0,F_0,G_0,S_{D0})^2\}\right)$. 

 We now show \eqref{eq:ANproof_4}. Without loss of generality, consider $k=1$ and suppose that $\I_1 = \{1,...,m\}$.
	For simplification of the notation, we will denote $\hat\pi' = \hat\pi^{(-1)}$, $\hat F' = \hat F^{(-1)}$, $\hat G' = \hat G^{(-1)}$, and $\hat S_D' = \hat S_D^{(-1)}$ the estimates of $\pi$, $F$, $G$, and $S_D$ using the out-of-1-fold sample $\Oc'$, i.e., data indexed by $\{m+1,...,n\}$.
    Recall that $g = (F,G,S_D)$, and denote $g_0 = (F_0,G_0,S_{D0})$, $\hat g' = (\hat F', \hat G', \hat S_D')$.
    To simplify the notation, let 
    \begin{align*}
        \UU_1(\pi) 
        & = \frac{A - \pi(Z)}{\pi(Z)\{1-\pi(Z)\}}, \\
        \UU_2(\pi,F) 
        & =  - \frac{A - \pi(Z)}{\pi(Z)\{1-\pi(Z)\}} \cdot\mu(A,Z;F) + \left\{\mu(1,Z;F) - \mu(0,Z;F)\right\} - \theta_0.
    \end{align*}
    In addition, let $\UU_{1,i}$ and $\UU_{2,i}$ denote the corresponding expressions for $\UU_1$ and $\UU_2$ with $(A,Z)$ replaced by $(A_i,Z_i)$; 
    and $\UU^*_{1}$ and $\UU^*_{2}$ denote the corresponding expressions for $\UU_1$ and $\UU_2$ with $(A,Z)$ replaced by $(A^*,Z^*)$.
    Recall from \eqref{eq:U0} the estimating function $U_0$ in the LTRC-free data, and recall from \eqref{eq:AIPW_3bias} the estimating function $U$ in the observed data. 
    Then we have 
    \begin{align*}
        U_0(\theta_0;\pi,F) 
        & = \UU_1^*(\pi) \nu(T^*) + \UU_2^*(\pi,F), \\
        U(\theta_0; \pi,g)
        & = \UU_1(\pi) \V(\nu; g) + \UU_2(\pi,F)\V(1; g). 
    \end{align*}
    
	We consider the decomposition:
	\begin{align}
	\frac{1}{m}\sum_{i=1}^m U_i(\theta_0; \hat\pi', \hat g')
	&= \B_1 + \B_2 + \B_3, \nonumber 
	\end{align}
    where
	\begin{align}
	\B_1 & = \frac{1}{m}\sum_{i=1}^m U_{i}(\theta_{0}; \pi_0, g_0), \nonumber \\
	\B_2 & = \frac{1}{m}\sum_{i=1}^m \left[ \left\{\UU_{1,i}(\hat\pi')\V_i(\nu; g_0) + \UU_{2,i}(\hat\pi',\hat F')\V_i(1; g_0)\right\}  -  \left\{ \UU_{1,i}(\pi_0) \V_i(\nu; g_0) - \UU_{2,i}(\pi_0,F_0) \V_i(1; g_0) \right\} \right], \nonumber \\
    \B_3  & =  \frac{1}{m}\sum_{i=1}^m \left[ \left\{\UU_{1,i}(\hat\pi')\V_i(\nu; \hat g') + \UU_{2,i}(\hat\pi',\hat F')\V_i(1; \hat g')\right\}  -  \left\{ \UU_{1,i}(\hat\pi') \V_i(\nu; g_0) - \UU_{2,i}(\hat\pi',\hat F') \V_i(1; g_0) \right\} \right] \nonumber \\
    & = \frac{1}{m}\sum_{i=1}^m \left[ \UU_{1,i}(\hat\pi') \left\{\V_i(\nu; \hat g') - \V_i(\nu; g_0) \right\} + \UU_{2,i}(\hat\pi',\hat F') \left\{\V_i(1;\hat g') - \V_i(1;g_0) \right\} \right]. \nonumber
	\end{align}
    In the following, we will bound $\B_2$ and $\B_3$, respectively. 

    We first bound $\B_2$. 
    We have
    \begin{align*}
        \B_2 = \B_{21} + \B_{22} + \B_{23},
    \end{align*}
    where 
    \begin{align*}
        \B_{21} & = \frac{1}{m}\sum_{i=1}^m \left[ \left\{\UU_{1,i}(\hat\pi') - \UU_{1,i}(\pi_0)\right\} \V_i(\nu; g_0) + \left\{\UU_{2,i}(\hat\pi',F_0) - \UU_{2,i}(\pi_0,F_0)\right\}\V_i(1; g_0)\right], \\
        \B_{22} & = \frac{1}{m}\sum_{i=1}^m \left\{\UU_{2,i}(\pi_0,\hat F') - \UU_{2,i}(\pi_0,F_0)\right\}\V_i(1; g_0) \\
        \B_{23} & =  \frac{1}{m}\sum_{i=1}^m \left[ \left\{\UU_{2,i}(\hat\pi',\hat F') - \UU_{2,i}(\hat\pi',F_0)\right\} - \left\{\UU_{2,i}(\pi_0,\hat F') - \UU_{2,i}(\pi_0,F_0)\right\}  \right] \V_i(1; g_0)
    \end{align*}
    We first consider $\B_{21}$. 
    By Proposition \ref{prop:W_nuf(A,Z)}, 
    \begin{align}
        &\quad \E(\B_{21} | \Oc') \nonumber \\
        & = \beta^{-1} \E\left[ \left\{\UU_{1}^*(\hat\pi') - \UU_{1}^*(\pi_0)\right\} \nu(T^*) + \left\{\UU_{2}^*(\hat\pi',F_0) - \UU_{2}^*(\pi_0,F_0)\right\}\right] \nonumber \\
        & = \beta^{-1} \E\left[ \left. \left\{\frac{A^* - \hat\pi'(Z^*)}{\hat\pi'(Z^*)\{1-\hat\pi'(Z^*)\}} - \frac{A^* - \pi_0(Z^*)}{\pi_0(Z^*)\{1-\pi_0(Z^*)\}}\right\} \left\{ \nu(T^*) - \mu(A^*,Z^*;F_0)\right\} \right| \Oc' \right] \nonumber \\
        & = \beta^{-1} \E\left[\left\{ \left. \frac{A^* - \hat\pi'(Z^*)}{\hat\pi'(Z^*)\{1-\hat\pi'(Z^*)\}} - \frac{A^* - \pi_0(Z^*)}{\pi_0(Z^*)\{1-\pi_0(Z^*)\}}\right\} \E\left\{ \left. \nu(T^*) - \mu(A^*,Z^*;F_0) \right| A^*, Z^*, \Oc' \right\}  \right| \Oc' \right] \label{eq:proof_thm2_1} \\
        & = 0, \label{eq:proof_thm2_2}
    \end{align}
    where \eqref{eq:proof_thm2_1} is by the tower property of expectations, and \eqref{eq:proof_thm2_2} holds because $\E\{\nu(T^*) | A^*, Z^*, \Oc' \} = \E\{ \nu(T^*) | A^*, Z^* \} = \mu(A^*,Z^*;F_0)$. 
    In addition, 
    \begin{align*}
        &\quad \var(\B_{21}|\Oc') \\
        & = \frac{1}{m} \cdot  \var\left[ \left\{\frac{A - \hat\pi'(Z)}{\hat\pi'(Z)\{1-\hat\pi'(Z)\}} - \frac{A - \pi_0(Z)}{\pi_0(Z)\{1-\pi_0(Z)\}}\right\} \left\{\V(\nu; g_0) - \mu(A,Z;F_0)\V(1; g_0) \right\}  \Bigg| \Oc'\right] \\
        & = \frac{1}{m} \cdot  \E\left[ \left\{\frac{A - \hat\pi'(Z)}{\hat\pi'(Z)\{1-\hat\pi'(Z)\}} - \frac{A - \pi_0(Z)}{\pi_0(Z)\{1-\pi_0(Z)\}}\right\}^2 \left\{\V(\nu; g_0) - \mu(A,Z;F_0)\V(1; g_0) \right\}^2  \Bigg| \Oc'\right] \\
        &\lesssim \frac{1}{m}\|\hat\pi' - \pi_0\|_{2}^2  \\
        & = o_p(m^{-1}), 
    \end{align*}
    where the last equation holds by Assumption \ref{ass:uniformcons2}. 
    By Assumption \ref{ass:strict_positivity}, 
    $\B_{21}$ is bounded a.s., 
    so $\E\{\var(\B_{21}|\Oc')\} = o(m^{-1})$.  
Therefore, 
\begin{align*}
    \var(\B_{21}) = \E\{\var(\B_{21}|\Oc')\} + \var\{\E(\B_{21}|\Oc')\} = o(m^{-1}). 
\end{align*}
In addition, $\E(\B_{21}) = \E\{\E(\B_{21}|\Oc')\} = 0$.
So by Chebyshev's inequality, $\B_{21} = o_p(m^{-1/2}) = o_p(n^{-1/2})$. 

Likewise, we can show that $\B_{22} = o_p(m^{-1/2}) = o_p(n^{-1/2})$. 

We now consider $\B_{23}$. 
We have 
\begin{align*}
    \B_{23} & =  - \frac{1}{m}\sum_{i=1}^m \left\{\frac{A_i - \hat\pi'(Z_i)}{\hat\pi'(Z_i)\{1-\hat\pi'(Z_i)\}} - \frac{A_i - \pi_0(Z_i)}{\pi_0(Z_i)\{1-\pi_0(Z_i)\}} \right\}  \left\{\mu(A_i,Z_i;\hat F') - \mu(A_i,Z_i;F_0) \right\} \V_i(1; g_0). 
\end{align*}
So 
\begin{align}
    &\quad \E(|\B_{23}||\Oc') \nonumber \\
    &\lesssim \E\left[ \left. \left| \frac{A - \hat\pi'(Z)}{\hat\pi'(Z)\{1-\hat\pi'(Z)\}} - \frac{A - \pi_0(Z)}{\pi_0(Z)\{1-\pi_0(Z)\}} \right| \cdot \left|\mu(A,Z;\hat F') - \mu(A,Z;F_0)\right| \ \right| \Oc' \right] \label{eq:rDR_proof_2}\\
    &\lesssim \E\left[\left. |\hat\pi'(Z)-\pi_0(Z)| \cdot \sup_t \left|\hat F'(t|A,Z) - F_0(t|A,Z)\right|\ \right| \Oc' \right]  \nonumber \\
    &\lesssim \|\hat\pi'-\pi_0\|_{2} \cdot \|\hat F' - F_0\|_{\sup, 2}, \label{eq:rDR_proof_1} \\
    & = o_p(n^{-1/2}),  \label{eq:rDR_proof_3}
\end{align}
where \eqref{eq:rDR_proof_2} holds because under Assumption \ref{ass:strict_positivity}, $\V(1; F_0,G_0,S_{D0})$ is bounded a.s.; \eqref{eq:rDR_proof_1} holds by Cauchy-Schwarz inequality; and \eqref{eq:rDR_proof_3} holds by Assumption \ref{ass:prodrate}.
Since $\B_{23}$ is bounded a.s., we have
$\E(|\B_{23}|) = \E\{\E(|\B_{23}||\Oc')\} = o(n^{-1/2})$.
Therefore, by Markov's inequality,
$\B_{23} = o_p(n^{-1/2})$. 

Combining the above, we have $\B_{2} = o_p(n^{-1/2})$. 

We now bound $\B_3$. 
Recall from \eqref{eq:D_V} the definition of $\D_\V$. 
We consider the decomposition
\begin{align*}
    \B_3 & = \B_{31} + \B_{32} + \B_{33},
\end{align*}
where
\begin{align*}
    \B_{31} & = \frac{1}{m}\sum_{i=1}^m \left[ \UU_{1,i}(\hat\pi') \left\{\V_i(\nu; \hat F',G_0,S_{D0}) - \V_i(\nu; F_0,G_0,S_{D0}) \right\}  \right] \\
    &\qquad\qquad\ \left.  + \UU_{2,i}(\hat\pi',\hat F') \left\{\V_i(1;\hat F',G_0,S_{D0}) - \V_i(1;F_0,G_0,S_{D0}) \right\} \right], \\
     \B_{32} & = \frac{1}{m}\sum_{i=1}^m \left[ \UU_{1,i}(\hat\pi') \left\{\V_i(\nu; F_0,\hat G', \hat S_{D}') - \V_i(\nu; F_0,G_0,S_{D0}) \right\}  \right] \\
    &\qquad\qquad\ \left.  + \UU_{2,i}(\hat\pi',\hat F') \left\{\V_i(1;F_0,\hat G',\hat S_D') - \V_i(1;F_0,G_0,S_{D0}) \right\} \right], \\
    \B_{33} & = \frac{1}{m}\sum_{i=1}^m \left\{ \UU_{1,i}(\hat\pi') \D_\V(\hat g', g_0;\nu) + \UU_{2,i}(\hat\pi',\hat F')\D_\V(\hat g', g_0;1) \right\}. 
\end{align*}
We first consider $\B_{31}$. 
By Proposition \ref{prop:W_nuf(A,Z)}, we have $\E(\B_{31}|\Oc') = 0$. 
Then by a similar proof as the above proof for showing $\B_{21} = o_p(n^{-1/2})$, we can show that $\var(\B_{31}) = o(m^{-1}) = o(n^{-1})$ and $\E(\B_{31}) = 0$. So by Chebyshev's inequality, $\B_{31} = o_p(m^{-1/2}) = o_p(n^{-1/2})$. 

Similarly, we can show that $\B_{32} = o_p(m^{-1/2}) = o_p(n^{-1/2})$.

We now consider $\B_{33}$. 
We have 
    \begin{align*}
        \E\left(|\B_{33}| \left| \Oc' \right.\right)
        &\leq \E\left[\left. \left|\UU_{1}(\hat\pi') \right| \cdot |\D_\V(\hat g',g_0;\nu)| \right| \Oc' \right] 
        + \E\left[\left. \left| \UU_{2,i}(\hat\pi',\hat F') \right| \cdot |\D_\V(\hat g',g_0;1)| \right| \Oc' \right] \\
        &\lesssim \E\left[ | \D_\V(\hat g',g_0;\nu) | \left| \Oc'\right. \right] + \E\left[ | \D_\V(\hat g',g_0;1) | \left| \Oc'\right. \right]. 
    \end{align*}
Recall from \eqref{eq:K_1_tilde}, \eqref{eq:K_2_tilde}, \eqref{eq:K_3_tilde} the definitions of $\tilde K_1$, $\tilde K_2$ and $\tilde K_3$;
and recall from \eqref{eq:K_1} \eqref{eq:K_2} \eqref{eq:K_3} the definitions of $K_1$, $K_2$, and $K_3$. 
By Proposition \ref{prop:D_W_bound} and Cauchy-Schwarz inequality, 
\begin{align*}
    \E\left(|\B_{33}| \left| \Oc' \right.\right)
    & \lesssim \|\hat F'-F_0\|_{\sup,2} \cdot\left\{ \|\hat G'-G_0\|_{\sup,2} + \|\hat S_D' - S_{D0}\|_{\sup,2} \right\}  \\
    &\qquad + \|\tilde K_1(\hat g',g_0;\nu)\|_1 + \|\tilde K_2(\hat g',g_0;\nu)\|_1 + \|\tilde K_3(\hat g',g_0;\nu)\|_1 \\
     &\qquad + \|\tilde K_1(\hat g',g_0;1)\|_1 + \|\tilde K_2(\hat g',g_0;1)\|_1 + \|\tilde K_3(\hat g',g_0;1)\|_1 \\
     & \lesssim \|\hat F'-F_0\|_{\sup,2} \cdot\left\{ \|\hat G'-G_0\|_{\sup,2} + \|\hat S_D' - S_{D0}\|_{\sup,2} \right\}  \\
    &\qquad + \|K_1(\hat g',g_0;\nu)\|_1 + \|K_2(\hat g',g_0;\nu)\|_1 + \|K_3(\hat g',g_0;\nu)\|_1 \\
    & = o_p(n^{-1/2}), 
\end{align*}
where the last equation holds by Assumption \ref{ass:prodrate}. 
By Assumption \ref{ass:strict_positivity}, $\B_{33}$ is bounded almost surely, so we have $\E(|\B_{33}|) = \E\{\E(|\B_{33}|\mid \Oc')\} = o(n^{-1/2})$. Therefore, by Markov's inequality, $\B_{33} = o_p(n^{-1/2})$. 

 Combining the above, we have \eqref{eq:ANproof_4} holds, and the asymptotic normality of $\hat\theta_{cf}$ follows with $\sigma^2 = \beta^2 \E\{U(\theta_{0};\pi_0,F_0,G_0,S_{D0})^2\}$.

For the variance estimator, we can show that it converges in probability to $\sigma^2$ in a similar way as  \citet[part (3) in the proof of Theorem S2 in the Supplementary Material]{wang2024doubly}.

\end{proof}

\clearpage

\section{CATE theory}\label{app:CATE_theory}

In {Section \ref{app:general_CATE_err_bound}}, we extend Theorem 1 in \citet{foster2023orthogonal} to a general error bound result for  a general loss function in the observed data, without the orthogonality or smoothness assumptions. 
Section \ref{sec:CATE_theory} contains the error bound and the oracle result for CATE estimation with the observed data loss function obtained by applying $\V$ to the LTRC-free data loss function \eqref{eq:weighted_square_loss}; the proofs are in Section \ref{app:CATE_proof}. 
Section \ref{app:instantiate_oracle_results} contains the proofs of the theorems for the ltrcR- and ltrcDR-learners in the main paper, which utilizes the error bound and convergence rate results in Section \ref{sec:CATE_theory}.

\subsection{Extension of \citet{foster2023orthogonal} Theorem 1} \label{app:general_CATE_err_bound}

With a little abuse of notation, in this subsection we use $\tilde\ell(\tau; h)$ to denote a generic loss function for $\tau$ {the parameter of interest (which does not have to be CATE)}, 
with nuisance parameter $h$. Let $h_0$ denote the true nuisance parameter,  
and $\Hc$  the parameter class for $h$.
{We consider  empirical risk minimization using $\tilde\ell(\tau,h)$ with sampling splitting, and denote $\hat\tau$ the resulting estimate}.


Let $\tilde k(h,h_0)$ and $\tilde \ksup(h,h_0)$ denote two pseudo-distance functions between  $h$ and $h_0$ satisfying $\tilde k(h,h_0)\geq 0$ and $\tilde \ksup(h,h_0)\geq 0$ for all $h,h_0\in \Hc$.
Let $\|\cdot\|_\T$ denote a prenorm in $\T$, which satisfies nonnegativity and $\|0\|_\T = 0$ but not necessarily the triangle inequality nor absolute homogeneity ($\|a\tau\|_\T = |a|\cdot \|\tau\|_{\T}$ for any $a\in\R$ and $\tau\in\T$).
Let $L(\tau,h) = \E\{\tilde\ell(\tau; h)\}$ denote the population risk. 

A function class $\mathcal{J}$ is {\it star-shaped} if for any $f\in\mathcal{J}$, $\alpha f\in\mathcal{J}$ for all $\alpha\in[0,1]$. It follows immediately  that a convex function class containing the origin is star-shaped. In addition, for any function class $\mathcal{J}$ and any $f \in\mathcal{J}$, the {\it stall hull} is 
\begin{align*}
    \sstar(\mathcal{J},f) = \{t\cdot f + (1-t)\cdot \tilde f: \text{ for all } \tilde f\in\mathcal{J}, \text{ and all } t\in[0,1]\}.
\end{align*}
By definition,  $\sstar(\mathcal{J},0)$ is star-shaped if $0\in\mathcal{J}$, and we will use the shorthand $\sstar(\mathcal{J}) = \sstar(\mathcal{J},0)$.

We consider the following assumptions.

\begin{appxassumption}\label{ass:obs_1-4}
      
    (i) (First order optimality) The truth $\tau_0$ satisfies the first order optimality condition: 
    $D_\tau L(\tau_0,h_0)[\tau-\tau_0] \geq 0$, for all $\tau\in\T$. 
    
    (ii) {(Higher order smoothness)} There exist constants $\betatwo>0$ and $r\in[0,1)$ such that for all $\tau\in\T$, $h\in\Hc$, 
    \begin{align}
        |D_\tau L(\tau_0, h)[\tau-\tau_0] - D_\tau L(\tau_0,h_0)[\tau-\tau_0]| \leq \betatwo \cdot \|\tau-\tau_0\|_\T^{1-r} \cdot \tilde k(h,h_0). \nonumber
    \end{align}
    
    (iii) (Relaxed strong convexity) There exist constants $\lambda_1,\kappa>0$ and $\lambda_2 \geq 0$ such that for all $\tau\in\T$, $\bar\tau\in\sstar(\T,\tau_0)$, 
    $h\in\Hc$,  
    \begin{align}
        D_\tau^2 L(\bar\tau,h)[\tau-\tau_0,\tau-\tau_0] \geq \{\lambda_1 - \lambda_2 \tilde \ksup(h,h_0)\} \|\tau-\tau_0\|_\T^2 - \kappa\cdot \tilde k(h,h_0)^{2/(1+r)}. \nonumber
    \end{align}
\end{appxassumption}

Assumptions \ref{ass:obs_1-4}~(i) is the same as its counterpart in Assumption {2} of \citet{foster2023orthogonal}.
Assumption \ref{ass:obs_1-4} (ii) does not require computing the second order directional derivative of the population risk with respect to the nuisance parameters as in \citet{foster2023orthogonal}, which can be complex for cases like our loss functions. 
The strong convexity assumption in \citet{foster2023orthogonal} is a special case of Assumption \ref{ass:obs_1-4}~(iii) with $\lambda_2 = 0$. 
Assumption \ref{ass:obs_1-4}~(iii) only requires strong convexity when $h$ is close enough to $h_0$ in terms of $\tilde \ksup(h,h_0)$.

Let $\tilde r_1(\Hc,\delta)$ denote the first-stage error bound for $\hat h$ such that $\tilde k(\hat h,h_0) \leq \tilde r_1(\Hc,\delta)$ with probability at least $1-\delta$. 
Let $\hat\tau_h$ denote the estimator from the empirical risk minimization with fixed nuisance parameter $h$.
Let $\tilde r_2(\T,\delta;h)$ denote the excess risk bound for the second-stage learning algorithm given any fixed nuisance parameter $h\in\Hc$ in the sense that $L(\hat\tau_h,h) - L(\tau_0,h) \leq \tilde r_2(\T,\delta;h)$ with probability at least $1-\delta$. 

Below is an extension of \citet{foster2023orthogonal} Theorem 1. 
\begin{appxlemma} \label{lem:CATE_rate_general}
    Under  Assumption \ref{ass:obs_1-4}, 
    the sample splitting algorithm produces an estimator $\hat\tau$ such that, with probability at least $1-\delta$, 
    \begin{align}
        \|\hat\tau - \tau_0\|_{\T}^2 
        & \leq \tilde\cc_1(n) \cdot \tilde r_1(\Hc, \delta/2)^{2/(1+r)} + \tilde\cc_2(n) \cdot \tilde r_2(\T,\delta/2;\hat h). \nonumber
    \end{align}
    where 
    $$
    \tilde\cc_1(n) \leq 2\left[\left\{\frac{2\betatwo}{\lambda_1 - \lambda_2 \tilde \ksup(\hat h,h_0)}\right\}^{2/(1+r)} + \frac{\kappa}{\lambda_1 - \lambda_2 \tilde \ksup(\hat h,h_0)}\right], 
    \quad \tilde\cc_2(n)\leq \frac{4}{\lambda_1 - \lambda_2\tilde \ksup(\hat h,h_0)}
    $$
    if $\lambda_1 - \lambda_2 \tilde\ksup(h,h_0)>0$. 
\end{appxlemma}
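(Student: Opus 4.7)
The plan is to follow the scheme of Theorem~1 in \citet{foster2023orthogonal} but with our weaker Assumption~\ref{ass:obs_1-4} in place of their global strong convexity and second-order nuisance smoothness. By a union bound, with probability at least $1-\delta$ both $\tilde k(\hat h, h_0) \leq \tilde r_1(\Hc, \delta/2)$ and, conditional on $\hat h$, $L(\hat\tau, \hat h) - L(\tau_0, \hat h) \leq \tilde r_2(\T, \delta/2; \hat h)$; everything below is on this good event.

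Next I would apply a second-order Taylor expansion of $\tau \mapsto L(\tau, \hat h)$ about $\tau_0$, evaluated at $\hat\tau$: for some $\bar\tau \in \sstar(\T, \tau_0)$,
\begin{align*}
L(\hat\tau, \hat h) - L(\tau_0, \hat h) = D_\tau L(\tau_0, \hat h)[\hat\tau - \tau_0] + \tfrac{1}{2}\, D_\tau^2 L(\bar\tau, \hat h)[\hat\tau - \tau_0, \hat\tau - \tau_0].
\end{align*}
Assumption~\ref{ass:obs_1-4}(i) at $h_0$ yields $D_\tau L(\tau_0, h_0)[\hat\tau - \tau_0] \geq 0$, and combined with Assumption~\ref{ass:obs_1-4}(ii) this produces the one-sided bound $D_\tau L(\tau_0, \hat h)[\hat\tau - \tau_0] \geq -\betatwo \|\hat\tau - \tau_0\|_\T^{1-r}\, \tilde k(\hat h, h_0)$. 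Applying Assumption~\ref{ass:obs_1-4}(iii) to the Hessian term and setting $\mu := \lambda_1 - \lambda_2 \tilde\ksup(\hat h, h_0) > 0$ then gives
\begin{align*}
\tfrac{\mu}{2}\, \|\hat\tau - \tau_0\|_\T^2 \leq \tilde r_2 + \betatwo \|\hat\tau - \tau_0\|_\T^{1-r}\, \tilde k(\hat h, h_0) + \tfrac{\kappa}{2}\, \tilde k(\hat h, h_0)^{2/(1+r)}.
\end{align*}

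The last step is Young's inequality with conjugate exponents $p = 2/(1-r)$ and $q = 2/(1+r)$ applied to the cross term, which yields a $\tfrac{1-r}{2}\|\hat\tau - \tau_0\|_\T^2$ piece that I absorb into the left-hand side and a residual proportional to $(2\betatwo/\mu)^{2/(1+r)}\, \tilde k(\hat h, h_0)^{2/(1+r)}$. Rearranging and plugging in $\tilde k(\hat h, h_0) \leq \tilde r_1(\Hc, \delta/2)$ gives the claimed bound with $\tilde\cc_2(n) \leq 4/\mu$ and $\tilde\cc_1(n) \leq 2\{(2\betatwo/\mu)^{2/(1+r)} + \kappa/\mu\}$, where the factor of $2$ absorbs the prefactor $(1+r)/(1-r)^{(1-r)/(1+r)}$, which is bounded by $2$ for $r \in [0,1)$. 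The main obstacle here is organizational rather than conceptual: because $\mu$ depends on the random realization $\hat h$ through $\tilde\ksup(\hat h, h_0)$, one must carry this dependence through every line instead of folding it into absolute constants as in \citet{foster2023orthogonal}, and the positivity of $\mu$ --- needed for the argument to be well-posed --- is ensured in the downstream applications by Assumption~\ref{k0_converge_to_0}, which forces $\tilde\ksup(\hat h, h_0) = o_p(1)$.
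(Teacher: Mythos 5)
Your approach is the one the paper intends (and attributes to Foster--Syrgkanis Theorem~1 with Assumption~\ref{ass:obs_1-4}(ii)--(iii) substituted in): union bound over the two events, second-order Taylor expansion of $\tau\mapsto L(\tau,\hat h)$ about $\tau_0$ (valid because the intermediate point lies in $\sstar(\T,\tau_0)$), first-order optimality at $h_0$ plus Assumption~\ref{ass:obs_1-4}(ii) to lower-bound the gradient term by $-\betatwo\|\hat\tau-\tau_0\|_\T^{1-r}\tilde k(\hat h,h_0)$, Assumption~\ref{ass:obs_1-4}(iii) on the Hessian, and Young's inequality to absorb the cross term. The overall structure is correct.

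The one concrete slip is in your constant bookkeeping in the last step. You write that the factor of $2$ absorbs a prefactor $(1+r)/(1-r)^{(1-r)/(1+r)}$ ``which is bounded by $2$ for $r\in[0,1)$.'' That expression is in fact \emph{not} bounded by $2$: at $r=0.9$ it is $1.9/(0.1)^{0.1/1.9}\approx 2.14$, and it exceeds $2$ on a neighborhood of $r=1$. Balancing Young's inequality so that the $\|\hat\tau-\tau_0\|_\T^2$ contribution equals $\tfrac{\mu}{4}\|\hat\tau-\tau_0\|_\T^2$ yields, after minimizing over $x=\|\hat\tau-\tau_0\|_\T$ for fixed $\tilde k$, the coefficient $\tfrac{1+r}{2}\,\betatwo^{2/(1+r)}\bigl(2(1-r)/\mu\bigr)^{(1-r)/(1+r)}$ on $\tilde k^{2/(1+r)}$, and after multiplying by $4/\mu$ the comparison with $2(2\betatwo/\mu)^{2/(1+r)}$ reduces to showing $(1+r)(1-r)^{(1-r)/(1+r)}\leq 2$ --- note the sign of the exponent is flipped relative to what you wrote. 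That corrected quantity \emph{is} bounded by $2$ on $[0,1)$ (it increases from $1$ at $r=0$ and tends to $2$ as $r\to 1^-$), and with this change your constants match $\tilde\cc_1(n)$ and $\tilde\cc_2(n)$ in the lemma exactly. Your observation about $\mu$ being random and its positivity being supplied downstream by Assumption~\ref{k0_converge_to_0} is the right thing to note.
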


     The proof of Lemma \ref{lem:CATE_rate_general} is similar to the proof of Theorem 1 in \citet{foster2023orthogonal}, with the steps 
     that use Neyman orthogonality and higher order smoothness in their proof replaced by the condition in Assumption \ref{ass:obs_1-4} (ii), and with the steps that use strong convexity replaced by the condition in Assumption \ref{ass:obs_1-4} (iii). The details of the proof can be found in Section 2.9.4 of \citet{wang2025towards}. 

    We note that Assumption \ref{ass:obs_1-4}~(ii) can be shown to hold generally with $\tilde k(h,h_0) = \|h-h_0\|_{\Hc}$, so the nuisance estimation error by default has first order impact on the estimation error bound for $\tau$ according to Lemma \ref{lem:CATE_rate_general}.  
    When $\tilde\ell$ is Neyman orthogonal, 
    Assumption \ref{ass:obs_1-4}~(ii) holds with $\tilde k(h,h_0) = \|h-h_0\|_{\Hc}^2$ if $\tilde\ell$ also satisfies higher order smoothness. 
    Therefore in this case, the nuisance estimation error only has second order impact on the estimation error bound for $\tau$ {according to} Lemma \ref{lem:CATE_rate_general}. 
    Furthermore, if the loss function $\tilde\ell$ is also doubly robust, Assumption \ref{ass:obs_1-4}~(ii) may hold with $\tilde k(h,h_0)$ being {(bounded by)} the product errors of the nuisance parameters, like \eqref{eq:k_DR} for the ltrcDR-loss  in Section \ref{app:instantiate_oracle_results}, indicating that the nuisance estimation error only impact the estimation error bound for $\tau$ in Lemma \ref{lem:CATE_rate_general} through the product errors.

         Assumption \ref{ass:obs_1-4}~(ii) holds with $\tilde k(h,h_0) = \|h-h_0\|_{\Hc}^2$ when $\tilde\ell$ is Neyman orthogonal and satisfying the higher order smoothness in Assumption 3b) of \citet{foster2023orthogonal}. 
        If in addition $\tilde\ell$ is strongly convex at all $h\in\Hc$ (i.e., Assumption \ref{ass:obs_1-4} (iii) holds with $\lambda_2 = 0$), which is Assumption 4 of \citet{foster2023orthogonal},  Lemma \ref{lem:CATE_rate_general} implies the error bound in Theorem 1 of \citet{foster2023orthogonal}.

\subsection{Weighted squared loss for CATE}\label{sec:CATE_theory}

In the following we consider a class of weighted squared losses for CATE in the LTRC-free data:  
\begin{align}
    \ell(T^*,A^*,Z^*;\tau,\eta) = \omega(A^*,Z^*;\eta)\{\gammaone(A^*,Z^*;\eta)\nu(T^*) + \gammatwo(A^*,Z^*;\eta) - \tau(V^*)\}^2, \label{eq:weighted_square_loss}
\end{align}
and recall that $\eta$ denote the nuisance parameter involved.
It is straightforward to see that \eqref{eq:weighted_square_loss} encompasses both the R-loss and the DR-loss, in which $\eta = (\pi, F)$.
For the rest of this subsection,  we consider a generic $\eta$.

We now state an error bound and an oracle result for CATE estimation when the observed data loss function $\tilde\ell$ is obtained from $\ell$ in \eqref{eq:weighted_square_loss}
  by applying the LTRC-operator $\V$   
(after dropping a function of {observed data} $O$ that does not involve $\tau$): 
\begin{align}
  \tilde\ell(\tau; \eta,F,G,S_D) &= \V(\ell;F,G,S_D) \nonumber \\
    &= \V(1;F,G,S_D)\omega(A,Z;\eta) \left\{ \gammaone(A,Z;\eta) \cdot \frac{\V(\nu;F,G,S_D)}{\V(1;F,G,S_D)} +\gammatwo(A,Z;\eta) - \tau(V) \right\}^2. \label{eq:weighted_square_loss_obs}
\end{align}

\subsubsection{An error bound}\label{sec:CATE_oracle_result}

 Assumption \ref{ass:1-4} below can be verified for both the R-loss and the DR-loss (Appendix \ref{app:instantiate_oracle_results}), and ensures that Assumption \ref{ass:obs_1-4} holds for the observed data loss function $\tilde\ell$ in \eqref{eq:weighted_square_loss_obs}.
 
\begin{appxassumption}\label{ass:1-4}
    (i) (First order optimality) The truth $\tau_0$ satisfies the first order optimality condition in the LTRC-free data: $D_\tau \E\{\ell(T^*,A^*,Z^*;\tau_0,\eta_0)\}[\tau-\tau_0] \geq 0$, for all $\tau\in\T$.\\
    (ii) (Boundedness of $\omega$) There exist constants $\consta, \constb  >0$ such that $\consta \leq \omega(A^*,Z^*;\eta) \leq \constb $ a.s. for all $\eta\in\mathcal{N}$.\\
    (iii) (Boundedness of $\gammaone$, $\gammatwo$) There exist $\constc,\constd >0$ such that $|\gammaone(A^*,Z^*;\eta)| \leq \constc$ a.s. and $|\gammatwo(A^*,Z^*;\eta)| \leq \constd$ a.s. for all $\eta\in\mathcal{N}$.
\end{appxassumption}

Before stating the  error bound  we need the following {\it pseudo-distance} between $\eta$ and $\eta_0$ in the  LTRC-free data:
\begin{align}
  \kstar(\eta,\eta_0) =  \E\{|\Dlstarz(\eta) - \Dlstarz(\eta_0)|^2\}^{1/2}, 
  \label{eq:bdd_by_k}
\end{align}
where 
\begin{align}
    \Dlstarz(\eta) & = \E[\omega(A^*,Z^*;\eta)\{\gammaone(A^*,Z^*;\eta)\nu(T^*) + \gammatwo(A^*,Z^*;\eta) - \tau_0(V^*)\} \mid Z^*]. \label{eq:Dlstar_z} 
\end{align}
Note that $\Dlstarz(\eta) $ is a quantity related to the directional derivative of the population risk in the LTRC-free data: 
\begin{align*}
    &\quad D_\tau \E\{ \ell(T^*,A^*,Z^*;\tau_0,\eta)\}[\tau- \tau_0] \\
    & = -2 \E[\omega(A^*,Z^*;\eta)\{\gammaone(A^*,Z^*;\eta)\nu(T^*) + \gammatwo(A^*,Z^*;\eta) - \tau_0(V^*)\}  \cdot \{\tau(V^*) - \tau_0(V^*)\}]\\
    & = -2 \E[ \Dlstarz(\eta) \cdot \{\tau(V^*) - \tau_0(V^*)\}],
\end{align*}
where the last equation is
 by the tower property of expectations (first conditioning on $Z^*$) and the fact that $V^*\subseteq Z^*$. 
We will instantiate $\kstar(\eta,\eta_0)$ for the R-loss and DR-loss in Section \ref{app:instantiate_oracle_results} and provide its upper bound. 

Recall that  $g = (F,G,S_D)$, and
$K_1(g,g_0)$, $K_2(g,g_0)$ and $K_3(g,g_0)$ are the same as in Section \ref{sec:ATE}. 
We consider the following measure of  difference between the nuisance parameters and their truth:
\begin{align}
    k((\eta,g),(\eta_0,g_0)) 
    & = \kstar(\eta,\eta_0) + \|F-F_0\|_{\sup,4} \cdot\left\{ \|G-G_0\|_{\sup,4} + \|S_D - S_{D0}\|_{\sup,4} \right\} \nonumber \\
    &\qquad + \|K_1(g,g_0)\|_2 + \|K_2(g,g_0)\|_2 + \| K_3(g,g_0)\|_2. \label{eq:k_expression}
\end{align} 
Note that \eqref{eq:k_expression} shares the same product and integral product terms  as that in Assumption \ref{ass:prodrate}, albeit with different norms. 

Let $r_1(\Gc, \delta)$ denote the first-stage nuisance error bound  such that 
with probability at least $1-\delta$,
\begin{align}
    k((\hat\eta,\hat g), (\eta_0,g_0)) \leq r_1(\Gc, \delta).
    \label{eq:k}
\end{align}
Note that \eqref{eq:k} is specific to causal LTRC problems in that it concerns the product and integral product nuisance error bound 
\citep{ying2023cautionary, wang2024doubly, luo2023doubly}. 

Let $L(\tau,\eta,g) =  \E\{\tilde\ell(\tau;\eta,g)\}$ denote the population risk, and let 
$r_2(\T,\delta;\eta,g)$ denote the excess risk bound of the second-stage learning algorithm for fixed  $(\eta,g)$ 
 such that with probability at least $1-\delta$, 
\begin{align}
    L(\hat\tau_{\eta,g},\eta,g) - L(\tau_0,\eta,g) \leq r_2(\T,\delta;\eta,g). \label{eq:second_stage_rate}
\end{align}
The excess risk bound is commonly considered in the regression literature for measuring the quality of the estimated regression function \citep{wainwright2019high}.

\begin{proposition}\label{thm:CATE_error_bdd}
    Under Assumptions \ref{ass:consistency}
    - \ref{ass:strict_positivity}, \ref{k0_converge_to_0} and  
     \ref{ass:1-4},
     suppose that 
     all the nuisance parameters in $\Gc$ satisfy Assumption \ref{ass:strict_positivity} (strict positivity). 
If $\hat\tau$ is obtained from empirical risk minimization using \eqref{eq:weighted_square_loss_obs} under sample splitting,
     then      with probability at least $1-\delta$, 
    \begin{align*}
        \|\hat\tau - \tau_0\|_2^2 \leq \cc_1(n) \cdot r_1(\Gc,\delta/2)^2 + \cc_2(n) \cdot r_2(\T,\delta/2;\hat\eta,\hat g),
    \end{align*}
    where $\cc_1(n), \cc_2(n)$ are bounded with probability arbitrarily close to 1 for large enough $n$.
\end{proposition}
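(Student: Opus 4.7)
The plan is to invoke Lemma~\ref{lem:CATE_rate_general} on the observed-data loss $\tilde\ell$ in \eqref{eq:weighted_square_loss_obs}, with $h = (\eta, g)$, prenorm $\|\cdot\|_\T = \|\cdot\|_2$, pseudo-distance $\tilde k(h, h_0) = k((\eta,g),(\eta_0,g_0))$ from \eqref{eq:k_expression}, and $\tilde \ksup(h, h_0)$ chosen to dominate $\|\D_\V(g,g_0;1)\|_\infty$ (plus a sup-norm piece accounting for variation of $\omega(\cdot;\eta)$ in $\eta$). The exponent $r = 0$ in Assumption~\ref{ass:obs_1-4}(ii) then gives exactly the squared rate $r_1(\Gc,\delta/2)^2$ appearing in the conclusion, so everything reduces to verifying the three parts of Assumption~\ref{ass:obs_1-4} for $\tilde\ell$.

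First-order optimality (Assumption~\ref{ass:obs_1-4}(i)) is immediate: Lemma~\ref{lem:DR_V} yields $L(\tau,\eta_0,g_0) = \beta^{-1}\E\{\ell(T^*,A^*,Z^*;\tau,\eta_0)\}$, so Assumption~\ref{ass:1-4}(i) transfers. For relaxed strong convexity (Assumption~\ref{ass:obs_1-4}(iii)), a direct calculation gives
\begin{align*}
D_\tau^2 L(\bar\tau,\eta,g)[\tau-\tau_0,\tau-\tau_0] = 2\,\E\bigl\{\V(1;g)\,\omega(A,Z;\eta)\,(\tau(V)-\tau_0(V))^2\bigr\}.
\end{align*}
At $(\eta_0,g_0)$, Proposition~\ref{prop:W_nuf(A,Z)} applied to the $(A,Z)$-function $\omega(A,Z;\eta_0)(\tau(V)-\tau_0(V))^2$ together with Assumptions~\ref{ass:strict_positivity} and \ref{ass:1-4}(ii) gives the lower bound $2\beta^{-1}\consta\|\tau-\tau_0\|_2^2$ (Proposition~\ref{prop:equiv_norms} converting between LTRC-free and observed $L_2$ norms up to constants). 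For general $(\eta,g)$ the deviation is bounded, via Proposition~\ref{prop:D_W}, by $2\constb\|\D_\V(g,g_0;1)\|_\infty\|\tau-\tau_0\|_2^2$ plus an $\eta$-deviation term; Proposition~\ref{prop:D_W_bound} and Assumption~\ref{k0_converge_to_0} then force $\tilde\ksup(\hat h,h_0) = o_p(1)$, keeping $\lambda_1 - \lambda_2\tilde\ksup(\hat h,h_0)$ bounded below by a positive constant with probability approaching one.

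The main obstacle is the higher-order smoothness (Assumption~\ref{ass:obs_1-4}(ii)). Writing
\begin{align*}
D_\tau L(\tau_0,\eta,g)[\tau-\tau_0] = -2\,\E\bigl[\omega(A,Z;\eta)\bigl\{\gammaone(A,Z;\eta)\V(\nu;g) + (\gammatwo(A,Z;\eta)-\tau_0(V))\V(1;g)\bigr\}\{\tau(V)-\tau_0(V)\}\bigr],
\end{align*}
I decompose the difference from $(\eta_0,g_0)$ telescopically as $A_g + A_\eta$, swapping $g$ first and $\eta$ second. For $A_g$, Proposition~\ref{prop:D_W} rewrites the swap in terms of $\D_\V(g,g_0;\nu)$ and $\D_\V(g,g_0;1)$; Cauchy--Schwarz applied to each summand, together with the $L_2$ bound from Proposition~\ref{prop:D_W_bound} and Assumption~\ref{ass:1-4}(ii)--(iii), yields $|A_g| \lesssim \|\tau-\tau_0\|_2 \cdot \bigl\{\|F-F_0\|_{\sup,4}(\|G-G_0\|_{\sup,4}+\|S_D-S_{D0}\|_{\sup,4}) + \sum_j\|K_j(g,g_0)\|_2\bigr\}$. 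For $A_\eta$, Lemma~\ref{lem:DR_V} gives $L(\tau,\eta,g_0) = \beta^{-1}\E\{\ell(T^*,A^*,Z^*;\tau,\eta)\}$, so $A_\eta = -2\beta^{-1}\E[\{\Dlstarz(\eta)-\Dlstarz(\eta_0)\}\{\tau(V^*)-\tau_0(V^*)\}]$, which Cauchy--Schwarz and Proposition~\ref{prop:equiv_norms} bound by $c\,\kstar(\eta,\eta_0)\|\tau-\tau_0\|_2$. Summing yields Assumption~\ref{ass:obs_1-4}(ii) with $r = 0$ and $\tilde k = k$.

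Finally, Lemma~\ref{lem:CATE_rate_general} delivers
\begin{align*}
\|\hat\tau-\tau_0\|_2^2 \leq \tilde c_1(n)\,r_1(\Gc,\delta/2)^2 + \tilde c_2(n)\,r_2(\T,\delta/2;\hat\eta,\hat g),
\end{align*}
where $\tilde c_1(n), \tilde c_2(n)$ depend on $\betatwo, \kappa, \lambda_1 - \lambda_2\tilde\ksup(\hat h,h_0)$. Since $\tilde\ksup(\hat h,h_0) = o_p(1)$ by Assumption~\ref{k0_converge_to_0}, both constants are bounded with probability arbitrarily close to one for large $n$, completing the proof. The delicate step throughout is organizing the $g$-swap so that Proposition~\ref{prop:D_W} can be applied cleanly to each term before invoking Proposition~\ref{prop:D_W_bound}; if instead one tried to bound $|\V(\nu;g)-\V(\nu;g_0)|$ directly in $L_2$ without isolating the conditional expectations of $(A,Z)$-factors, one would pick up first-order nuisance-error contributions and lose the product structure that $r = 0$ relies on.
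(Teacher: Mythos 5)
Your high-level strategy is exactly the paper's: invoke Lemma~\ref{lem:CATE_rate_general} with $h=(\eta,g)$, $\|\cdot\|_\T=\|\cdot\|_2$, $r=0$, $\tilde k = k((\eta,g),(\eta_0,g_0))$, and then verify the three parts of Assumption~\ref{ass:obs_1-4}. The first-order optimality and the telescoping $A_g+A_\eta$ decomposition for higher-order smoothness (with $A_g$ controlled via Propositions~\ref{prop:D_W}--\ref{prop:D_W_bound} and $A_\eta$ reduced to $\kstar$ via Lemma~\ref{lem:DR_V} and Proposition~\ref{prop:equiv_norms}) coincide with the paper up to relabeling.

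There is, however, a real gap in your treatment of relaxed strong convexity, and it threatens the conclusion that $\cc_1(n),\cc_2(n)$ are eventually bounded in probability. You set $(\eta_0,g_0)$ as the reference point for $D_\tau^2L$ and fold \emph{both} the $g$-deviation and an ``$\eta$-deviation term'' into $\tilde\ksup$, then appeal to Assumption~\ref{k0_converge_to_0} to get $\tilde\ksup(\hat h,h_0)=o_p(1)$. But Assumption~\ref{k0_converge_to_0} controls only $\|\hat F-F_0\|_\infty$, $\|\hat G-G_0\|_\infty$, $\|\hat S_D-S_{D0}\|_\infty$ and the $\|\tilde K_j(\hat g,g_0;1)\|_\infty$ terms; it says nothing about $\hat\eta$. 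For the R-loss, $\omega(A,Z;\eta)=\{A-\pi(Z)\}^2$, so your $\eta$-deviation piece is of order $\|\hat\pi-\pi_0\|_\infty$, which is \emph{not} an $o_p(1)$ consequence of Assumption~\ref{k0_converge_to_0} (nor even of the $L_4$ rate conditions used downstream). So $\lambda_1-\lambda_2\tilde\ksup(\hat h,h_0)>0$ w.h.p.\ does not follow from the stated hypotheses under your choice of $\tilde\ksup$.

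The fix is not to swap $\eta$ at all in the convexity lower bound. Since $\omega(\cdot;\eta)$ is bounded between $\consta$ and $\constb$ \emph{uniformly over} $\eta\in\mathcal{N}$ by Assumption~\ref{ass:1-4}(ii), write
\begin{align*}
D_\tau^2 L(\bar\tau,\eta,g)[\tau-\tau_0,\tau-\tau_0]
\geq 2\,\E\bigl[\V(1;g_0)\,\omega(A,Z;\eta)\{\tau(V)-\tau_0(V)\}^2\bigr]
- 2\bigl|\E\bigl[\{\V(1;g)-\V(1;g_0)\}\,\omega(A,Z;\eta)\{\tau(V)-\tau_0(V)\}^2\bigr]\bigr|,
\end{align*}
keeping the (estimated) $\eta$ in both terms. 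Propositions~\ref{prop:W_nuf(A,Z)} and~\ref{prop:equiv_norms} lower-bound the first term by $2\consta\|\tau-\tau_0\|_2^2$ for \emph{every} $\eta\in\mathcal{N}$, and Proposition~\ref{prop:D_W} lets you upper-bound the second by $2\constb\|\D_\V(g,g_0;1)\|_\infty\|\tau-\tau_0\|_2^2$. Then $\tilde\ksup(\hat h,h_0)=\|\D_\V(\hat g,g_0;1)\|_\infty$ involves only $\hat g$, and Proposition~\ref{prop:D_W_bound} together with Assumption~\ref{k0_converge_to_0} delivers $\tilde\ksup(\hat h,h_0)\convp 0$ exactly as needed.
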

Proposition \ref{thm:CATE_error_bdd} gives the error bound for $\hat\tau$ in terms of  the excess risk bound for the second-stage learning algorithm. 
The following corollary gives it in terms of the second-stage estimation error. 
Let $r_3(\T,\delta;\eta,g)$ denote the estimation error bound of the second stage learning algorithm for CATE in the sense that for any given $(\eta,g)\in\Gc$, the second stage learning algorithm outputs an estimator $\hat\tau_{\eta,g}$ such that with probability at least $1-\delta$, 
\begin{align*}
    \|\hat\tau_{\eta,g} - \tau_0\|_2 \leq r_3(\T,\delta;\eta,g).
\end{align*}

\begin{corollary}\label{cor:CATE_err_bound}
    Under Assumptions \ref{ass:consistency}
    - \ref{ass:strict_positivity}, \ref{k0_converge_to_0} and 
     \ref{ass:1-4},
     suppose in addition that the condition in Assumption \ref{ass:1-4} (i) holds with ``='' (which is commonly the case for weighted squared loss),  
     and that all the nuisance parameters in $\Gc$ satisfy Assumption \ref{ass:strict_positivity} (strict positivity).
If $\hat\tau$ is obtained from empirical risk minimization using \eqref{eq:weighted_square_loss_obs} under sample splitting,
     then    with probability at least $1-\delta$, 
    \begin{align*}
        \|\hat\tau - \tau_0\|_2^2 \leq \cc_3(n) \cdot r_1(\Gc,\delta/4)^2 + \cc_4(n) \cdot r_3(\T,\delta/4;\hat\eta,\hat g)^2,
    \end{align*}
    where $\cc_3(n), \cc_4(n)$ are bounded with probability arbitrarily close to 1 for large enough $n$.
\end{corollary}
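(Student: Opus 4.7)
\textbf{Proof proposal for Corollary \ref{cor:CATE_err_bound}.} The plan is to leverage Proposition \ref{thm:CATE_error_bdd} and convert its bound in terms of the excess risk $r_2(\T, \delta/2; \hat\eta, \hat g)$ into one in terms of the estimation error $r_3(\T, \delta/4; \hat\eta, \hat g)^2$. The key observation is that for the weighted squared loss \eqref{eq:weighted_square_loss_obs}, the excess risk admits an exact quadratic expansion, so it can be upper bounded by a constant multiple of $\|\hat\tau_{\hat\eta,\hat g} - \tau_0\|_2^2$ plus a lower-order contribution that is controlled by $k((\hat\eta,\hat g),(\eta_0,g_0))^2$ and hence by $r_1(\Gc,\cdot)^2$.

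Concretely, write $\tilde\ell(\tau;\eta,g) = W(O;\eta,g)\{\xi(O;\eta,g) - \tau(V)\}^2$ with $W(O;\eta,g) = \V(1;F,G,S_D)\omega(A,Z;\eta)$ and $\xi(O;\eta,g) = \gammaone(A,Z;\eta)\V(\nu;F,G,S_D)/\V(1;F,G,S_D) + \gammatwo(A,Z;\eta)$. Since the loss is quadratic in $\tau$, the second-order Taylor expansion is exact and yields
\begin{align*}
L(\hat\tau_{\eta,g},\eta,g) - L(\tau_0,\eta,g)
= \E\bigl[W(O;\eta,g)\{\hat\tau_{\eta,g}(V)-\tau_0(V)\}^2\bigr]
+ D_\tau L(\tau_0,\eta,g)[\hat\tau_{\eta,g}-\tau_0].
\end{align*}
Under Assumption \ref{ass:strict_positivity} and the boundedness of $\omega$ from Assumption \ref{ass:1-4}(ii), $|W(O;\eta,g)|$ is bounded almost surely whenever $g$ satisfies strict positivity, so the first term is at most $C\,\|\hat\tau_{\eta,g} - \tau_0\|_2^2$ for a random constant $C$ that is bounded with probability tending to $1$.

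For the linear term, use the newly strengthened first-order condition of Assumption \ref{ass:1-4}(i) with equality, i.e.\ $D_\tau \E\{\ell(T^*,A^*,Z^*;\tau_0,\eta_0)\}[\tau-\tau_0] = 0$ for all $\tau\in\T$, together with Lemma \ref{lem:DR_V} applied at $g=g_0$ (so $\beta^{-1}D_\tau L(\tau_0,\eta_0,g_0) = D_\tau \E\{\ell(\cdot;\tau_0,\eta_0)\}$). This shows $D_\tau L(\tau_0,\eta_0,g_0)[\tau-\tau_0]=0$, so the linear term equals the difference $D_\tau L(\tau_0,\eta,g)[\tau-\tau_0] - D_\tau L(\tau_0,\eta_0,g_0)[\tau-\tau_0]$. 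This is exactly the quantity already bounded in the proof of Proposition \ref{thm:CATE_error_bdd} while verifying Assumption \ref{ass:obs_1-4}(ii); the bound is $C'\|\hat\tau_{\eta,g}-\tau_0\|_2 \cdot k((\eta,g),(\eta_0,g_0))$, using Proposition \ref{prop:W_nuf(A,Z)}, Proposition \ref{prop:D_W_bound} applied to the differences $\Dlstarz(\eta)-\Dlstarz(\eta_0)$ and $\D_\V(g,g_0;\cdot)$, and the Cauchy--Schwarz inequality. Young's inequality then splits this cross term into $\tfrac12\|\hat\tau_{\eta,g}-\tau_0\|_2^2 + \tfrac12 k((\eta,g),(\eta_0,g_0))^2$.

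Combining the two estimates, on the high-probability event that $k((\hat\eta,\hat g),(\eta_0,g_0))\leq r_1(\Gc,\delta/4)$ and $\|\hat\tau_{\hat\eta,\hat g}-\tau_0\|_2 \leq r_3(\T,\delta/4;\hat\eta,\hat g)$, we obtain
\begin{align*}
L(\hat\tau_{\hat\eta,\hat g},\hat\eta,\hat g) - L(\tau_0,\hat\eta,\hat g)
\;\leq\; \tilde C_1 \, r_3(\T,\delta/4;\hat\eta,\hat g)^2 + \tilde C_2 \, r_1(\Gc,\delta/4)^2,
\end{align*}
valid with probability at least $1-\delta/2$ (after a union bound over the nuisance and second-stage events). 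Substituting this into the bound of Proposition \ref{thm:CATE_error_bdd}, applied with confidence level $\delta/2$ so that the total failure probability remains at most $\delta$, yields the claim with $\cc_3(n) = \cc_1(n) + \cc_2(n)\tilde C_2$ and $\cc_4(n) = \cc_2(n)\tilde C_1$; both are bounded with probability arbitrarily close to $1$ by Assumption \ref{k0_converge_to_0}. The main obstacle is a careful accounting of the probability splits so the resulting failure probability does not exceed $\delta$ and the constants $\cc_3(n),\cc_4(n)$ inherit the high-probability boundedness from Proposition \ref{thm:CATE_error_bdd}; everything else is a direct consequence of the quadratic structure of $\tilde\ell$ and the bounds already established for Proposition \ref{thm:CATE_error_bdd}.
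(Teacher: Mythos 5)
Your proposal is correct and follows essentially the same route as the paper: expand the quadratic excess risk exactly, bound the quadratic term via boundedness of the weight $W$, reduce the linear term using the first-order optimality with equality plus the $k$-bound already established in the proof of Proposition \ref{thm:CATE_error_bdd}, split by Young's inequality, and substitute into Proposition \ref{thm:CATE_error_bdd}. The paper packages the first three steps as a separate result (Proposition \ref{prop:CATE_error_bdd}), which yields $L(\hat\tau_{\eta,g},\eta,g) - L(\tau_0,\eta,g) \lesssim \|\hat\tau_{\eta,g}-\tau_0\|_2^2 + k((\eta,g),(\eta_0,g_0))^2$, so that the corollary follows in two lines by taking $r_2 \lesssim r_3^2 + r_1^2$; you re-derive that proposition's content inline. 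One small imprecision: the constant bounding $W(O;\eta,g)=\V(1;g)\omega(A,Z;\eta)$ is deterministic, not a ``random constant bounded with probability tending to $1$'' — the corollary assumes every $(\eta,g)\in\Gc$ satisfies strict positivity, which gives an almost-sure bound uniform over $\Gc$, and the only quantities requiring a high-probability boundedness argument are $\cc_1(n),\cc_2(n)$ inherited from Proposition \ref{thm:CATE_error_bdd} via Assumption \ref{k0_converge_to_0}. The probability-splitting bookkeeping you flag as the main obstacle is dispatched in the paper in a single line (WLOG $r_1(\Gc,\delta/2)\le r_1(\Gc,\delta/4)$), so it is lighter than you suggest.
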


\subsubsection{An oracle result}\label{sec:oracle_bound}

We note that the error bound $r_2(\T,\cdot;\hat\eta,\hat g)$  involved in the above Proposition \ref{thm:CATE_error_bdd} (and $r_3(\T,\cdot;\hat\eta,\hat g)$ in Corollary \ref{cor:CATE_err_bound}) is different from and the corresponding oracle bound $r_2(\T,\cdot;\eta_0,g_0)$. 
As discussed in \citet{foster2023orthogonal}, for many learning algorithms, the difference $r_2(\T,\cdot;\hat\eta,\hat g) - r_2(\T,\cdot;\eta_0,g_0)$ can be shown to be absorbed into $ r_1(\Gc,\cdot)^2$.
In particular,  \citet{foster2023orthogonal}  further provided the error bound results in terms of localized Rademacher complexity of the second-stage CATE function class when empirical risk minimization is used. As mentioned before, our observed data loss function is not strongly convex with respect to $\tau$ for all $(\eta,g)\in \Gc$, so once again the results in \citet{foster2023orthogonal} do not apply immediately. 
In the following, Proposition \ref{thm:CATE_error_rate_RC}  extends \citet{foster2023orthogonal} to our case.

Before stating the theorem, we first review the localized Rademacher complexity
\citep{bartlett2005local,koltchinskii2000rademacher,wainwright2019high}
and the critical radius of a function class. For a class $\mathcal{J}$ of one-dimensional functions of the covariates $V$, the localized Rademacher complexity is defined as
\begin{align*}
    \Rc_n(\mathcal{J},\crds) = \E\left\{ \sup_{f\in\mathcal{J}:\|f\|_2\leq\crds} \left|\frac{1}{n}\sum_{i=1}^n \epsilon_if(V_i) \right| \right\},
\end{align*}
where $\|f\|_2 = \E\{|f(V)|^2\}^{1/2}$ denote the $L_2$ norm, $\epsilon_1, ... ,\epsilon_n$ are independent Rademacher random variables, i.e., taking values 1 and -1 with probabilities 1/2 each, and the expectation is taken with respect to both the $\epsilon_i$'s and the $V_i$'s.
{$ \Rc_n(\mathcal{J},\crds) $ is the expected maximum correlation between $ f(V_1), ..., f(V_n)$ and the noise vector $\epsilon_1,...,\epsilon_n $; for a large function class it is more likely to find a function $f$ with a high correlation with a random noise vector \citep{wainwright2019high}. }

The critical radius concerns star-shaped function classes, which is defined in Appendix \ref{app:general_CATE_err_bound}.

Denote $\T - \tau_0 = \{\tau-\tau_0:\tau\in\T\}$.
The smallest $ \rho>0$ such that
$\Rc_n(\sstar(\T-\tau_0),\crds) \leq \crds^2$
is called the {\it critical radius} of $\sstar(\T-\tau_0)$; it is where the growth of $\Rc_n(\sstar(\T-\tau_0),\crds)$ as a function of $\rho$ becomes slower than quadratic. 
 The existence of such a $\rho$ is guaranteed because $\sstar(\T-\tau_0)$ is star-shaped; see \citet[Lemma 13.6]{wainwright2019high} for  detailed reasoning. In fact, if $\crds_0 > 0$ is a solution to the above inequality, so is any $\crds > \crds_0$; and $\rho^2$ on the right hand side of the inequality may be scaled by a constant $R$, as in the theorem below.
The critical radius reflects the complexity of the function class, 
with smaller function classes having smaller critical radii.
For example, parametric classes typically have critical radius of order $n^{-1/2}$,  
while the reproducing kernel Hilbert spaces with Gaussian kernels has critical radius $\lesssim \{\log(n+1)/n\}^{1/2}$ \citep[Chapters 13 and 14]{wainwright2019high}.

\begin{proposition}\label{thm:CATE_error_rate_RC}
    Under Assumptions \ref{ass:consistency}
    - \ref{ass:strict_positivity}, \ref{k0_converge_to_0}  and 
     \ref{ass:1-4},
     suppose that 
     $R = \max\{\sup_{\tau\in\T, v} |\tau(v)|, ~1\} < \infty$, and that all the nuisance parameters in $\Gc$ satisfy Assumption \ref{ass:strict_positivity} (strict positivity). 
    Let $\crds_n^2\gtrsim R^2 \log(\log(n))/n$
    be any solution to the inequality 
    \begin{align}
        \Rc_n(\sstar(\T-\tau_0),\crds) \leq \crds^2/R, \label{eq:ineq_delta_n}
    \end{align}
    where ``$\gtrsim$'' denotes ``$\geq$'' up to a universal constant factor.
If $\hat\tau$ is obtained from empirical risk minimization using  \eqref{eq:weighted_square_loss_obs} under sample splitting,
     then   with probability at least $1-\delta$, 
    \begin{align*}
        \|\hat\tau - \tau_0\|_2^2 \leq \cc_{5}(n) \cdot k((\hat\eta,\hat g), (\eta_0,g_0))^2 + \cc_{6}(n)\cdot \left\{\frac{\crds_n^2}{R^2} + \frac{\log(1/\delta)}{n} \right\},
    \end{align*}
     where $\cc_5(n), \cc_6(n)$ are bounded with probability arbitrarily close to 1 for large enough $n$.
\end{proposition}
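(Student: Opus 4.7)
The plan is to build on Proposition \ref{thm:CATE_error_bdd}, which has already reduced $\|\hat\tau-\tau_0\|_2^2$ to a bound of the form $\cc_1(n)\,r_1(\Gc,\delta/2)^2+\cc_2(n)\,r_2(\T,\delta/2;\hat\eta,\hat g)$ with multiplicative constants that, under Assumption \ref{k0_converge_to_0}, are bounded with probability tending to one. The first term is immediately controlled by $k((\hat\eta,\hat g),(\eta_0,g_0))^2$ via the definition in \eqref{eq:k}, so the remaining work is to replace the second-stage population excess risk $r_2(\T,\delta/2;\hat\eta,\hat g)$ by a quantity depending only on the complexity of $\T$, namely $\crds_n^2/R^2+\log(1/\delta)/n$.

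Conditioning on the first-stage sample, I would treat $(\hat\eta,\hat g)$ as fixed and view the second stage as empirical risk minimization of $\tilde\ell(\,\cdot\,;\hat\eta,\hat g)$ over $\T$. Under the strict-positivity in Assumption \ref{ass:strict_positivity} holding throughout $\Gc$, the boundedness in Assumption \ref{ass:1-4}, and $\sup_{\tau,v}|\tau(v)|\leq R$, the integrand of $\tilde\ell$ is almost surely bounded by a constant multiple of $R^2$ and is Lipschitz in $\tau(V)$ with a constant depending on $R$ and the positivity parameter $\delta$. These are the standard ingredients for a localized Rademacher complexity argument in the spirit of Theorem 2 of \citet{foster2023orthogonal}: the basic inequality $L_n(\hat\tau_{\hat\eta,\hat g};\hat\eta,\hat g)\leq L_n(\tau_0;\hat\eta,\hat g)$ combined with a Talagrand-type uniform concentration bound for the shifted class $\{\tilde\ell(\tau;\hat\eta,\hat g)-\tilde\ell(\tau_0;\hat\eta,\hat g):\tau\in\T\}$ yields, with probability at least $1-\delta/2$,
\[
L(\hat\tau_{\hat\eta,\hat g};\hat\eta,\hat g)-L(\tau_0;\hat\eta,\hat g)\;\lesssim\; R\,\Rc_n\!\left(\sstar(\T-\tau_0),\,\|\hat\tau_{\hat\eta,\hat g}-\tau_0\|_2\right)+\frac{R^2\log(1/\delta)}{n}.
\]
Feeding the critical-radius inequality \eqref{eq:ineq_delta_n} into the right-hand side linearizes the Rademacher term, and combining with the relaxed quadratic lower bound on excess risk from Assumption \ref{ass:obs_1-4}(iii) (as verified within the proof of Proposition \ref{thm:CATE_error_bdd}) gives $r_2(\T,\delta/2;\hat\eta,\hat g)\lesssim \crds_n^2/R^2+\log(1/\delta)/n$, up to higher-order nuisance terms that are absorbed into $k((\hat\eta,\hat g),(\eta_0,g_0))^2$.

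The main obstacle is that the observed-data loss $\tilde\ell(\,\cdot\,;\hat\eta,\hat g)$ need not be globally convex in $\tau$, because the effective weights $\V(1;\hat F,\hat G,\hat S_D)\,\omega(A,Z;\hat\eta)$ can be negative in finite samples owing to the augmentation terms in $\V$. This is precisely why the analysis cannot quote the strong-convexity-based localized Rademacher bounds of \citet{foster2023orthogonal} verbatim, and why Lemma \ref{lem:CATE_rate_general} was formulated with the relaxation $\lambda_1-\lambda_2\tilde\ksup(h,h_0)$ in place of a fixed curvature constant. The role of Assumption \ref{k0_converge_to_0} is to force $\tilde\ksup(\hat h,h_0)\to 0$, ensuring $\lambda_1-\lambda_2\tilde\ksup(\hat h,h_0)>0$ with probability approaching one; this keeps the multiplicative constants $\cc_5(n),\cc_6(n)$ finite on a high-probability event and completes the stated bound.
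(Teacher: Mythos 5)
The paper actually does not spell out a proof of this proposition; it defers to Theorem~3 of \citet{foster2023orthogonal} and to an external thesis (Section~2.9.4 of \citealt{wang2025towards}). Your sketch follows that indicated route: condition on the first-stage sample, invoke a Talagrand-type uniform concentration for the shifted loss class, linearize the localized Rademacher complexity at the critical radius via the star-shapedness of $\sstar(\T-\tau_0)$, and close the fixed-point loop using the relaxed strong-convexity verified in the proof of Proposition~\ref{thm:CATE_error_bdd}, with Assumption~\ref{k0_converge_to_0} guaranteeing $\lambda_1-\lambda_2\tilde\ksup(\hat h,h_0)>0$ with high probability so that $\cc_5(n),\cc_6(n)$ stay bounded. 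That is the correct template, and you correctly identify the main obstacle (the augmented weights $\V(1;\hat g)\,\omega(A,Z;\hat\eta)$ may be negative, so the strong-convexity step must be relaxed rather than quoted verbatim from \citealt{foster2023orthogonal}).

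There is, however, a genuine gap in your opening step. Proposition~\ref{thm:CATE_error_bdd} yields a bound in terms of the \emph{deterministic} first-stage radius $r_1(\Gc,\delta/2)^2$, and \eqref{eq:k} says $k((\hat\eta,\hat g),(\eta_0,g_0))\le r_1(\Gc,\delta)$ with high probability. This inequality runs in the \emph{wrong direction} for your claim: $r_1^2$ dominates $k^2$, so the first term in Proposition~\ref{thm:CATE_error_bdd} is not ``immediately controlled by'' $k^2$. Consequently, you cannot obtain the stated bound by using Proposition~\ref{thm:CATE_error_bdd} as a black box. What is actually needed is to re-enter the fixed-point argument of Lemma~\ref{lem:CATE_rate_general}, condition on the first-stage sample, and propagate the \emph{realized} discrepancy $k((\hat\eta,\hat g),(\eta_0,g_0))$ rather than its $(1-\delta/2)$-quantile $r_1$ --- exactly what \citet{foster2023orthogonal} do when passing from their Theorem~1 to Theorem~3. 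Your later text (conditioning on the first stage, treating $(\hat\eta,\hat g)$ as fixed) suggests you understand this, but you apply it only to the $r_2$ term; it must be applied to the first term as well. A secondary imprecision: the intermediate claim ``$r_2(\T,\delta/2;\hat\eta,\hat g)\lesssim \crds_n^2/R^2+\log(1/\delta)/n$'' is not a standalone excess-risk bound available before the fixed-point step --- the Rademacher term depends on $\|\hat\tau-\tau_0\|_2$, and it collapses to $\crds_n^2$ only after combining with the curvature lower bound; as written the reasoning is mildly circular, though the chain of ideas is the standard one.
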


According to \citet{wainwright2019high}, 
the oracle rate of $\|\hat\tau - \tau_0\|_2^2$ is $\crds_n^2$ for $\hat\tau$ obtained from empirical process minimization as in here.
From Proposition \ref{thm:CATE_error_rate_RC} we see that 
 the error bound for $\|\hat\tau - \tau_0\|_2^2$ is dominated by the oracle rate $\crds_n^2$ as long as $k((\hat\eta,\hat g), (\eta_0,g_0))$ converges to zero fast enough. 
Note that $\crds_n^2$ is no faster than $O_p(n^{-1})$,  so it suffice to have the product error $k((\hat\eta,\hat g), (\eta_0,g_0)) = o_p(n^{-1/2})$ to achieve the oracle error rate. In fact, we have the following Corollary \ref{cor:CATE_Op_rate} for the convergence rate of $\hat\tau$, with its proof in the Appendix.

\begin{corollary}\label{cor:CATE_Op_rate}
    Under the same conditions as in Proposition \ref{thm:CATE_error_rate_RC}, 
    suppose that $k((\hat\eta,\hat g), (\eta_0,g_0)) = O_p(a_n)$.  Let $\crds_n^2\gtrsim R^2 \log(\log(n))/n$ be any solution to the inequality \eqref{eq:ineq_delta_n}. Then 
    \begin{align*}
    \|\hat\tau - \tau_0\|_2 = O_p\left( a_n + \crds_n \right). 
    \end{align*}
\end{corollary}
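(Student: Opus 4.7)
My plan is to deduce the corollary as a fairly direct consequence of Proposition \ref{thm:CATE_error_rate_RC}, which already provides a high-probability bound of the form
\[
\|\hat\tau - \tau_0\|_2^2 \leq \cc_5(n)\cdot k((\hat\eta,\hat g),(\eta_0,g_0))^2 + \cc_6(n)\cdot\{\crds_n^2/R^2 + \log(1/\delta)/n\}.
\]
The task is to convert this into the $O_p$ statement in the corollary by suitably choosing $\delta$ and invoking the hypothesis $k((\hat\eta,\hat g),(\eta_0,g_0)) = O_p(a_n)$.

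First, I would fix an arbitrary $\epsilon>0$ and work on an event of probability at least $1-\epsilon$ for all sufficiently large $n$. The proposition tells us that $\cc_5(n)$ and $\cc_6(n)$ are bounded with probability arbitrarily close to $1$ for large $n$, so there exists a constant $M>0$ such that $\cc_5(n)\vee \cc_6(n)\leq M$ with probability $\geq 1-\epsilon/3$ for large $n$. Next, by the hypothesis on $k$, there exists $C_0>0$ such that $k((\hat\eta,\hat g),(\eta_0,g_0))\leq C_0 a_n$ with probability $\geq 1-\epsilon/3$ for large $n$. Finally, I would apply Proposition \ref{thm:CATE_error_rate_RC} with a fixed $\delta = \epsilon/3$, giving the stated bound on an additional event of probability $\geq 1-\epsilon/3$.

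On the intersection of these three events (probability $\geq 1-\epsilon$ for large $n$), the error bound reads
\[
\|\hat\tau-\tau_0\|_2^2 \leq M C_0^2\, a_n^2 + M\bigl\{\crds_n^2/R^2 + \log(3/\epsilon)/n\bigr\}.
\]
Since $R$ is a fixed constant under the assumption $\sup_{\tau\in\T,v}|\tau(v)|<\infty$, the term $\crds_n^2/R^2$ is of the same order as $\crds_n^2$. Moreover, the hypothesis $\crds_n^2 \gtrsim R^2 \log(\log n)/n$ implies $\log(3/\epsilon)/n = O(\crds_n^2/R^2)$ for large $n$ (since $\log(3/\epsilon)$ is a constant and $\log\log n\to\infty$), so this term can be absorbed. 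Taking square roots, and using $\sqrt{x+y}\leq \sqrt{x}+\sqrt{y}$, we obtain $\|\hat\tau-\tau_0\|_2 \leq C_1(a_n + \crds_n)$ for some constant $C_1$ depending on $\epsilon$. Since $\epsilon$ was arbitrary, this establishes $\|\hat\tau-\tau_0\|_2 = O_p(a_n+\crds_n)$.

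There is no real obstacle here: the proof is an exercise in translating a high-probability error bound into a stochastic-order statement, with the only mild subtleties being (i) verifying that $\log(1/\delta)/n$ can be safely absorbed into $\crds_n^2/R^2$ thanks to the lower bound $\crds_n^2\gtrsim R^2\log\log n/n$, and (ii) handling the three sources of randomness ($\cc_5,\cc_6$, the nuisance error $k$, and the concentration event from the proposition) simultaneously via a union bound with constant-level allocations.
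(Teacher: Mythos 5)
Your proof is correct and takes essentially the same route as the paper's: fix $\varepsilon>0$, allocate $\varepsilon/3$ to each of the three random events (boundedness of $\cc_5,\cc_6$; the $O_p(a_n)$ bound on $k$; the high-probability bound from Proposition~\ref{thm:CATE_error_rate_RC} with $\delta=\varepsilon/3$), combine via a union bound, absorb $\log(3/\varepsilon)/n$ into $\crds_n^2/R^2$ using $\crds_n^2\gtrsim R^2\log\log n/n$, and take square roots. The only cosmetic difference is that the paper re-derives the boundedness of $\cc_5,\cc_6$ from $\|\D_\V(\hat g,g_0;1)\|_{\infty}\convp 0$, whereas you invoke the boundedness statement already recorded in Proposition~\ref{thm:CATE_error_rate_RC} directly, which is equally valid.
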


\subsection{Proofs of the error bound and oracle result for weighted squared loss} \label{app:CATE_proof}

\subsubsection{Proofs of Proposition \ref{thm:CATE_error_bdd} and Corollary \ref{cor:CATE_err_bound}  }\label{app:proof_thm3}

\begin{proof}[Proof of Proposition \ref{thm:CATE_error_bdd}]

Recall that $g = (F,G,S_D)$, and $\D_\V$ from \eqref{eq:D_V}.
Let $\|\D_\V(g,g_0;1)\|_{\infty}$ denote the $L_\infty$ norm with respect to the random variables in $O$.
We will verify that the conditions in Assumption \ref{ass:obs_1-4} are satisfied with $h = (\eta,g)$,
$\|\cdot\|_{\T} = \|\cdot\|_2$, $r = 0$, 
$\tilde k(h,h_0) = k((\eta,g),(\eta_0,g_0))$ as in \eqref{eq:k_expression}, and 
$\tilde \ksup(h,h_0) = \|\D_\V(\hat g,g_0;1)\|_{\infty}$.
Then the conclusion follows by Lemma \ref{lem:CATE_rate_general}.

Recall that $L(\tau,\eta, g) =  \E\{\tilde\ell(\tau; \eta, g)\}$ denote the population risk where $\tilde\ell$ is given in \eqref{eq:weighted_square_loss_obs}.

By Lemma \ref{lem:DR_V} and Assumption \ref{ass:1-4} (i), we have that Assumption \ref{ass:obs_1-4} (i) holds. 

In the following, we will verify that Assumption \ref{ass:obs_1-4} (ii) and (iii) hold.

1)
We first verify Assumption \ref{ass:obs_1-4} (ii). 
We consider the decomposition 
\begin{align}
     D_\tau L(\tau_0, \eta,g)[\tau-\tau_0] - D_\tau L(\tau_0,\eta_0,g_0)[\tau-\tau_0] 
     & = \A_1 + \A_2, \label{eq:thm3_proof_8}
\end{align}
where 
\begin{align*}
    \A_1 & =  D_\tau L(\tau_0, \eta,g_0)[\tau-\tau_0] - D_\tau L(\tau_0,\eta_0,g_0)[\tau-\tau_0], \\
    \A_2 & =  D_\tau L(\tau_0, \eta,g)[\tau-\tau_0] - D_\tau L(\tau_0,\eta,g_0)[\tau-\tau_0]. 
\end{align*}
In the following, we will bound $|\A_1|$ and $|\A_2|$, respectively. 

We first show below the connection \eqref{eq:proof_thm3_Dequal} between $D_\tau \E\{\tilde\ell \}$  in the observed data  and $D_\tau \E\{\ell \}$  in the LTRC-free data. 
From \eqref{eq:weighted_square_loss_obs}, we have 
\begin{align}
    &\quad D_\tau L(\tau_0,\eta,g)[\tau-\tau_0] \nonumber \\
    & = D_\tau \E\{\tilde\ell(\tau_0; \eta,g) \}[\tau-\tau_0] \nonumber \\
    & = -2 \E\left[\V(1;g)\omega(A,Z;\eta) \left\{ \gammaone(A,Z;\eta) \cdot \frac{\V(\nu;g)}{\V(1;g)} +\gammatwo(A,Z;\eta) - \tau_0(V) \right\}  \{\tau(V)-\tau_0(V)\} \right] \nonumber \\
    & = - 2 \E\left[ \BB(\eta,g) \{\tau(V)-\tau_0(V)\} \right], \label{eq:thm3_proof_1}
\end{align}
where 
\begin{align}
    \BB(\eta,g) & = \omega(A,Z;\eta) \gammaone(A,Z;\eta)  \V(\nu; g) + \V(1; g) \omega(A,Z;\eta)\{\gammatwo(A,Z;\eta)- \tau_0(V)\}.  \label{eq:thm3_proof_B(eta,g)}
\end{align}
From \eqref{eq:weighted_square_loss}, we have 
\begin{align*}
    D_\tau\E\{\ell(T^*,A^*,Z^*;\tau,\eta)\}[\tau - \tau_0] & = -2 \E\left[ \Dlstar(\eta)\{\tau(V)-\tau_0(V)\} \right]
\end{align*}
where 
\begin{align}
    \Dlstar(\eta) = \omega(A^*,Z^*;\eta)\{\gammaone(A^*,Z^*;\eta)\nu(T^*) + \gammatwo(A^*,Z^*;\eta) - \tau_0(V^*)\}. \label{eq:Dlstar(eta)}
\end{align}
By Proposition \ref{prop:W_nuf(A,Z)}, 
\begin{align*}
    \E\{B(\eta,g_0)\varphi(Z)\} 
    & = \beta^{-1} \E\{\Dlstar(\eta)\varphi(Z)\},
\end{align*}
so
\begin{align}
    D_\tau L(\tau_0, \eta,g_0)[\tau-\tau_0]
    & = \beta^{-1} \cdot D_\tau\E\{\ell(T^*,A^*,Z^*;\tau,\eta)\}[\tau - \tau_0]. \label{eq:proof_thm3_Dequal}
\end{align}

We now bound $|\A_1|$. 
Recall Recall $\Dlstarz(\eta)$ from \eqref{eq:Dlstar_z} and $\Dlstar(\eta)$ from \eqref{eq:Dlstar(eta)}. 
We have $\Dlstarz(\eta) = \E\{\Dlstar(\eta) \mid Z^*\}$. 
By tower property of expectations, we have 
\begin{align*}
    D_\tau\E\{\ell(T^*,A^*,Z^*;\tau,\eta)\}[\tau - \tau_0] & = -2 \E\left[ \Dlstar(\eta)\{\tau(V)-\tau_0(V)\} \right] \\
    & = -2 \E\left[ \E\{\Dlstar(\eta)\mid Z^*\} \{\tau(V)-\tau_0(V)\} \right] \\
    & = -2 \E\left[ \Dlstarz(\eta)\{\tau(V)-\tau_0(V)\} \right].
\end{align*} 
This together with \eqref{eq:proof_thm3_Dequal} implies that 
\begin{align}
    |\A_1| 
    & = 2 \beta^{-1} \cdot \left| \E[ \{\Dlstarz(\eta) - \Dlstarz(\eta_0)\} \{\tau(V^*) - \tau_0(V^*)\} ]  \right| \nonumber \\
    & \leq  2\beta^{-1} \E\left[ |\Dlstarz(\eta) - \Dlstarz(\eta_0)| \cdot |\tau(V^*) - \tau_0(V^*)| \right] \nonumber \\
     &\leq 2\beta^{-1} \E\left\{|\Dlstarz(\eta) - \Dlstarz(\eta_0)|^2\right\}^{1/2} \cdot \E\{|\tau(V^*) - \tau_0(V^*)|^2\}^{1/2} \label{eq:CATE_rate_proof_DL*_1} \\
    &\lesssim \kstar(\eta,\eta_0) \cdot \|\tau - \tau_0\|_2, \label{eq:CATE_rate_proof_DL*_2}
\end{align}
where \eqref{eq:CATE_rate_proof_DL*_1} is by Cauchy-Schwarz inequality, and \eqref{eq:CATE_rate_proof_DL*_2} is by \eqref{eq:bdd_by_k} and Proposition \ref{prop:equiv_norms}.

We now bound $|\A_2|$. 
We consider the decomposition: 
\begin{align}
    \A_2 = \A_{21} + \A_{22}, \label{eq:thm3_proof_3}
\end{align}
where 
\begin{align*}
    \A_{21} & = \E[ \{\V(\nu;g) - \V(\nu;g_0)\} \BBB_1(\eta) ], \\
    \A_{22} & = \E[ \{\V(1; g) - \V(1; g_0)\} \BBB_2(\eta) ], \\
    \BBB_1(\eta) & = \omega(A,Z;\eta) \gammaone(A,Z;\eta)\{\tau(V)-\tau_0(V)\}, \\
    \BBB_2(\eta) & = \omega(A,Z;\eta)\{\gammatwo(A,Z;\eta)- \tau_0(V)\}\{\tau(V)-\tau_0(V)\}.
\end{align*}
So we have $|\A_2| \leq |\A_{21}| + |\A_{22}|$.
In the following, we will bound $|\A_{21}|$ and $|\A_{22}|$, respectively.
Recall $\D_\V(g,g_0;\nu)$ from \eqref{eq:D_V}.
By Proposition \ref{prop:D_W}, we have 
\begin{align*}
    \A_{21} & = \E[ \D_\V(g,g_0;\nu) \BBB_1(\eta)] .
\end{align*}
By Assumption \ref{ass:1-4} (ii) (iii), 
\begin{align*}
    |\BBB_1(\eta)| &\leq \constb \constc \cdot |\tau(V)-\tau_0(V)|.
\end{align*}
So 
\begin{align*}
    |\A_{21}|
    & \leq \constb \constc \cdot  \E\{ |\D_\V(g,g_0;\nu)| \cdot |\tau(V)-\tau_0(V)| \} \nonumber \\
    &\leq \constb \constc \cdot \|\D_\V(g,g_0;\nu)\|_2 \cdot \|\tau-\tau_0\|_2,\\
\end{align*}
where the last inequality holds by Cauchy–Schwarz inequality. 
From Proposition \ref{prop:D_W_bound},
\begin{align}
        \|\D_\V(g,g_0;\nu)\|_2
        &\lesssim \|F-F_0\|_{\sup,4} \cdot \|G-G_0\|_{\sup,4} + \|F-F_0\|_{\sup,4} \cdot \|S_D - S_{D0}\|_{\sup,4}  \nonumber \\
        &\quad + \|\tilde K_1(g,g_0;\nu)\|_2 + \|\tilde K_2(g,g_0;\nu)\|_2 + \|\tilde K_3(g,g_0;\nu)\|_2.
\end{align}
Therefore,
\begin{align}
    |\A_{21}| 
    & 
    \lesssim 
    \Big[ \|F-F_0\|_{\sup,4} \cdot\left\{ \|G-G_0\|_{\sup,4} + \|S_D - S_{D0}\|_{\sup,4} \right\}    \nonumber \\
        &\qquad  \left. + \|\tilde K_1(g,g_0;\nu)\|_2 + \|\tilde K_2(g,g_0;\nu)\|_2 + \|\tilde K_3(g,g_0;\nu)\|_2 \right] \cdot \|\tau-\tau_0\|_2 . \label{eq:thm3_proof_4}
\end{align}
Likewise, we can show that  
\begin{align}
    |\A_{22}| & \lesssim 
    \Big[ \|F-F_0\|_{\sup,4} \cdot\left\{ \|G-G_0\|_{\sup,4} + \|S_D - S_{D0}\|_{\sup,4} \right\}  \nonumber \\
    &\qquad \left. + \|\tilde K_1(g,g_0;1)\|_2 + \|\tilde K_2(g,g_0;1)\|_2 + \|\tilde K_3(g,g_0;1)\|_2 \right] \cdot \|\tau-\tau_0\|_2 . \label{eq:thm3_proof_5}
\end{align}
Recall the definitions of $K_1$, $K_2$, and $K_3$ from \eqref{eq:K_1} \eqref{eq:K_2} \eqref{eq:K_3}. Combining 
\eqref{eq:thm3_proof_4} \eqref{eq:thm3_proof_5}, we have 
\begin{align}
    |\A_{2}| 
    &\leq |\A_{21}| + |\A_{22}| \nonumber \\
    & \lesssim \Big[ \|F-F_0\|_{\sup,4} \cdot\left\{ \|G-G_0\|_{\sup,4} + \|S_D - S_{D0}\|_{\sup,4} \right\}  \nonumber \\
    &\qquad  + \|K_1(g,g_0)\|_2 + \|K_2(g,g_0)\|_2 + \|K_3(g,g_0)\|_2 \Big] \cdot \|\tau-\tau_0\|_2. \label{eq:thm3_proof_6}
\end{align}

Recall $k((\eta,g), (\eta_0,g_0))$ from \eqref{eq:k_expression}. 
Combining  \eqref{eq:thm3_proof_8} \eqref{eq:CATE_rate_proof_DL*_2} \eqref{eq:thm3_proof_6}, we have 
\begin{align}
    \big|D_\tau L(\tau_0, \eta,g)[\tau-\tau_0] - D_\tau L(\tau_0,\eta_0,g_0)[\tau-\tau_0] \big| 
    & \leq |\A_{1}| + |\A_{2}|  \nonumber \\
    &\lesssim  k((\eta,g),(\eta_0,g_0)) \cdot \|\tau-\tau_0\|_2.  \label{eq:thm3_proof_Btau}
\end{align}
Therefore, Assumption \ref{ass:obs_1-4} (ii) holds with $r = 0$.

2)
Again recall $\D_\V$ from \eqref{eq:D_V}, and 
 that $\|\D_\V(g,g_0;1)\|_{\infty}$ denote the $L_\infty$ norm with respect to the random variables in $O$. Note that $\|\D_\V(\hat g,g_0;1)\|_{\infty}$ is random due to $\hat g$.

We now verify that Assumption \ref{ass:obs_1-4} (iii) holds with 
$\kappa = 0$, 
$\tilde \ksup(h,h_0) = \|\D_\V(\hat g,g_0;1)\|_{\infty}$, $\|\cdot\|_{\T} = \|\cdot\|_2$, and $r = 0$. 

We have 
\begin{align}
    D_\tau^2 L(\bar\tau,\eta,g)[\tau-\tau_0,\tau-\tau_0] 
    & = D_\tau^2 \E\{\tilde\ell(\bar\tau; \eta,g)\}[\tau-\tau_0,\tau-\tau_0] \\
    &= 2\E\left[ \V(1;g)\omega(A,Z;\eta) \{\tau(V)- \tau_0(V)\}^2 \right] \nonumber \\
    & \geq 2\Ec_1 - 2|\Ec_2|, \label{eq:CATE_rate_proof_1}
\end{align}
where 
\begin{align}
    \Ec_1 &= \E\left[ \V(1;g_0)\omega(A,Z;\eta) \{\tau(V)- \tau_0(V)\}^2 \right], \nonumber \\
    \Ec_2 & = \E\left[ \{\V(1;g) - \V(1;g_0)\} \omega(A,Z;\eta) \{\tau(V)- \tau_0(V)\}^2 \right]. \nonumber
\end{align}
We first consider $\Ec_1$. 
By Proposition \ref{prop:W_nuf(A,Z)}, 
\begin{align}
    \Ec_1 
    & = \beta^{-1} \E\left[ \omega(A^*,Z^*;\eta) \{\tau(V^*)- \tau_0(V^*)\}^2 \right] 
    \geq \beta^{-1}\consta \E\left[|\tau(V^*)- \tau_0(V^*)|^2 \right],  \nonumber
\end{align}
where the last inequality holds by Assumption \ref{ass:1-4} (ii).
By Proposition \ref{prop:equiv_norms}, 
$\E[\{\tau(V^*) - \tau_0(V^*)\}^2] \geq \beta\E[\{\tau(V) - \tau_0(V)\}^2] = \beta\|\tau-\tau_0\|_2^2$, so
\begin{align}
    \Ec_1 
    &\geq \consta \|\tau-\tau_0\|_2^2. \label{eq:CATE_rate_proof_2}
\end{align}
We now consider $\Ec_2$. Recall from \eqref{eq:D_V} the definition of $\D_\V$. By Proposition \ref{prop:D_W},
\begin{align}
    |\Ec_2| 
    & = \left| \E\left[ \D_\V(g,g_0;1) \omega(A,Z;\eta) \{\tau(V)- \tau_0(V)\}^2 \right] \right| \nonumber \\
    & \leq \constb \cdot \E\left[ |\D_\V(g,g_0;1)|\cdot |\tau(V)- \tau_0(V)|^2 \right] \label{eq:CATE_rate_proof_5} \\
    &\leq \constb \cdot \|\D_\V(g,g_0;1)\|_{\infty} \cdot \|\tau-\tau_0\|_2^2 \label{eq:CATE_rate_proof_8}
\end{align}
where \eqref{eq:CATE_rate_proof_5} holds by Assumption \ref{ass:1-4} (ii), and \eqref{eq:CATE_rate_proof_8} holds by the definitions of $\|\D_\V(g,g_0;1)\|_{\infty}$ and $\|\tau-\tau_0\|_2$. 
Combining the above, Assumption \ref{ass:obs_1-4} (iii) is satisfied with $\lambda_1 = 2\consta$, $\lambda_2 = 2\constb$, $\kappa = 0$, and $\tilde \ksup(h,h_0) = \|\D_\V(\hat g,g_0;1)\|_{\infty}$.

Therefore, by Lemma \ref{lem:CATE_rate_general}, the error bound in Proposition \ref{thm:CATE_error_bdd} follows with $\cc_1(n)$ and $\cc_2(n)$ satisfying 
\begin{align}
    \cc_1(n) \lesssim \frac{1}{\{\consta - \constb \cdot\|\D_\V(\hat g,g_0;1)\|_{\infty} \}^2}, 
    \quad 
    \cc_2(n) \lesssim \frac{1}{\consta - \constb \cdot \|\D_\V(\hat g,g_0;1)\|_{\infty}} \label{eq:c1(n)_c2(n)}
\end{align}
if $\|\D_\V(\hat g,g_0;1)\|_{\infty} < \consta \constb^{-1}$; and $\cc_1(n) = \cc_2(n) = \infty$ otherwise. 

Finally, we show that $\cc_1(n)$ and $\cc_2(n)$ are bounded with probability arbitrarily close to 1 for large enough $n$ if Assumption \ref{k0_converge_to_0} holds.
By Proposition \ref{prop:D_W_bound}, 
\begin{align}
    \|\D_\V(\hat g,g_0;1)\|_{\infty} & \lesssim \|\hat F-F_0\|_{\infty} \cdot\left\{ \|\hat G-G_0\|_{\infty} + \|\hat S_D - S_{D0}\|_{\infty} \right\} \nonumber \\
    &\qquad + \|\tilde K_1(\hat g,g_0;1)\|_{\infty} + \|\tilde K_2(\hat g,g_0;1)\|_{\infty} + \|\tilde K_3(\hat g,g_0;1)\|_{\infty}. \label{eq:\ksup}
\end{align}
This together with Assumption \ref{k0_converge_to_0} implies that $\|\D_\V(\hat g,g_0;1)\|_{\infty} \convp 0$ as $n\to\infty$. Therefore, $\cc_1(n)$ and $\cc_2(n)$ are bounded with probability arbitrarily close to 1 for large enough $n$.

\end{proof}

\vspace{3em}
To prove Corollary \ref{cor:CATE_err_bound}, we first show the following Proposition \ref{prop:CATE_error_bdd}, which connects the second stage estimation error  with  the excess risk. 
Corollary \ref{cor:CATE_err_bound} then immediately follows from Proposition \ref{thm:CATE_error_bdd} with $\eta = (\pi,F)$.
Recall the definition of $k((\eta,g),(\eta_0,g_0))$ in \eqref{eq:k_expression}. 

\begin{proposition}
    \label{prop:CATE_error_bdd}
    Under Assumptions \ref{ass:consistency}
      - \ref{ass:strict_positivity} and
     \ref{ass:1-4},
     in addition suppose that Assumption \ref{ass:1-4} (i) holds with ``=".
    Suppose 
    that all $(\eta,g)\in \Gc$ satisfy Assumption \ref{ass:strict_positivity} (strict positivity). Then 
    \begin{align*}
        L(\hat\tau_{\eta,g},\eta,g) - L(\tau_0,\eta,g) 
        \lesssim \|\hat\tau_{\eta,g} - \tau_0\|_2^2 + k((\eta,g),(\eta_0,g_0))^2
    \end{align*}
    for all $(\eta,g) \in \Gc$.
\end{proposition}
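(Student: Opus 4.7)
The proof will follow an exact Taylor expansion of the population risk $L(\tau,\eta,g)$ in $\tau$ around $\tau_0$ for fixed $(\eta,g)$. Since $\tilde\ell(\tau;\eta,g)$ in \eqref{eq:weighted_square_loss_obs} is exactly quadratic in $\tau(V)$ with weight $\V(1;g)\omega(A,Z;\eta)$ that does not depend on $\tau$, the Taylor remainder vanishes and we have the identity
\begin{align*}
L(\hat\tau_{\eta,g},\eta,g) - L(\tau_0,\eta,g)
&= D_\tau L(\tau_0,\eta,g)[\hat\tau_{\eta,g}-\tau_0] \\
&\quad + \tfrac12\, D_\tau^2 L(\tau_0,\eta,g)[\hat\tau_{\eta,g}-\tau_0,\hat\tau_{\eta,g}-\tau_0].
\end{align*}

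For the second-order term, the computation just below \eqref{eq:CATE_rate_proof_1} in the proof of Proposition \ref{thm:CATE_error_bdd} gives $D_\tau^2 L(\tau_0,\eta,g)[h,h] = 2\E[\V(1;g)\omega(A,Z;\eta)\, h(V)^2]$. Under the hypothesis that $(\eta,g)$ satisfies Assumption \ref{ass:strict_positivity}, the explicit expression \eqref{eq:V(1)} shows $\V(1;g)$ is bounded almost surely, and $\omega$ is bounded by Assumption \ref{ass:1-4}(ii), so this term is bounded above by a constant multiple of $\|\hat\tau_{\eta,g}-\tau_0\|_2^2$.

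For the first-order term, I would exploit the strengthened version of Assumption \ref{ass:1-4}(i) (with ``='') together with the identity \eqref{eq:proof_thm3_Dequal} between observed-data and LTRC-free-data directional derivatives evaluated at the truths to deduce $D_\tau L(\tau_0,\eta_0,g_0)[\hat\tau_{\eta,g}-\tau_0] = 0$. Subtracting this zero and invoking the already-established bound \eqref{eq:thm3_proof_Btau} from the proof of Proposition \ref{thm:CATE_error_bdd} yields
\begin{align*}
\big|D_\tau L(\tau_0,\eta,g)[\hat\tau_{\eta,g}-\tau_0]\big|
\lesssim k\bigl((\eta,g),(\eta_0,g_0)\bigr)\cdot\|\hat\tau_{\eta,g}-\tau_0\|_2.
\end{align*}

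Finally, applying Young's inequality $ab \leq \tfrac12 a^2 + \tfrac12 b^2$ converts the product into a sum of squares, and combining with the second-order bound delivers the claim. The main conceptual ingredient is the vanishing of $D_\tau L(\tau_0,\eta_0,g_0)[\cdot]$, which is precisely what requires strengthening the first-order inequality in Assumption \ref{ass:1-4}(i) to an equality; once this is in hand, nuisance perturbations to $(\eta,g)$ enter the first-order term only through the already-quantified difference \eqref{eq:thm3_proof_Btau}, and the rest of the argument is a short consequence of results established in the preceding proof.
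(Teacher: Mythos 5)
Your proof is correct and is essentially the paper's argument recast as an exact second-order Taylor expansion in $\tau$: the first-order term $D_\tau L(\tau_0,\eta,g)[\cdot]$ matches the paper's $-2(\B_2+\B_3)$, where $\B_3=0$ by the strengthened Assumption \ref{ass:1-4}(i), $\B_2$ is controlled via \eqref{eq:thm3_proof_Btau}, and the second-order term $\tfrac12 D_\tau^2 L[\cdot,\cdot]$ is the paper's $\B_1$, bounded by $\|\hat\tau_{\eta,g}-\tau_0\|_2^2$ through boundedness of $\V(1;g)$ and $\omega$; the final Young's inequality step is identical.
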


\begin{proof}[Proof of Proposition \ref{prop:CATE_error_bdd}]

    In the following we will show that for any $\tau\in\T$, 
    \begin{align}
        \left|L(\tau,\eta,g) - L(\tau_0,\eta,g) \right|
        \lesssim \|\tau - \tau_0\|_2^2 + k((\eta,g),(\eta_0,g_0))^2.  \label{eq:cor2_proof_1}
    \end{align}
    Then the conclusion follows by replacing $\tau$ by $\hat\tau_{\eta,g}$.
    
Recall that $g = (F,G,S_D)$.
The observed data loss function constructed from the weighted squared loss in the LTRC-free data is:
\begin{align}
    \tilde\ell(\tau; \eta,g) 
     & = \V(1;g)\omega(A,Z;\eta) \left\{ \gammaone(A,Z;\eta) \cdot \frac{\V(\nu;g)}{\V(1;g)} +\gammatwo(A,Z;\eta) - \tau(V) \right\}^2. \nonumber
\end{align}
Recall from \eqref{eq:thm3_proof_B(eta,g)} that we denote 
\begin{align}
    \BB(\eta,g) & = \omega(A,Z;\eta) \gammaone(A,Z;\eta)  \V(\nu; g) + \V(1; g) \omega(A,Z;\eta)\{\gammatwo(A,Z;\eta)- \tau_0(V)\}.  \nonumber 
\end{align}
With some algebra, we have 
\begin{align*}
    \tilde\ell(\tau; \eta, g) - \tilde\ell(\tau_0;\eta, g) 
    & = \V(1;g)\omega(A,Z;\eta)\left\{\tau(V) - \tau_0(V) \right\}^2 
    -2 \BB(\eta,g)\left\{\tau(V) - \tau_0(V) \right\}. 
\end{align*}
So 
\begin{align}
    L(\tau,\pi, g) - L(\tau_0,\pi, g) 
    &=\E\{\tilde\ell(\tau; \pi, g)\} - \E\{\tilde\ell(\tau_0; \pi, g)\} 
    = \B_{1} - 2\B_2 - 2\B_3,  \label{eq:prop5_proof_1}
\end{align}
where
\begin{align*}
    \B_1 & = \E\left[ \V(1;g)\omega(A,Z;\eta)\left\{\tau(V) - \tau_0(V) \right\}^2 \right], \\
    \B_2 & = \E\left[ \{\BB(\eta,g) - \BB(\eta_0,g_0)\} \left\{\tau(V) - \tau_0(V) \right\} \right], \\
    \B_3 & = \E\left[ \{\BB(\eta_0,g_0)\} \left\{\tau(V) - \tau_0(V) \right\} \right]. 
\end{align*}
In the following, we will bound $|\B_1|$ and $|\B_2|$, and show that $\B_3 = 0$.

We first consider $|\B_1|$.
For any $(\eta,g)\in\Gc$, since $(\eta,g)$ satisfies Assumption \ref{ass:strict_positivity}, $\V(1;g)$ is bounded almost surely. 
This together with Assumption \ref{ass:1-4} (ii) implies 
\begin{align}
    |\B_1| & \lesssim \E\left[ \left\{\tau(V) - \tau_0(V) \right\}^2 \right] 
    = \|\tau - \tau_0\|_2^2.  \label{eq:prop5_proof_B1}
\end{align}

We now bound $|\B_2|$. 
Recall the bound for $\big| \E\left[ \{\BB(\eta,g) - \BB(\eta_0,g_0)\} \left\{\tau(V) - \tau_0(V) \right\} \right] \big|$ from \eqref{eq:thm3_proof_Btau}, so we have 
\begin{align}
    |\B_2| 
    & \lesssim k((\eta,g),(\eta_0,g_0)) \cdot \|\tau-\tau_0\|_2 \nonumber \\
    &\leq \frac{1}{2} k((\eta,g),(\eta_0,g_0))^2 + \frac{1}{2} \|\tau - \tau_0\|_2^2,  \label{eq:prop5_proof_B2}
\end{align}
where the last inequality holds by the fact that for any real number $a,b$, $ab \leq a^2/2 + b^2/2$.

We now consider $\B_3$. Recall $\ell$ from \eqref{eq:weighted_square_loss}. 
By Proposition \ref{prop:W_nuf(A,Z)}, 
\begin{align}
    \B_3 & = \beta^{-1} \E\left[ \omega(A^*,Z^*;\eta_0)\{\gammaone(A^*,Z^*;\eta_0)\nu(T^*) + \gammatwo(A^*,Z^*;\eta_0) - \tau_0(V^*)\}\{\tau(V^*) - \tau_0(V^*)\} \right] \nonumber\\
    & = -\frac{1}{2} D_\tau \E\{ \ell(T^*,A^*,Z^*;\tau_0,\eta)\}[\tau- \tau_0] \nonumber \\
    &= 0, \label{eq:prop5_proof_B3}
\end{align}
where the last equation holds because Assumption \ref{ass:1-4} (i) holds with ``=". 

Combining \eqref{eq:prop5_proof_1} \eqref{eq:prop5_proof_B1} \eqref{eq:prop5_proof_B2} \eqref{eq:prop5_proof_B3}, we have 
\begin{align}
    \left| L(\tau,\eta, g) - L(\tau_0,\eta, g) \right| 
    \leq |\B_1| + 2|\B_2| + 2|\B_3| 
    \lesssim \|\tau - \tau_0\|_2^2 + k((\eta,g),(\eta_0,g_0))^2. 
\end{align}

\end{proof}

\begin{proof}[Proof of Corollary \ref{cor:CATE_err_bound}]
By the definitions of $r_1, r_2, r_3$, 
Proposition \ref{prop:CATE_error_bdd} implies that we can take $r_2$ such that
\begin{align*}
    r_2(\T,\delta/2;\eta,g) \lesssim r_3(\T,\delta/4;\eta,g) + r_1(\Gc,\delta/4)^2. 
\end{align*}
Recall from Proposition \ref{thm:CATE_error_bdd} the notation of $\cc_1(n)$ and $\cc_2(n)$, and recall from \eqref{eq:c1(n)_c2(n)} the upper bound of $\cc_1(n)$ and $\cc_2(n)$.
Without loss of generality, suppose $r_1(\Gc,\delta/2) \leq r_1(\Gc,\delta/4)$. Then Proposition \ref{thm:CATE_error_bdd} implies Corollary \ref{cor:CATE_err_bound}
with constants $\cc_3(n)$, $\cc_4(n)$ satisfying 
\begin{align*}
        \cc_3(n) \lesssim \frac{1}{\{\consta - \constb  \cdot \|\D_\V(\hat g,g_0;1)\|_{\infty} \}^2}, 
        \quad 
        \cc_4(n) \lesssim \frac{1}{\consta - \constb  \cdot \|\D_\V(\hat g,g_0;1)\|_{\infty}},
\end{align*}
Since Assumption \ref{k0_converge_to_0} holds, by \eqref{eq:\ksup}, $\|\D_\V(\hat g,g_0;1)\|_{\infty} \convp 0$ as $n\to\infty$, so we have that $\cc_3(n)$ and $\cc_4(n)$ are bounded with probability arbitrarily close to 1 for large enough $n$.

\end{proof}

\subsubsection{Proofs of Proposition \ref{thm:CATE_error_rate_RC}  and Corollary \ref{cor:CATE_Op_rate} } \label{sec:CATE_oracle}

The proof of Proposition \ref{thm:CATE_error_rate_RC} is very similar to that of Theorem 3 in \citet{foster2023orthogonal},  and can be found in Section 2.9.4 of \citet{wang2025towards}.

\begin{proof}[Proof of Corollary \ref{cor:CATE_Op_rate}]
    For any solution $\crds_n$ to the inequality \eqref{eq:ineq_delta_n} that satisfies  $\crds_n^2\gtrsim R^2 \log(\log(n))/n$, 
     by Proposition \ref{thm:CATE_error_rate_RC}, we have that or any $\varepsilon>0$, with probability at least $1-\varepsilon/3$, 
    \begin{align}
        \|\hat\tau - \tau_0\|_2^2 \leq \cc_{5}(n) \cdot k((\hat\eta,\hat g), (\eta_0,g_0))^2 + \cc_{6}(n)\cdot \left\{\frac{\crds_n^2}{R^2} + \frac{\log(3/\varepsilon)}{n} \right\}, \nonumber 
    \end{align}
    where
    \begin{align*}
        \cc_{5}(n) &\leq \frac{\constg}{\{\consta - \constb \cdot \|\D_\V(\hat g,g_0;1)\|_{\infty} \}^2}, \\
        \cc_{6}(n) &\leq \frac{\consth}{\{\consta -\constb \cdot \|\D_\V(\hat g,g_0;1)\|_{\infty} \}^2} + \frac{\consth}{\consta -\constb \cdot \|\D_\V(\hat g,g_0;1)\|_{\infty} }, 
    \end{align*}
    for some universal constants $\constg, \consth>0$, as shown in the proof of Proposition \ref{thm:CATE_error_rate_RC}; and recall from \eqref{eq:D_V} the definition of $\D_\V$ and that  $\|\D_\V(g,g_0;1)\|_{\infty}$ denote the $L_\infty$ norm with respect to the random variable $O$. 
    
    By Assumption \ref{k0_converge_to_0} and \eqref{eq:\ksup}, $\|\D_\V(\hat g,g_0;1)\|_{\infty} \convp 0$ as $n\to\infty$. Therefore, for any $\varepsilon>0$, there exist an integer $N_1$ such that for any $n\geq N_1$, 
    \begin{align*}
    \PP\{ \|\D_\V(\hat g,g_0;1)\|_{\infty}  > \consta \constb^{-1}/2\} < \varepsilon/3.
    \end{align*}
    Under the event $\{ \|\D_\V(\hat g,g_0;1)\|_{\infty} \leq  \consta \constb^{-1}/2 \}$, we have 
    \begin{align*}
    \cc_{5}(n) \leq 4\consta^{-2}\constg, \quad 
    \cc_{6}(n) \leq (4\consta^{-2}+2\consta^{-1})\consth . 
   \end{align*}
    Moreover, $k((\hat\eta,\hat g), (\eta_0,g_0)) = O_p(a_n)$, so there exists $\consti>0$ and integer $N_2$ such that for all $n\geq N_2$, 
    \begin{align*}
    \PP\{k((\hat\eta,\hat g), (\eta_0,g_0)) > \consti a_n\} < \varepsilon/3.
    \end{align*}

    Combining the above, 
    there exist $C_\varepsilon>0$ such that for any $n\geq\max(N_1,N_2)$,  
    \begin{align*}
        \PP\left\{\|\hat\tau - \tau_0\|_2^2 \leq C_\varepsilon \cdot  \left(a_n^2 + \crds_n^2+n^{-1}\right) \right\} \geq 1-\varepsilon, 
    \end{align*}
    that is,
    \begin{align*}
        \PP\left( \frac{\|\hat\tau - \tau_0\|_2^2}{a_n^2 + \crds_n^2+n^{-1}} \geq C_\varepsilon \right)
        & \leq \varepsilon. 
    \end{align*}
    Therefore, 
    \begin{align*}
        \|\hat\tau - \tau_0\|_2^2 
        & = O_p\left(a_n^2 + \crds_n^2 + n^{-1}\right)
         = O_p\left(a_n^2 + \crds_n^2 \right).
    \end{align*}
    where the last equality holds because $\crds_n^2  \gtrsim R^2 \log(\log(n))^{1/2}/n$.
    So 
    \begin{align*}
        \|\hat\tau - \tau_0\|_2 
        & = O_p\left( a_n + \crds_n \right).
    \end{align*}
\end{proof}

\subsection{Instantiate the oracle results 
}\label{app:instantiate_oracle_results}

We now instantiate the oracle results for the ltrcR- and ltrcDR-learners and show the proof of the Theorems in Section \ref{sec:CATE}. 
Recall that $g = (F,G,S_D)$. Since $\eta = (\pi,F)$ for both the R-loss and the DR-loss, we have $(\eta,g) = (\pi,F,G,S_D) = (\pi,g)$.



\begin{proof}[Proof of Theorem \ref{thm:CATE_ltrcR}]

We will verify that Assumption \ref{ass:1-4} is satisfied. 
We will also show that 
$\kstar((\pi,F),(\pi_0,F_0)) \lesssim \|\pi - \pi_0\|_{4}^2 + \|\pi - \pi_0\|_{4} \cdot \|F-F_0\|_{\sup,4}$, which implies that 
\begin{align}
    k((\pi,g),(\pi_0,g_0)) 
    & \lesssim \|\pi - \pi_0\|_{4}^2 + \|F-F_0\|_{\sup,4}\cdot\left\{ \|\pi - \pi_0\|_{4}+ \|G-G_0\|_{\sup,4} + \|S_D - S_{D0}\|_{\sup,4} \right\}  \nonumber \\
    &\quad + \|K_1(g,g_0)\|_2 + \|K_2(g,g_0)\|_2 + \|K_3(g,g_0)\|_2. 
\end{align}
Therefore, 
 Propositions \ref{thm:CATE_error_bdd} and \ref{thm:CATE_error_rate_RC}  imply that the estimation error of the nuisance parameters $(\pi,F,G,S_D)$ only has second-order impacts on the estimation error rate for CATE from the ltrcR-learner through the above $k((\hat \pi,\hat g),(\pi_0,g_0))$; 
Corollary \ref{cor:CATE_Op_rate} implies Theorem \ref{thm:CATE_ltrcR}.

Note that the R-loss is a special case of \eqref{eq:weighted_square_loss} with $\eta = (\pi,F)$, $\omega(A^*,Z^*;\eta) = \{A^*-\pi(Z^*)\}^2$, $\gammaone(A^*,Z^*;\eta) = \{A^*-\pi(Z^*)\}^{-1}$, and $\gammatwo(A^*,Z^*;\eta) = -\tilde\mu(Z^*;\pi,F)/\{A^*-\pi(Z^*)\}$. By Assumption \ref{ass:strict_positivity} (i), we have that Assumption \ref{ass:1-4} (ii) (iii) are satisfied.
\citet{foster2023orthogonal} showed that the R-loss satisfies Assumption \ref{ass:1-4} (i) {with ``="} if there were no LTRC. 

We now show an upper bound for $k^*((\pi,F),(\pi_0,F_0))$. Recall $\eta = (\pi,F)$. 
Recall from \eqref{eq:bdd_by_k} \eqref{eq:Dlstar_z}: 
\begin{align}
  \kstar(\eta,\eta_0) =  \E\{|\Dlstarz(\eta) - \Dlstarz(\eta_0)|^2\}^{1/2} = \|\Dlstarz(\eta) - \Dlstarz(\eta_0)\|_2.
\end{align}
We will show that for the R-loss, 
\begin{align}
        \kstar(\eta,\eta_0)
        \lesssim \|\pi - \pi_0\|_{4}^2 + \|\pi - \pi_0\|_{4} \cdot \|F-F_0\|_{\sup,4}. \label{eq:instantiate_R_LHS}
\end{align}
From  \eqref{eq:Dlstar_z},
\begin{align*}
    \Dlstarz(\eta) 
    &= \E[\{A^* - \pi(Z^*)\}\{\nu(T^*) - \tilde\mu(Z^*;\pi,F)\} \mid Z^*]  - \E\left[\left. \{A^*-\pi(Z^*)\}^2 \right| Z^* \right] \tau_0(Z^*).
\end{align*}
Since $\nu(T^*) = A^*\nu\{T^*(1)\} + (1-A^*)\nu\{T^*(0)\}$, $\tilde\mu(Z^*;\pi,F) = \pi(Z^*) \mu(1,Z^*;F) + \{1-\pi(Z^*)\}\mu(0,Z^*;F)$, $\E(A^*|Z^*) = \pi_0(Z^*)$, and $\E[\nu\{T^*(a)\}|Z^*] = \mu(a,Z^*,F_0)$ for $a = 1,0$, we have after some algebra,
\begin{align*}
    \Dlstarz(\eta) - \Dlstarz(\eta_0)
    & = \{\pi(Z^*)-\pi_0(Z^*)\}\pi_0(Z^*) \{\mu(1,Z^*;F) - \mu(1,Z^*;F_0)\} \\
    &\quad + \{\pi(Z^*)-\pi_0(Z^*)\}\{1-\pi_0(Z^*)\}\{\mu(0,Z^*;F) - \mu(0,Z^*;F_0)\} \\
    &\quad + \{\pi(Z^*)-\pi_0(Z^*)\}^2 \mu(1,Z^*;F) - \{\pi(Z^*)-\pi_0(Z^*)\}^2\mu(0,Z^*;F) \\
    &\quad - \{\pi(Z^*) - \pi_0(Z^*)\}^2\tau_0(Z^*).
\end{align*}
By Cauchy-Schwartz inequality, 
\begin{align}
     \|\Dlstarz(\eta) - \Dlstarz(\eta_0)\|_2
    &\lesssim \|\pi - \pi_0\|_4 \cdot \|\mu(1,Z^*;F) - \mu(1,Z^*;F_0)\|_4 \nonumber \\
    &\quad + \|\pi - \pi_0\|_4 \cdot \|\mu(0,Z^*;F) - \mu(0,Z^*;F_0)\|_4 \nonumber \\
    &\quad + \|\pi - \pi_0\|_4^2. \label{eq:instantiation_R_proof1}
\end{align}
In addition, by Proposition \ref{prop:equiv_norms} and after some algebra, we have
\begin{align*}
    \|\mu(1,Z^*;F) - \mu(1,Z^*;F_0)\|_4 & \lesssim \|F - F_0\|_{\sup, 4}, \\
    \|\mu(0,Z^*;F) - \mu(0,Z^*;F_0)\|_4 & \lesssim \|F - F_0\|_{\sup, 4}, 
\end{align*}
which together with \eqref{eq:instantiation_R_proof1} concludes the proof. 

\end{proof}


\begin{proof}[Proof of Theorem \ref{thm:CATE_ltrcDR}]

In the following, we will verify that Assumption \ref{ass:obs_1-4} is satisfied and show an upper bound for $k^*((\pi,F),(\pi_0,F_0))$, which  implies Theorem \ref{thm:CATE_ltrcDR}.

Note that the DR-loss is a special case of \eqref{eq:weighted_square_loss} with $\eta = (\pi,F)$, $\omega(A^*,Z^*;\eta) = 1$, $\gammaone(A^*,Z^*;\eta) = \{A^*-\pi(Z^*)\}/[\pi(Z^*)\{1-\pi(Z^*)\}]$, and $\gammatwo(A^*,Z^*;\eta) = - \mu(A^*,Z^*;F)\{A^*-\pi(Z^*)\}/[\pi(Z^*)\{1-\pi(Z^*)\}] + \mu(1,Z^*;F) - \mu(0,Z^*;F)$. By Assumption \ref{ass:strict_positivity} (i), we immediately see that Assumption \ref{ass:1-4} (ii) (iii) hold.
\citet{foster2023orthogonal} showed that the DR-loss satisfies Assumption \ref{ass:1-4} (i) {with ``="} if there were no LTRC. 
With a similar as the proof of Theorem \ref{thm:CATE_ltrcR}, we can show that the DR-loss satisfies \eqref{eq:bdd_by_k} with 
$\kstar((\pi,F),(\pi_0,F_0)) \lesssim \|\pi - \pi_0\|_{4}\cdot \|F-F_0\|_{\sup,4}$.

Similar to the proof of Theorem \ref{thm:CATE_ltrcR}
for the ltrcDR-learner, 
we can show that 
$\kstar((\pi,F),(\pi_0,F_0)) \lesssim \|\pi - \pi_0\|_{4} \cdot \|F-F_0\|_{\sup,4}$, which implies 
\begin{align}
    k((\pi,g),(\pi_0,g_0)) 
    & \lesssim \|F-F_0\|_{\sup,4}\cdot\left\{ \|\pi - \pi_0\|_{4} + \|G-G_0\|_{\sup,4} + \|S_D - S_{D0}\|_{\sup,4} \right\}  \nonumber \\
    &\quad + \|K_1(g,g_0)\|_2 + \|K_2(g,g_0)\|_2 + \|K_3(g,g_0)\|_2. 
\end{align}
Therefore, 
Propositions \ref{thm:CATE_error_bdd} and \ref{thm:CATE_error_rate_RC} imply that the estimation error of the nuisance parameters $(\pi,F,G,S_D)$ only impacts the estimation error of $\hat\tau$ in terms of the above product and integral product errors; 
Corollary \ref{cor:CATE_Op_rate} implies the Theorem for the ltrcDR-learner in the main paper.

\end{proof}

\clearpage
\section{Additional simulation details and results}\label{app:simulation}

\subsection{ATE}\label{app:simu_ATE_n500}

We generate 
the covariates $Z^* = (Z_1^*,Z_2^*)$ with 
$Z_1^*, Z_2^*$ from independent {Uniform}(-1,1). 
For $a=1,0$, we generate the potential outcome variables and treatment assignment as follows. 
\begin{itemize}
\item $T^*(a) \sim$  Weibull distributions with shape parameter 1.5 and scale parameter $\exp\{-1.5^{-1}(-2 + 0.4a + 0.2Z_1^* + 0.3Z_2^*)\}$, {plus} a location shift $\tau_1$; 
\item $Q^*(a)$  is generated such that $\tau_2-Q^*(a)$ follows the following proportional hazards (PH) model:
$\lambda(t|Z^*) = \lambda_{0}(t) \exp(-0.6 + 0.6 a + 0.4 Z_1^* + 0.2Z_2^*),$
where 
$\lambda_{0}(t)$ is the hazard function associated with the $\text{Uniform}(0,\tau_2)$ distribution;
\item $D^*(a) \sim$  Weibull distribution with shape parameter 1 and scale parameter $\exp(1.5 - 0.3a - 0.1Z_1^* - 0.2Z_2^*)$; 
\item  $A^*$ is generated from the logistic model with
 $   \pi(Z^*) = {\exp(Z_1^* - Z_2^*)}/\{1+ \exp(Z_1^* - Z_2^*)\}$.
\end{itemize}
We then take $T^* = A^*T^*(1) + (1-A^*)T^*(0)$, $Q^* = A^*Q^*(1) + (1-A^*)Q^*(0)$, and $D^* = A^*D^*(1) + (1-A^*)D^*(0)$. 
Only subjects with $Q^*<T^*$ are observed, 
and we take $\tau_1 = 1$ and $\tau_2 = 5$ in the above resulting in about 22.8\% truncation. 
The rate of $A=1$ is about 50\% in the full data, and about 43.9\% in the truncated data. The censoring rate is about 47.6\% in the observed data.
Under the above data generating mechanism, the  distributions of $T^*$ given $(A^*, Z^*)$, $\tau_2 -Q^*$ given $(A^*, Z^*)$, and $D$ given $(Q,A,Z)$ all follow  the PH models. 

In estimating $F,G$ and $S_D$, `pCox' is regularized Cox regression using the R package ``\texttt{glmnet}'' with $L_2$ penalty and 
\texttt{s = lambda.min} to select the best tuning parameter.
For continuous variables 
we {generate} natural spline basis functions with 7 degrees of freedom {using} the \texttt{ns()} function in the R package `\texttt{splines}'. 
The {regressors} in  ``\texttt{glmnet}'' include the following: squared terms and interaction terms between the spline basis functions, interaction terms between the spline basis functions and binary variables, and interaction terms between binary variables (if applicable). For our simulation, there are two continuous covariates $(Z_1,Z_2)$ and one binary variable $A$ for estimation of $F$ and $G$, resulting in 134 regressors in the model; and there are three continuous covariates $(Z_1,Z_2,Q)$ and one binary covariate $A$ for estimation of $S_D$, resulting in 274 regressors in the model.

For estimation of  $G$, the  models are fitted on the reversed time scale. In particular, we take $t_1 = \max_i T_i +1$ when computing the reversed-time variables $t_1-X$ and $t_1-Q$.

For estimating $\pi$, 
`gbm' is generalized boosted regression models implemented in the R package ``\texttt{gbm}'' with \texttt{ntree = 2000}.

\begin{table}[h]
	\centering
 \renewcommand{\arraystretch}{0.6}
	\caption{Simulation results for estimates of $\theta=\PP\{T^*(1)>3\} - \PP\{T^*(0)>3\} = -0.1163$  with sample size 500. The models in \mis{red} are misspecified, while the models in \smis{orange} are correctly specified  with weights estimated from the misspecified models.
 } 
	\label{tab:simu_n500_dr_cf_simu2}
\begin{tabular}{llrrrr}
  \toprule
 Methods & $F$/$\pi$-$G$-$S_D$ & bias & SD & SE/bootSE & CP/bootCP \\ 
\midrule
   $\hat\theta$ & Cox1/lgs1-Cox1-Cox1 & -0.0001 & 0.0689 & 0.0710/0.0735 & 0.956/0.962 \\
   & \red{Cox2}/lgs1-Cox1-Cox1 & -0.0002 & 0.0675 & 0.0715/0.0721 & 0.964/0.964 \\
   & Cox1/\org{lgs1}-\red{Cox2}-Cox1 & 0.0017 & 0.0640 & 0.0644/0.0678 & 0.956/0.962 \\ 
   & Cox1/\red{lgs2}-Cox1-Cox1 & -0.0002 & 0.0640 & 0.0653/0.0691 & 0.950/0.964 \\ 
   & Cox1/\org{lgs1}-\org{Cox1}-\red{Cox2} & 0.0033 & 0.0711 & 0.0741/0.0751 & 0.956/0.962 \\  
   & \red{Cox2}/\org{lgs1}-\red{Cox2}-Cox1 & 0.0046 & 0.0642 & 0.0679/0.0686 & 0.964/0.968 \\
   & \red{Cox2}/\red{lgs2}-Cox1-Cox1 & 0.0177 & 0.0631 & 0.0657/0.0676 & 0.960/0.960 \\ 
   & \red{Cox2}/\org{lgs1}-\org{Cox1}-\red{Cox2} & -0.0015 & 0.0696 & 0.0745/0.0739 & 0.966/0.966 \\ 
   \midrule 
   $\hat\theta_{cf}$ & pCox/gbm-pCox-pCox & -0.0568 & 0.1325 & 0.1359/0.1894 & 0.932/0.985 \\ 
   \midrule
   IPW & - /lgs1-Cox1-Cox1 & -0.0008 & 0.0698 & 0.0725/0.0721 & 0.954/0.942 \\  
   & - /\org{lgs1}-\red{Cox2}-Cox1 & 0.0738 & 0.0658 & 0.0699/0.0676 & 0.818/0.804 \\  
   & - /\red{lgs2}-Cox1-Cox1 & 0.0070 & 0.0669 & 0.0669/0.0687 & 0.950/0.948 \\ 
   & - /\org{lgs1}-\org{Cox1}-\red{Cox2} & -0.0341 & 0.0709 & 0.0729/0.0727 & 0.930/0.916 \\ 
   & - /gbm-pCox-pCox & 0.0226 & 0.0887 & 0.0886/0.1035 & 0.948/0.972 \\  
   \midrule
   full data &  & 0.0008 & 0.0247 & 0.0260/0.0260 & 0.964/0.962 \\   
   
\bottomrule
 \end{tabular}
 \end{table}

\subsection{CATE}

We generate  $A^*$ and $Z^*$  the same  as for the ATE. 
For $T^*(a)$ 
we consider the following three scenarios with $\nu(t) = \log(t)$, 
where $\epsilon(1)$ and $\epsilon(0)$ are independent and centered, i.e.~mean zero, Weibull distribution with shape parameter 2 and scale parameter 0.04: 
\begin{itemize}
\item[(i)] $\log\{T^*(a)\} = 0.8+0.2a-0.2aZ_1^* + 0.2 \sqrt{|Z_2^*|} + \epsilon(a)$, giving $\tau(z) =  0.2 -0.2z_1$; 
\item[(ii)] $  \log\{T^*(a)\} = 0.8+0.2a-0.2 a\cdot\{(Z_1^*+Z_2^*)/2\}^2 + 0.2 Z_2^* + \epsilon(a)$, giving $\tau(z) = 0.2 -0.2\{(z_1+z_2)/2\}^2$;  
\item[(iii)] $  \log\{T^*(a)\} = 0.8+0.2a-0.2 a\cdot\sin(\pi Z_1^*) + 0.2 a\sqrt{|Z_2^*|} + \epsilon(a)$, giving $\tau(z) = 0.2 - 0.2\sin(\pi z_1) + 0.2 \sqrt{|z_2|}$.
\end{itemize}
$Q^*(a)$  is generated such that $\tau_2-Q^*(a)$ follows the PH model:
$\lambda_a(t|Z^*) = \lambda_{0}(t) \exp(-0.6 + 0.4 Z_1^* + 0.5 aZ_2^*)$, 
where $\lambda_{0}(t)$ is the hazard function associated with the $\text{Uniform}(0,\tau_2)$ distribution. 
Finally $D^*(a)$ 
is generated in the same way as Section \ref{sec:simu_ATE}.
Only subjects with $Q^*<T^*$ are observed, and  
we take $\tau_2 = 5$ in the above resulting in around 28\% truncation. 
The rate of $A=1$ is around 50\% in the full data, and around 44\% in the truncated data. The censoring rate is around 50\% in the observed data. 

The IPW.S-learner has loss function 
    \begin{align}
        \tilde\ell_S(\tau; F,G,S_D) = 
        \frac{\Delta}{S_D(X-Q|Q,A,Z)G(X|A,Z)} \  \left\{\mu(1,Z; {F}) - \mu(0,Z;{F}) - \tau(Z)\right\}^2. \nonumber
    \end{align}

\subsubsection{Details for tuning parameter selection for extreme gradient boosting}\label{app:tuning_selection}

The tuning parameters involved in the extreme gradient boosting implemented in the R package \texttt{xgboost} are chosen from the following candidate set:
\begin{itemize}
    \item \texttt{subsample}: 0.5, 0.7, 0.9;
    \item \texttt{colsample\_bytree}: 0.6, 0.8, 1;
    \item \texttt{eta}: 0.001, 0.005, 0.01, 0.02, 0.05, 0.08, 0.1;
    \item \texttt{max\_depth}: 2, 3, 4, 5, 6;
    \item \texttt{gamma}: 0, 0.5, 1, 2, 3, 5;
    \item \texttt{min\_child\_weight}: 10, 15, 20, ... , 50;
    \item \texttt{max\_delta\_step}: 0, 2, 4, ... , 10.
\end{itemize}

There are in total $10^5$ 
combinations from the above candidate set, making an exhaustive search for selecting the best tuning parameters computationally expensive. We therefore adopt the random search algorithm, which is also used in \citet[code on GitHub]{nie2021quasi}. The algorithm with $N_s \  (= 200$ in simulations and 500 for HAAS data application) random searches is stated below: 
\begin{enumerate}
    \item For each random search,  a set of tuning parameters {is randomly selected} from the above candidate sets.
    \item With each set of tuning parameters, 10 fold cross validation (CV) {is used} to compute the mean squared error of the model using extreme gradient boosting with the number of trees ranging between 1 to 1000; the number of trees with the smallest mean squared error is selected as the best number of trees for this set of tuning parameter.
    \item {By comparing} the mean squared errors from the fitted model with the best number of trees across the $N_s$ random searches, the set of tuning parameter and the corresponding best number of trees with the smallest mean squared error is selected to fit the final model.
\end{enumerate}

\subsubsection{Visualization of the estimated CATE surface from one simulated data set}\label{app:CATE_simu_3Dplots}

The following plots visualize the true CATE surface and the estimated CATE surface from the learners in Section \ref{sec:simu_CATE} for one simulated data set under Scenarios (i) (ii) and (iii), respectively. 

\begin{figure}[h]
\centering
\begin{subfigure}[H]{0.48\textwidth}
		\includegraphics[width=1\textwidth]{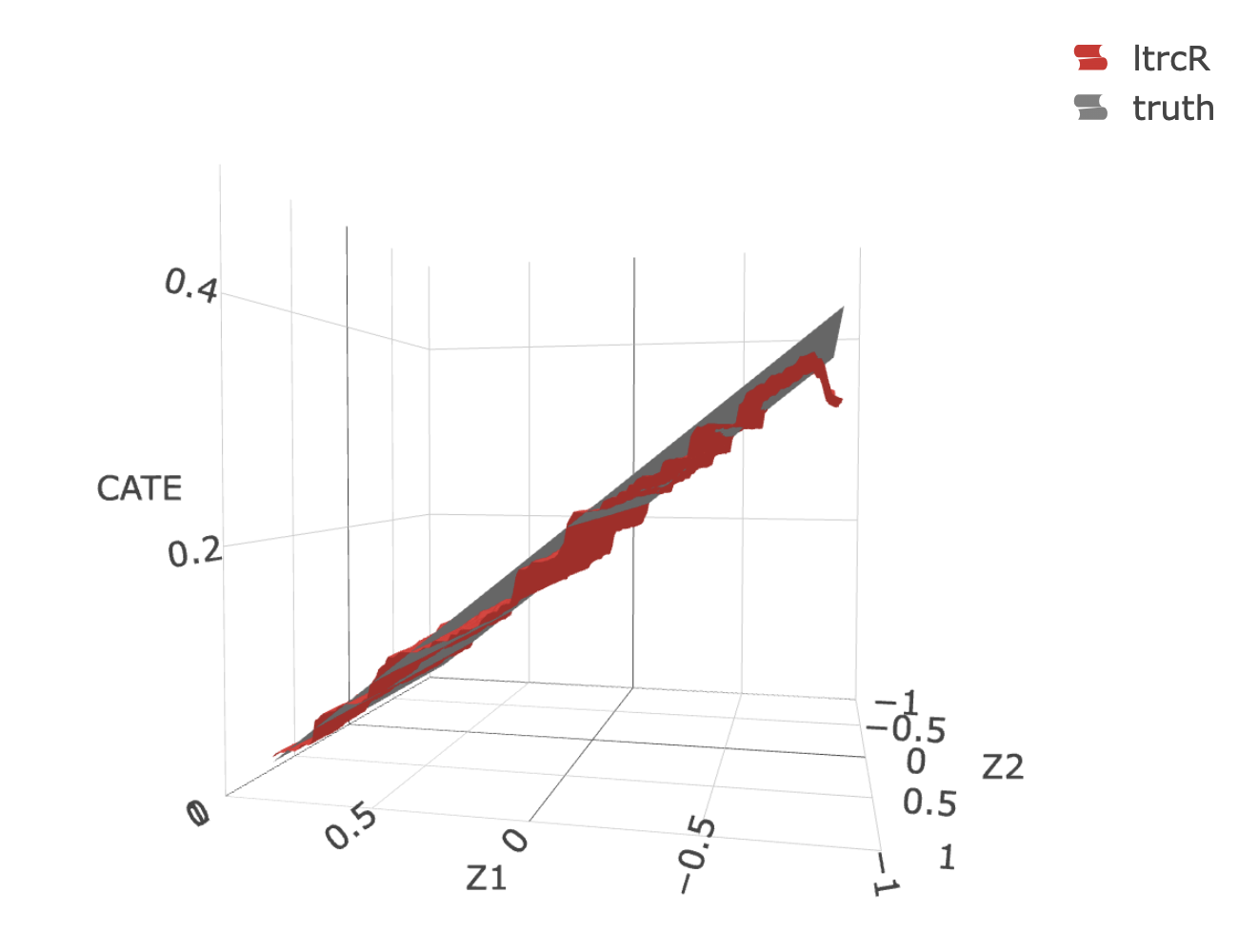}
	\end{subfigure}
    \hfill
 \begin{subfigure}[H]{0.48\textwidth}
		\includegraphics[width=1\textwidth]{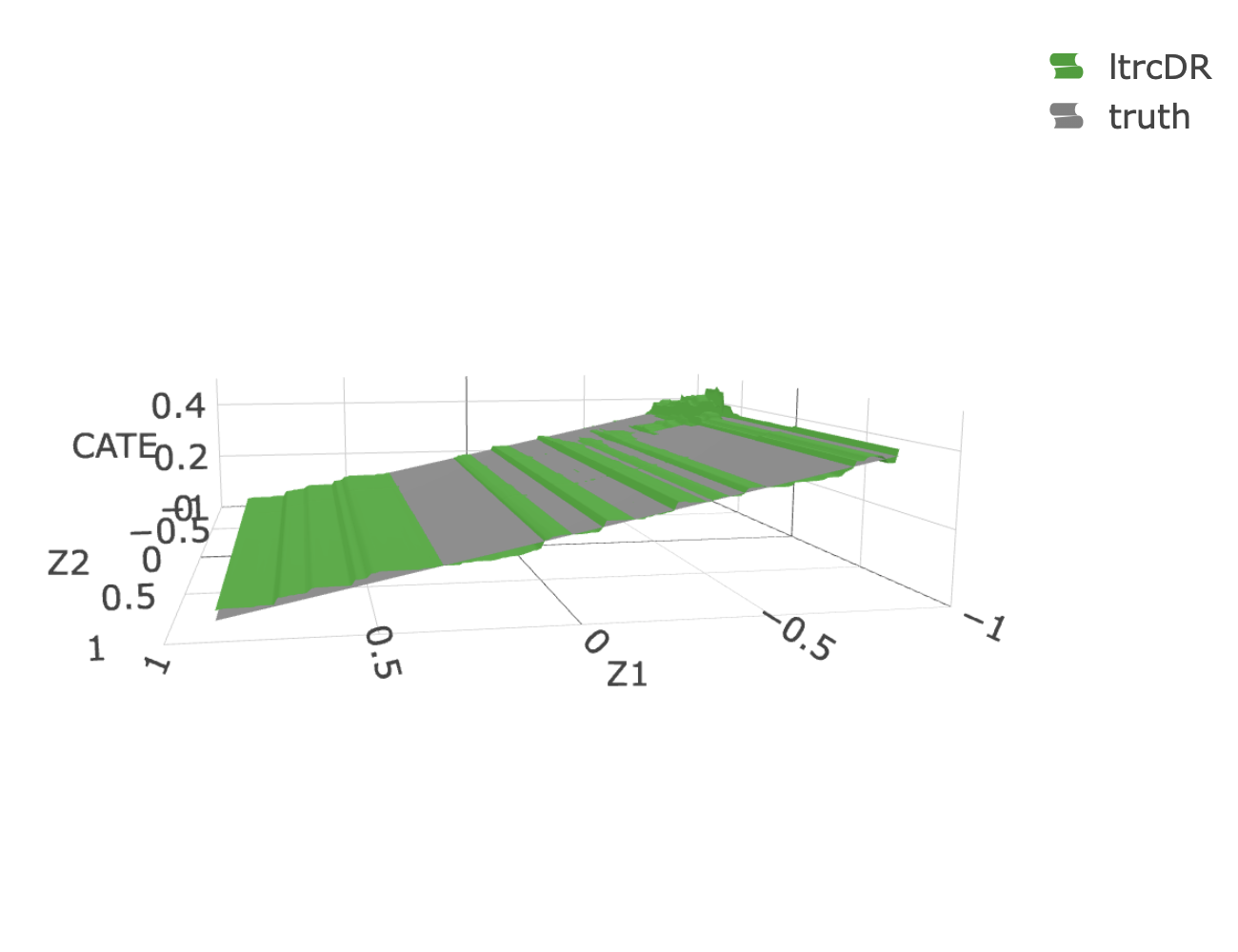}
\end{subfigure}

\vspace{2em}

 \begin{subfigure}[H]{0.48\textwidth}
		\includegraphics[width=1\textwidth]{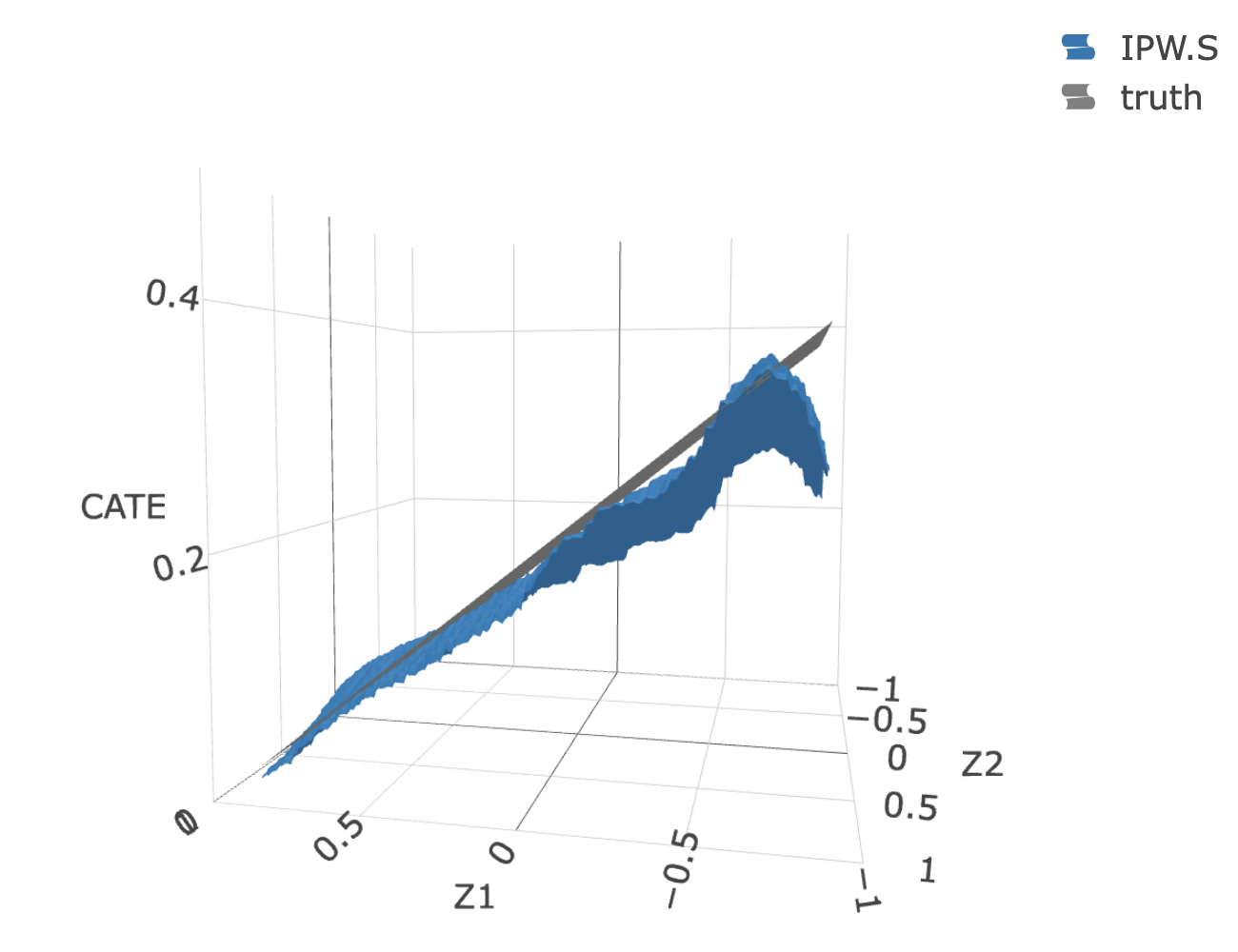}
\end{subfigure}
\caption{3D-plots for the true CATE surface and the estimated CATE surface for one simulated data set from Scenario (i).} 
\end{figure}

\begin{figure}[h]
\centering
\begin{subfigure}[H]{0.48\textwidth}
		\includegraphics[width=1\textwidth]{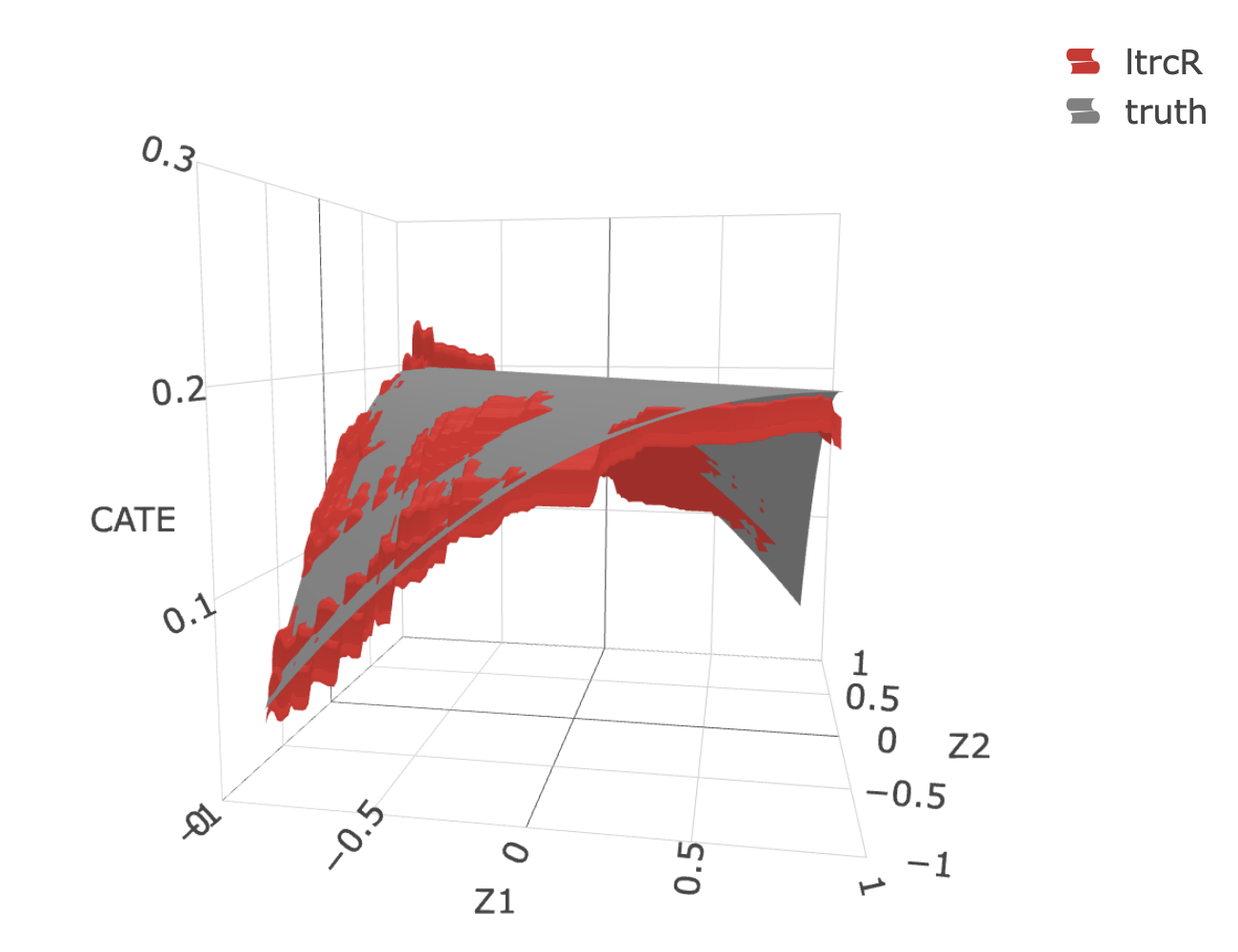}
	\end{subfigure}
    \hfill
 \begin{subfigure}[H]{0.48\textwidth}
		\includegraphics[width=1\textwidth]{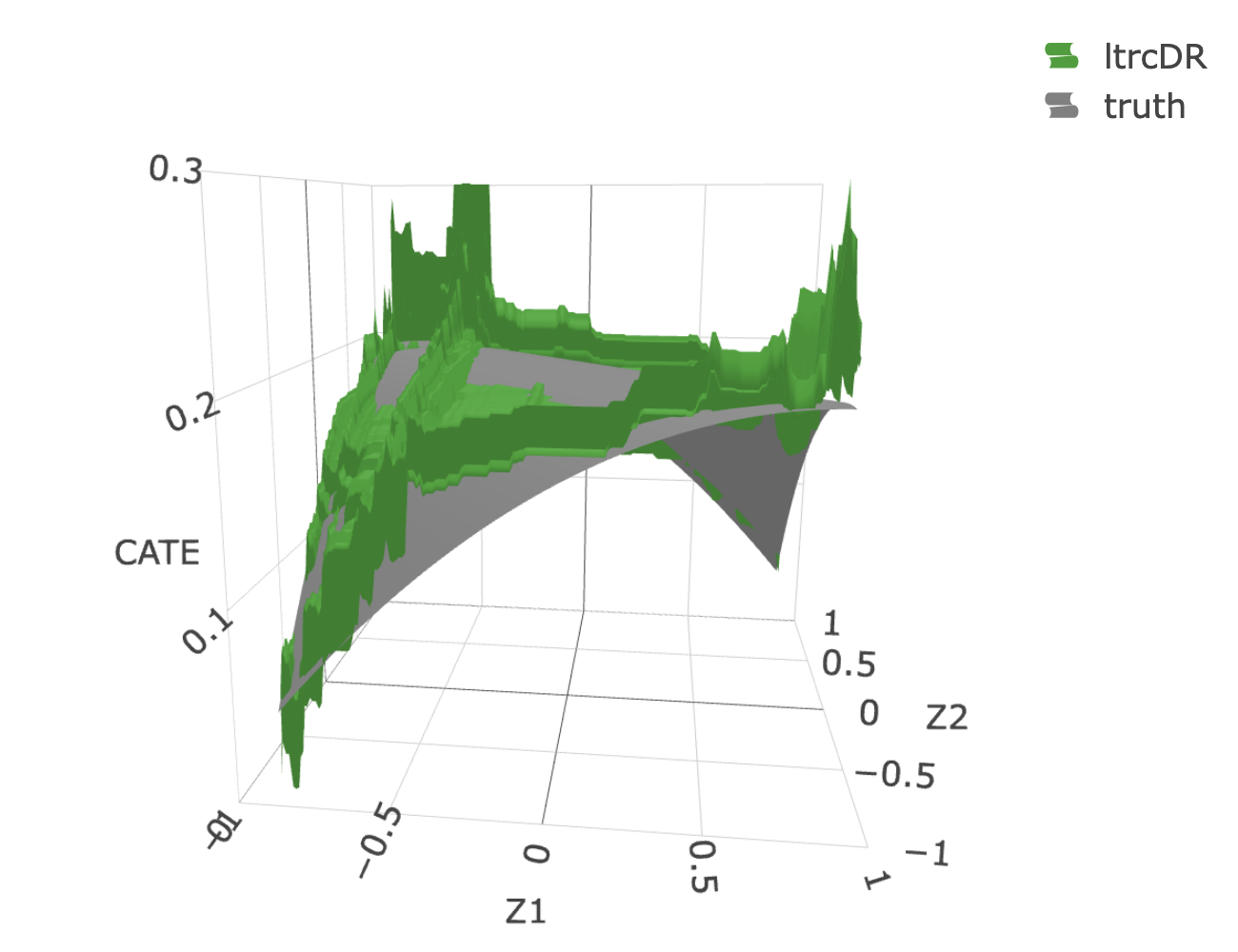}
\end{subfigure}

\vspace{2em}

 \begin{subfigure}[H]{0.48\textwidth}
		\includegraphics[width=1\textwidth]{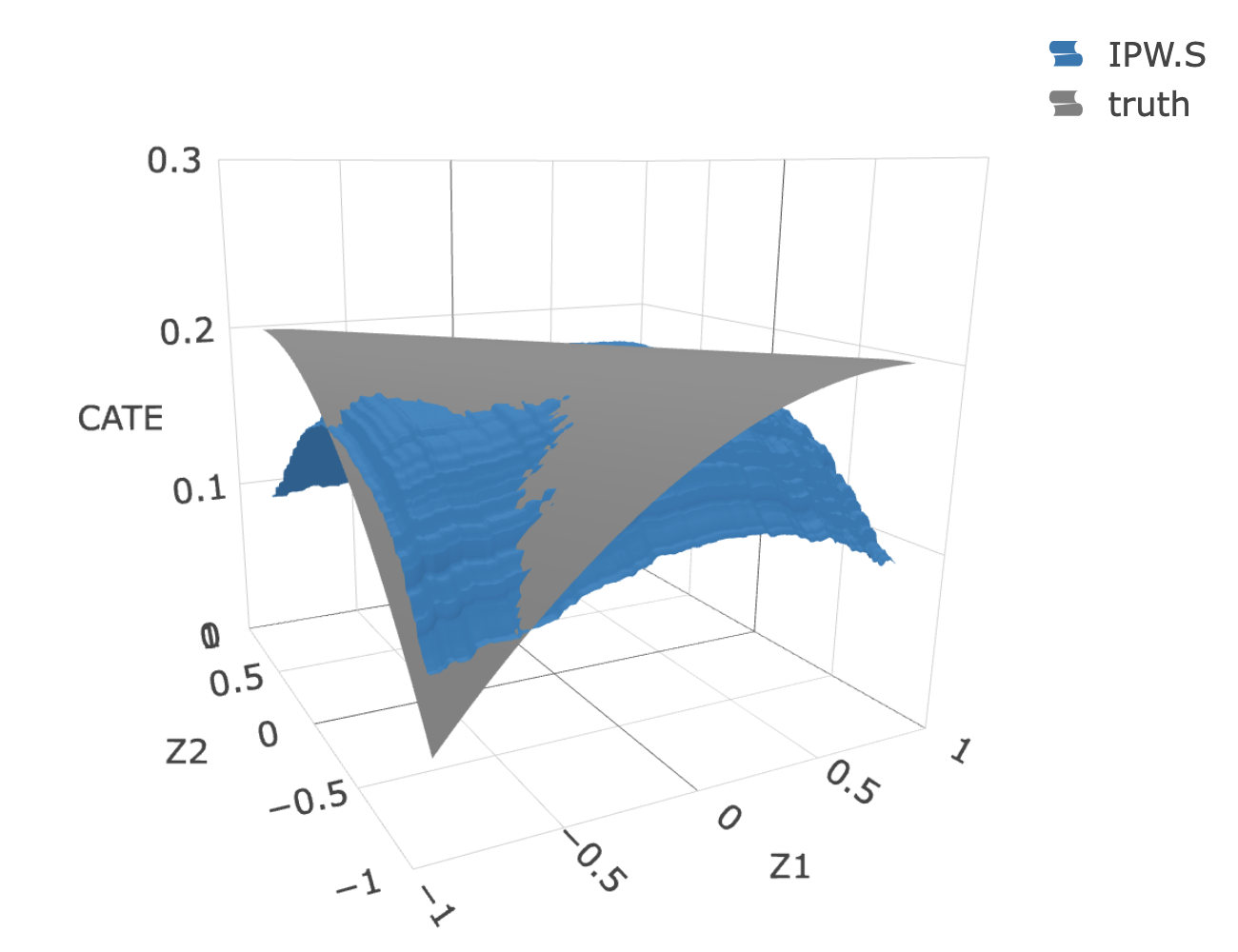}
\end{subfigure}
\caption{3D-plots for the true CATE surface and the estimated CATE surface for one simulated data set from Scenario (ii).} 
\end{figure}

\begin{figure}[h]
\centering
    \begin{subfigure}[H]{0.48\textwidth}
		\includegraphics[width=1\textwidth]{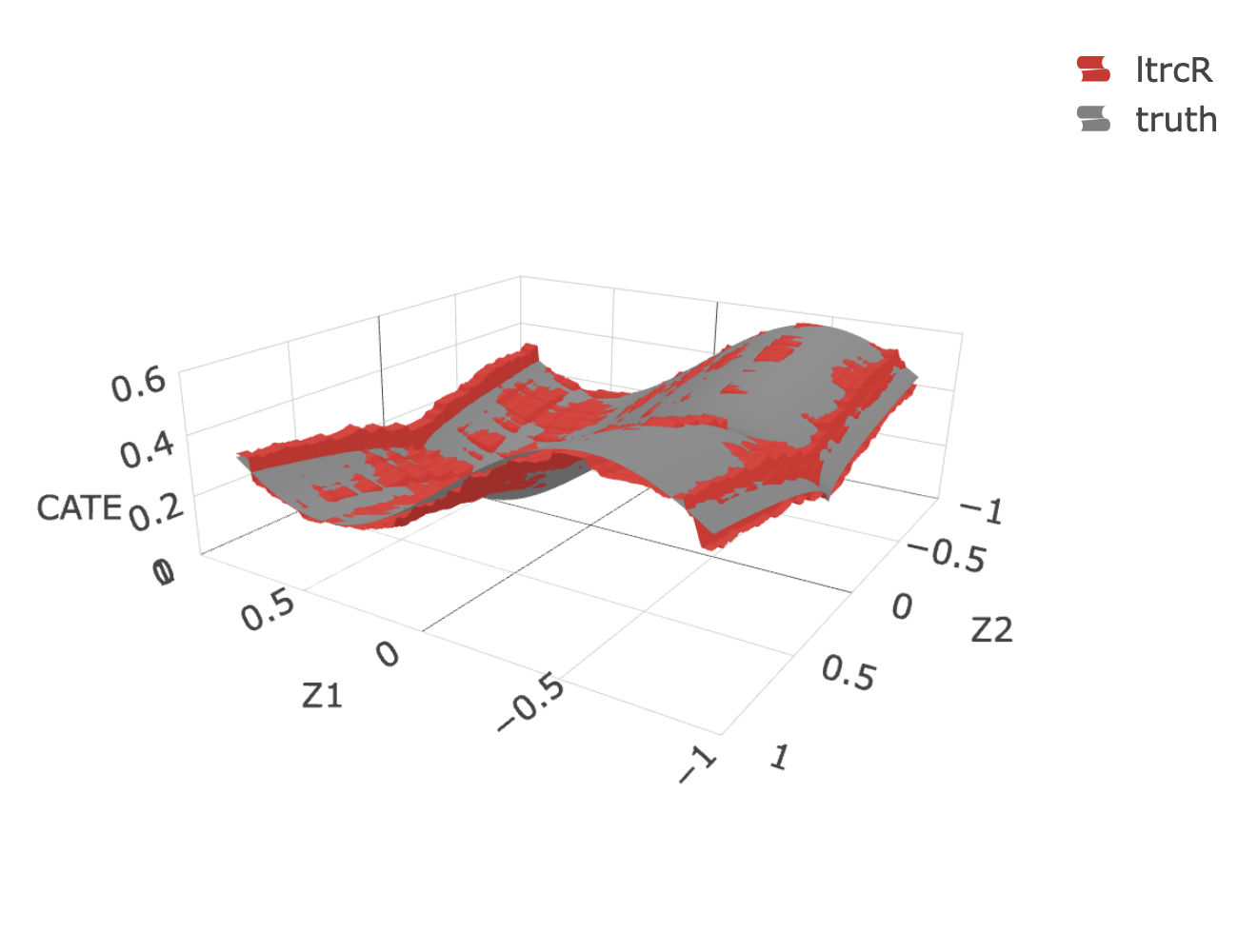}
	\end{subfigure}
    \hfill
    \begin{subfigure}[H]{0.48\textwidth}
		\includegraphics[width=1\textwidth]{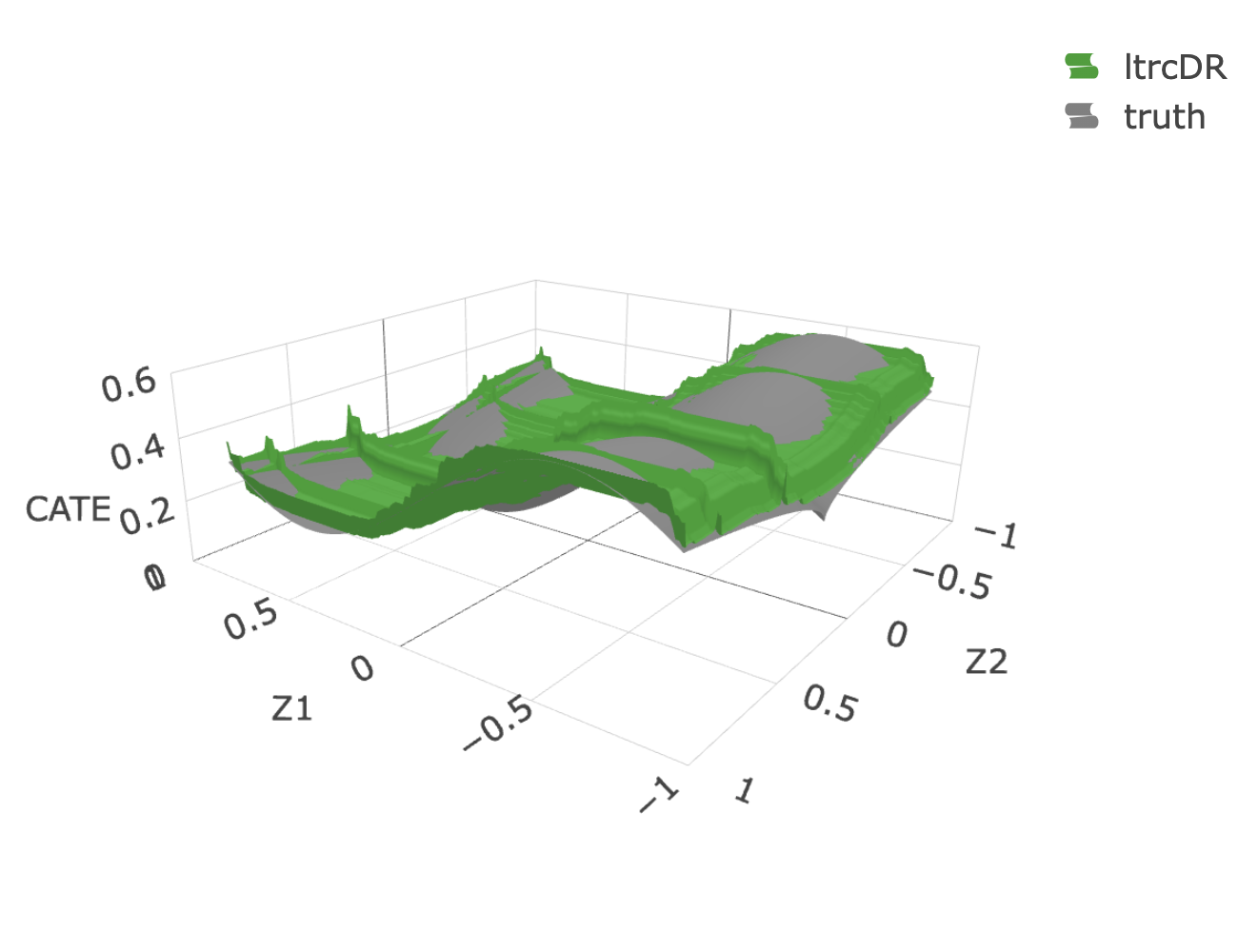}
    \end{subfigure}

    \vspace{2em}
    
 \begin{subfigure}[H]{0.48\textwidth}
		\includegraphics[width=1\textwidth]{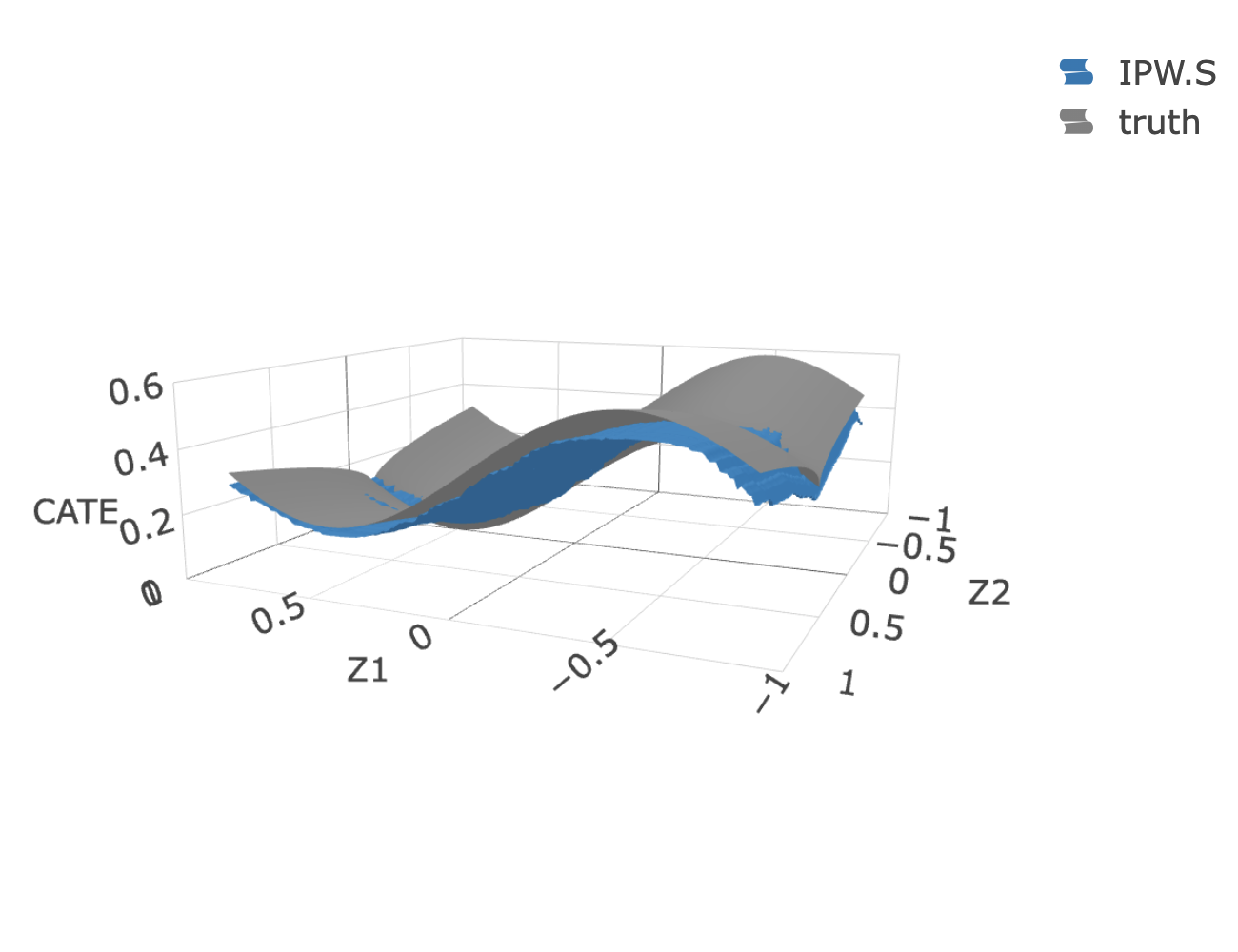}
\end{subfigure}
\caption{3D-plots for the true CATE surface and the estimated CATE surface for one simulated data set from Scenario (iii).} 
\end{figure}

\clearpage
\subsubsection{Additional simulation results for CATE}\label{app:simu_CATE_additional}

Figure \ref{fig:CATE_MSE_HTE4_trim05} shows the simulation results for CATE for the same learners and under the same scenarios as in Section \ref{sec:simu_CATE}, except that the probabilities involved in the denominators of the loss function expressions are truncated at 0.05 instead of 0.1.%

\begin{figure}[h]
        \centering
        \begin{subfigure}{0.43\textwidth}
        \includegraphics[width=1\textwidth]{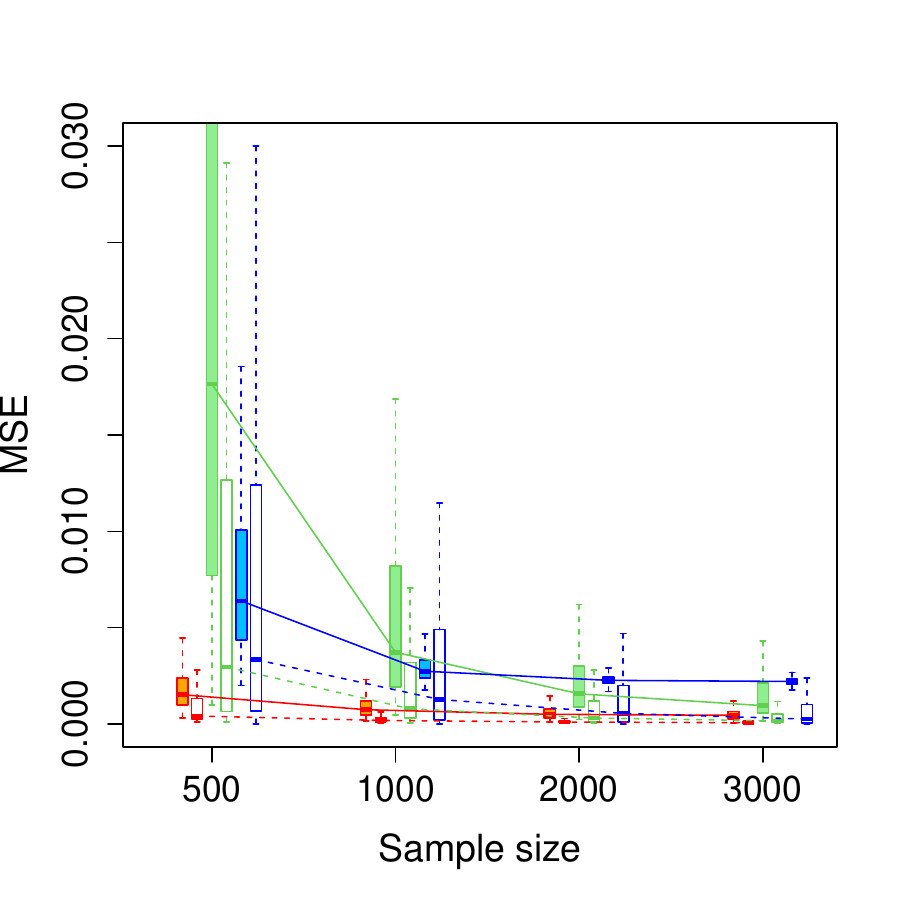}
         \caption*{Scenario (i)} 
         \end{subfigure}
        \hspace{1em}
        \begin{subfigure}{0.43\textwidth}
         \includegraphics[width=1\textwidth]{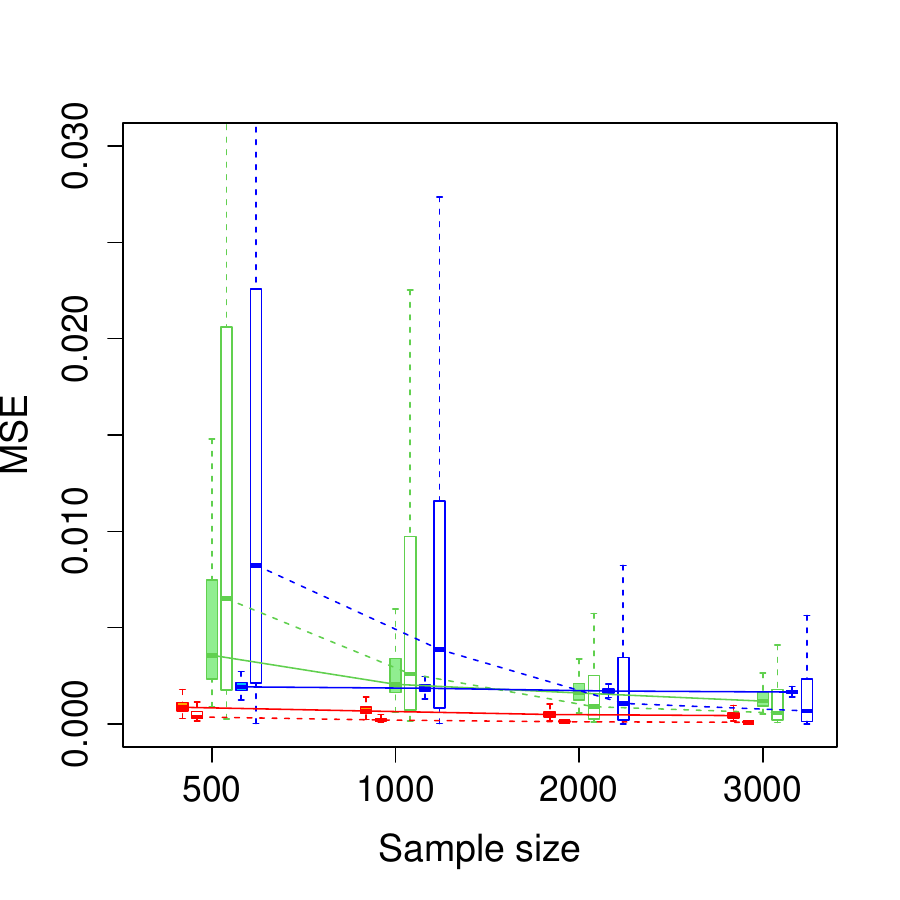}
         \caption*{Scenario (ii)}
        \end{subfigure}

        \vspace{0.7em} 

        \begin{subfigure}{0.43\textwidth}
         \includegraphics[width=1\textwidth]{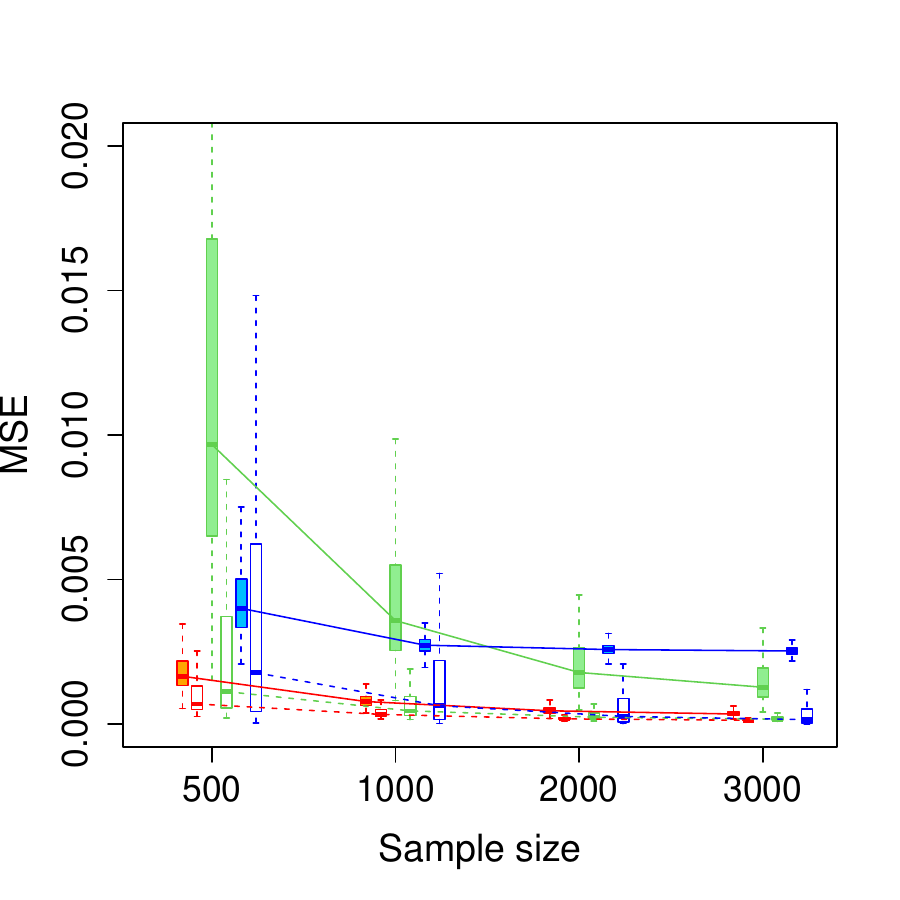}
         \caption*{Scenario (iii)}
         \end{subfigure}
        \hspace{0.7cm}
	\begin{subfigure}{0.3\textwidth}
         \includegraphics[width=0.5\textwidth]{figures/HTEx_MSE_legend.pdf}
         \end{subfigure}
        \caption{MSE for different learners under Scenarios (i) - (iii), respectively;  ``-o" indicates the oracle learner with the true nuisance parameters; the probabilities involved in the denominators of the loss function expressions are truncated at 0.05.} \label{fig:CATE_MSE_HTE4_trim05}
\end{figure}

\clearpage
\section{Additional HAAS data analysis results}\label{app:HAAS_additional}

\subsection{Alcohol consumptions and covariate distributions}\label{app:HAAS_table1}

\begin{table}[ht]
\centering
\caption{Covariate distribution in HAAS data  by alcohol consumption.}
\label{tab:table1}
\begin{tabular}{lcc}
  \toprule
  & \multicolumn{2}{c}{Mid-life alcohol consumption } \\
  \cmidrule(lr){2-3} 
  &  not heavy ($n=$1483) & heavy ($n=$470) \\ 
  \midrule
Education - count (\%)\\
\quad $\leq 12$ years  & 1128 (76.1) & 403 (85.7) \\ 
{\it APOE}  - count (\%)\\
\quad {\it APOE E4} positive & 278 (18.7) & 97 (20.6) \\ 
SystolicBP - mean (SD) & 149.05 (21.68) & 151.07 (22.92) \\ 
HeartRate - mean (SD) & 31.47 (4.63) & 31.90 (4.92) \\ 
   \bottomrule 
\end{tabular}
\end{table}

\begin{table}[ht]
\centering
\caption{Heavy alcohol consumption by education and {\it APOE}  status
}\label{tab:A1}
\begin{tabular}{lcc}
\toprule
 & {\it APOE E4} negative & {\it APOE E4}  positive \\
\midrule
Education $\leq$ 12 years & {\color{red} 317/1235 (25.7\%)} &  {\color{darkblue} 86/296 (29.1\%)} \\
Education $>$ 12 years    & {\color{darkgreen}  56/343 (16.3\%)} & {\color{brown}  11/79 (13.9\%)} \\
\bottomrule
\end{tabular}
\end{table}

\subsection{Tests for dependence}\label{app:dep}

For this data set we applied conditional Kendall's tau test \citep{tsai1990testing,martin2005testing} for potential violation of quasi-independence between age at HAAS study entry and age at moderate cognitive impairment or death. The p-value from the test is 0.0032, providing strong evidence against the quasi-independence between the ages.
We also assessed the dependence between age at HAAS study entry 
and the covariates, as well as the dependence between the residual censoring time and the covariates, by fitting two multivariate Cox models, respectively. 
The Wald test for the effect of education and systolic blood pressure on age at HAAS study entry has $p$-value $< 10^{-5}$ each, 
suggesting a strong association between age at HAAS study entry and these variables.
For the Cox model on the residual censoring time, 
the Wald test $p$-values are 0.08 and 0.07 for heart rate and alcohol exposure, respectively; although these p-values do not reach statistical significance at the 0.05 level, their magnitudes suggest a potential association between the residual censoring  time  and the covariates.

\subsection{Interesting features of estimated CATE}\label{app:PL_curves}

Other noticeable features of Figure \ref{fig:HAAS_CATE_plot3D} are, for example, 
in the upper left plot the estimated CATE for all four groups show a dip for heart rate  between 64 - 76, indicating that mid-life heavy drinking  has  more of a negative effect on DFS for subjects with normal  heart rates. 
On the upper right as well as bottom left we see that 
different education groups (red versus greeen, blue versus yellow) appear to have little difference in the estimated CATE for people with lower SBP ($<160$) and lower heart rate ($<68$). 
On the bottom two plots we see that the estimated CATE surfaces show a spike for SBP between 158 - 180,
and the spike remains even when we impose more smoothness to the estimated CATE surfaces.

Figures \ref{fig:HAAS_PL_curves_edu0_APOE0} - \ref{fig:HAAS_PL_curves_edu1_APOE1} contain the product-limit (PL) plots of DFS within categories of the four covariates which under Assumptions \ref{ass:noUnmConf} - \ref{ass:cen}, can be considered approximately valid estimates of the DFS probabilities. The categories for heart rate and SBP are informed by the CATE surfaces as discussed above, and the features such as the spike for SBP between 158 - 180, and the dip for heart rate  between 64 - 76, appear to be roughly consistent with the DSF differences as seen in the PL plots.  

\begin{figure}[h]
\centering


\includegraphics[width=1\textwidth]{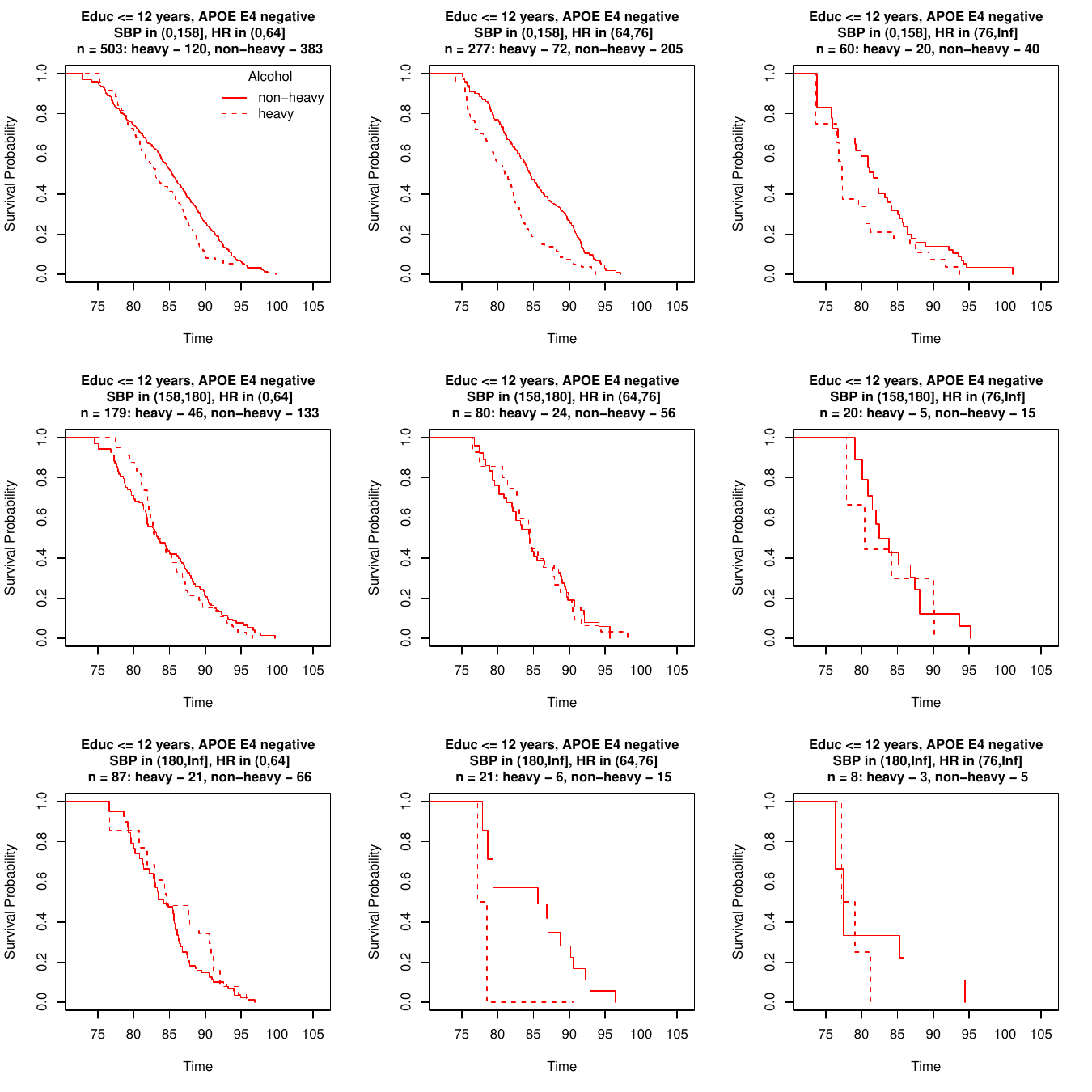}
\caption{Product-limit curves for heavy vs non-heavy drinking in the Education $\leq 12$ years and {\it APOE E4} negative group (correspond to the red surface).
}
\label{fig:HAAS_PL_curves_edu0_APOE0}
\end{figure}

\begin{figure}[h]
\centering


\includegraphics[width=1\textwidth]{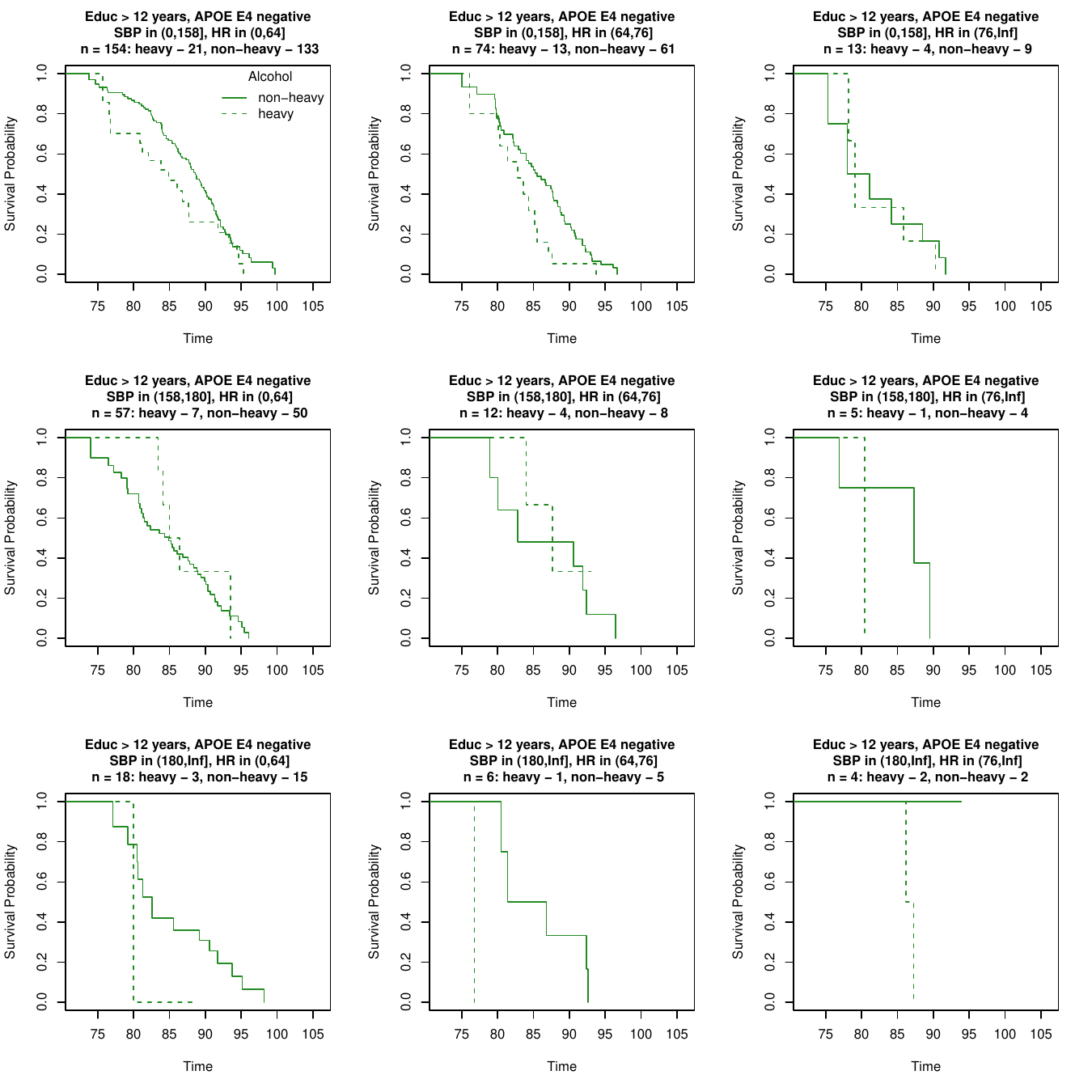}
\caption{Product-limit curves for heavy vs non-heavy drinking in the Education $>12$ years and {\it APOE E4} negative group (correspond to the green surface).  
}
\label{fig:HAAS_PL_curves_edu1_APOE0}
\end{figure}

\begin{figure}[h]
\centering
\includegraphics[width=1\textwidth]{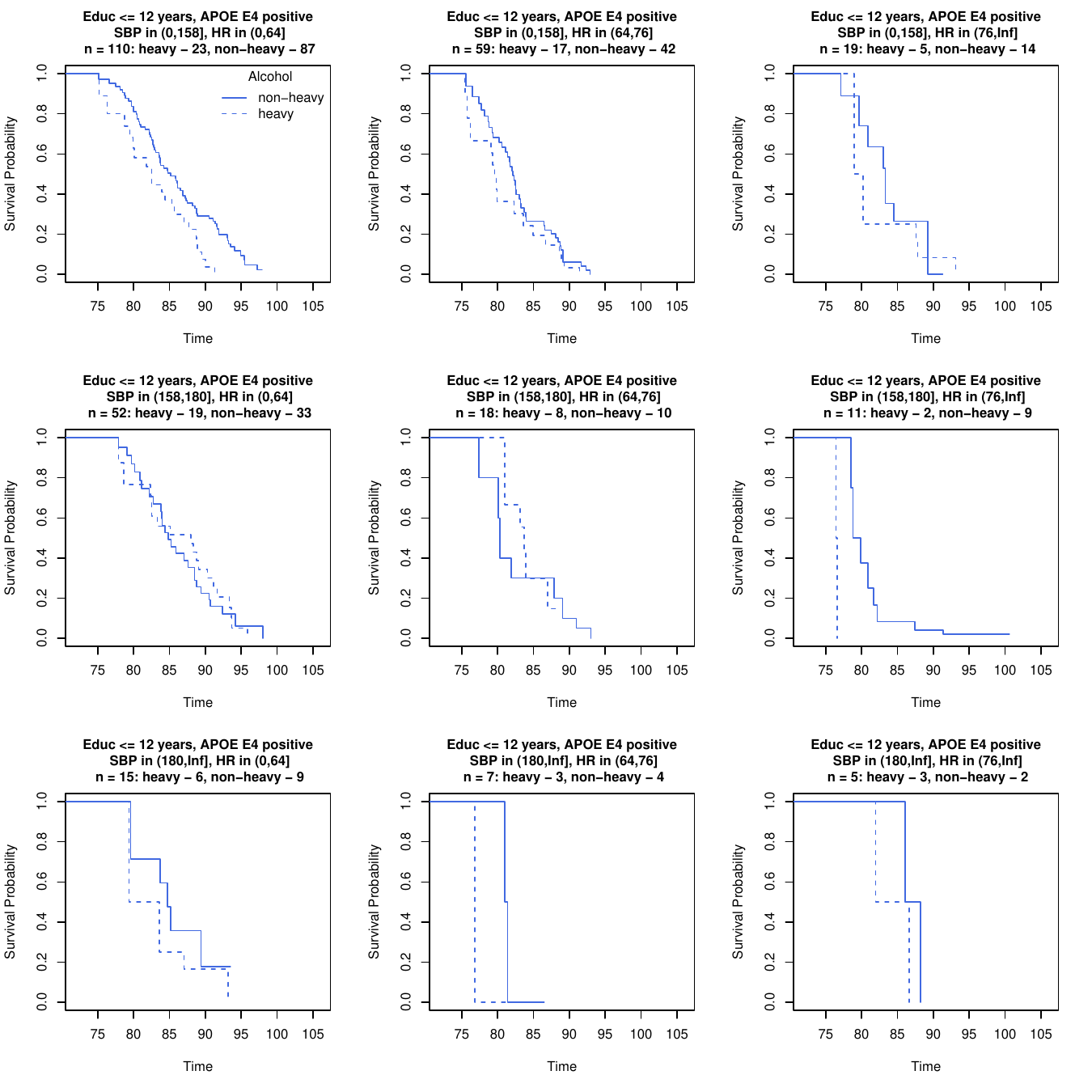}
\caption{Product-limit curves for heavy vs non-heavy drinking in the Education $\leq 12$ years and {\it APOE E4} positive group (correspond to the blue surface).
}
\label{fig:HAAS_PL_curves_edu0_APOE1}
\end{figure}

\begin{figure}[h]
\centering
\includegraphics[width=1\textwidth]{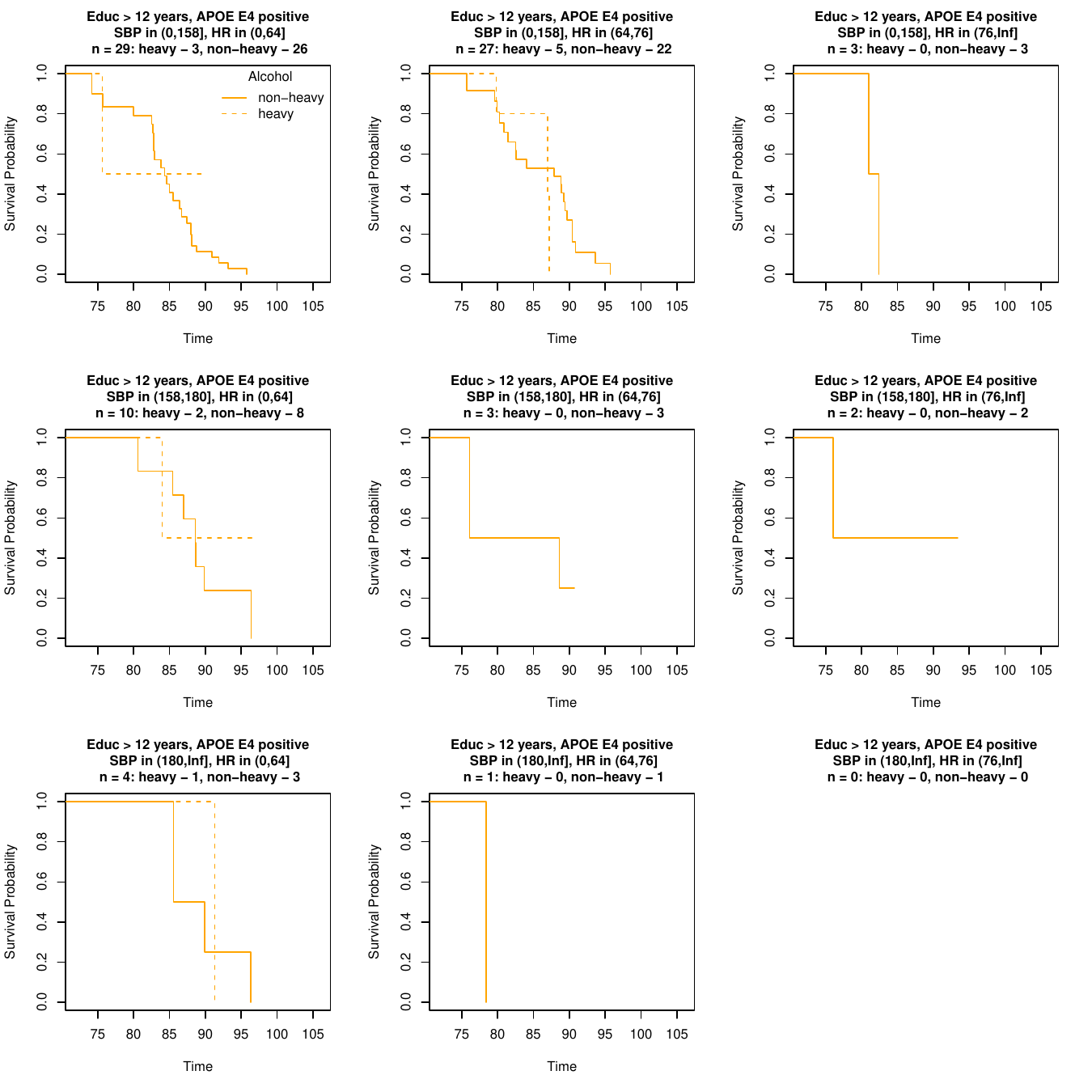}
\caption{Product-limit curves for heavy vs non-heavy drinking in the Education $> 12$ years and {\it APOE E4} positive group (correspond to the yellow surface).
}
\label{fig:HAAS_PL_curves_edu1_APOE1}
\end{figure}

\clearpage
\subsection{Additional analysis results for DFS at 90 years of age}\label{app:HAAS_CATE_surv90}

\subsubsection{CATE surfaces from the ltrcDR-learner}

Below are estimated CATE surfaces from the ltrcDR-learner with
the tuning parameters selection described in Section \ref{app:tuning_selection} for the extreme gradient boosting implemented in the R package \texttt{xgboost}.

\begin{figure}[h]
\centering
\includegraphics[width=0.49\textwidth]{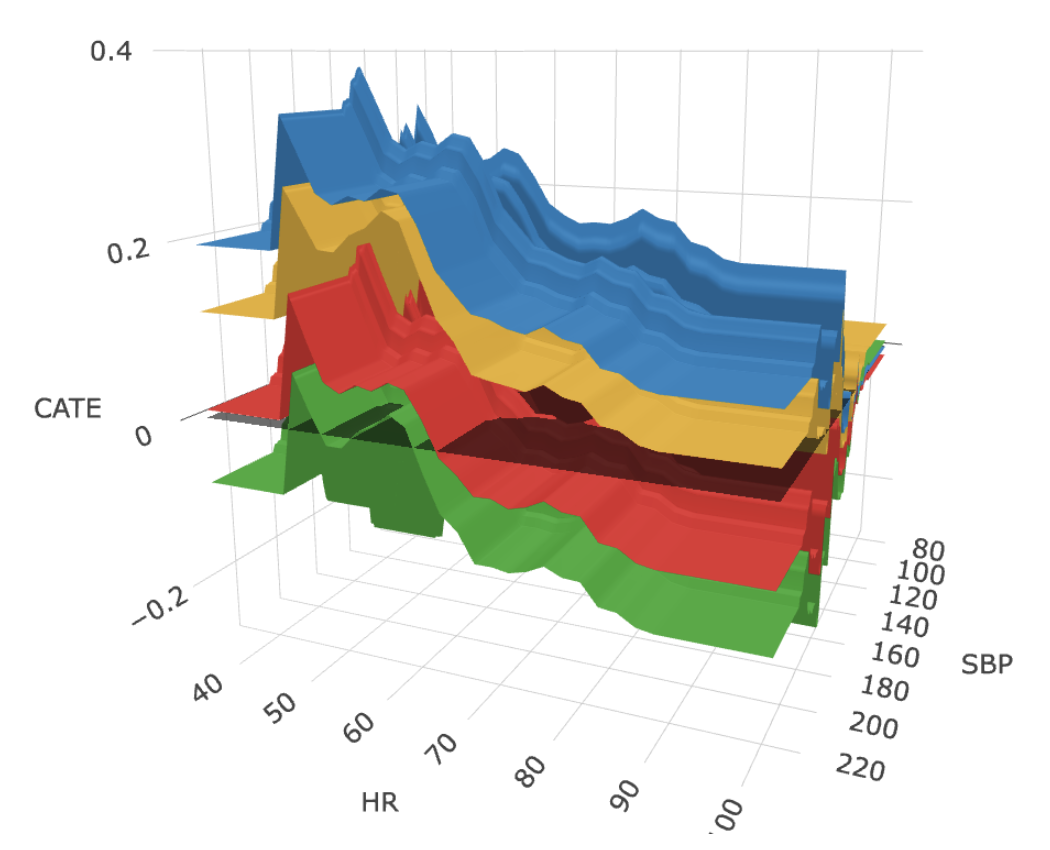}
\includegraphics[width=0.49\textwidth]{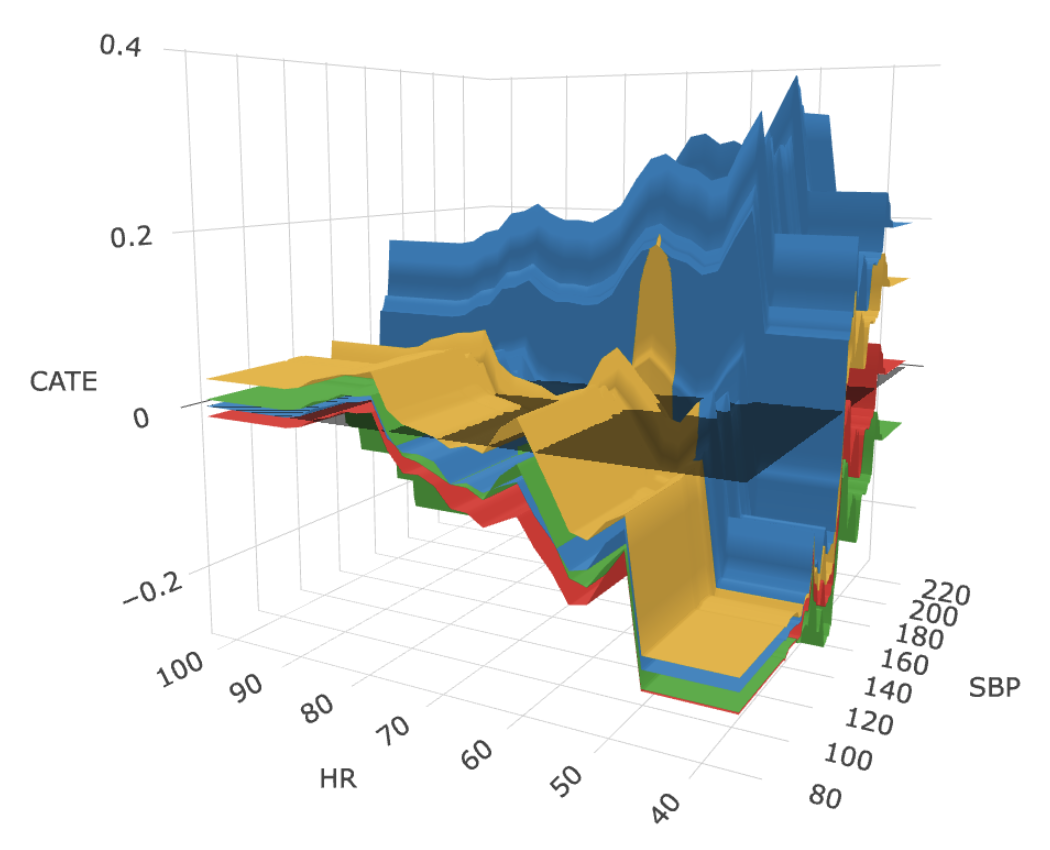}
\includegraphics[width=0.49\textwidth]{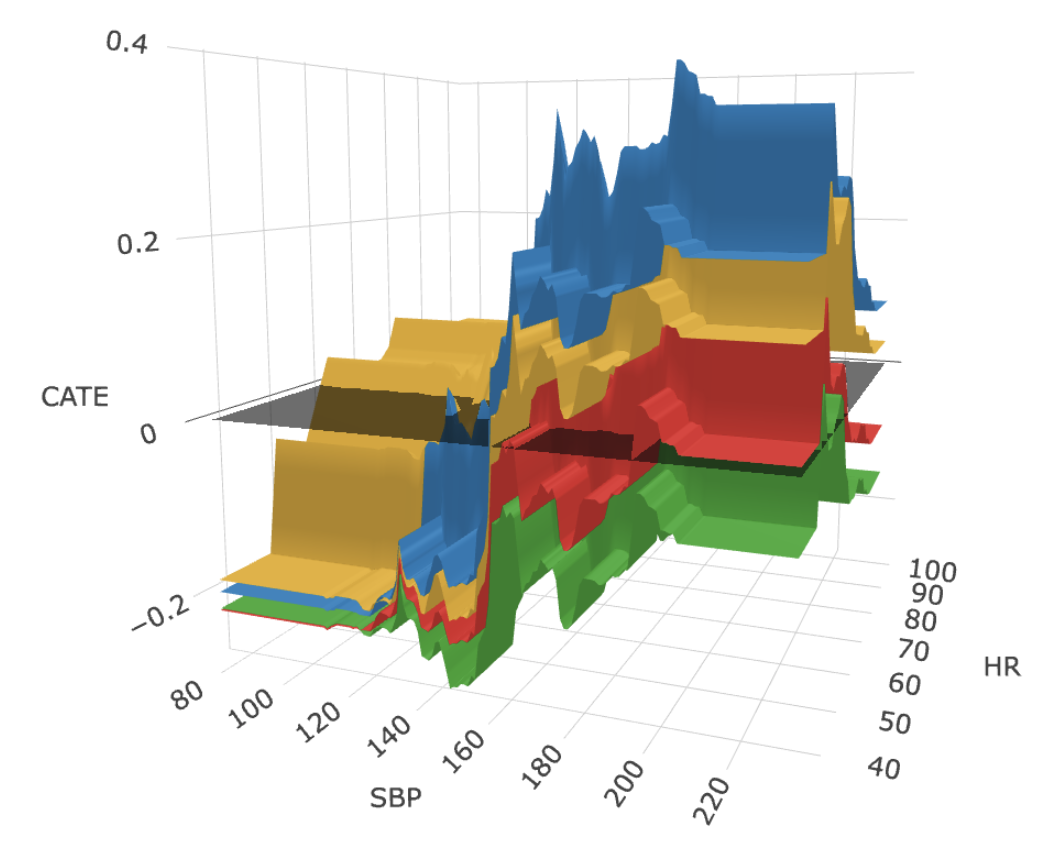}
\includegraphics[width=0.49\textwidth]{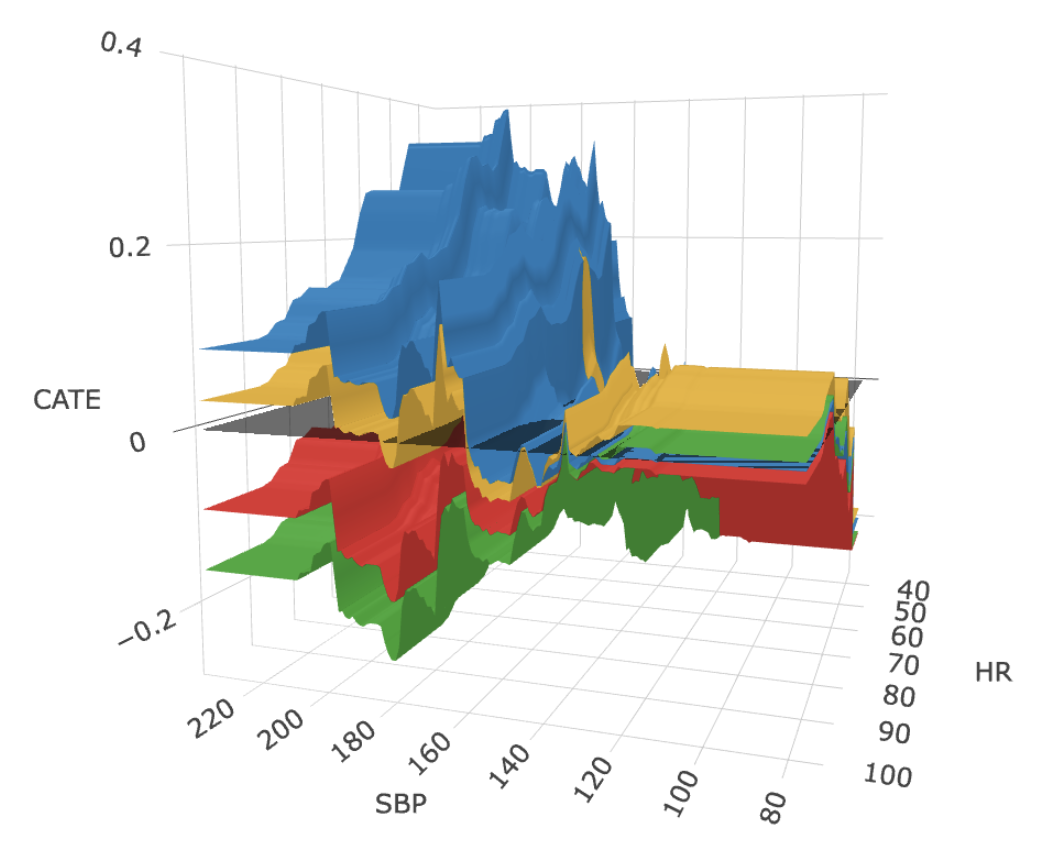}
\par 
\vspace{0.2em}
\hfill
\includegraphics[width=0.15\textwidth]{figures/HAAS_analysis/educ2/original_seed1/legend0.png}
\includegraphics[width=0.35\textwidth]{figures/HAAS_analysis/educ2/original_seed1/legend1.png}
\includegraphics[width=0.35\textwidth]{figures/HAAS_analysis/educ2/original_seed1/legend2.png}
\caption{Estimated CATE surfaces of DFS at age 90 from the ltrcDR-learner for different education and {\it APOE}  genotype subgroups (views from four different angles).
}
\label{fig:HAAS_CATE_plot3D_DR}
\end{figure}

\clearpage
\subsubsection{CATE surfaces that are less smooth}

Below are estimated CATE surfaces when 
the tuning parameters involved in the extreme gradient boosting implemented in the R package \texttt{xgboost} are chosen from the following candidate set:
\begin{itemize}
    \item \texttt{subsample}: 0.5, 0.75, 1;
    \item \texttt{colsample\_bytree}: 0.6, 0.8, 1;
    \item \texttt{eta}: 0.005, 0.01, 0.015, 0.025, 0.05, 0.08, 0.1, 0.2;
    \item \texttt{max\_depth}: 3, 4, ..., 20;
    \item \texttt{gamma}: randomly sampled from Unif(0, 0.2) distribution;
    \item \texttt{min\_child\_weight}: 1, 2, ..., 20;
    \item \texttt{max\_delta\_step}: 1, 2, ... , 10.
\end{itemize}

\begin{figure}[h]
\centering
\includegraphics[width=0.49\textwidth]{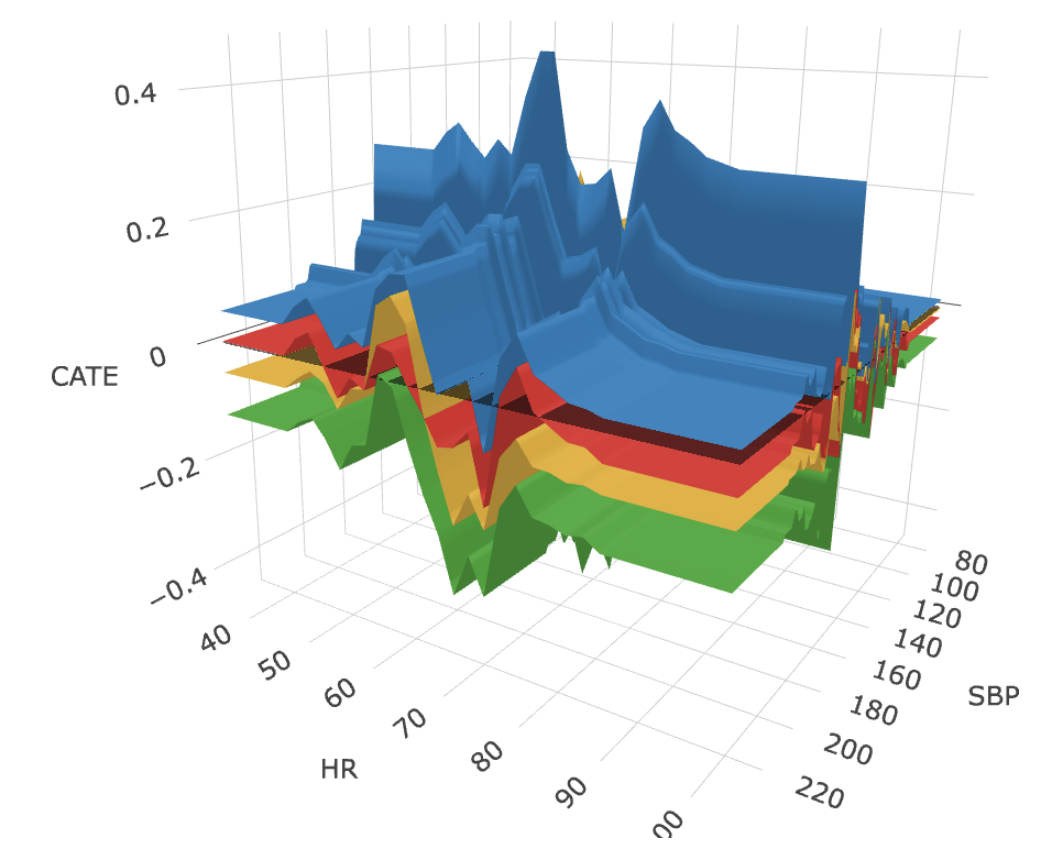}
\includegraphics[width=0.49\textwidth]{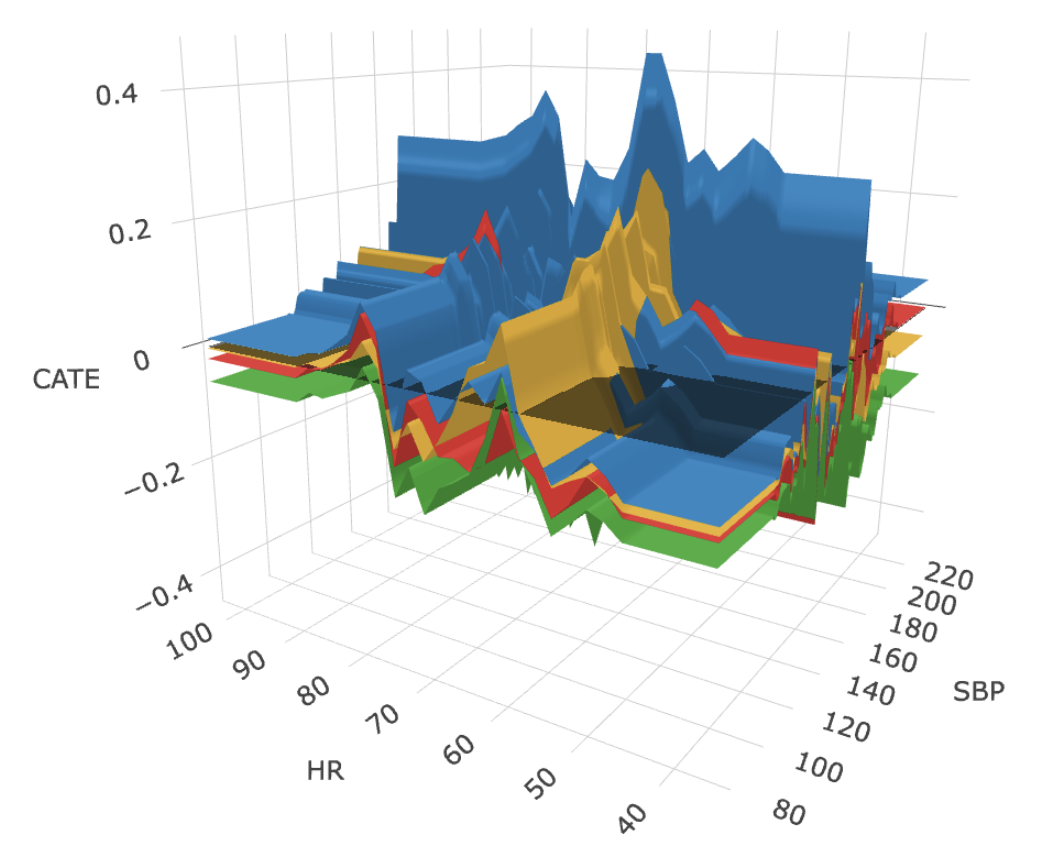}
\includegraphics[width=0.49\textwidth]{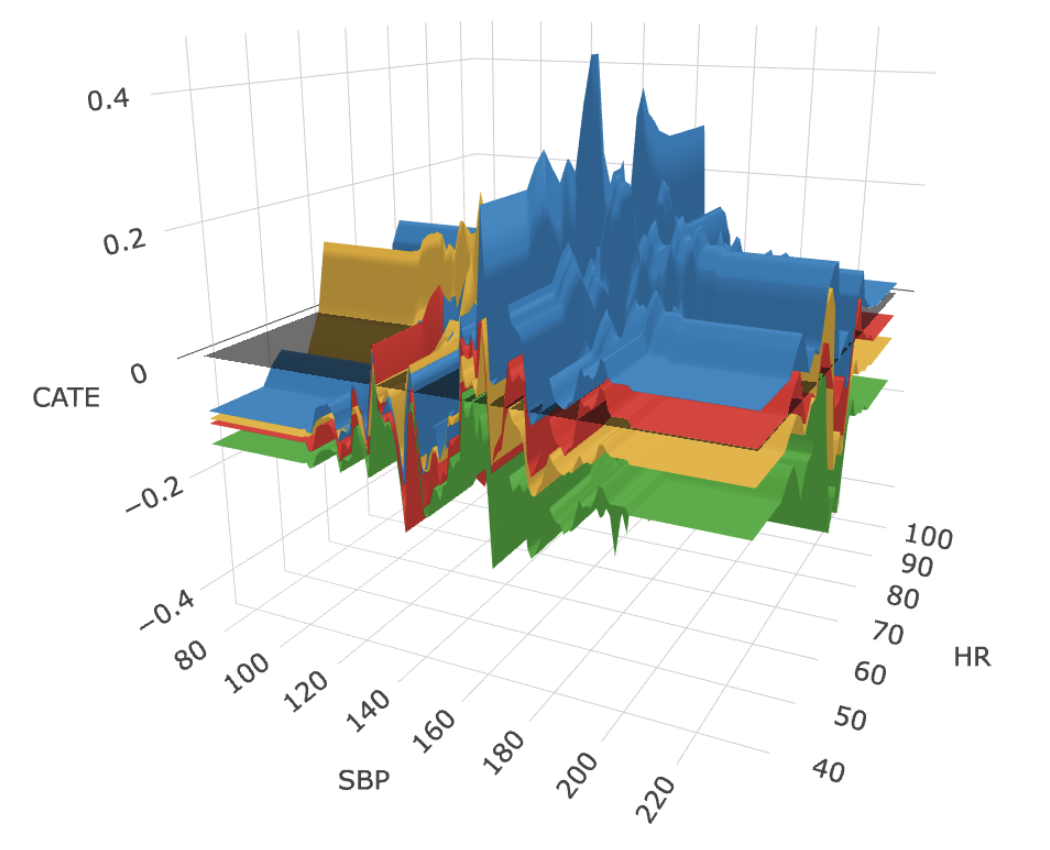}
\includegraphics[width=0.49\textwidth]{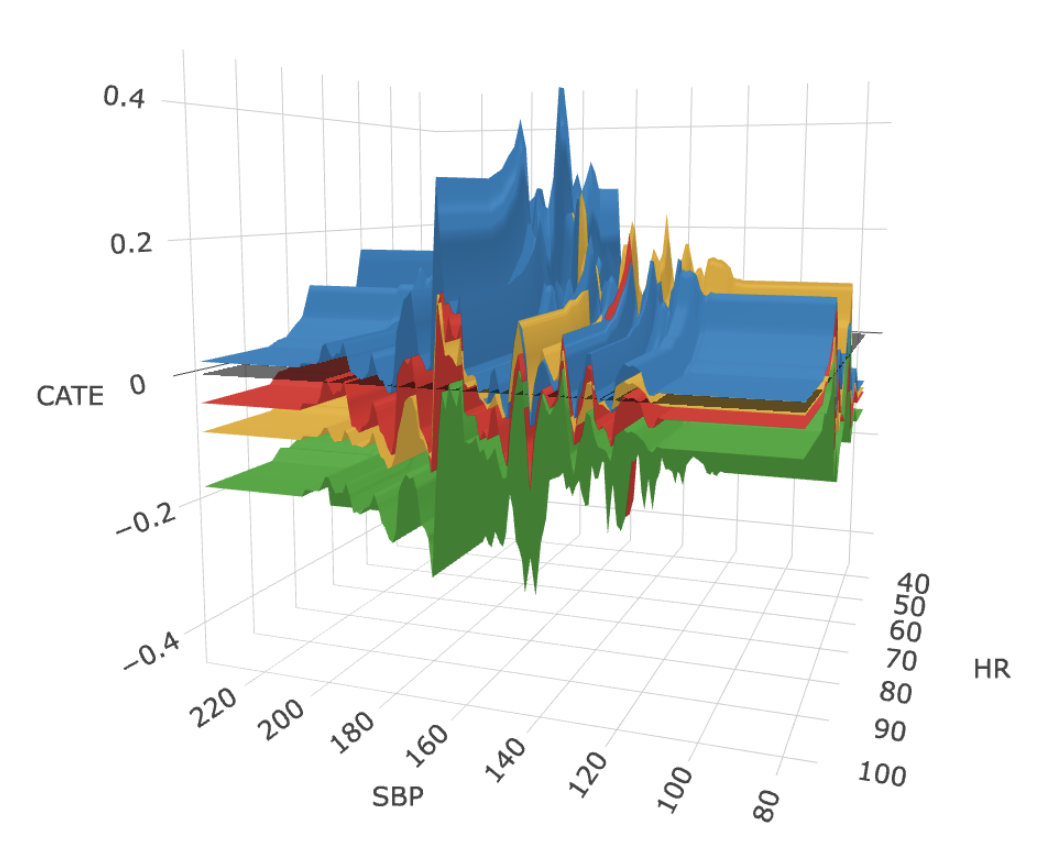}
\par 
\vspace{0.2em}
\hfill
\includegraphics[width=0.15\textwidth]{figures/HAAS_analysis/educ2/original_seed1/legend0.png}
\includegraphics[width=0.35\textwidth]{figures/HAAS_analysis/educ2/original_seed1/legend1.png}
\includegraphics[width=0.35\textwidth]{figures/HAAS_analysis/educ2/original_seed1/legend2.png}
\caption{Estimated less smooth CATE surfaces of DFS at age 90 from the ltrcR-learner for different education and {\it APOE}  genotype subgroups (views from four different angles).
}
\label{fig:HAAS_CATE_plot3D_try0}
\end{figure}


\end{document}